\theoremstyle{plain}
\newtheorem{thm}{Theorem}[section]
\newtheorem{coro}[thm]{Corollary}
\newtheorem{lem}[thm]{Lemma}
\newtheorem*{stoll*}{Stollmann's Lemma}
\newtheorem*{main*}{Main Result}
\theoremstyle{definition}
\newtheorem{defa}[thm]{Definition}
\newtheorem{rem}[thm]{Remark}
\newtheorem*{rem*}{Remark}
\newcommand{\dd}{\mathrm{d}}
\newcommand{\LL}{\mathbb{L}}
\newcommand{\KK}{\mathbb{K}}
\renewcommand{\Re}{\operatorname{Re}}
\newcommand{\dec}{\textup{\,dec}}
\newcommand{\out}{\textup{out}}
\newcommand{\N}{\mathbb{N}}
\newcommand{\Z}{\mathbb{Z}}
\newcommand{\Q}{\mathbb{Q}}
\newcommand{\R}{\mathbb{R}}
\newcommand{\C}{\mathbb{C}}
\newcommand{\D}{\mathbb{D}}
\DeclareMathOperator{\dist}{dist}
\DeclareMathOperator{\expect}{\mathbb{E}}
\DeclareMathOperator{\inter}{int}
\DeclareMathOperator{\one}{\mathbf{1}}
\DeclareMathOperator{\prob}{\mathbb{P}}
\DeclareMathOperator{\supp}{supp}
\DeclareMathOperator{\vol}{vol}
\DeclareMathOperator{\tr}{tr}
\title[Localization for a Multi-Particle Quantum Graph]{Anderson Localization\\ for a Multi-Particle Quantum Graph}
\author{Mostafa Sabri}
\address{Institut de Math\'{e}matiques de Jussieu, Universit\'{e} Paris 7, B\^atiment Sophie Germain, 75013, Paris, France}
\email{sabri@math.jussieu.fr}
\subjclass[2010]{Primary 82B44. Secondary 47B80, 34B45}
\keywords{Anderson localization, random operators, multi-particle, quantum graphs}
\date{July 18, 2013}
\begin{document}
\begin{abstract}
We study a multi-particle quantum graph with random potential. Taking the approach of multiscale analysis we prove exponential and strong dynamical localization of any order in the Hilbert-Schmidt norm near the spectral edge. Apart from the results on multi-particle systems, we also prove Lifshitz-type asymptotics for single-particle systems. This shows in particular that localization for single-particle quantum graphs holds under a weaker assumption on the random potential than previously known.
\end{abstract}
\maketitle
\setcounter{tocdepth}{1}
\tableofcontents
\section{Introduction}    \label{sec:intro}
Quantum graphs are one-dimensional geometric structures composed of edges and vertices and equipped with a Schr\"odinger operator. They arise naturally when one tries to understand the propagation of waves through wires, and their mathematical study goes back at least to the early 1980s, see \cite{PB}, \cite{Ku} for a review. The phenomenon which interests us is known as Anderson localization, and predicts that impurities may suppress the diffusion of a wave, depending on its energy. To verify this for wires and other quasi-one-dimensional materials, one may interpret the impurities and defects in the medium as sources of randomness in the quantum graph. For models with a $\Z^d$ structure, localization has been established for a random potential model in \cite{EHS}, for a random vertex coupling model in \cite{KP1}, and for a random edge length model in \cite{KP2}. Related questions were considered in \cite{ASW} and \cite{HP} for quantum tree graphs. For random potential models with general geometries and vertex couplings, we record the recent result of \cite{Schu}.

In this article we study localization for a multi-particle Hamiltonian on a quantum graph. This is in contrast to the above results, which were concerned with single-particle systems. To study the interaction between two particles, one lying on an edge $e_1$ and the other lying on an edge $e_2$, we have to consider the edge pair $(e_1,e_2)$. We are thus led to consider a certain cartesian product of two quantum graphs, which may be regarded as a two-dimensional network of rectangles. More generally, the configuration space of an $N$-particle system is given by an $N$-dimensional cubical network.

To state our main result, let us briefly describe our model; the elaborate constructions are given in Section \ref{sec:model}. Consider the graph $(\mathcal{E},\mathcal{V})$ with vertex set $\mathcal{V} = \Z^d$ and edge set $\mathcal{E}$ consisting of all line segments of length 1 between two neighbouring vertices. This graph is naturally embedded in $\R^d$, and we denote by $\Gamma^{(1)} \subset \R^d$ the image of the embedding. Define $\Gamma^{(N)}:= \Gamma^{(1)} \times \ldots \times \Gamma^{(1)} \subset (\R^d)^N = \R^{Nd}$ and regard $\Gamma^{(N)}$ as a couple $(\mathcal{K},\mathcal{S})$, where $\mathcal{K}$ is a collection of $N$-dimensional unit cubes $\kappa$, and $\mathcal{S}$ is the collection of the boundaries $\sigma$ of $\kappa$. Each $\sigma$ is a closed union of $2N$ ``open faces'' $\sigma^i$, i.e. $\sigma = \cup_i \bar{\sigma}^i$. Now fix $q_-, q_+ \in \R$, $q_- < q_+$, let $\mu$ be a probability measure on $\R$ with support $[q_-,q_+]$, and consider the probability space $(\Omega, \prob)$ given by $\Omega := [q_-, q_+]^{\mathcal{E}}$ and $\prob:= \mathop \otimes _{e \in \mathcal{E}} \mu$, and the Hilbert space $\mathcal{H} := \mathop \oplus _{\kappa \in \mathcal{K}} L^2\big([0,1]^N\big)$. Then given $\omega = (\omega_e) \in \Omega$, we define the form
\[
\mathfrak{h}_{\omega}^{(N)}[f,g] = \sum_{\kappa \in \mathcal{K}} \big[\langle \nabla f_{\kappa}, \nabla g_{\kappa} \rangle + \langle V_{\kappa}^{\omega} f_{\kappa},g_{\kappa} \rangle \big],
\]
\[
D(\mathfrak{h}_{\omega}^{(N)}) = \left\{f = (f_{\kappa}) \in \mathop \oplus \limits_{\kappa \in \mathcal{K}} {W^{1,2}}\big((0,1)^N\big) \left| \begin{array} {l} f \text{ is continuous on each } \sigma^i, \\ \quad \sum_{\kappa \in \mathcal{K}} \|f_{\kappa}\|_{W^{1,2}}^2 < \infty 
 \end{array} \right. \right\}.
\]
Here $V_{\kappa}^{\omega} := U_{\kappa} + W_{\kappa}^{\omega}$, where $U_{\kappa}$ is a non-random interaction potential. We assume $U_{\kappa}$ is non-negative, bounded, and has a finite range $r_0$ (see Section~\ref{sec:model}). If $\kappa \equiv (e_1,\ldots, e_N)$,  then $W_{\kappa}^{\omega} := \omega_{e_1} + \ldots + \omega_{e_N}$ is an $N$-particle random potential. By continuity on $\sigma^i$, we mean that if $\sigma^i$ is a common face to $\kappa_1$ and $\kappa_2$, then $\left. {f_{\kappa_1}}\right|_{\sigma^i} = \left. {f_{\kappa_2}}\right|_{\sigma^i}$ in the trace sense.

We prove in Theorem~\ref{thm:dom} that there exists a unique self-adjoint operator $H^{(N)}(\omega)$ corresponding to $\mathfrak{h}_{\omega}^{(N)}$. We also prove in Theorem~\ref{thm:as} that
\begin{equation}
[Nq_-,Nq_+] \subset \sigma(H^{(N)}(\omega)) \subseteq [Nq_-,+\infty) \qquad \prob\text{-a.s.}      \label{eq:spec}
\end{equation}

Given $\mathbf{x} \in \Z^{Nd}$ put $\mathbf{C}(\mathbf{x}):= \{ \mathbf{y} \in \R^{Nd} : | \mathbf{y} - \mathbf{x} | <1 \}$, where $|\mathbf{z}| := \| \mathbf{z} \|_{\infty}$ and let $\chi_{\mathbf{x}} := \chi_{\Gamma^{(N)} \cap \, \mathbf{C}(\mathbf{x}) }$. We say that $\psi \in \mathcal{H}$ \emph{decays exponentially with mass $m>0$} if
\[ 
\limsup_{| \mathbf{x} | \to \infty} \frac{ \log \| \chi_{\mathbf{x}} \psi \|}{| \mathbf{x} |} \le -m \, , \qquad \text{where } \log 0 := -\infty \, . 
\]
We may now state our main results. In both theorems we assume $\mu$ is H\"older continuous.

\begin{thm}      \label{thm:exp}
There exist $\varepsilon_0 = \varepsilon_0(N,d,q_-,r_0)> 0$ and $m>0$ such that for a.e. $\omega$ the spectrum of $H^{(N)}(\omega)$ in $I = [Nq_- , Nq_- + \varepsilon_0 ]$ is pure point and the eigenfunctions corresponding to eigenvalues in $I$ decay exponentially with mass $m$.
\end{thm}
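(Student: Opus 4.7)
The plan is to adapt the multi-particle multiscale analysis (MPMSA) of Chulaevsky--Suhov (and Aizenman--Warzel) to the quantum-graph setting of this paper, and then extract pure-point spectrum with exponentially decaying eigenfunctions from the resulting finite-volume Green's-function estimates in a standard way. The first step is to define, for $\mathbf{x} \in \Z^{Nd}$ and $L \in \N$, the box $\mathbf{\Lambda}_L(\mathbf{x}) = \Gamma^{(N)} \cap \{\mathbf{y} : |\mathbf{y}-\mathbf{x}| \le L\}$ together with the finite-volume restriction $H_{\mathbf{\Lambda}}^{(N)}(\omega)$ of $H^{(N)}(\omega)$ obtained by imposing, say, Neumann boundary conditions on the boundary faces of $\mathbf{\Lambda}$. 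A box $\mathbf{\Lambda}_L(\mathbf{x})$ is declared $(E,m)$-good if $\|\chi_{\mathbf{x}} (H^{(N)}_{\mathbf{\Lambda}}(\omega) - E)^{-1} \chi_{\mathbf{y}}\| \le \eul^{-m L}$ for every $\mathbf{y}$ at the boundary of $\mathbf{\Lambda}_L(\mathbf{x})$, and the goal of MPMSA is to prove, by induction on a sequence of scales $L_{k+1} = \lfloor L_k^{\alpha}\rfloor$ with $\alpha \in (1,2)$, that the probability of a box being bad decays polynomially for every $E$ in some $I = [Nq_-, Nq_- + \varepsilon_0]$.

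Before running the induction I would establish the two analytic ingredients. The first is a Wegner-type estimate for $H^{(N)}_{\mathbf{\Lambda}}(\omega)$, including a two-volume version for pairs of distant boxes; here H\"older continuity of $\mu$ enters, and the multi-particle version must treat separately ``fully interactive'' boxes (those whose projections on all single-particle coordinates are close) and ``partially interactive'' boxes (those that decompose, modulo the finite-range interaction $U$, into a direct sum of lower-rank sub-Hamiltonians). The second is the geometric resolvent inequality on the cubical network, which glues together finite-volume resolvents on overlapping sub-boxes via traces on shared faces $\sigma^i$ -- this step replaces, in the quantum-graph setting, the usual lattice geometric resolvent identity, and relies on the trace continuity built into $D(\mathfrak{h}^{(N)}_\omega)$. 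I would also need Combes--Thomas at complex energies to turn an eigenvalue-concentration bound at scale $L_0$ into exponential decay of the Green's function at that scale.

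The base case of the induction (the initial length scale estimate) will be supplied by the Lifshitz-type tail asymptotic proved elsewhere in the paper: near $Nq_-$ the integrated density of states of $H^{(N)}(\omega)$ decays faster than any polynomial, so for $E \in I$ with $\varepsilon_0$ small and $L_0$ sufficiently large one gets a very small probability that $H^{(N)}_{\mathbf{\Lambda}_{L_0}}(\omega)$ has spectrum near $E$; combined with Combes--Thomas this gives the $(E,m)$-good probability bound at scale $L_0$ for a suitable $m>0$. The inductive step then uses the Wegner estimate plus the geometric resolvent inequality to deduce, at scale $L_{k+1}$, that a box can be bad only if one of its $\sim L_{k+1}^{Nd}/L_k^{Nd}$ sub-boxes is bad, up to a Wegner event; handling clusters of overlapping bad sub-boxes in the partially interactive case uses the symmetrisation/de-correlation trick of Chulaevsky--Suhov (replacing bad-box separation in the lattice coordinate by separation of projections onto particle coordinates).

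Once the MPMSA estimates are in force on $I$, pure-point spectrum and exponential decay of eigenfunctions in $I$ with a uniform mass $m>0$ follow by the standard deduction (generalised-eigenfunction expansion plus a Borel--Cantelli argument over all centres $\mathbf{x} \in \Z^{Nd}$ with $|\mathbf{x}|$ along the sequence $L_k$), using the $\Z^{Nd}$ equivariance of the model. The main obstacle I expect is the two-volume Wegner estimate for partially interactive pairs of boxes: unlike in the lattice multi-particle problem, here the random variables $\omega_e$ are indexed by edges of $\Gamma^{(1)}$ rather than sites, and an edge $e$ can enter the potential of several cubes simultaneously through the coordinate projections $\kappa = (e_1,\ldots,e_N)$, so a careful accounting is required to identify enough independent random variables to absorb the small denominators. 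This is where the H\"older regularity of $\mu$ and the finite range $r_0$ of $U$ will have to be used in tandem.
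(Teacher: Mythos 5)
Your proposal follows essentially the same route as the paper: a multi-particle multiscale analysis over separable pairs of cubes with the Chulaevsky--Suhov dichotomy between fully and partially interactive boxes, Wegner estimates obtained from Stollmann's lemma (which absorbs exactly the shared-edge dependence you flagged), geometric resolvent inequalities adapted to the cubical network via trace continuity, an initial length scale estimate from Lifshitz-type asymptotics combined with Combes--Thomas, and a generalized-eigenfunction expansion plus Borel--Cantelli to deduce pure-point spectrum and exponential decay. The only cosmetic deviation is that the paper's Combes--Thomas estimate is derived from a Davies--Gaffney semigroup bound and applies to real energies below the finite-volume spectral bottom, rather than being stated at complex energies.
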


\begin{thm}  \label{thm:main}
There exists $\varepsilon_0 = \varepsilon_0(N,d,q_-,r_0)> 0$ such that for $I = [Nq_-,Nq_-+ \varepsilon_0]$, we have for any bounded $K \subset \Gamma^{(N)}$ and all $s>0$,
\[ 
\expect \Big\{ \sup_{\|f\| \le 1} \| X^{s/2} f(H^{(N)}(\omega)) E_{\omega}(I) \chi_K \|_2^2 \Big\} < \infty,
\]
where $(X \psi)(\mathbf{x}) := |\mathbf{x}| \cdot \psi(\mathbf{x})$ for $\psi \in \mathcal{H}$, $E_{\omega}$ is the spectral projection of $H^{(N)}(\omega)$ and the supremum is taken over bounded Borel functions, $\|f\| := \|f\|_{\infty}$.
\end{thm}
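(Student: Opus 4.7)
The plan is to deduce Theorem~\ref{thm:main} from a strong multi-particle multiscale analysis (MSA), by the route that converts a two-box MSA probability estimate of arbitrarily high polynomial order into a bound on eigenfunction correlators. The MSA scheme itself should be the same one used for Theorem~\ref{thm:exp}; for the dynamical statement one needs the quantitative output at every polynomial rate, not just one rate sufficient for exponential decay of eigenfunctions.

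I would first decompose the left-hand side spatially. Since $K$ is bounded, it meets only finitely many unit cubes $\mathbf{C}(\mathbf{y})$, and $|\mathbf{x}|^s$ is comparable to a constant on each $\mathbf{C}(\mathbf{x})$. Using $\chi_K\le C\sum_{\mathbf{y}\in\mathcal{Y}_K}\chi_\mathbf{y}$ with $\mathcal{Y}_K$ finite, together with Cauchy--Schwarz in the Hilbert--Schmidt norm, one obtains
\[
\sup_{\|f\|_\infty\le 1}\|X^{s/2}f(H^{(N)}(\omega))E_\omega(I)\chi_K\|_2^2
\le C_{s,K}\sum_{\mathbf{x}\in\Z^{Nd}}\sum_{\mathbf{y}\in\mathcal{Y}_K}|\mathbf{x}|^s\,\mathcal{Q}_\omega(\mathbf{x},\mathbf{y})^2,
\]
where the eigenfunction correlator is
\[
\mathcal{Q}_\omega(\mathbf{x},\mathbf{y}):=\sup_{\|f\|_\infty\le 1}\|\chi_\mathbf{x} f(H^{(N)}(\omega))E_\omega(I)\chi_\mathbf{y}\|_2.
\]
After integration, Theorem~\ref{thm:main} reduces to showing $\expect\{\mathcal{Q}_\omega(\mathbf{x},\mathbf{y})^2\}\le C_p|\mathbf{x}-\mathbf{y}|^{-p}$ for every $p>0$ and $|\mathbf{x}-\mathbf{y}|$ sufficiently large: choosing $p>s+Nd+1$ makes the series convergent.

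The central work is the two-box MSA, run in the multi-particle setting of Chulaevsky--Suhov. The initial-scale estimate should come from the Lifshitz-type asymptotics at $Nq_-$ advertised in the abstract; the inductive step separates cubes into fully interactive (FI) and partially interactive (PI) classes and treats them differently---PI cubes factorize approximately into lower-$N$ systems and are handled by induction on the particle number, using the spectrum location $\sigma(H^{(N)})\subseteq[Nq_-,\infty)$ from~\eqref{eq:spec}; FI cubes are attacked via the geometric resolvent identity together with a Wegner-type estimate from Stollmann's lemma and H\"older continuity of $\mu$. Iterating yields, for every $p$, a scale $L_p$ such that
\[
\prob\bigl\{\exists E\in I:\Lambda_L(\mathbf{x}),\Lambda_L(\mathbf{y})\text{ are both }(E,m)\text{-singular}\bigr\}\le L^{-p}
\]
whenever $|\mathbf{x}-\mathbf{y}|\ge cL$ and $L\ge L_p$. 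On the complement of this bad event, a generalized eigenfunction with energy in $I$ must decay exponentially across at least one of the two boxes, so $\mathcal{Q}_\omega(\mathbf{x},\mathbf{y})\le \eul^{-m'L}$; on the bad event one uses the deterministic bound $\mathcal{Q}_\omega(\mathbf{x},\mathbf{y})^2\le\|\chi_\mathbf{x} E_\omega(I)\|_2\,\|\chi_\mathbf{y} E_\omega(I)\|_2$, which is controlled uniformly in $\omega$ by Weyl's law on the unit cubes. Taking $L\sim|\mathbf{x}-\mathbf{y}|$ and averaging gives the desired polynomial decay of $\expect\{\mathcal{Q}_\omega^2\}$.

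The main obstacle is the multi-particle induction itself. The potential $W_\kappa^\omega=\omega_{e_1}+\cdots+\omega_{e_N}$ is not a sum of independent samples once two particles occupy the same edge, so the usual decoupling of distant boxes---needed both for the two-box estimate and for the Wegner bound in FI cubes---must be carried out carefully on configurations where edges coincide. Handling this entanglement of the disorder across $\Gamma^{(N)}$, and reconciling it with the quantum graph boundary conditions inherited from $D(\mathfrak{h}_{\omega}^{(N)})$, is the technically demanding part of the proof; the conversion to dynamical localization in the third paragraph is then fairly standard.
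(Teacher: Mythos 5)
Your overall route—spatial decomposition, reduction to an eigenfunction correlator $\mathcal{Q}_\omega(\mathbf{x},\mathbf{y})$, two-box MSA with a good/bad dichotomy, and finally summation over annuli—is the same as the paper's (Section 10, via the generalized eigenfunction expansion of Section 8). Two points, however, need to be addressed.

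First, the decisive quantitative ingredient is glossed over. You write that the MSA yields ``for every $p$, a scale $L_p$ such that $\prob\{\exists E\in I:\Lambda_L(\mathbf{x}),\Lambda_L(\mathbf{y})\text{ both }(E,m)\text{-S}\}\le L^{-p}$ whenever $L\ge L_p$,'' with $I$ (hence $\varepsilon_0$) fixed. But the standard two-box MSA produces a single polynomial rate $L_k^{-2p}$ for a \emph{fixed} $p$ set before the initial scale $L_0$ is chosen; since $\varepsilon_0$ is determined by $L_0$, re-running the scheme with a larger $p$ would shrink $\varepsilon_0$ and make $I$ depend on $s$. To get the required arbitrarily high polynomial decay at a single $\varepsilon_0$ one needs an extra mechanism. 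The paper does this by running the induction with probability bound $L_k^{-2p_n(1+\theta)^k}$, the $(1+\theta)^k$ enhancement of (\textsf{DS}$\,:n,k,m_{L_k},I_n$), which eventually overtakes every polynomial rate at fixed $I$; an alternative would be a bootstrap MSA in the spirit of Germinet--Klein. Your proposal does not identify this step, and without it the series $\sum_j L_j^{s+Nd\alpha+\gamma}L_j^{-2p_N}$ fails to converge once $s$ is large.

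Second, the deterministic bound on the bad event is not obtained from Weyl's law. $E_\omega(I)$ is a spectral projection of the \emph{infinite-volume} operator, so controlling $\|\chi_\mathbf{y}E_\omega(I)\|_2$ uniformly in $\omega$ is not a consequence of finite-volume eigenvalue counting. The paper derives it from the ultracontractivity estimate $\|e^{-tH^{(N)}(\omega)}\|_{L^2\to L^\infty}\le ct^{-N/4}e^{-(Nq_--1)t}$ (Lemma~\ref{lem:SG}), which makes $T^{-1}e^{-tH}$ Hilbert--Schmidt and gives $\|E_\omega(I)T^{-1}\|_2^2=\nu_\omega(I)\le C_{\tr}$ (Lemma~\ref{lem:tr} and (\ref{eq:ge1})). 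You should either invoke such a semigroup/Dirichlet-form argument or another summability device; ``Weyl's law on unit cubes'' is the wrong tool here. A more minor point: on the good event the kernel bound is not purely $e^{-m'L}$; polynomial weights $(1+|\mathbf{x}|)^{\gamma/2}(1+|\mathbf{y}|)^{\gamma/2}$ from the $\mathcal{H}_\pm$ duality appear (cf.\ Lemma~\ref{lem:proj}) and must be absorbed into the exponential.
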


In view of (\ref{eq:spec}), our results simply state that $H^{(N)}(\omega)$ exhibits exponential and strong dynamical localization of any order near its spectral bottom, in the Hilbert-Schmidt norm. Theorems~\ref{thm:exp} and \ref{thm:main} are proved in Sections~\ref{sec:exp} and \ref{sec:dyn}, respectively, using the multi-particle multiscale analysis introduced by Chulaevsky and Suhov in \cite{CS2}, and adapted to the continuum by A. Boutet de Monvel \emph{et al.} in \cite{BCS}. The now traditional single-particle multiscale analysis was introduced by Fr\"ohlich and Spencer in \cite{FS}.

Let us note that for $N=1$ our theorems improve the main result of \cite{EHS}, first by removing the technical assumption ($\exists \tau > \frac{d}{2} : \mu([q_-,q_-+h]) \le h^{\tau}$ for small $h$), next by establishing strong dynamical localization in the HS norm. We are able to remove the assumption on $\mu$ by proving Lifshitz-type asymptotics for this model in Section~\ref{sec:ILS}, which to the best of our knowledge, were not proven in earlier papers.

\section{Multi-particle Quantum Graphs}    \label{sec:model}

\subsection{1-Graphs}      \label{sec:1-graphs}
Our building block is the quantum graph $(\mathcal{E},\mathcal{V})$ of \cite{EHS}, given by the vertex set $\mathcal{V} = \Z^d$ and the edge set $\mathcal{E}$ of all line segments of length $1$ between two neighbouring vertices. This graph is naturally embedded in $\R^d$ and we denote by $\Gamma^{(1)} \subset \R^d$ the image of the embedding. To describe $\Gamma^{(1)}$ explicitly, let $(h_j)_{j=1}^d$ be the standard basis of $\Z^d$. Then 
\[
\Gamma^{(1)} := \{ x \in \R^d: x=m + t h_j \text{ for some } m \in \Z^d, j\in \{1,\ldots,d\} \text{ and } t \in [0,1]\} \, .
\]

We denote the edge between $m$ and $m+h_j$ by $e=(m,j)$. Such an edge is identified with the interval $[0,1]$ by sending $x=m+t h_j \in e$ to the point $t$. The Lebesgue measure on $[0,1]$ then induces a natural measure on $\Gamma^{(1)}$ which we denote by $m^{(1)}$.

A function $f$ on $\Gamma^{(1)}$ induces a sequence $(f_e)$, $f_e:(0,1) \to \C$ by setting $f(x) =: f_{(m,j)}(t)$ when $x=m+t h_j$, for some $m \in \Z^d$ and $t \in (0,1)$. As equality in $L^2$ is a.e., this in turn identifies $L^2\big(\Gamma^{(1)},\dd m^{(1)}\big)$ with $\mathop \oplus _{e\in \mathcal{E}} L^2(0,1)$.

Now fix $q_-,q_+ \in \R$, $q_-<q_+$ and let $\mu$ be a probability measure on $\R$ with support $[q_-,q_+]$. Consider the Hilbert space $\mathcal{H} := \mathop \oplus _{e\in \mathcal{E}} L^2(0,1)$, the probability space $(\Omega,\prob)$, where $\Omega :=[q_-,q_+]^{\mathcal{E}}$ and $\prob = \mathop \otimes _{e\in \mathcal{E}} \mu$, and given $\omega=(\omega_e) \in \Omega$, define the form
\[ 
\mathfrak{h}^{(1)}_{\omega}[f,g] = \sum_{e\in \mathcal{E}} \big[\langle f_e',g_e'\rangle + \langle \omega_ef_e,g_e \rangle \big], \qquad D(\mathfrak{h}^{(1)}_{\omega}) = W^{1,2}(\Gamma^{(1)}),
\]
where
\[
W^{1,2}(\Gamma^{(1)}) := \left\{f\in \mathop \oplus \limits_{e \in \mathcal{E}} W^{1,2}(0,1)\left| \begin{array} {l} f \text{ is continuous at each } v\in \mathcal{V}, \\ 
 \qquad \sum_{e\in \mathcal{E}} \|f_e\|_{W^{1,2}}^2 < \infty  \\ 
 \end{array} \right. \right\}.
\]

This form corresponds to the self-adjoint operator $H^{(1)}(\omega):(f_e)\mapsto(-f_e''+\omega_ef_e)$ with Kirchhoff boundary conditions (i.e. if $f \in D(H^{(1)}(\omega))$ and $v \in \mathcal{V}$, then $f$ is continuous at $v$ and satisfies $\sum_{j=1}^d f_{(v,j)}'(0) - \sum_{j=1}^d f_{(v-h_j,j)}'(1)=0$). It is shown in \cite{EHS} that $H^{(1)}(\omega)$ has an almost sure spectrum $\Sigma = [q_-,+\infty)$ and that localization holds near $q_-$.

\subsection{\emph{n}\:-Graphs}           \label{sec:n-gra}
Let us emphasize that throughout this article, the number of particles
\[
N \text{ is fixed. }
\]
We will need to consider Hamiltonians $H^{(n)}(\omega)$ for $1 \le n \le N$ because we will later deduce some spectral properties of $H^{(N)}(\omega)$ from those of $H^{(n)}(\omega)$.

So let us fix $1 \le n \le N$ and consider $n$-particle systems. Formally, quantum mechanics tells us that the Hilbert space corresponding to $n$ distinguishable particles, each living in $\Gamma^{(1)}$, is the tensor product $L^2(\Gamma^{(1)},\dd m^{(1)}) \otimes \ldots \otimes L^2(\Gamma^{(1)},\dd m^{(1)})$. Taking
\[
\Gamma^{(n)} := \Gamma^{(1)} \times \ldots \times \Gamma^{(1)} \quad \text{and} \quad m^{(n)} := m^{(1)} \otimes \ldots \otimes m^{(1)} \, ,
\]
this space may be identified with $L^2\big(\Gamma^{(n)},\dd m^{(n)}\big)$.

If $(h_j)_{j=1}^d$ is the canonical basis of $\Z^d$, then each point $\mathbf{x}=(x_1,\ldots,x_n) \in \Gamma^{(n)}$ takes the form $x_k = m_k + t^k h_{j_k}$ for some $m_k \in \Z^d$, $t^k \in [0,1]$ and $j_k \in \{1,\ldots,d\}$. By varying $t^k$ from $0$ to $1$, we thus obtain a cube $\kappa$ which may be identified with $[0,1]^n$ by sending such an $\mathbf{x}$ to $(t^1,\ldots,t^n)$. Thus, we may regard $\Gamma^{(n)}$ as a couple $(\mathcal{K},\mathcal{S})$,
where $\mathcal{K}$ is a collection of $n$-dimensional cubes $\kappa$
and $\mathcal{S}$ is the collection of the boundaries $\sigma$ of $\kappa$.

For $d=1$, $\Gamma^{(2)} = \R^2$. If we regard it as a couple $(\mathcal{K},\mathcal{S})$, then it consists of unit squares covering $\R^2$ and cornered in $\Z^2$. For $d=2$, let $x,y,z,t$ be the coordinate axes of $\R^4$. Then $\Gamma^{(2)}$ lives in the planes $xz$, $xt$, $yz$ and $yt$, and all their $\Z^4$-translates, and consists of unit squares cornered in $\Z^4$. Squares in the planes $xy$ and $zt$ (and their $\Z^4$-translates) are not allowed. More generally, $\Gamma^{(2)}$ lives in the translates of $d^2$ planes in $\R^{2d}$ and each affine plane is an infinite collection of $\kappa$.

For $n=3$, the only case that can be visualized is that of $d=1$, in which case $\Gamma^{(3)} = \R^3$, and is regarded as the set of all cubes of unit volume cornered in the lattice $\Z^3$.

If the points of $\kappa$ take the form $(x_1,\ldots,x_n)$ with $x_k = m_k + t^k h_{j_k}$ for some $m_k \in \Z^d$, $t_k \in [0,1]$ and $j_k \in \{1,\ldots,d\}$, we will denote $\kappa = \big((m_1,j_1),\ldots,(m_n,j_n)\big)$. Hence, any $\kappa \in \mathcal{K}$ may be written as $\kappa = (e_1,\ldots,e_n)$ for some $e_j \in \mathcal{E}$.

A function $f$ on $\Gamma^{(n)}$ induces a sequence $(f_{\kappa})$, $f_{\kappa} : (0,1)^n \to \C$ by setting $f(\mathbf{x}) =: f_{((m_1,j_1),\ldots,(m_n,j_n))}(t^1,\ldots,t^n)$ when $x_k=m_k+t^k h_{j_k}$, for some $m_k \in \Z^d$ and $t^k \in (0,1)$. As equality in $L^p$ is a.e., this in turn identifies $L^p\big(\Gamma^{(n)},\dd m^{(n)}\big)$ with $\mathop \oplus _{\kappa \in \mathcal{K}} L^p(0,1)^n$ for $1 \le p < \infty$, where $\|(f_{\kappa}) \|^p_{L^p} := \sum_{\kappa \in \mathcal{K}} \|f_{\kappa}\|_{L^p(0,1)^n}^p$.

Each $\sigma$ is the closed union of $2n$ ``open faces'' $\sigma^i$
which may be identified with $(0,1)^{n-1}$.

Given $\mathbf{x}=(x_1,\ldots,x_n) \in \Gamma^{(n)} \subset \R^{nd}$ and a partition $\{1, \ldots, n\} = \mathcal{J} \cup \mathcal{J}^c$, we put $x_{\mathcal{J}} := (x_j)_{j \in \mathcal{J}}$,
$x_{\mathcal{J}^c} := (x_j)_{j \in \mathcal{J}^c}$ and define
\[ 
\dist(x_{\mathcal{J}},x_{\mathcal{J}^c}) := \min \{ |x_i - x_j| : i \in \mathcal{J}, j \in \mathcal{J}^c \}, \quad \text{where } |y| := \|y\|_{\infty} \text{ for } y \in \R^d \, . 
\]

Now fix $q_-,q_+ \in \R$, $q_-<q_+$, and let $\mu$ be a probability measure on $\R$ with support $[q_-,q_+]$. Consider the probability space $(\Omega,\prob)$ with $\Omega:=[q_-,q_+]^{\mathcal{E}}$, $\prob:= \mathop \otimes _{e \in \mathcal{E}} \mu$, the Hilbert space $\mathcal{H} := \mathop \oplus _{\kappa \in \mathcal{K}} L^2(0,1)^n$, and given $\omega = (\omega_e) \in \Omega$, define the form
\[
\mathfrak{h}^{(n)}_{\omega}[f,g] = \sum_{\kappa \in \mathcal{K}} \big[\langle \nabla f_{\kappa}, \nabla g_{\kappa} \rangle + \langle V_{\kappa}^{\omega} f_{\kappa},g_{\kappa} \rangle \big], \qquad D(\mathfrak{h}^{(n)}_{\omega}) = {W^{1,2}}(\Gamma^{(n)}), 
\]
where 
\[
{W^{1,2}}(\Gamma^{(n)}):= \left\{f\in \mathop \oplus \limits_{\kappa \in \mathcal{K}} {W^{1,2}}((0,1)^n)\left| \begin{array} {l} f \text{ is continuous on each } \sigma^i, \\ 
 \quad \sum_{\kappa \in \mathcal{K}} \|f_{\kappa}\|_{W^{1,2}}^2 < \infty  \\ 
 \end{array} \right. \right\}.
\]

By continuity on $\sigma^i$ we mean that whenever $\sigma^i$ is a common face to $\kappa_1$ and $\kappa_2$, then $\left. {f_{\kappa_1}}\right|_{\sigma^i} = \left. {f_{\kappa_2}}\right|_{\sigma^i}$ in the trace sense. The potential is given by $V_{\kappa}^{\omega} := U_{\kappa}^{(n)} + W_{\kappa}^{\omega}$, where $W_{\kappa}^{\omega}$ is an $n$-particle random potential, $W_{\kappa}^{\omega}:= \omega_{e_1} + \ldots + \omega_{e_n}$ if $\kappa = (e_1, \ldots, e_n)$. The sequence $(U_{\kappa}^{(n)})$ is induced from a non-random interaction potential $U^{(n)}:\Gamma^{(n)} \to \R$ with the following properties:

\begin{enumerate}[(1)]
\item $U^{(n)}$ is bounded and non-negative: there exists $u_0>0$ such that
\[
0 \le U^{(n)}(\mathbf{x}) \le u_0 \qquad \text{for } \mathbf{x} \in \Gamma^{(n)} \, .
\]
\item $U^{(n)}$ has finite range\footnote{This includes the $2$-body interaction potentials $U^{(n)}(\mathbf{x}) = \sum_{1 \le i<j \le n} F(x_i-x_j)$, where $F:\Gamma^{(1)} \to \R$ satisfies $F(y) = 0$ if $|y|\ge r_0$. Indeed, if $\dist(x_{\mathcal{J}}, x_{\mathcal{J}^c}) \ge r_0$, then we will have $F(x_i-x_j) = 0$ whenever $i \in \mathcal{J}$ and $j \in \mathcal{J}^c$, so that $U^{(n)}(\mathbf{x})$ indeed decouples into $U^{(n')}(x_{\mathcal{J}}) + U^{(n'')}(x_{\mathcal{J}^c})$.}: there exists $r_0>0$ such that
\[ 
\dist(x_{\mathcal{J}}, x_{\mathcal{J}^c}) \ge r_0 \implies U^{(n)}(\mathbf{x}) = U^{(n')}(x_{\mathcal{J}}) + U^{(n'')}(x_{\mathcal{J}^c})
\]
for any partition $\{1, \ldots, n\} = \mathcal{J} \cup \mathcal{J}^c$ with $|\mathcal{J}| = n'$ and $|\mathcal{J}^c| = n''$.
\item There is no one-particle potential:
\[
U^{(1)} \equiv 0.
\]
\end{enumerate}

For $n=2$, $U^{(2)}$ is thus function satisfying for $(x_1,x_2) \in \Gamma^{(2)} \subset (\R^d)^2$
\[
0 \le U^{(2)}(x_1,x_2) \le u_0 \quad \text{and} \quad |x_1 - x_2| \ge r_0 \implies U^{(2)}(x_1,x_2) = 0.
\]
Notice that if $|x_i - x_j| \ge r_0$ for all $i \neq j$, then $U^{(n)}(\mathbf{x}) = U^{(1)}(x_1)+\ldots+U^{(1)}(x_n) = 0$. Condition (2) says that more generally, if $x_{\mathcal{J}}$ and $x_{\mathcal{J}^c}$ are far apart, then $U^{(n)}$ decouples as prescribed.

We may assume that $r_0 \in \N$; if this is not the case, we just consider $\lfloor r_0 \rfloor+1$, where $\lfloor x \rfloor$ denotes the integer part of $x \in \R$.

\begin{thm}  \label{thm:dom}
Given $\omega \in \Omega$, $\mathfrak{h}_{\omega}^{(n)}$ is closed, densely defined and bounded from below. The unique self-adjoint operator $H^{(n)}(\omega)$ associated with $\mathfrak{h}_{\omega}^{(n)}$ is given by
\[ 
H^{(n)}(\omega) : (f_{\kappa}) \mapsto (- \Delta f_{\kappa} + V_{\kappa}^{\omega} f_{\kappa}), \qquad \text{for } (f_{\kappa}) \in D(H^{(n)}(\omega)).
\]
\end{thm}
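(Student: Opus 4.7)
The strategy is to reduce everything to the free form
$\mathfrak{h}_0[f,g] = \sum_{\kappa} \langle \nabla f_{\kappa}, \nabla g_{\kappa}\rangle$
by exploiting that $V^\omega$ is a bounded multiplication operator on $\mathcal{H}$: since each $\omega_e \in [q_-,q_+]$ and $0 \le U_\kappa^{(n)} \le u_0$, the uniform bounds $nq_- \le V_\kappa^\omega \le nq_+ + u_0$ hold. Hence $\langle V^\omega f, g\rangle$ is a bounded sesquilinear form on $\mathcal{H}$, and adding it to $\mathfrak{h}_0$ preserves denseness and closedness while shifting the lower bound only by a finite constant. Denseness of $D(\mathfrak{h}_\omega^{(n)})$ in $\mathcal{H}$ is immediate: finite sums of functions each supported in the interior of a single cube $\kappa$, say in $C_c^\infty((0,1)^n)$, lie in $W^{1,2}(\Gamma^{(n)})$ (they satisfy the face-matching condition trivially since both traces vanish), and such functions are dense in $\mathcal{H}$. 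The semi-boundedness $\mathfrak{h}_\omega^{(n)}[f,f] \ge nq_- \|f\|^2$ is then immediate from the pointwise bound on $V^\omega$ combined with positivity of the gradient term.

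For closedness, the graph norm of $\mathfrak{h}_0$ is equivalent to $\sum_\kappa \|f_\kappa\|_{W^{1,2}((0,1)^n)}^2$, so one must show that $W^{1,2}(\Gamma^{(n)})$ is complete in this norm. Given a Cauchy sequence $(f^{(m)})$, each component $f^{(m)}_\kappa$ is Cauchy in $W^{1,2}((0,1)^n)$ with limit $f_\kappa$, and the $\ell^2$-summability of $\|f_\kappa\|_{W^{1,2}}^2$ passes to the limit by Fatou. The substantive point is preservation of the trace-matching condition across each shared face $\sigma^i$: this follows from continuity of the trace operator $W^{1,2}((0,1)^n) \to L^2(\sigma^i)$, which ensures that the identities $f^{(m)}_{\kappa_1}|_{\sigma^i} = f^{(m)}_{\kappa_2}|_{\sigma^i}$ survive passage to the limit on each of the (countably many) faces.

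For the explicit action of $H^{(n)}(\omega)$, I invoke the first representation theorem: $f \in D(H^{(n)}(\omega))$ iff $f \in D(\mathfrak{h}_\omega^{(n)})$ and there exists $h \in \mathcal{H}$ with $\mathfrak{h}_\omega^{(n)}[f,g] = \langle h, g\rangle$ for every $g \in D(\mathfrak{h}_\omega^{(n)})$, in which case $H^{(n)}(\omega) f = h$. Testing against $g \in C_c^\infty((0,1)^n)$ supported in the interior of a single cube $\kappa_0$ yields $\mathfrak{h}_\omega^{(n)}[f,g] = \langle -\Delta f_{\kappa_0} + V_{\kappa_0}^\omega f_{\kappa_0}, g\rangle$ in the distributional sense, so membership in $D(H^{(n)}(\omega))$ forces $-\Delta f_\kappa + V_\kappa^\omega f_\kappa \in L^2((0,1)^n)$ for each $\kappa$, giving the stated action. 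Testing against general $g \in D(\mathfrak{h}_\omega^{(n)})$ then produces boundary surface integrals on the faces $\sigma^i$ whose cancellation, for arbitrary admissible traces of $g$, encodes a Kirchhoff-type transmission condition implicit in $D(H^{(n)}(\omega))$.

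\textbf{Main obstacle.} I expect closedness to be the essential point, because of the countably many trace-matching identities that must all be stable under $W^{1,2}$ convergence; once this is handled the rest is a fairly mechanical form-to-operator translation. Organising the cube-by-cube integration by parts in the last step is also slightly awkward because the ``vertices'' of the underlying graph are now $(n-1)$-dimensional faces rather than points, but this is bookkeeping rather than a genuine difficulty, and continuity of the trace map is again what makes it work.
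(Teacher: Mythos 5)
Your proof is correct and follows essentially the same route as the paper: completeness of the direct-sum Sobolev space plus continuity of the trace operator to see that $W^{1,2}(\Gamma^{(n)})$ is closed (hence the form, whose graph norm is equivalent to $\|\cdot\|_{W^{1,2}}$ after shifting by the bounded $V^\omega$, is closed); compactly supported smooth functions in the cube interiors for density; the first representation theorem followed by testing against such functions to read off $-\Delta + V^\omega$ in the distributional and then $L^2$ sense. The only inessential addition in your write-up is the remark about implicit Kirchhoff conditions in the domain, which the paper deliberately avoids because the domain's precise regularity is not needed (and in fact corner singularities make it delicate, as the paper notes immediately after the theorem).
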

\begin{proof}
See the Appendix (Section~\ref{sec:app}).
\end{proof}

We did not provide the explicit domain of $H^{(n)}(\omega)$ as it is not needed in the sequel. It is a subtle question to know exactly how regular the functions $(f_{\kappa}) \in D(H^{(n)})$ are; in particular, it is not clear if the normal derivatives of $f_{\kappa}$ have a trace on $\sigma^i$. For $n=1$, it is easy to see that if $(f_e) \in D(H^{(1)})$, then $f_e \in W^{2,2}(0,1)$ for each $e$. This gives a meaning in particular to the Kirchhoff conditions. Once $n \ge 2$ however, corner singularities appear which, in general, destroy the regularity of the $f_{\kappa}$, see e.g. \cite{Gri}. If we had asked each $f_{\kappa}$ to satisfy Dirichlet or Neumann conditions, we would have $f_{\kappa} \in W^{2,2}((0,1)^n)$ (see \cite[Section 3.2]{Gri}). However, as we ask $f_{\kappa}$ to be continuous on $\sigma^i$, this regularity result is no longer clear. See \cite[Section 2.3.2]{Nic2} for some results when $n=2$ and \cite{BK} for some boundary conditions ensuring regularity also when $n=2$. For general $n$-dimensional polyhedral interface problems, we record the result of \cite{BMNL}.

The following theorem identifies the lower part of $\sigma(H^{(n)}(\omega))$.

\begin{thm}      \label{thm:as}
There exists $\Omega_0 \subseteq \Omega$ with $\prob(\Omega_0) = 1$ such that for all $\omega \in \Omega_0:$
\[ 
[nq_-,nq_+] \subset \sigma(H^{(n)}(\omega)) \subseteq [nq_-,+\infty). 
\]
In particular, $\inf \sigma(H^{(n)}(\omega)) = nq_-$ almost surely.
\end{thm}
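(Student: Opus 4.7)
The plan is to handle the two inclusions separately. For $\sigma(H^{(n)}(\omega))\subseteq[nq_-,+\infty)$ I argue deterministically: property~(1) gives $U^{(n)}\ge 0$ and each $\omega_e\ge q_-$ gives $W_\kappa^\omega\ge nq_-$, so $\mathfrak{h}^{(n)}_\omega[f,f]\ge nq_-\|f\|^2$ for every $f\in D(\mathfrak{h}^{(n)}_\omega)$, hence $H^{(n)}(\omega)\ge nq_-$ pointwise in $\omega$ by Theorem~\ref{thm:dom}. The substantive half is the inclusion $[nq_-,nq_+]\subset\sigma(H^{(n)}(\omega))$ almost surely, which I obtain by constructing a Weyl sequence adapted to a local configuration of the random field, whose almost sure existence will follow from the $\Z^d$-ergodicity of $(\omega_e)_{e\in\mathcal E}$.

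Given $\lambda\in[nq_-,nq_+]$, set $q^i:=q_-+(\lambda-nq_-)/n\in[q_-,q_+]$, so that $\sum_i q^i=\lambda$. Pick a nonzero bump $\chi\in C^{\infty}_{c}(\R^d)$ with $\chi\equiv 1$ on $\{|x|\le 1/2\}$ and $\supp\chi\subset\{|x|<1\}$. For $R\in\N$ take $m^{(i)}:=4iRh_1\in\Z^d$, $\varphi^i_R(x):=\chi((x-m^{(i)})/R)$, and
\[
\Psi_R(\mathbf{x}):=\prod_{i=1}^n\varphi^i_R(x_i),\qquad \mathbf{x}=(x_1,\dots,x_n)\in\R^{nd},
\]
restricted to $\Gamma^{(n)}$. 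Since $\Psi_R$ is the restriction of a function in $C^\infty(\R^{nd})$, its trace on every face $\sigma^i$ is single-valued and its normal derivatives match across shared faces, so $\Psi_R$ lies in both $D(\mathfrak{h}^{(n)}_\omega)$ and $D(H^{(n)}(\omega))$, with $H^{(n)}(\omega)\Psi_R=-\Delta\Psi_R+V^\omega\Psi_R$. A straightforward scaling computation gives $\|\Psi_R\|^2\ge cR^{nd}$ and $\|{-}\Delta\Psi_R\|^2\le CR^{nd-4}$, hence $\|{-}\Delta\Psi_R\|\le C'R^{-2}\|\Psi_R\|$; and on $\supp\Psi_R\subset\prod_i B_R(m^{(i)})$ the coordinates are pairwise separated by at least $2R>r_0$ once $R$ is large, so iterating the finite-range decoupling~(2) together with~(3) yields $U^{(n)}\equiv 0$ there. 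Thus, on the event
\[
\mathcal A_{R,\varepsilon}:=\bigcap_{i=1}^n\bigl\{\omega:|\omega_e-q^i|<\varepsilon\text{ for every }e\text{ meeting }\overline{B_R(m^{(i)})}\bigr\},
\]
we have $|V^\omega_\kappa-\lambda|<n\varepsilon$ on $\supp\Psi_R$, and combining these estimates produces $\|(H^{(n)}(\omega)-\lambda)\Psi_R\|\le(n\varepsilon+C'R^{-2})\|\Psi_R\|$.

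To upgrade this to a Weyl sequence I still need, for a.e.\ $\omega$, a translate of $\mathcal A_{R,\varepsilon}$ to occur for every $R$ and every $\varepsilon$. Since $q^i\in\supp\mu$ we have $\mu([q^i-\varepsilon,q^i+\varepsilon])>0$, and $\prob(\mathcal A_{R,\varepsilon})>0$ by independence of the $\omega_e$; choosing infinitely many disjoint translates of the centers $m^{(i)}$ along $h_2$ (with spacing exceeding $2R$) gives an independent family each of this fixed positive probability, so the second Borel--Cantelli lemma ensures some translate occurs a.s.\ (alternatively: ergodicity of the iid $\Z^d$-shift already forces the translation-invariant event ``a valid local configuration exists somewhere'' to have probability one). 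Intersecting over a countable sequence $(R_k,\varepsilon_k)$ with $R_k\to\infty$ and $\varepsilon_k\to 0$ produces a Weyl sequence a.s., so $\lambda\in\sigma(H^{(n)}(\omega))$; a further countable intersection over a dense set of $\lambda\in[nq_-,nq_+]$, combined with the closedness of $\sigma(H^{(n)}(\omega))$, yields the stated inclusion on a single event $\Omega_0$ of full measure. The one point that really needs checking is that the tensor-product cutoff $\Psi_R$ belongs to the form domain, and this is essentially free because $\Psi_R$ is the restriction of a genuinely smooth function on $\R^{nd}$, so the trace continuity on every face is automatic and the boundary terms in integration by parts at each vertex cancel in the usual Kirchhoff way.
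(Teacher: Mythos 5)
Your proposal is correct and takes essentially the same approach as the paper: deterministic positivity gives $\sigma(H^{(n)}(\omega))\subseteq[nq_-,\infty)$, and for the lower inclusion both arguments construct a Weyl sequence on configurations where the $n$ coordinates lie in far-apart balls (so $U^{(n)}$ decouples to $0$) and the random potential is within $\varepsilon$ of $\lambda/n$, then invoke the second Borel--Cantelli lemma and a countable dense intersection. The only cosmetic difference is the cutoff: the paper uses a plateau that grows while its transition annulus has fixed width, whereas you rescale a fixed bump so the transition region scales with $R$; both give $\|\Delta\Psi\|/\|\Psi\|\to 0$.
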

\begin{proof}
See the Appendix (Section~\ref{sec:app}).
\end{proof}

\section{Finite-volume operators and geometry of cubes}    \label{sec:MSA}
\subsection{Finite-volumes operators}
Fix $1 \le n \le N$. Throughout the paper we use the sup norm of $\R^{nd}$ :
\[ 
|x| := \|x\|_{\infty}, \qquad | \mathbf{x}| := \|\mathbf{x}\|_{\infty}
\]
for $x \in \R^d$ and $\mathbf{x} \in \R^{nd}$. Given $L \in \N^{\ast}$, we define $1$-cubes with center $u \in \Z^d$ by
\[ 
\Lambda_L^{(1)}(u) = \{x \in \R^d : |x-u| < L \}, \qquad |\Lambda_L^{(1)}(u)| = (2L)^d \, .
\]
Given $\mathbf{u} = (u_1,\ldots, u_n) \in \Z^{nd}$ and $\LL=(L_1,\ldots,L_n) \in \N^n$ with $L_j \ge 1$, we define $n$-rectangles and $n$-cubes by
\[ 
\Lambda_{\LL}^{(n)}(\mathbf{u}) = \prod_{j=1}^n \Lambda_{L_j}^{(1)}(u_j), \qquad \Lambda_L^{(n)}(\mathbf{u}) = \Lambda_{(L,\ldots,L)}^{(n)}(\mathbf{u}) = \prod_{j=1}^n \Lambda_L^{(1)}(u_j) \, .
\]

Note that a cube is always open. We take $\mathbf{u} \in \Z^{nd}$ and $\LL \in \N^n$ above to ensure that the closure of $\Gamma^{(n)} \cap \Lambda^{(n)}_{\LL}(\mathbf{u})$ is a subgraph of $\Gamma^{(n)}$. Abusing notation, we also denote this closure by $\Gamma^{(n)} \cap \Lambda^{(n)}_{\LL}(\mathbf{u})$. For $1$-graphs, taking the closure means that we add the vertices lying on $\partial \Lambda_L^{(1)}(u)$ that belong to inner edges. This should not be confused with the larger subgraph $\Gamma^{(1)} \cap \overline{\Lambda}_L^{(1)}(u)$.

\begin{lem}      \label{lem:NB}
The following estimates hold:
\begin{align*}
\#\{\mathcal{E}(\Gamma^{(1)} \cap \Lambda_L^{(1)}) \} & = d(2L)(2L-1)^{d-1} \le d \cdot |\Lambda_L^{(1)}| \, , \tag{\textsf{NB.$1$}} \\
\#\{\mathcal{K}(\Gamma^{(n)} \cap \Lambda_{\LL}^{(n)}) \} & = \prod_{j=1}^n\Big(d(2L_j)(2L_j-1)^{d-1}\Big) \le d^n \cdot |\Lambda_{\LL}^{(n)}| \, . \tag{\textsf{NB.$n$}}
\end{align*}
\end{lem}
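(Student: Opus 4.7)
The estimate is a counting argument; no analysis is involved. I would first prove (\textsf{NB.}$1$) and then lift it to (\textsf{NB.}$n$) via the product structure.

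\textbf{Step 1 (reduction and parametrization).} By translation invariance of the lattice it suffices to take $u = 0$. By the convention stated before the lemma, the edges of $\Gamma^{(1)} \cap \Lambda_L^{(1)}(0)$ are exactly the edges $e = (m,j) \in \mathcal{E}$, with $m \in \Z^d$ and $j \in \{1,\ldots,d\}$, whose interior has non-empty intersection with the open cube $\Lambda_L^{(1)}(0) = \{x \in \R^d : \|x\|_\infty < L\}$. Since the edge $e = (m,j)$ is parametrized by $\{m + t h_j : t \in [0,1]\}$, the condition ``interior meets $\Lambda_L^{(1)}(0)$'' translates coordinate-by-coordinate into
\begin{itemize}
\item for $i \ne j$: $|m_i| < L$, equivalently $m_i \in \{-L+1, -L+2, \ldots, L-1\}$, giving $2L-1$ admissible values;
\item for $i = j$: $(m_j, m_j + 1) \cap (-L, L) \ne \emptyset$, equivalently $m_j \in \{-L, -L+1, \ldots, L-1\}$, giving $2L$ admissible values.
\end{itemize}

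\textbf{Step 2 (count and upper bound for $n=1$).} Summing over the $d$ possible directions $j$ and multiplying the $2L$ choices for $m_j$ by the $(2L-1)^{d-1}$ choices for the other coordinates yields exactly $d(2L)(2L-1)^{d-1}$ edges. The inequality $d(2L)(2L-1)^{d-1} \le d(2L)^d = d \cdot |\Lambda_L^{(1)}|$ is immediate since $(2L-1)^{d-1} \le (2L)^{d-1}$.

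\textbf{Step 3 (product structure for $n$-graphs).} By definition $\Lambda_{\LL}^{(n)}(\mathbf{u}) = \prod_{j=1}^n \Lambda_{L_j}^{(1)}(u_j)$, so
\[
\Gamma^{(n)} \cap \Lambda_{\LL}^{(n)}(\mathbf{u}) = \prod_{j=1}^n \bigl( \Gamma^{(1)} \cap \Lambda_{L_j}^{(1)}(u_j) \bigr),
\]
and a cube $\kappa \in \mathcal{K}(\Gamma^{(n)} \cap \Lambda_{\LL}^{(n)}(\mathbf{u}))$ is, by the description in Section~\ref{sec:n-gra}, precisely an $n$-tuple $\kappa = (e_1,\ldots,e_n)$ with $e_j \in \mathcal{E}(\Gamma^{(1)} \cap \Lambda_{L_j}^{(1)}(u_j))$. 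The map $\kappa \mapsto (e_1,\ldots,e_n)$ being a bijection, the count factorizes: applying (\textsf{NB.}$1$) to each factor $\Lambda_{L_j}^{(1)}(u_j)$ gives the equality in (\textsf{NB.}$n$). The final inequality follows again from $(2L_j-1)^{d-1} \le (2L_j)^{d-1}$ applied factor-wise, since $|\Lambda_{\LL}^{(n)}| = \prod_{j=1}^n (2L_j)^d$.

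\textbf{Expected obstacles.} None of substance; the only point that deserves care is the precise translation of ``the closure of $\Gamma^{(1)} \cap \Lambda_L^{(1)}(0)$ is a subgraph'' into the condition that an edge is counted exactly when its open interior hits the open cube. This determines the correct ranges ($2L-1$ versus $2L+1$) and therefore the exact formula; once this is handled, everything is elementary.
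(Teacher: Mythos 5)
Your proof is correct. The interesting point to note is that your argument for (\textsf{NB.$1$}) differs in organization from the paper's: the paper proves the formula by \emph{induction on the ambient dimension} $d$, slicing $\Lambda_L^{(1)}(0)$ by the hyperplanes $x_{d}=L-1,\ldots,-L+1$ and separately counting the ``horizontal'' edges in each slice and the ``vertical'' edges transverse to them. You instead count \emph{directly} by parametrizing each edge as $(m,j)$, observing that admissibility factorizes coordinate-by-coordinate into $2L$ choices for $m_j$ and $2L-1$ choices for each $m_i$ with $i\ne j$, and summing over the $d$ directions. Both are purely combinatorial and give the same formula, but your version avoids the induction entirely and makes the factorization that underlies the passage to (\textsf{NB.$n$}) immediately visible; the paper's induction, by contrast, illuminates how the formula decomposes across one spatial dimension at a time. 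One small point worth keeping in your final write-up: the equivalence between ``the interior of $e$ meets $\Lambda_L^{(1)}(0)$'' and ``the interior of $e$ lies in $\Lambda_L^{(1)}(0)$'' relies on $u\in\Z^d$ and $L\in\N$, so an edge is never partially cut by $\partial\Lambda_L^{(1)}$; you allude to this in your ``expected obstacles'' remark, and it is exactly the point where the hypothesis on $u,L$ is used.
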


\begin{proof}
See the Appendix (Section~\ref{sec:app}).
\end{proof}

We define the \emph{discrete cubes} $\mathbf{B}^{(n)}_L(\mathbf{u})$ and the \emph{cells} $\mathbf{C}(\mathbf{u})$ by
\[ 
\mathbf{B}^{(n)}_L(\mathbf{u}) = \Lambda_L^{(n)}(\mathbf{u}) \cap \Z^{nd}, \qquad \mathbf{C}(\mathbf{u}) = \Lambda_1^{(n)}(\mathbf{u}) \subset \R^{nd} \, .
\]

A finite union of cells will be called a \emph{cellular set}. For $L \ge 7$, we denote
\[ 
\Lambda_L^{\out}(\mathbf{u}) = \Lambda_L^{(n)}(\mathbf{u}) \setminus \Lambda_{L-6}^{(n)}(\mathbf{u}), \qquad \mathbf{B}_L^{\out}(\mathbf{u}) = \Lambda_L^{\out}(\mathbf{u}) \cap \Z^{nd} \, .
\]

We define the restriction of $H^{(n)}(\omega)$ to a rectangle $\Lambda = \Lambda_{\LL}^{(n)}$ to be the operator $H_{\Lambda}^{(n)}(\omega)$
associated with the form
\[ 
\mathfrak{h}_{\omega,\Lambda}^{(n)}[f,g] = \sum_{\kappa \in \mathcal{K}(\Gamma \cap \Lambda)} \big[ \langle \nabla f_{\kappa} , \nabla g_{\kappa} \rangle + \langle V_{\kappa}^{\omega} f_{\kappa} , g_{\kappa} \rangle \big], \qquad D(\mathfrak{h}_{\omega,\Lambda}^{(n)}) = W^{1,2}(\Gamma^{(n)} \cap \Lambda) \, ,
\]
where $W^{1,2}(\Gamma^{(n)} \cap \Lambda)$ is the set of $f \in \mathop \oplus_{\kappa \in \mathcal{K}(\Gamma \cap \Lambda)} W^{1,2}((0,1)^n)$ which are continuous on inner $\sigma^i$. For $n=1$, the functions $f \in D(H_{\Lambda}^{(1)}(\omega))$ satisfy Kirchhoff conditions at each vertex in $\Lambda$. Note that for boundary vertices, Kirchhoff conditions are just Neumann conditions.

\begin{lem}      \label{lem:WEYL}
$H_{\Lambda_{\LL}}^{(n)}(\omega)$ has a compact resolvent. Its discrete set of eigenvalues denoted by $E_j(H_{\Lambda_{\LL}}^{(n)}(\omega))$ counting multiplicity satisfies the following Weyl law:
\[ 
\forall S \in \R \ \exists C = C(n,d,S-nq_-):\ j > C |\Lambda_{\LL}^{(n)}| \implies E_j(H_{\Lambda_{\LL}}^{(n)}(\omega)) > S \, .  \tag{\textsf{WEYL.$n$}} \]
Moreover, $C$ is independent of $\omega$, and if $S>S^{\ast}(n,q_-)$, then $C \le \lfloor \frac{d^n(S-nq_-)^{n/2}}{(4\pi)^{n/2}\Gamma(n/2)} \rfloor +1$.
\end{lem}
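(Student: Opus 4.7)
My strategy proceeds in three stages: compactness of the resolvent, a form-domain enlargement that decouples the cubes $\kappa$, and an elementary lattice-point count on $\N_0^n$. For the compact-resolvent claim, (\textsf{NB.$n$}) says $\Gamma^{(n)} \cap \Lambda_{\LL}^{(n)}$ is a finite union of unit cubes, so $\mathcal{H}_{\Lambda} := \bigoplus_{\kappa \in \mathcal{K}(\Gamma \cap \Lambda)} L^2((0,1)^n)$ is a finite direct sum. The form $\mathfrak{h}_{\omega,\Lambda}^{(n)}$ is closed and bounded below by the same argument as in Theorem~\ref{thm:dom}, and its domain embeds continuously into $\bigoplus_\kappa W^{1,2}((0,1)^n)$, which embeds compactly into $\mathcal{H}_\Lambda$ by the Rellich--Kondrachov theorem applied on each of the finitely many cubes. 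Hence $H^{(n)}_\Lambda(\omega)$ has compact resolvent and a discrete eigenvalue sequence.

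To bound these eigenvalues from below I would relax the continuity matching on all internal faces. Let $\widetilde{\mathfrak{h}}$ be defined by the same formal expression on the larger domain $\bigoplus_\kappa W^{1,2}((0,1)^n)$, with no inter-face constraints; its associated operator $\widetilde{H}$ decouples as $\bigoplus_\kappa (-\Delta_\kappa^N + V^\omega_\kappa)$, and the min-max principle yields $E_j(\widetilde{H}) \le E_j(H^{(n)}_\Lambda(\omega))$ because the form domain has been enlarged. Since $V^\omega_\kappa = U^{(n)}_\kappa + \omega_{e_1} + \cdots + \omega_{e_n} \ge nq_-$ uniformly in $\omega$ and $\kappa$ (using $U^{(n)} \ge 0$ and $\omega_e \ge q_-$), we further have $\widetilde{H} \ge \bigoplus_\kappa(-\Delta^N_\kappa) + nq_-$, reducing the problem to counting Neumann--Laplacian eigenvalues on a single unit cube.

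The Neumann Laplacian on $(0,1)^n$ has eigenvalues $\pi^2(k_1^2 + \cdots + k_n^2)$ indexed by $k \in \N_0^n$, so writing $\lambda := S - nq_-$, the count per cube is $N(S) = \#\{k \in \N_0^n : \pi^2 |k|^2 < \lambda\}$. A standard lattice-point-versus-volume comparison gives the sharp leading order $\lambda^{n/2} / [(4\pi)^{n/2} \Gamma(n/2+1)]$; once $\lambda$ exceeds an explicit threshold depending on $(n,q_-)$, the slack factor $n/2 = \Gamma(n/2+1)/\Gamma(n/2)$ absorbs the boundary-layer correction and the integer rounding, yielding $N(S) \le \lfloor \lambda^{n/2}/[(4\pi)^{n/2}\Gamma(n/2)]\rfloor + 1$. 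Multiplying by $\#\mathcal{K}(\Gamma^{(n)} \cap \Lambda) \le d^n |\Lambda|$ from (\textsf{NB.$n$}) produces the asserted Weyl bound; $\omega$-independence of $C$ is automatic since the comparison invoked only the deterministic lower bound $V^\omega_\kappa \ge nq_-$. The main obstacle is pinning down the explicit threshold $S^\ast(n,q_-)$, but this is a purely Euclidean geometry question about lattice points in a spherical orthant, with no probability entering.
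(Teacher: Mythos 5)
Your proof follows essentially the same route as the paper: relax the continuity matching to get the Neumann-decoupled operator $\bigoplus_\kappa(-\Delta^N_\kappa + V^\omega_\kappa)$, use the deterministic lower bound $V^\omega_\kappa \ge nq_-$ to reduce to counting Neumann eigenvalues on a single unit cube, and multiply by $\#\mathcal{K}$ via (\textsf{NB.$n$}). The only variation is in the compact-resolvent step, where you invoke Rellich--Kondrachov directly on the finitely many unit cubes, whereas the paper derives compactness from the same form inequality $H_\Lambda^{(n)} \ge -\Delta_\Lambda^{\mathrm{N},\dec} + nq_-$ combined with \cite[Theorem XIII.64]{RS}; both are standard and equally valid, with the paper's version having the small economy of reusing the comparison it needs anyway for the Weyl bound.
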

\begin{proof}
See the Appendix (Section~\ref{sec:app}).
\end{proof}

In the rest of this paper, for a bounded volume $K \subset \R^{nd}$ we put
\[ 
\chi_K := \chi_{\Gamma^{(n)} \, \cap \, K}, \qquad \chi_{\mathbf{x}} := \chi_{\Gamma^{(n)} \, \cap \, \mathbf{C}(\mathbf{x})} \, .
\]
Given $\omega \in \Omega$, $E \notin \sigma(H_{\Lambda}^{(n)}(\omega))$ and $\mathbf{i},\mathbf{j} \in \Lambda^{(n)} \cap \Z^{nd}$, we define
\[ 
G_{\Lambda^{(n)}}(E):= (H_{\Lambda}^{(n)}(\omega) -E)^{-1}, \qquad G_{\Lambda^{(n)}}(\mathbf{i},\mathbf{j};E) := \chi_{\mathbf{i}} G_{\Lambda^{(n)}}(E) \chi_{\mathbf{j}} \, .
\]

\subsection{Geometry of cubes}
\begin{defa}
Given $n \ge 2$ and a partition $\{1,\ldots,n\} = \mathcal{J} \cup \mathcal{J}^c$, we say that $\Lambda_L^{(n)}(\mathbf{u})$ is $\mathcal{J}$-\emph{decomposable} if
\[ 
\dist(u_{\mathcal{J}},u_{\mathcal{J}^c}) \ge 2L+r_0 \, .
\]
We say that $\Lambda_L^{(n)}(\mathbf{u})$ is \emph{decomposable} if there exists a partition $\{1,\ldots,n\} = \mathcal{J} \cup \mathcal{J}^c$ such that $\Lambda_L^{(n)}(\mathbf{u})$ is $\mathcal{J}$-decomposable.
\end{defa}

A $\mathcal{J}$-decomposable cube $\Lambda_L^{(n)}(\mathbf{u})$ will henceforth be denoted by
\[
\Lambda_L^{(n)}(\mathbf{u}) = \Lambda_L^{(n')}(u_{\mathcal{J}}) \times \Lambda_L^{(n'')}(u_{\mathcal{J}^c}), \qquad \text{where } n' = |\mathcal{J}| \text{ and } n'' = |\mathcal{J}^c| \, .
\]

\begin{rem}               \label{rem:dec}
Suppose $\Lambda_L^{(n)}(\mathbf{u})$ is $\mathcal{J}$-decomposable and identify $L^2(\Gamma^{(n)} \cap \Lambda_L^{(n)}(\mathbf{u})) \equiv L^2(\Gamma^{(n')} \cap \Lambda_L^{(n')}(u_{\mathcal{J}})) \otimes L^2(\Gamma^{(n'')} \cap \Lambda_L^{(n'')}(u_{\mathcal{J}^c}))$. Any $\mathbf{x} \in \Lambda_L^{(n)}(\mathbf{u})$ satisfies $\dist(x_{\mathcal{J}},x_{\mathcal{J}^c}) > r_0$, hence $U^{(n)}(\mathbf{x}) = U^{(n')}(x_{\mathcal{J}}) + U^{(n'')}(x_{\mathcal{J}^c})$. Consequently, $H_{\Lambda_L(\mathbf{u})}^{(n)} = H_{\Lambda_L(u_{\mathcal{J}})}^{(n')} \otimes I + I \otimes H_{\Lambda_L(u_{\mathcal{J}^c})}^{(n'')}$. If now $\{(\varphi_a,\lambda_a)\}_a$ and $\{(\psi_b,\mu_b)\}_b$ are orthonormal bases of eigenfunctions of $H^{(n')}_{\Lambda_L(u_{\mathcal{J}})}$ and $H^{(n'')}_{\Lambda_L(u_{\mathcal{J}^c})}$, respectively, then $\Psi_{a,b}(\mathbf{x}) := \varphi_a(x_{\mathcal{J}}) \otimes \psi_b(x_{\mathcal{J}^c})$ form an orthonormal basis of eigenfunctions for $H_{\Lambda_L(\mathbf{u})}^{(n)}$ with corresponding eigenvalues $E_{a,b} = \lambda_a + \mu_b$. Since 
\[
P_{a,b} := \langle \cdot , \Psi_{a,b} \rangle \Psi_{a,b} = \langle \cdot , \varphi_a \otimes \psi_b \rangle \varphi_a \otimes \psi_b = \big(\langle \cdot , \varphi_a \rangle \varphi_a\big)\otimes\big(\langle \cdot , \psi_b \rangle \psi_b\big) =: P_a \otimes P_b \, ,
\]
by the functional calculus, we get for any Borel function $\eta:\sigma(H^{(n)}_{\Lambda_L(\textbf{u})}) \to \C$
\begin{equation}
\eta (H_{\Lambda_L(\mathbf{u})}^{(n)}) = \sum_{a,b} \eta(E_{a,b}) P_{a,b} = \sum_a P_a \otimes \Big( \sum_b \eta(E_{a,b}) P_b \Big) \, .  \label{eq:dec}
\end{equation}
\end{rem}

\begin{defa}
Let $\D := \{ \mathbf{x} = (x,\ldots,x) : x \in \Z^d \} \subset \Z^{nd}$. A cube $\Lambda_L^{(n)}(\mathbf{u})$ is \emph{partially interactive} (PI) if $\dist(\mathbf{u},\D) \ge (n-1)(2L+r_0)$, and \emph{fully interactive} (FI) otherwise.
\end{defa}

\begin{lem}      \label{lem:PI}
A partially interactive cube is decomposable.
\end{lem}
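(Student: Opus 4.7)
The plan is to argue by contradiction using a clustering/connectivity graph on the $n$ particle indices, a standard trick in multi-particle multiscale analysis.

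Set $r := 2L + r_0$ for brevity, and introduce the \emph{clustering graph} $G$ on vertex set $\{1,\ldots,n\}$ by declaring $i \sim j$ iff $|u_i - u_j| < r$. A partition $\mathcal{J} \cup \mathcal{J}^c$ witnesses $\mathcal{J}$-decomposability exactly when no edge of $G$ crosses the partition. So it suffices to show that under the PI hypothesis $G$ is disconnected: we can then take $\mathcal{J}$ to be any connected component, which is necessarily a proper nonempty subset, and every pair $(i,j)$ with $i\in\mathcal{J}$, $j\in\mathcal{J}^c$ satisfies $|u_i-u_j|\ge r=2L+r_0$, yielding $\dist(u_{\mathcal{J}},u_{\mathcal{J}^c})\ge 2L+r_0$.

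Now assume for contradiction that $G$ is connected. Then any vertex $j$ is joined to vertex $1$ by a path in $G$ of length at most $n-1$, so by the triangle inequality in the sup-norm,
\[
|u_j - u_1| \le \sum_{k} |u_{i_{k+1}} - u_{i_k}| \le (n-1)\,(r-1),
\]
since each edge contributes strictly less than $r$ and all the $u_k$ lie in $\Z^d$, so each consecutive distance is an integer $\le r-1$. Because $\mathbf{u}\in\Z^{nd}$, the point $(u_1,\ldots,u_1)\in\D$, whence
\[
\dist(\mathbf{u},\D) \;\le\; \|\mathbf{u} - (u_1,\ldots,u_1)\|_\infty \;=\; \max_j |u_j - u_1| \;\le\; (n-1)(r-1) \;<\; (n-1)\,r,
\]
contradicting the PI assumption $\dist(\mathbf{u},\D) \ge (n-1)(2L+r_0)$.

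The only subtle point is the gap of $1$ gained from integrality of the coordinates of $u_j$, which is what lets the strict inequality $|u_i-u_j|<r$ in the definition of $G$ yield the strict inequality needed to contradict $\dist(\mathbf{u},\D)\ge (n-1)r$. Everything else is the pigeonhole-style observation that a disconnected clustering graph supplies the required partition.
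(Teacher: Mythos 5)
Your proof is correct and follows essentially the same route as the paper: the paper also argues by contraposition, building up a connected cluster of indices step by step (what you call the connected component of the clustering graph) and then bounding $\max_j |u_j - u_1| < (n-1)(2L+r_0)$ via the triangle inequality. Your appeal to integrality of the $u_j$ is unnecessary — a path of length at most $n-1$ with each edge strictly less than $r$ already gives $|u_j - u_1| < (n-1)r$ directly — but it does no harm.
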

\begin{proof} 
See the Appendix (Section~\ref{sec:app}).
\end{proof}

For $n \ge 1$, $j=1,\ldots, n$, we define projections of $n$-rectangles on $\R^d$ by
\[ 
\Pi_j \Lambda_{\LL}^{(n)}(\mathbf{u}) = \Lambda_{L_j}^{(1)}(u_j), \qquad \Pi \Lambda_{\LL}^{(n)}(\mathbf{u}) = \bigcup_{j=1}^n \Lambda_{L_j}^{(1)}(u_j) \, .
\]
We define $\Pi_{\emptyset} \Lambda_{\LL}^{(n)}(\mathbf{u}) := \emptyset$ and put for $\emptyset \neq \mathcal{J} \subseteq \{1,\ldots, n\}$,
\[ 
\Pi_{\mathcal{J}} \Lambda_{\LL}^{(n)}(\mathbf{u}) = \bigcup_{j \in \mathcal{J}} \Pi_j \Lambda_{\LL}^{(n)}(\mathbf{u}) = \bigcup_{j \in \mathcal{J}} \Lambda_{L_j}^{(1)}(u_j) \, .
\]
\begin{defa}              \label{def:sep}
We say $\Lambda_{\LL}^{(n)}(\mathbf{u})$ is $\mathcal{J}$-pre-separable from $\Lambda_{\KK}^{(n)}(\mathbf{v})$ if
\[ 
\Pi_{\mathcal{J}} \Lambda_{\LL}^{(n)}(\mathbf{u}) \cap \big( \Pi_{\mathcal{J}^c} \Lambda_{\LL}^{(n)}(\mathbf{u}) \cup \Pi \Lambda_{\KK}^{(n)}(\mathbf{v}) \big) = \emptyset \, .
\]
$\Lambda_{\LL}^{(n)}(\mathbf{u})$ and $\Lambda_{\KK}^{(n)}(\mathbf{v})$ are said to be \emph{pre-separable} if there exists $\emptyset \neq \mathcal{J} \subseteq \{1,\ldots, n\}$ such that $\Lambda_{\LL}^{(n)}(\mathbf{u})$ is $\mathcal{J}$-pre-separable from $\Lambda_{\KK}^{(n)}(\mathbf{v})$ or $\Lambda_{\KK}^{(n)}(\mathbf{v})$ is $\mathcal{J}$-pre-separable from $\Lambda_{\LL}^{(n)}(\mathbf{u})$. 

Two cubes $\Lambda_L^{(n)}(\mathbf{u})$ and $\Lambda_L^{(n)}(\mathbf{v})$ are said to be \emph{separable} if they are pre-separable and if $|\mathbf{u} - \mathbf{v} | \ge r_{n,L}$, where
\[ 
r_{n,L}:=4(n-1)(2L+r_0)+2L \, .
\]
Finally, they are said to be \emph{completely separated} if they are separable with $\mathcal{J} = \{1,\ldots, n\}$, i.e. if $\Pi \Lambda_L^{(n)}(\mathbf{u}) \cap \Pi \Lambda_L^{(n)}(\mathbf{v}) = \emptyset$ and $|\mathbf{u} - \mathbf{v} | \ge r_{n,L}$.
\end{defa}
Notice that if two cubes are completely separated, the corresponding Hamiltonians $H_{\Lambda_L(\mathbf{u})}^{(n)}$ and $H_{\Lambda_L(\mathbf{v})}^{(n)}$ have independent spectra (because $\Pi \Lambda_L^{(n)}(\mathbf{u}) \cap \Pi \Lambda_L^{(n)}(\mathbf{v}) = \emptyset$).

Let us give some criteria for separability. Given $\mathbf{x} \in \Z^{nd}$, put $k_{\mathbf{x}} := \# \{ x_1, \ldots , x_n \}$. Then each $\mathbf{x} \in \Z^{nd}$ gives rise to $k_{\mathbf{x}}^n$ related points denoted by $\mathbf{x}^{(j)}=(x_1^{(j)},\ldots,x_n^{(j)})$, with $x_k^{(j)} \in \{ x_1, \ldots, x_n \}$ for all $k$. For example, for $d=1$, the point $(1,5)\in \Z^2$ gives rise to $(1,1)$, $(1,5)$, $(5,1)$ and $(5,5)$. Taking
\[ 
K(n) := n^n, 
\]
we have $k_{\mathbf{x}}^n \le K(n)$ and the following lemmas hold.

\begin{lem}      \label{lem:separable}
Let $\mathbf{x}, \mathbf{y} \in \Z^{nd}$, $L \in \N^{\ast}$ and take $r_{n,L}$ as in Definition~\ref{def:sep}. Then
\begin{enumerate}[\rm 1)]
\item If $\mathbf{y} \notin \bigcup_{j=1}^{K(n)} \Lambda_{2nL}^{(n)}(\mathbf{x}^{(j)})$, then $\Lambda_L^{(n)}(\mathbf{y})$ and $\Lambda_L^{(n)}(\mathbf{x})$ are pre-separable.
\item If $\mathbf{y} \notin \bigcup_{j=1}^{K(n)} \Lambda_{r_{n,L}}^{(n)}(\mathbf{x}^{(j)})$, then $\Lambda_L^{(n)}(\mathbf{y})$ and $\Lambda_L^{(n)}(\mathbf{x})$ are separable.
\item If $\mathbf{y} \notin \Lambda_{2r_{n,L}}^{(n)}(\mathbf{0})$, then $\Lambda_L^{(n)}(\mathbf{y})$ is separable from any $\Lambda_L^{(n)}(\mathbf{x})$ satisfying $\mathbf{x} \in \Lambda_{r_{n,L}}^{(n)}(\mathbf{0})$.
\end{enumerate}
\end{lem}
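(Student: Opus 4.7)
The plan is to reformulate the avoidance hypotheses combinatorially, then use a graph‑component construction to produce the separating set $\mathcal{J}$, and finally to deduce parts~2) and~3) from part~1) with minor additions.

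\textbf{Step 1 (combinatorial reformulation).} I would set $S := \{x_1,\ldots,x_n\}$ and note that the related points $\mathbf{x}^{(j)}$ are precisely the $n$-tuples $(s_1,\ldots,s_n)$ with $s_k \in S$; in particular $\mathbf{y} \in \Lambda_R^{(n)}(\mathbf{x}^{(j)})$ iff $|y_k - s_k| < R$ for every $k$. Hence $\mathbf{y} \notin \bigcup_j \Lambda_R^{(n)}(\mathbf{x}^{(j)})$ asserts that no choice function $\phi:\{1,\ldots,n\}\to S$ satisfies $|y_k-\phi(k)|<R$ for all $k$; writing $A_k := \{s\in S : |y_k-s|<R\}$, this is equivalent to $A_{k_0}=\emptyset$ for some $k_0$, i.e.\ to the clean statement \emph{there exists an index $k_0$ with $|y_{k_0}-x_i|\ge R$ for every $i$}. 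I will use this with $R=2nL$ for 1) and $R=r_{n,L}$ for 2).

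\textbf{Step 2 (component argument).} Given such a $k_0$, I would introduce the graph on $\{1,\ldots,n\}$ whose edges are the pairs $\{j,k\}$ with $|y_j-y_k|<2L$, and take $\mathcal{J}$ to be the connected component containing $k_0$. By definition of the component, $|y_j-y_k|\ge 2L$ for every $j\in \mathcal{J}$, $k\in \mathcal{J}^c$, which gives $\Pi_{\mathcal{J}}\Lambda_L^{(n)}(\mathbf{y}) \cap \Pi_{\mathcal{J}^c}\Lambda_L^{(n)}(\mathbf{y}) = \emptyset$. Moreover every $j\in \mathcal{J}$ is joined to $k_0$ by a path of length $p\le |\mathcal{J}|-1\le n-1$, so the triangle inequality yields $|y_j-y_{k_0}|<2(n-1)L$; combined with $|y_{k_0}-x_i|\ge 2nL$ this gives
\[
|y_j-x_i| \ge 2nL - 2(n-1)L = 2L \qquad \text{for every } i,
\]
hence $\Pi_{\mathcal{J}}\Lambda_L^{(n)}(\mathbf{y})\cap \Pi\Lambda_L^{(n)}(\mathbf{x})=\emptyset$, proving the $\mathcal{J}$-pre-separability of $\Lambda_L^{(n)}(\mathbf{y})$ from $\Lambda_L^{(n)}(\mathbf{x})$ and therefore 1).

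\textbf{Step 3 (parts~2 and~3).} Since $r_{n,L}=4(n-1)(2L+r_0)+2L\ge 2nL$, the hypothesis of 2) implies that of 1), so pre-separability is inherited. The distance bound $|\mathbf{y}-\mathbf{x}|\ge r_{n,L}$ comes for free because $\mathbf{x}$ is itself one of the related points $\mathbf{x}^{(j)}$ (take the identity choice $\phi(k)=x_k$), and the hypothesis reads exactly $\mathbf{y}\notin \Lambda_{r_{n,L}}^{(n)}(\mathbf{x})$. For 3), I would work in sup norm: if $k^*$ achieves $|\mathbf{y}|=|y_{k^*}|\ge 2r_{n,L}$, then $|\mathbf{y}-\mathbf{x}|\ge |y_{k^*}-x_{k^*}|\ge 2r_{n,L}-r_{n,L}=r_{n,L}$, and this same $k_0:=k^*$ satisfies $|y_{k_0}-x_i|\ge r_{n,L}\ge 2nL$ for every $i$; Step~2 then applies verbatim to deliver pre-separability. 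I do not anticipate a serious obstacle: the only delicate moment is the Hall-type reformulation in Step~1, which is what turns the product-shape avoidance condition into the single-coordinate statement driving the component construction; everything afterwards is the triangle inequality together with the elementary estimate $r_{n,L}\ge 2nL$.
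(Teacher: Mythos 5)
Your proof is correct and follows essentially the same route as the paper: both decompose $\{y_1,\ldots,y_n\}$ via the $L$-adjacency graph, control the diameter of a connected component by the triangle inequality, and exploit the equivalence between $\mathbf{y}\in\bigcup_j\Lambda_R^{(n)}(\mathbf{x}^{(j)})$ and coordinate-wise proximity of each $y_k$ to the set $\{x_1,\ldots,x_n\}$. The only difference is expository — you argue directly by exhibiting the isolated index $k_0$ and its component as the separating set $\mathcal{J}$, whereas the paper argues by contradiction across all components — and your handling of parts 2) and 3) matches the paper's.
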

\begin{proof}
See the Appendix (Section~\ref{sec:app}). \qedhere
\end{proof}

\begin{lem}      \label{lem:FI}
Separable FI cubes are completely separated.
\end{lem}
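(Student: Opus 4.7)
The key observation is that the ``FI'' condition constrains the particle centers of each cube to lie in a cluster of bounded diameter, while ``separable'' forces $|\mathbf{u}-\mathbf{v}|\ge r_{n,L}$, and the constant $r_{n,L}=4(n-1)(2L+r_0)+2L$ has been chosen precisely so that these two facts force \emph{every} pair $(u_i,v_j)$ to be at distance at least $2L$. This is exactly what complete separation requires, since the condition $\Pi\Lambda_L^{(n)}(\mathbf{u})\cap\Pi\Lambda_L^{(n)}(\mathbf{v})=\emptyset$ amounts to $|u_i-v_j|\ge 2L$ for all $i,j$.

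First I would translate the FI hypothesis into a diameter bound. If $\Lambda_L^{(n)}(\mathbf{u})$ is FI, then $\dist(\mathbf{u},\D)<(n-1)(2L+r_0)$, so there exists $x\in\Z^d$ with $|u_i-x|<(n-1)(2L+r_0)$ for every $i$; the triangle inequality in $\|\cdot\|_\infty$ then yields
\[
|u_i-u_k|\le 2(n-1)(2L+r_0)\qquad\text{for all }i,k\in\{1,\ldots,n\},
\]
and similarly for $\mathbf{v}$. Next, the separability hypothesis gives $|\mathbf{u}-\mathbf{v}|=\max_j|u_j-v_j|\ge r_{n,L}$, so one can pick an index $j_0$ realizing the maximum, i.e. $|u_{j_0}-v_{j_0}|\ge r_{n,L}$.

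The main (and really only) computation is then the following triangle-inequality chain: for arbitrary $i,j\in\{1,\ldots,n\}$,
\[
|u_i-v_j|\ge |u_{j_0}-v_{j_0}|-|u_i-u_{j_0}|-|v_j-v_{j_0}|\ge r_{n,L}-4(n-1)(2L+r_0)=2L.
\]
Hence $\Lambda_L^{(1)}(u_i)\cap\Lambda_L^{(1)}(v_j)=\emptyset$ for all $i,j$, which by the very definition of the projections means $\Pi\Lambda_L^{(n)}(\mathbf{u})\cap\Pi\Lambda_L^{(n)}(\mathbf{v})=\emptyset$. Combined with $|\mathbf{u}-\mathbf{v}|\ge r_{n,L}$, this is exactly the definition of complete separation (the case $\mathcal{J}=\{1,\ldots,n\}$, with $\Pi_{\mathcal{J}^c}\Lambda=\Pi_\emptyset\Lambda=\emptyset$ in Definition~\ref{def:sep}).

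There is no real obstacle here; the only thing worth flagging is that the argument never actually uses the pre-separability part of the separability hypothesis, only the metric part $|\mathbf{u}-\mathbf{v}|\ge r_{n,L}$. In other words, FI cubes with $|\mathbf{u}-\mathbf{v}|\ge r_{n,L}$ are automatically completely separated, which is precisely why the constant $r_{n,L}$ was defined with the factor $4(n-1)(2L+r_0)$ in front, accounting for twice the FI diameter bound.
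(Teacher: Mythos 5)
Your proof is correct and takes essentially the same approach as the paper: both arguments translate the FI hypothesis into a bound on the diameter of each particle cluster via the triangle inequality, then combine this with $|\mathbf{u}-\mathbf{v}|\ge r_{n,L}$ to conclude that every pair of projections is disjoint. The paper routes the estimate through the two diagonal points $\mathbf{x},\mathbf{y}\in\D$ witnessing the FI condition, while you instead directly bound $|u_i-v_j|$ for arbitrary $i,j$; this is a mild reorganization of the same triangle-inequality chain, not a different method. Your observation that only the metric part of separability is used is accurate and is implicitly true of the paper's proof as well.
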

\begin{proof}
See the Appendix (Section~\ref{sec:app}). \qedhere
\end{proof}

\subsection{MSA strategy}      \label{sub:MSA}
We summarize here the multiscale analysis strategy which we follow to prove localization in an interval $I$. Let us start with $1$-particle systems:
\begin{enumerate}[1.]
\item Find $L_0>0$ and $m_0>0$ such that the probability of having one ``good'' cube among any disjoint pair $\Lambda_{L_0}(u), \Lambda_{L_0}(v)$ is high. Here $\Lambda_{L_0}$ is good if for any $E \in I$ and $i,j$ far apart, $\|G_{\Lambda_{L_0}}(i,j;E)\| \le e^{-m_0L_0}$. This is the \emph{initial length scale estimate} (ILS). 
\item Find a sequence of length scales $L_k \nearrow ^{\infty}$ for which a similar decay property holds, with an increasingly good  probability (typically $1-L_k^{-2p}$ for some $p>0$). This is done by induction on $k$ and is the heart of multiscale analysis. 
\item Use this sequence to show that the generalized eigenfunctions of $H^{(1)}$ corresponding to generalized eigenvalues in $I$ exhibit an exponential decay.
\item Deduce exponential localization by proving that generalized eigenfunctions of $H^{(1)}$ exist spectrally almost everywhere.
\item Establish dynamical localization.
\end{enumerate}

For step 1, one shows that if a cube $\Lambda$ is ``bad'', then $\dist(\sigma(H_{\Lambda}^{(1)}),\inf \sigma(H^{(1)}))$ must be very small. This is done ad absurdum using a \emph{Combes-Thomas estimate}. Then one proves this distance cannot be too small using \emph{Lifshitz tails} (one can also prove step 1 without Lifshitz tails in some cases). For step 2, one first relates $G_{\Lambda'}(x,y;E)$ to $G_{\Lambda}(z,y;E)$ for $\Lambda' \supset \Lambda$ to deduce the decay of $G_{\Lambda_{L_k}}(x,y;E)$ from the decay of $G_{\Lambda_{L_{k-1}}}(z,y;E)$. This is done using the \emph{Geometric resolvent inequality}. However, in this inequality the decay term from $G_{\Lambda_{L_{k-1}}}(z,y;E)$ gets multiplied by $\| G_{\Lambda_{L_k}}(x,w;E) \|$. So to make sure the product remains very small, it is necessary to show that $\| G_{\Lambda_{L_k}}(x,w;E) \|$ is not too big. This is done using \emph{Wegner estimates}. The remaining steps will be explained in more detail later.

The main difficulty in adapting the previous scheme to multi-particle systems lies in the fact that Hamiltonians restricted to disjoint cubes are no longer independent. One may think of replacing disjoint cubes by completely separated ones, since the corresponding Hamiltonians will then be independent. Unfortunately this cannot work, as there is no analog of Lemma~\ref{lem:separable} for such cubes (e.g. $[0,1] \times [1,2]$ and $[0,1] \times [r,r+1]$ are not completely separated, no matter how big $r$ is) and consequently no analog of Lemma~\ref{lem:DKn} either. This is why one is forced to work with the larger class of separable cubes. As Hamiltonians restricted to such cubes are not independent, a new strategy must be conceived especially in the induction step; see Section~\ref{sec:stratn}.

\section{Combes-Thomas estimate}    \label{sec:CT}

We prove our Combes-Thomas estimate by deriving good bounds on the Schr\"odinger semigroup. This was done before in \cite{FLM} using the Feynmann-Kac formula and the explicit form of the heat kernel. We shall instead prove our bound via a Davies-Gaffney estimate. This method has several advantages: it does not presuppose a heat kernel estimate, it proves the Combes-Thomas estimate for any energy below the spectral bottom, not just below the infimum of the potential, and the resulting upper bound is easier to control.

Let us mention that the idea of proving Combes-Thomas estimates via semigroups appeared much earlier in \cite[Lemma B.7.11]{Si82}. Compared to our proof and the proof of \cite{FLM}, the method of \cite{Si82} requires much more input, but it has the advantage of being valid for arbitrary energies outside the spectrum.

We start with a technical lemma.

\begin{lem}      \label{lem:sobolev}
Let $\Lambda^{(n)}$ be a cube or $\Lambda^{(n)} = \R^{nd}$. If $u \in W^{1,2}(\Gamma^{(n)} \cap \Lambda^{(n)})$ and $\varphi$ is a bounded Lipschitz continuous function on $\Gamma^{(n)} \cap \Lambda^{(n)}$, then $\varphi u \in W^{1,2}(\Gamma^{(n)} \cap \Lambda^{(n)})$ and $\nabla(\varphi u) = u \nabla \varphi + \varphi \nabla u$. 
\end{lem}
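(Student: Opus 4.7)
The plan is to reduce the quantum-graph assertion to the classical Euclidean product rule on each cube $\kappa$ and then check that the two global side conditions in the definition of $W^{1,2}(\Gamma^{(n)}\cap\Lambda^{(n)})$, namely trace continuity on faces $\sigma^{i}$ and square summability, survive multiplication by $\varphi$. No new PDE input is required; everything is bookkeeping once the classical product rule is invoked cube by cube.

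First I would observe that for each $\kappa$, the restriction $\varphi_\kappa := \varphi|_{\bar\kappa}$ is bounded and Lipschitz on $\bar\kappa \simeq [0,1]^n$ with $\|\varphi_\kappa\|_\infty \le \|\varphi\|_\infty$ and Lipschitz constant at most the global Lipschitz constant $L_\varphi$. By Rademacher's theorem $\varphi_\kappa \in W^{1,\infty}((0,1)^n)$ with $\|\nabla\varphi_\kappa\|_\infty \le L_\varphi$. The classical Euclidean product rule (see e.g.\ the standard results for $W^{1,p}\cdot W^{1,\infty}$) then gives $\varphi_\kappa u_\kappa \in W^{1,2}((0,1)^n)$ with
\[
\nabla(\varphi_\kappa u_\kappa) = u_\kappa \nabla \varphi_\kappa + \varphi_\kappa \nabla u_\kappa \qquad \text{a.e.\ on } (0,1)^n,
\]
and the pointwise estimate
\[
\|\varphi_\kappa u_\kappa\|_{W^{1,2}}^2 \le 2\bigl(\|\varphi\|_\infty^2 + L_\varphi^2\bigr)\,\|u_\kappa\|_{W^{1,2}}^2.
\]

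Next I would handle the two global constraints. Summing the last estimate over $\kappa$ and using $\sum_\kappa \|u_\kappa\|_{W^{1,2}}^2 < \infty$ yields $\sum_\kappa \|\varphi_\kappa u_\kappa\|_{W^{1,2}}^2 < \infty$, which takes care of the $\ell^2$-summability requirement. For the trace-continuity condition on a common face $\sigma^i$ of two cubes $\kappa_1,\kappa_2$, note that $\varphi$, being Lipschitz on $\Gamma^{(n)}\cap\Lambda^{(n)}$, is in particular continuous across $\sigma^i$, so the two boundary values of $\varphi$ coming from $\kappa_1$ and $\kappa_2$ agree pointwise on $\sigma^i$. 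Combined with $u_{\kappa_1}|_{\sigma^i} = u_{\kappa_2}|_{\sigma^i}$ in the trace sense (which holds since $u \in W^{1,2}(\Gamma^{(n)}\cap\Lambda^{(n)})$), and with the continuity of the trace operator acting on the product $\varphi u$ (whose traces are bounded pointwise multiplications by the continuous function $\varphi|_{\sigma^i}$), one concludes $(\varphi u)_{\kappa_1}|_{\sigma^i} = (\varphi u)_{\kappa_2}|_{\sigma^i}$ in trace.

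Putting these pieces together shows $\varphi u \in W^{1,2}(\Gamma^{(n)}\cap\Lambda^{(n)})$ with the claimed cubewise gradient formula, which is exactly $\nabla(\varphi u) = u\nabla\varphi + \varphi\nabla u$ on $\Gamma^{(n)}\cap\Lambda^{(n)}$. The only real subtlety is making sure that ``Lipschitz on $\Gamma^{(n)}\cap\Lambda^{(n)}$'' is strong enough to give both the pointwise gradient bound on each $\kappa$ \emph{and} continuity across faces; both follow since Lipschitz implies continuity in the ambient $\R^{nd}$ metric, and the intrinsic distance on $\Gamma^{(n)}$ dominates the Euclidean one on each $\bar\kappa$. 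The case $\Lambda^{(n)}=\R^{nd}$ is identical, since the argument is local on each cube with a uniform constant.
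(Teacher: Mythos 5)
Your proposal is correct and follows essentially the same route as the paper: apply the classical $W^{1,2}\cdot W^{1,\infty}$ product rule cube-by-cube (the paper cites \cite[Prop.~4.1.27]{Sto} where you invoke Rademacher), sum the cubewise $W^{1,2}$ estimate to get square-summability, and use the continuity of $\varphi$ across each inner face to transfer trace-matching from $u$ to $\varphi u$. The one place where you assert rather than prove is that multiplication by $\varphi$ commutes with the trace operator; the paper justifies this by reducing to $u_\kappa \in C([0,1]^n)$ via density of $C^\infty([0,1]^n)$ in $W^{1,2}((0,1)^n)$ and then using the unique continuous extension of $\varphi_\kappa$ to $[0,1]^n$, but the fact you assert is standard and your overall argument is sound.
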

Here $\varphi u := (\varphi_{\kappa} u_{\kappa})$, where $(\varphi_{\kappa})$ is obtained from $\varphi$ as in Section~\ref{sec:n-gra}.
\begin{proof}
By \cite[Proposition 4.1.27]{Sto}, we have $\varphi_{\kappa} u_{\kappa} \in W^{1,2}((0,1)^n)$ and $\nabla(\varphi_{\kappa} u_{\kappa}) = u_{\kappa} \nabla \varphi_{\kappa} + \varphi_{\kappa} \nabla u_{\kappa}$ for all $\kappa$. So it remains to show $\varphi u$ is continuous on inner $\sigma^i$. By the density of $C^{\infty}([0,1]^n)$ in $W^{1,2}((0,1)^n)$ (see \cite[Section 1.1.6]{Ma}) and the continuity of the trace operator $\gamma:W^{1,2}((0,1)^n) \to L^2((0,1)^{n-1})$, we may assume all $u_{\kappa} \in C([0,1]^n)$. Since each $\varphi_{\kappa}$ is bounded and uniformly continuous on $(0,1)^n$, it has a unique bounded continuous extension $\tilde{\varphi}_{\kappa}$ on $[0,1]^n$. Thus, $\tilde{\varphi}_{\kappa} u_{\kappa} \in C([0,1]^n)$ and $\gamma(\varphi_{\kappa} u_{\kappa})$ is just the restriction of $\tilde{\varphi}_{\kappa} u_{\kappa}$ to $\partial \kappa$. Now if $\sigma^i$ is a common face to $\kappa_1$ and $\kappa_2$, the extensions $\tilde{\varphi}_{\kappa_1}$ and $\tilde{\varphi}_{\kappa_2}$ must coincide on $\sigma^i$ since $\varphi$ is Lipschitz continuous. Hence,
\[
\gamma ( \varphi_{\kappa_1} u_{\kappa_1}) = (\tilde{\varphi}_{\kappa_1} u_{\kappa_1})|_{\sigma ^i} = (\tilde{\varphi}_{\kappa_2} u_{\kappa_2})|_{\sigma^i} = \gamma ( \varphi_{\kappa_2} u_{\kappa_2}) \, ,
\]
since $u$ is continuous on $\sigma^i$. Hence $\varphi u$ is continuous on $\sigma^i$.
\end{proof}

In the following $\dist(\cdot,\cdot)$ refers to the distance induced by the sup norm of $\R^{nd}$. 

\begin{lem}[Improved Davies-Gaffney estimate]      \label{lem:DG}
Let $\Lambda^{(n)}$ be a cube or $\Lambda^{(n)} = \R^{nd}$. Let $A_1,A_2 \subset \Lambda^{(n)}$ be cellular sets such that $\dist(A_1,A_2) =: \delta \ge 1$ and suppose $f,g \in L^2(\Gamma^{(n)} \cap \Lambda^{(n)})$, $\supp f \subset A_1$ and $\supp g \subset A_2$. Then if $s_{\omega} := \inf \sigma(H_{\Lambda}^{(n)}(\omega))$, we have
\[ 
\forall t>0 : | \langle e^{-tH_{\Lambda}^{(n)}(\omega)} f , g \rangle | \le e^{-t s_{\omega}} e^{-\frac{\delta^2}{4t}} \|f \| \|g\| \, .
\]
\end{lem}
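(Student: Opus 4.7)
\emph{Plan.} I will follow the classical exponential‐weight strategy of Davies and Gaffney, transplanted to the quantum graph via Lemma~\ref{lem:sobolev}. Let $\mathrm{d}_E$ denote Euclidean distance on $\R^{nd}$ and set
\[
\rho(\mathbf{x}) := \min\bigl(\mathrm{d}_E(\mathbf{x}, A_1),\, \delta\bigr).
\]
Since $\|\cdot\|_\infty \le \|\cdot\|_2$, we have $\mathrm{d}_E(A_1,A_2) \ge \dist(A_1,A_2) = \delta$, so $\rho \equiv 0$ on $A_1$ and $\rho \equiv \delta$ on $A_2$. This weight is bounded and $1$-Lipschitz on $\R^{nd}$; when restricted to a cube $\kappa \simeq [0,1]^n$ it remains $1$-Lipschitz for the Euclidean metric of $[0,1]^n$, so $|\nabla \rho_\kappa| \le 1$ almost everywhere. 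By Lemma~\ref{lem:sobolev}, for every $\alpha>0$ multiplication by $e^{\alpha\rho}$ preserves $D(\mathfrak{h}_{\omega,\Lambda}^{(n)}) = W^{1,2}(\Gamma^{(n)} \cap \Lambda)$ and satisfies the Leibniz rule $\nabla(e^{\alpha\rho} v) = \alpha e^{\alpha\rho}(\nabla\rho)\,v + e^{\alpha\rho}\,\nabla v$.

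Let $u(t) := e^{-tH_\Lambda^{(n)}(\omega)} f$ and put $F(t) := \|e^{\alpha\rho} u(t)\|^2$. For $t>0$, $u(t) \in D(H_\Lambda^{(n)}(\omega))$ and $u'(t) = -H_\Lambda^{(n)}(\omega) u(t)$. Differentiating $F$ and invoking the form identity $\langle H_\Lambda u, v\rangle = \mathfrak{h}_{\omega,\Lambda}^{(n)}[u,v]$ with test function $v = e^{2\alpha\rho} u(t)$ (admissible by the previous paragraph), then expanding $\nabla(e^{2\alpha\rho}u)$ via Leibniz and rearranging the cross term so as to reassemble $\mathfrak{h}_{\omega,\Lambda}^{(n)}[e^{\alpha\rho}u, e^{\alpha\rho}u]$, a direct algebraic cancellation produces the key identity
\[
F'(t) \;=\; -2\,\mathfrak{h}_{\omega,\Lambda}^{(n)}\bigl[e^{\alpha\rho} u(t),\, e^{\alpha\rho} u(t)\bigr] \;+\; 2\alpha^2\, \bigl\|e^{\alpha\rho}(\nabla\rho)\,u(t)\bigr\|^2.
\]
The spectral bound $\mathfrak{h}_{\omega,\Lambda}^{(n)}[\psi,\psi] \ge s_\omega \|\psi\|^2$ and the pointwise inequality $|\nabla\rho| \le 1$ then give $F'(t) \le 2(\alpha^2 - s_\omega) F(t)$. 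Gronwall together with $F(0) = \|f\|^2$ (since $\rho \equiv 0$ on $\supp f \subset A_1$) delivers $\|e^{\alpha\rho} u(t)\| \le e^{(\alpha^2 - s_\omega)t}\|f\|$.

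To conclude, Cauchy--Schwarz and $\rho \ge \delta$ on $\supp g \subset A_2$ yield
\[
\bigl|\langle e^{-tH_\Lambda^{(n)}(\omega)} f,\, g\rangle\bigr| \;=\; \bigl|\langle e^{\alpha\rho} u(t),\, e^{-\alpha\rho} g\rangle\bigr| \;\le\; e^{(\alpha^2 - s_\omega)t}\, e^{-\alpha\delta}\, \|f\|\,\|g\|,
\]
and optimizing over $\alpha>0$ by the choice $\alpha = \delta/(2t)$ produces exactly the claimed factor $e^{-t s_\omega}\, e^{-\delta^2/(4t)}$. The main obstacle lies in the rigorous execution of the form‐level computation above: one must check that $e^{2\alpha\rho} u(t) \in D(\mathfrak{h}_{\omega,\Lambda}^{(n)})$ (i.e.\ multiplication by the weight preserves the continuity conditions across every inner face $\sigma^i$), and then verify that the Leibniz regrouping on each polyhedral cell really reassembles $\mathfrak{h}_{\omega,\Lambda}^{(n)}[e^{\alpha\rho}u, e^{\alpha\rho}u]$ without boundary/corner defects. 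This is precisely what Lemma~\ref{lem:sobolev} is designed to supply, so once the weight is in hand the remaining steps reduce to the classical weighted‐energy calculation.
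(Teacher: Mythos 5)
Your proof is correct and takes essentially the same route as the paper: an exponential weight multiplier justified by Lemma~\ref{lem:sobolev}, a differential inequality for the weighted energy via the Leibniz regrouping of the quadratic form, Gronwall, and optimization over the weight parameter. The one small improvement is that by capping the weight at $\delta$ (so $\rho$ is globally bounded) you avoid the paper's separate handling of the case $\Lambda^{(n)} = \R^{nd}$, where it has to replace the unbounded distance function by a compactly supported Lipschitz substitute; the remaining differences (working with the unshifted operator and bounding $\mathfrak{h}[e^{\alpha\rho}u,e^{\alpha\rho}u] \ge s_\omega\|e^{\alpha\rho}u\|^2$ directly, rather than first shifting by $s_\omega$ and invoking min-max) are purely cosmetic rearrangements of the same algebra.
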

\begin{proof}
We first assume $\Lambda^{(n)}$ is a cube. Put $H:=H_{\Lambda}^{(n)}(\omega) - s_{\omega}$. Given $\mathbf{x} \in \Lambda^{(n)}$, let $\tilde{w}(\mathbf{x}) := \dist(\mathbf{x},A_1)$. Then $|\tilde{w}(\mathbf{x})-\tilde{w}(\mathbf{y})| \le |\mathbf{x} - \mathbf{y} |$, hence $\|\nabla \tilde{w}\|_{\infty} \le 1$ and $e^{\zeta \tilde{w}(\cdot)}$ is bounded, Lipschitz continuous on $\Lambda^{(n)}$ for $\zeta>0$. Let $w$ be the restriction of $\tilde{w}$ to $\Gamma^{(n)} \cap \Lambda^{(n)}$. Then by Lemma~\ref{lem:sobolev}, if $\mathfrak{h}$ is the form associated to $H$, then $e^{\zeta w}u \in D(\mathfrak{h})$ whenever $u \in D(\mathfrak{h})$. Now given $f \in D(H)$, $t>0$ put $f_t := e^{-t H} f$ and note that $f_t \in D(H)$. Fix $\beta >0$ and as in \cite[Theorem 3.3]{CoSi} consider
\[
E(t) = \langle f_t, f_t e^{\beta w} \rangle = \| f_t e^{\beta w/2} \|^2 .
\]
Then
\[
E'(t) = - 2 \Re \langle H f_t , f_t e^{ \beta w} \rangle = - 2\Re \mathfrak{h}[f_t , f_t e^{\beta w} ]
\]
and thus
\begin{align*}
\frac{E'(t)}{2} & = - \Re \big( \langle \nabla f_t , \nabla (f_t e^{ \beta w}) \rangle + \langle (V^{\omega}-s_{\omega}) f_t , f_t e^{\beta w} \rangle \big) \\
& = - \Re \langle \nabla f_t , \nabla (f_t e^{ \beta w}) \rangle - \langle V^{\omega} f_t, f_t e^{\beta w} \rangle + s_{\omega} \| f_t e^{\beta w/2} \|^2 \, .
\end{align*}
Now by min-max for forms we have
\[
s_{\omega} = \inf_{f \in D(\mathfrak{h}^{(n)}_{\omega,\Lambda}),\|f\|=1} \mathfrak{h}_{\omega,\Lambda}^{(n)}[f,f] \le \| f_t e^{\beta w/2} \|^{-2} \cdot \mathfrak{h}_{\omega,\Lambda}^{(n)}[ f_t e^{\beta w/2}, f_t e^{\beta w/2}] \, ,
\]
where $\mathfrak{h}_{\omega,\Lambda}^{(n)}$ is the form associated to $H_{\Lambda}^{(n)}(\omega)$. Thus,
\begin{align*}
s_{\omega} \| f_t e^{\beta w/2} \|^2 & \le \langle \nabla(f_t e^{\beta w/2}), \nabla(f_t e^{\beta w/2}) \rangle + \langle V^{\omega} f_t e^{\beta w/2}, f_t e^{\beta w/2} \rangle \\
& = \langle (\nabla f_t) e^{\beta w/2}, (\nabla f_t) e^{\beta w/2} \rangle + 2 \Re \langle (\nabla f_t) e^{\beta w/2},f_t(\textstyle{\frac{\beta}{2}} \nabla w) e^{\beta w/2} \rangle \\
& \quad + \langle f_t (\textstyle{\frac{\beta}{2}} \nabla w) e^{\beta w/2}, f_t (\textstyle{\frac{\beta}{2}} \nabla w) e^{\beta w/2} \rangle + \langle V^{\omega} f_t, f_t e^{\beta w} \rangle \\
& = \langle \nabla f_t, (\nabla f_t) e^{\beta w} \rangle + \Re \langle \nabla f_t , f_t (\beta \nabla w) e^{\beta w} \rangle \\
& \quad + \textstyle{\frac{\beta^2}{4}} \| f_t (\nabla w) e^{\beta w/2} \|^2 + \langle V^{\omega} f_t, f_t e^{\beta w} \rangle \\
& = \Re \langle \nabla f_t, \nabla(f_t e^{\beta w}) \rangle + \langle V^{\omega} f_t, f_t e^{\beta w} \rangle + \textstyle{\frac{\beta^2}{4}} \| f_t (\nabla w) e^{\beta w/2} \|^2 \, ,
\end{align*}
where we used Lemma~\ref{lem:sobolev}. We thus have
\[
\frac{E'(t)}{2} \le \frac{\beta^2}{4} \| f_t(\nabla w) e^{\beta w/2} \|^2 \le \frac{\beta^2}{4} \| f_t e^{\beta w/2} \|^2 = \frac{ \beta^2 E(t)}{4} \, .
\]
Hence, $E(t) \le e^{\beta^2 t/2} E(0)$. Moreover,
\[
\| \chi_{A_2} f_t\|^2 \le \| \chi_{A_2} e^{-\beta w/2} \|_{\infty}^2 \|e^{\beta w/2} f_t \|^2 \le e^{-\beta \delta} E(t) \, .
\]
Since $\supp f \subset A_1$ and $w=0$ on $A_1$, we have $E(0) = \|e^{\beta w/2} f\|^2 = \|f\|^2$. Hence,
\[
\| \chi_{A_2} f_t\|^2 \le e^{- \beta \delta} E(t) \le \exp \big( \frac{ \beta^2 t}{2} - \beta \delta \big) E(0) = \exp \big( \frac{ \beta^2 t}{2} - \beta \delta \big) \| f\|^2 \, .
\]
Choose $\beta = \delta / t$. Since $\supp g \subset A_2$ we finally get
\[
| \langle e^{-tH} f , g \rangle |^2 = | \langle \chi_{A_2} f_t , g \rangle |^2 \le \| \chi_{A_2} f_t\|^2 \cdot \|g\|^2 \le e^{-\delta^2/2t} \|f\|^2 \|g\|^2 \, .
\]
The assertion follows (if $\Lambda^{(n)}$ is a cube) by noting that $H$ is densely defined and that
\[
e^{-tH} = \exp( -t(H_{\Lambda}^{(n)}(\omega) - s_{\omega}) ) = e^{ts_{\omega}} e^{-tH_{\Lambda}^{(n)}(\omega)} \, .
\]

Finally, all the arguments remain valid if $\Lambda^{(n)}=\R^{nd}$, except that $e^{\zeta w}$ is no longer bounded. We thus consider a large cube $\Xi$ containing $A_1$ and $A_2$ and replace $\tilde{w}$ by a Lipschitz function $\rho$ of compact support such that $\rho(\mathbf{x}) = \dist(\mathbf{x},A_1)$ if $\mathbf{x} \in \Xi$ and $\| \nabla \rho \|_{\infty} \le 1$, then take $w$ to be the restriction of $\rho$ to $\Gamma^{(n)}$.
\end{proof}

\begin{thm}[Combes-Thomas estimate]     \label{thm:CT2}
Let $\Lambda^{(n)}$ be a cube or $\Lambda^{(n)} = \R^{nd}$ and let $A,B \subset \Lambda^{(n)}$ be cellular sets such that $\dist(A,B) =: \delta \ge 1$. Then for $E< s_{\omega} := \inf \sigma(H_{\Lambda}^{(n)}(\omega))$ and $\eta := s_{\omega} -E$ we have
\[
\| \chi_A (H_{\Lambda}^{(n)}(\omega) - E)^{-1} \chi_B \| \le \sqrt{ \frac{\pi}{2}} \left( \frac{\sqrt{\delta}}{\eta^{3/4}} + \frac{3}{8 \sqrt{\delta} \eta^{5/4}} \right) e^{ - \delta \sqrt{\eta}} \, .
\]
\end{thm}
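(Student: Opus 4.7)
The plan is to combine the spectral-theorem representation of the resolvent as a Laplace transform of the semigroup with the improved Davies--Gaffney bound of Lemma~\ref{lem:DG}. Since $E < s_\omega$, the functional calculus yields $(H_\Lambda^{(n)}(\omega)-E)^{-1} = \int_0^\infty e^{-t(H_\Lambda^{(n)}(\omega)-E)}\,dt$, with convergence in operator norm (because $\|e^{-t(H_\Lambda^{(n)}(\omega)-E)}\| = e^{-t\eta}$ is integrable). I would sandwich this identity with $\chi_A$ on the left and $\chi_B$ on the right, and apply Lemma~\ref{lem:DG} to the integrand with the two cellular sets $A$ and $B$. The factor $e^{-ts_\omega}$ from Davies--Gaffney combines with the $e^{tE}$ coming from $e^{-t(H-E)} = e^{tE}e^{-tH}$ into $e^{-t\eta}$, reducing the estimate to the scalar bound
\[
\|\chi_A(H_\Lambda^{(n)}(\omega)-E)^{-1}\chi_B\| \le \int_0^\infty e^{-t\eta - \delta^2/(4t)}\,dt.
\]

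The next step is to recognize the right-hand side as a modified Bessel function of the second kind. The substitution $t = \tfrac{\delta}{2\sqrt{\eta}}s$ symmetrizes the exponent into $-\tfrac{\delta\sqrt{\eta}}{2}(s + s^{-1})$, so the classical representation $K_1(z) = \tfrac{1}{2}\int_0^\infty e^{-\frac{z}{2}(s+s^{-1})}\,ds$ gives
\[
\int_0^\infty e^{-t\eta - \delta^2/(4t)}\,dt = \frac{\delta}{\sqrt{\eta}}\,K_1(\delta\sqrt{\eta}).
\]

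The last step is a sharp, non-asymptotic upper bound on $K_1$. I would start from the integral representation
\[
K_\nu(z) = \sqrt{\tfrac{\pi}{2z}}\,\frac{e^{-z}}{\Gamma(\nu+\tfrac12)}\int_0^\infty e^{-u}u^{\nu-1/2}\Big(1 + \tfrac{u}{2z}\Big)^{\nu-1/2}\,du
\]
at $\nu = 1$, and apply the elementary inequality $(1+u/(2z))^{1/2} \le 1 + u/(4z)$ under the integral. Computing the two resulting $\Gamma$-integrals produces $K_1(z) \le \sqrt{\pi/(2z)}\,e^{-z}(1+\tfrac{3}{8z})$ for every $z > 0$. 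Substituting $z = \delta\sqrt{\eta}$ and simplifying reproduces exactly the announced prefactor $\sqrt{\pi/2}\big(\sqrt{\delta}/\eta^{3/4} + 3/(8\sqrt{\delta}\eta^{5/4})\big)$.

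The main obstacle is not conceptual but quantitative: the sharp constants in the statement require that the $(1+3/(8z))$ correction be a genuine pointwise upper bound for $K_1(z)\sqrt{2z/\pi}\,e^{z}$ on \emph{all} of $(0,\infty)$, not merely an asymptotic statement as $z \to \infty$. Working directly with the integral representation of $K_\nu$ and the pointwise square-root inequality under the integral sign is what delivers this; invoking the large-$z$ asymptotic expansion of $K_1$ would be insufficient for small $\delta\sqrt{\eta}$.
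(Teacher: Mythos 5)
Your proposal is correct and follows essentially the same route as the paper: semigroup representation of the resolvent, the Davies--Gaffney bound of Lemma~\ref{lem:DG}, the substitution that identifies the resulting Laplace integral with $\tfrac{\delta}{\sqrt{\eta}}K_1(\delta\sqrt{\eta})$, and the pointwise bound $K_1(z)\le\sqrt{\pi/(2z)}\,e^{-z}(1+\tfrac{3}{8z})$. The only (harmless) deviation is that the paper simply cites Gradshteyn--Ryzhik 3.324 for the integral and Abramowitz--Stegun 9.7.2 (with the remark on one-sided remainders) for the Bessel bound, whereas you rederive the bound from the integral representation of $K_\nu$; your derivation is valid and makes the argument self-contained.
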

\begin{proof}
Put $H = H_{\Lambda}^{(n)}(\omega)$. Given $f,g \in L^2(\Gamma^{(n)} \cap \Lambda^{(n)})$ with $\|f\| = \|g\| = 1$ we have
\[ 
|\langle \chi_A e^{-tH} \chi_B f,g \rangle| = |\langle e^{-tH} \chi_B f, \chi_A g \rangle| \le e^{-t s_{\omega}} e^{-\frac{\delta^2}{4t}} \| \chi_B f \| \| \chi_A g\| \le e^{-t s_{\omega}} e^{-\frac{\delta^2}{4t}}
\]
by Lemma~\ref{lem:DG}. Thus
\[
\| \chi_A e^{-tH} \chi_B \| \le e^{-t s_{\omega}} e^{-\frac{\delta^2}{4t}} \, .
\]
Now for $E<s_{\omega}$ we have $(H-E)^{-1} = \int_0^{\infty} e^{tE} e^{-tH} \, \dd t$. Hence
\[
\| \chi_A (H - E)^{-1} \chi_B \| \le \int_0^{\infty} e^{-t \eta} e^{-\frac{\delta^2}{4t}} \, \dd t = \frac{\delta}{\sqrt{\eta}} K_1(\delta \sqrt{\eta})
\]
where $K_1$ is the modified Bessel function and we used \cite[Formula 3.324]{GR} to evaluate the integral. Now by \cite[Formula 9.7.2]{AS} and the remark after it, we have for real $z>0$ the estimate $K_1(z) \le \sqrt{\frac{\pi}{2z}} e^{-z} \Big(1 + \frac{3}{8z}\Big)$. This proves the assertion.
\end{proof}

\section{Geometric Resolvent Inequalities}    \label{sec:GRI}

In this section we follow \cite{Sto} to prove Theorems~\ref{thm:GRE} and \ref{thm:GRI} and use arguments from \cite{BCSS2} to prove Theorem~\ref{thm:GRI.2}.

Throughout this section,  $\Gamma := \Gamma^{(n)}$. If $Q \subset \R^{nd}$ is a cellular set and $1 \le k \le \infty$, put
\[ 
\tilde{C}_c^k(\Gamma \cap Q):= \{f|_{\Gamma} : f \in C_c^k(Q)\}, \qquad W_0^{1,2}(\Gamma \cap Q) := \{f \in W^{1,2}(\Gamma \cap Q) : f|_{\partial Q} = 0 \},
\]
where $f|_{\partial Q}$ is understood in the trace sense. We start with a lemma which has to be justified in the context of multi-particle quantum graphs.

\begin{lem}      \label{lem:sym}
Let $\Lambda \subset \R^{nd}$ be a cube. Then for all $h \in (W_0^{1,2}(\Gamma \cap \Lambda))^n$ and $w \in W^{1,2}(\Gamma \cap \Lambda):$
\[ 
\langle \nabla \cdot h,w \rangle = - \langle h, \nabla w \rangle \, .
\]
\end{lem}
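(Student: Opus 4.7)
The identity is a cell-wise integration by parts together with a cancellation of surface terms that relies on the boundary condition $h|_{\partial\Lambda}=0$ and on the balanced in/out degree $d^{+}(v)=d^{-}(v)=d$ at every interior vertex $v\in\Z^d$.

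Step 1 (reduction to smooth $h$). Both sides of the identity are bilinear and continuous in $(h,w)\in (W^{1,2}(\Gamma\cap\Lambda))^n\times W^{1,2}(\Gamma\cap\Lambda)$, so by density of $\tilde C_c^{\infty}(\Gamma\cap\Lambda)$ in $W_0^{1,2}(\Gamma\cap\Lambda)$ (obtained in the usual way by mollification inside $\Lambda$ followed by a cutoff supported away from $\partial\Lambda$), it is enough to prove the formula when each component $h^{j}$ belongs to $\tilde C_c^{\infty}(\Gamma\cap\Lambda)$. In particular, $h^{j}$ vanishes in a neighbourhood of $\partial\Lambda$, while continuity of each $h^{j}$ on every inner face $\sigma^{i}$ is automatic.

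Step 2 (integration by parts on each cube). Fix $\kappa\in\mathcal{K}(\Gamma\cap\Lambda)$, identified with $(0,1)^n$. On this domain, $h_\kappa$ is smooth and $w_\kappa\in W^{1,2}((0,1)^n)$, so the classical Gauss--Green formula applies and gives
\[
\int_\kappa (\nabla\!\cdot h_\kappa)\,\bar w_\kappa\,dx + \int_\kappa h_\kappa\!\cdot\!\nabla \bar w_\kappa\,dx = \sum_{i=1}^{2n}\int_{\sigma^{i}_\kappa} (h_\kappa\!\cdot\!\nu^{i}_\kappa)\,\bar w_\kappa\,dS,
\]
where $\nu^{i}_\kappa$ is the outward unit normal on the face $\sigma^{i}_\kappa$ and the trace of $w_\kappa$ is used. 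Summing over $\kappa$ gives $\langle\nabla\!\cdot h,w\rangle+\langle h,\nabla w\rangle=\Sigma_{\mathrm{bd}}$, with $\Sigma_{\mathrm{bd}}$ the sum of all face integrals.

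Step 3 (grouping faces and cancellation). I group the faces geometrically. Each face is characterised by a slot $k\in\{1,\dots,n\}$, a vertex $v$ in slot $k$, and a fixed choice of edges $(e_1,\dots,e_{k-1},e_{k+1},\dots,e_n)$ in the other $n-1$ slots; on such a face $t^{k}$ is constant, so only the $k$-th component $h^{k}$ contributes (the outward normal has zero entries in every other slot). If $v$ lies in the interior of $\Pi_k\Lambda$, exactly $2d$ cubes of $\mathcal{K}(\Gamma\cap\Lambda)$ share this face: $d$ of them ($e_k=(v,j)$, $j=1,\dots,d$) meet $\sigma$ at $t^{k}=0$ and contribute with normal sign $-1$, while $d$ of them ($e_k=(v-h_j,j)$, $j=1,\dots,d$) meet $\sigma$ at $t^{k}=1$ and contribute with normal sign $+1$. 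By continuity of $h^{k}$ and $w$ on $\sigma$, the traces $h^{k}_\kappa|_\sigma$ and $w_\kappa|_\sigma$ are independent of which of the $2d$ cubes one uses, so the sum over these cubes is
\[
(-d+d)\int_{\sigma}h^{k}\bar w\,dS=0.
\]

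Step 4 (outer boundary). The remaining faces are those for which $v\in\partial\Pi_k\Lambda$, which is the same as $\sigma\subset\partial\Lambda$. On such a face the trace of each $h^{j}$ vanishes, either because $h$ is compactly supported inside $\Lambda$ (after Step 1) or, directly, because $h\in (W_0^{1,2}(\Gamma\cap\Lambda))^n$. Hence these contributions also vanish. Combining Steps 2--4 gives $\Sigma_{\mathrm{bd}}=0$ and therefore $\langle\nabla\!\cdot h,w\rangle=-\langle h,\nabla w\rangle$.

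The only real obstacle is the bookkeeping in Step 3: one must convince oneself that an inner face of $\Gamma^{(n)}\cap\Lambda$ is always shared by precisely the $2d$ cubes determined by the edges incident to $v$ in $\Gamma^{(1)}$, and that the outward normals from these cubes carry signs that add up to $d^{-}(v)-d^{+}(v)=0$. Once this is set up cleanly, everything else is routine Sobolev trace theory on the individual cubes $\kappa\simeq(0,1)^n$.
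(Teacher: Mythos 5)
Your proof follows essentially the same route as the paper's: cell-wise Gauss--Green integration by parts, then regrouping the boundary integrals by inner faces and cancelling via the trace-continuity built into $W^{1,2}(\Gamma\cap\Lambda)$ together with $h|_{\partial\Lambda}=0$ (your signed-count cancellation $(-d+d)\int_\sigma h^{k}\bar w=0$ is a tidier restatement of the paper's pair-by-pair cancellation over opposite cubes $\kappa_j^-,\kappa_j^+$). The one thing to watch is that Step 1 is both unnecessary and the weakest link: Green's formula already holds for $W^{1,2}$ functions on each cube $\kappa\simeq(0,1)^n$ with boundary values understood as $L^2$-traces (this is precisely what the paper invokes from Grisvard), whereas your claimed density of $\tilde C_c^{\infty}(\Gamma\cap\Lambda)$ in $W_0^{1,2}(\Gamma\cap\Lambda)$ is not entirely routine in this setting, since mollification has to be performed in a way that preserves the continuity condition across the inner faces of the graph, which your parenthetical dismisses a bit too quickly.
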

\begin{proof} 
Let $h = ((h_{\kappa}^{(1)}),\ldots,(h_{\kappa}^{(n)}))$ and $w = (w_{\kappa})$. Fix $\kappa \in \mathcal{K}(\Gamma \cap \Lambda)$ and let $\sigma = \partial \kappa$. Using the notation $\frac{\partial }{\partial x^i} \equiv \partial_i$, we have by Green's formula (see e.g. \cite[Theorem 1.5.3.1]{Gri})
\begin{equation}
\langle \partial_i h_{\kappa}^{(i)} , w_{\kappa} \rangle = - \langle h_{\kappa}^{(i)} , \partial_i w_{\kappa} \rangle + \int_{\sigma(\kappa)} h_{\kappa}^{(i)} \bar{w}_{\kappa} \nu^{(i)} \, \dd\sigma \, ,       \label{eq:gri1}
\end{equation}
where the values of $h_{\kappa}^{(i)} \bar{w}_{\kappa}$ on $\sigma:=\sigma(\kappa)$ are understood in the trace sense and $\nu := (\nu^{(1)},\ldots,\nu^{(n)})$ is the outward unit vector normal to $\sigma$, well defined on each $\sigma^j$. Identify $\kappa \equiv [0,1]^n$ as in Section~\ref{sec:n-gra} and denote points in $\kappa$ by $(x^1, \ldots, x^n)$, with $x^i \in [0,1]$. If $\sigma^j$ is the face with points $(x^1,\ldots,x^{j-1},0,x^{j+1},\ldots,x^n)=:\hat{x}^j_0$ and if $\sigma^{o(j)}$ is the face opposite to it with points $(x^1,\ldots,x^{j-1},1,x^{j+1},\ldots,x^n)=:\hat{x}^j_1$, then $\left. {\nu} \right|_{\sigma^j} = (0,\ldots,0,-1,0,\ldots,0)$ and $\left. {\nu} \right|_{\sigma^{o(j)}} = (0,\ldots,0,1,0,\ldots,0)$. Hence
\[ 
\int_{\sigma(\kappa)} h_{\kappa}^{(i)} \bar{w}_{\kappa} \nu^{(i)} \dd\sigma = \int_{\sigma^{o(i)}(\kappa)} h_{\kappa}^{(i)}(\hat{x}_1^i) \bar{w}_{\kappa}(\hat{x}_1^i) \, \dd \hat{x}^i - \int_{\sigma^i(\kappa)} h_{\kappa}^{(i)}(\hat{x}_0^i) \bar{w}_{\kappa}(\hat{x}_0^i) \, \dd \hat{x}^i \, , 
\]
where $\dd \hat{x}^i := \dd x^1 \ldots \dd x^{i-1} \dd x^{i+1} \ldots \dd x^n$. Now consider
\[
\sum_{\kappa \in \mathcal{K}(\Gamma \cap \Lambda)} \Big( \int_{\sigma^{o(i)}(\kappa)} h_{\kappa}^{(i)}(\hat{x}_1^i) \bar{w}_{\kappa}(\hat{x}_1^i) \, \dd \hat{x}^i - \int_{\sigma^i (\kappa)} h_{\kappa}^{(i)}(\hat{x}_0^i) \bar{w}_{\kappa}(\hat{x}_0^i) \, \dd \hat{x}^i \Big) \, .
\]
Since $h |_{\partial \Lambda} = 0$, this sum may be re-arranged as
\[ 
\sum_{\text{inner } \sigma^i} \sum_{j=1}^d \int_{\sigma^i}  \Big\{ h_{\kappa_j^-(\sigma^i)}^{(i)}(\hat{x}_1^i)  \bar{w}_{\kappa_j^-(\sigma^i)}(\hat{x}_1^i) - h_{\kappa_j^+(\sigma^i)}^{(i)}(\hat{x}_0^i)  \bar{w}_{\kappa_j^+(\sigma^i)}(\hat{x}_0^i) \Big\} \, \dd \hat{x}^i \, , 
\]
where $\kappa_j^-(\sigma^i)$ and $\kappa_j^+(\sigma^i)$, $j=1,\ldots,d$ are the $2d$ cubes containing $\sigma^i$ as a common face and $\kappa_j^-$ is opposite to $\kappa_j^+$. But by hypothesis $h^{(i)} \bar{w}$ are continuous on $\sigma^i$, i.e. $h^{(i)}_{\kappa^-_j} (\hat{x}_1^i) \bar{w}_{\kappa^-_j} (\hat{x}_1^i) = h^{(i)}_{\kappa^+_j} (\hat{x}_0^i) \bar{w}_{\kappa^+_j} (\hat{x}_0^i)$ a.e. Hence the sum vanishes and $\sum_{\kappa \in \mathcal{K}(\Gamma \cap \Lambda)} \int_{\sigma (\kappa)} h_{\kappa}^{(i)} \bar{w}_{\kappa} \nu^{(i)} \dd\sigma = 0$. The assertion thus follows by summing in (\ref{eq:gri1}) over $\kappa \in \mathcal{K}(\Gamma \cap \Lambda)$ and $i=1,\ldots,n$.
\end{proof}

\begin{thm}[Geometric Resolvent Equation]      \label{thm:GRE}
Let $\Lambda_1^{(n)} \subseteq \Lambda_2^{(n)} \subset \R^{nd}$ be two cubes, $\psi \in \tilde{C}_c^{\infty}(\Gamma \cap \Lambda_1^{(n)})$ real-valued, and $E \in \rho(H_{\Lambda_1}^{(n)}) \cap \rho(H_{\Lambda_2}^{(n)})$. Then
\[ 
G_{\Lambda_1^{(n)}}(E) \psi = \psi G_{\Lambda_2^{(n)}}(E) +G_{\Lambda_1^{(n)}}(E) \left( (\nabla \psi) \cdot \nabla + \nabla \cdot (\nabla \psi) \right) G_{\Lambda_2^{(n)}}(E) \tag{\textsf{GRE}}
\]
as operators on $L^2(\Gamma \cap \Lambda_2^{(n)})$. 
\end{thm}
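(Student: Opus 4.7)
The plan is to follow Stollmann's form-theoretic derivation, with Lemma~\ref{lem:sym} replacing the usual divergence-theorem step that one would use on a plain Euclidean subdomain. Fix $g \in L^2(\Gamma \cap \Lambda_2^{(n)})$ and set $u := G_{\Lambda_2^{(n)}}(E)g \in D(H_{\Lambda_2}^{(n)})$, so that $(H_{\Lambda_2}^{(n)} - E)u = g$. Writing $T := (\nabla\psi)\cdot\nabla + \nabla\cdot(\nabla\psi)$, the goal is to prove $\psi u \in D(H_{\Lambda_1}^{(n)})$ together with
\[
(H_{\Lambda_1}^{(n)} - E)(\psi u) = \psi g - T u;
\]
applying $G_{\Lambda_1^{(n)}}(E)$ to both sides and rearranging then produces the asserted operator identity tested against the arbitrary $g$. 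Note $Tu \in L^2$ because $\nabla\psi$ and $\Delta\psi$ are bounded while $u \in W^{1,2}$, and the necessary identification of $\psi$-multiplication with restriction from $\Lambda_2^{(n)}$ to $\Lambda_1^{(n)}$ (and extension back by zero) is harmless since $\supp\psi$ is compactly contained in $\Lambda_1^{(n)}$.

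First I use Lemma~\ref{lem:sobolev} to place $\psi u$ in the form domain $W^{1,2}(\Gamma \cap \Lambda_1^{(n)}) = D(\mathfrak{h}_{\omega,\Lambda_1}^{(n)})$, with the Leibniz identity $\nabla(\psi u) = u\nabla\psi + \psi\nabla u$. Then for an arbitrary test function $\phi \in W^{1,2}(\Gamma \cap \Lambda_1^{(n)})$ I compute $\mathfrak{h}_{\omega,\Lambda_1}^{(n)}[\psi u,\phi] - E\langle \psi u,\phi\rangle$. Expanding $\nabla(\psi u)$ via Leibniz and using $\psi\nabla\phi = \nabla(\psi\phi) - \phi\nabla\psi$ splits this into a \emph{bulk} piece $\mathfrak{h}_{\omega,\Lambda_2}^{(n)}[u,\psi\phi] - E\langle u,\psi\phi\rangle$ plus two \emph{cross} pieces $\langle u\nabla\psi, \nabla\phi\rangle$ and $-\langle \nabla u, \phi\nabla\psi\rangle$. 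The bulk piece collapses to $\langle g, \psi\phi\rangle = \langle \psi g, \phi\rangle$ because $u \in D(H_{\Lambda_2}^{(n)})$ and $\psi\phi$, extended by zero across $\partial\Lambda_1^{(n)}$, lies in $W^{1,2}(\Gamma \cap \Lambda_2^{(n)})$ — which is valid precisely because $\psi$ has compact support in $\Lambda_1^{(n)}$.

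The decisive step is to rewrite the first cross piece via Lemma~\ref{lem:sym} applied to the vector field $h := u\nabla\psi \in (W_0^{1,2}(\Gamma \cap \Lambda_1^{(n)}))^n$, whose $W^{1,2}$-regularity follows from Lemma~\ref{lem:sobolev} and whose Dirichlet condition on $\partial\Lambda_1^{(n)}$ is automatic from $\nabla\psi \equiv 0$ near $\partial\Lambda_1^{(n)}$. Lemma~\ref{lem:sym} gives $\langle u\nabla\psi, \nabla\phi\rangle = -\langle \nabla\cdot(u\nabla\psi), \phi\rangle = -\langle \nabla u\cdot\nabla\psi + u\Delta\psi, \phi\rangle$. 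Combining with $-\langle \nabla u,\phi\nabla\psi\rangle = -\langle \nabla u\cdot\nabla\psi,\phi\rangle$ yields exactly $-\langle 2\nabla u\cdot\nabla\psi + u\Delta\psi,\phi\rangle = -\langle Tu, \phi\rangle$. Altogether, $\mathfrak{h}_{\omega,\Lambda_1}^{(n)}[\psi u,\phi] - E\langle \psi u,\phi\rangle = \langle \psi g - Tu, \phi\rangle$ for every $\phi$ in the form domain; since $\psi g - Tu \in L^2$, the first representation theorem identifies $\psi u$ as an element of $D(H_{\Lambda_1}^{(n)})$ with the prescribed action, and the identity follows.

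The main subtlety I expect is precisely this integration-by-parts step. On a subdomain of $\R^{nd}$ it is a one-line divergence theorem, but on the quantum graph the boundary contributions from adjacent cubes along each interior face $\sigma^i$ must cancel pairwise rather than pile up. Lemma~\ref{lem:sym} encapsulates exactly this cancellation, provided (i) continuity of $h$ across inner faces and (ii) vanishing of $h$ on $\partial\Lambda_1^{(n)}$ hold. Both are ensured here — (i) from $\psi \in \tilde{C}_c^\infty$ (smooth throughout the graph, so $\nabla\psi$ has matching traces across every $\sigma^i$) together with the form-domain continuity of $u$, and (ii) from the compact support of $\psi$ inside $\Lambda_1^{(n)}$ — so no stray boundary terms survive and the form-level computation above closes up.
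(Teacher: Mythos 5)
Your proof is correct and follows essentially the same form-theoretic route as the paper: both establish $\psi G_{\Lambda_2}g \in W^{1,2}(\Gamma\cap\Lambda_1)$ via Lemma~\ref{lem:sobolev}, both hinge on Lemma~\ref{lem:sym} applied to the same vector field $h=(\nabla\psi)G_{\Lambda_2}g$, and both close up with the first representation theorem. The only distinction is presentational — the paper verifies the form equation for the full candidate $u = (\psi G_{\Lambda_2}+G_{\Lambda_1}TG_{\Lambda_2})g$ directly, whereas you first show $\psi G_{\Lambda_2}g \in D(H_{\Lambda_1})$ with $(H_{\Lambda_1}-E)\psi G_{\Lambda_2}g = \psi g - T G_{\Lambda_2}g$ and then rearrange; the algebra is identical.
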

\begin{proof} 
Let $g \in L^2(\Gamma \cap \Lambda_2^{(n)})$, $u := (\psi G_{\Lambda_2^{(n)}} +G_{\Lambda_1^{(n)}} \left( (\nabla \psi) \cdot \nabla + \nabla \cdot (\nabla \psi) \right) G_{\Lambda_2^{(n)}} ) g$, where $G_{\Lambda^{(n)}_i} := G_{\Lambda^{(n)}_i}(E)$ and put $\mathfrak{h}_{\Lambda_i^{(n)}} := \mathfrak{h}^{(n)}_{\omega,\Lambda_i}$. It suffices to show that $u \in D(\mathfrak{h}_{\Lambda_1^{(n)}})$ and
\[ 
(\mathfrak{h}_{\Lambda_1^{(n)}} - E)[u,w] = \langle \psi g, w \rangle \qquad \text{for all } w \in D(\mathfrak{h}_{\Lambda_1^{(n)}}) \, .
\]
Since $\psi \in \tilde{C}_c^{\infty}(\Gamma \cap \Lambda_1^{(n)})$ and $G_{\Lambda_2^{(n)}} g \in W^{1,2}(\Gamma \cap \Lambda_2^{(n)})$, we have $\psi G_{\Lambda_2^{(n)}} g \in D(\mathfrak{h}_{\Lambda_1^{(n)}})$ by Lemma~\ref{lem:sobolev}. Similarly $(\nabla \psi) G_{\Lambda_2^{(n)}} g \in (W^{1,2}(\Gamma \cap \Lambda_1^{(n)}))^n$, so $\nabla \cdot (\nabla \psi) G_{\Lambda_2^{(n)}} g \in L^2(\Gamma \cap \Lambda_1^{(n)})$ and $G_{\Lambda_1^{(n)}} [\nabla \cdot (\nabla \psi) G_{\Lambda_2^{(n)}} g] \in D(H_{\Lambda_1}^{(n)})$. Finally $(\nabla \psi) \cdot \nabla G_{\Lambda_2^{(n)}} g \in L^2(\Gamma \cap \Lambda_1^{(n)})$, hence $G_{\Lambda_1^{(n)}}[(\nabla \psi) \cdot \nabla G_{\Lambda_2^{(n)}} g] \in D(H_{\Lambda_1}^{(n)})$. Thus, $u \in D(\mathfrak{h}_{\Lambda_1^{(n)}})$ and
\begin{align*}
(\mathfrak{h}_{\Lambda_1^{(n)}} - E)[u,w] & = (\mathfrak{h}_{\Lambda_1^{(n)}} - E)[\psi G_{\Lambda_2^{(n)}}g,w] + \langle ((\nabla \psi) \cdot \nabla + \nabla \cdot (\nabla \psi)) G_{\Lambda_2^{(n)}} g, w \rangle \\
& = (\mathfrak{h}_{\Lambda_1^{(n)}} - E)[\psi G_{\Lambda_2^{(n)}}g,w] + \langle (\nabla \psi) \cdot \nabla (G_{\Lambda_2^{(n)}} g),w \rangle - \langle (\nabla \psi)G_{\Lambda_2^{(n)}} g, \nabla w \rangle \\
& = \langle \psi \nabla (G_{\Lambda_2^{(n)}} g), \nabla w \rangle + \langle  (V^{\omega} - E) \psi G_{\Lambda_2^{(n)}} g, w \rangle + \langle \nabla(G_{\Lambda_2^{(n)}} g),(\nabla \psi) w \rangle \\
& = (\mathfrak{h}_{\Lambda_2^{(n)}} - E)[G_{\Lambda_2^{(n)}} g, \psi w] = \langle g, \psi w \rangle = \langle \psi g, w \rangle
\end{align*}
where we used Lemma~\ref{lem:sym} in the second equality.
\end{proof}

\begin{lem}      \label{lem:SOL}
Let $\Lambda^{(n)}$ be a cube or $\Lambda^{(n)} = \R^{nd}$, let $\tilde{Q} \subset Q \subset \Lambda^{(n)}$ be cellular sets with $\dist(\partial Q, \partial \tilde{Q}) \ge 1$ and let $E_+ \in \R$. Then there exists $C = C(E_+,n,d,q_-)>0$ such that for any $E \le E_+$, if $f \in D(H^{(n)}_{\Lambda})$, then
\[ 
\| \chi_{\tilde{Q}} \nabla f \| \le C \cdot ( \| \chi_Q (H_{\Lambda}^{(n)}-E) f \| + \| \chi_Q f \| ) \, . \tag{\textsf{SOL}} 
\]
\end{lem}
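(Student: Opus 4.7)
The estimate is a Caccioppoli-type bound, which I would prove by testing the equation against a squared cutoff. My plan has three steps.

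\emph{Step 1 (Cutoff construction).} Because $\dist(\partial Q,\partial \tilde Q)\ge 1$, I would pick a smooth $\tilde\psi\in C_c^\infty(\R^{nd})$ with $0\le \tilde\psi\le 1$, $\tilde\psi\equiv 1$ on $\tilde Q$, $\supp\tilde\psi\subset Q$, and $\|\nabla\tilde\psi\|_\infty\le c$ for a constant $c$ depending only on $n,d$. Its restriction $\psi$ to $\Gamma^{(n)}\cap\Lambda^{(n)}$ lies in $\tilde C_c^\infty(\Gamma\cap\Lambda^{(n)})$, is bounded and Lipschitz on the graph, and in particular continuous across inner faces. Since $f\in D(H_\Lambda^{(n)})\subset D(\mathfrak{h}_{\omega,\Lambda}^{(n)})$, two successive applications of Lemma~\ref{lem:sobolev} yield $\psi^2 f\in D(\mathfrak{h}_{\omega,\Lambda}^{(n)})$, so the form identity
\[
\langle (H_\Lambda^{(n)}-E)f,\psi^2 f\rangle=\mathfrak{h}_{\omega,\Lambda}^{(n)}[f,\psi^2 f]-E\langle f,\psi^2 f\rangle
\]
is legitimate.

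\emph{Step 2 (Energy identity).} I would shift by $C_0:=\max(0,-nq_-)$. Since $U^{(n)}\ge 0$ and $W_\kappa^\omega\ge nq_-$, we have $V^\omega+C_0\ge 0$. Expanding $\nabla(\psi^2 f)=2\psi f\,\nabla\psi+\psi^2\nabla f$ via Lemma~\ref{lem:sobolev} and rearranging the form identity produces
\[
\|\psi\nabla f\|^2+\langle(V^\omega+C_0)\psi f,\psi f\rangle=\langle(H_\Lambda^{(n)}-E)f,\psi^2 f\rangle+(E+C_0)\|\psi f\|^2-2\langle\psi\nabla f,f\nabla\psi\rangle.
\]
The potential term on the left is nonnegative and will be discarded.

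\emph{Step 3 (Bounds).} I would bound the right-hand side using $\supp\psi\subset Q$, Cauchy--Schwarz, and Young's inequality:
\[
|\langle(H_\Lambda^{(n)}-E)f,\psi^2 f\rangle|\le\tfrac12\|\chi_Q(H_\Lambda^{(n)}-E)f\|^2+\tfrac12\|\chi_Q f\|^2,
\]
\[
2|\langle\psi\nabla f,f\nabla\psi\rangle|\le\tfrac12\|\psi\nabla f\|^2+2c^2\|\chi_Q f\|^2,
\]
and $(E+C_0)\|\psi f\|^2\le\max(0,E_++C_0)\|\chi_Q f\|^2$ (the term is simply dropped if $E+C_0<0$). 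Absorbing $\tfrac12\|\psi\nabla f\|^2$ on the left and using $\psi\equiv 1$ on $\tilde Q$ gives
\[
\|\chi_{\tilde Q}\nabla f\|^2\le \|\psi\nabla f\|^2\le C^2\bigl(\|\chi_Q(H_\Lambda^{(n)}-E)f\|^2+\|\chi_Q f\|^2\bigr)
\]
with $C$ depending only on $E_+,n,d,q_-$, from which \textsf{SOL} follows by the elementary inequality $\sqrt{a^2+b^2}\le a+b$ (up to adjusting $C$).

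No substantive obstacle arises: this is standard elliptic regularity. The only feature peculiar to the multi-particle quantum-graph setting is that the integration by parts must respect the continuity prescribed across the inner faces $\sigma^i$. This is exactly what Lemmas~\ref{lem:sobolev} and \ref{lem:sym} encode, and choosing a test function of the form $\psi^2 f$ with $\psi$ Lipschitz on $\Gamma^{(n)}$ keeps all intermediate objects inside the form domain, so no boundary contributions appear.
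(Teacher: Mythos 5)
Your proof is correct and follows essentially the same Caccioppoli-type argument as the paper: test the equation against $\psi^2 f$ with $\psi$ a Lipschitz cutoff adapted to $\tilde Q\subset Q$, use Lemma~\ref{lem:sobolev} to justify $\psi^2 f\in D(\mathfrak{h}_{\omega,\Lambda}^{(n)})$, and control the potential term using the lower bound $V^\omega\ge nq_-$ together with $E\le E_+$. The only cosmetic difference is in the final bookkeeping: you absorb $\tfrac12\|\psi\nabla f\|^2$ via Young's inequality and then use $\sqrt{a^2+b^2}\le a+b$, whereas the paper bounds the potential term by $|E_+-nq_-|\,\|f\|_Q^2$ directly and completes a square before taking the root; both yield the same constant structure.
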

\begin{proof} 
Since $\dist(\partial Q, \partial \tilde{Q}) \ge 1$, we may choose a real $\psi \in \tilde{C}_c^{\infty}(\Gamma \cap Q)$, $0 \le \psi \le 1$ with $\psi \equiv 1$ on $\Gamma \cap \tilde{Q}$ and $\| \nabla \psi \|_{\infty} \le C_1(nd)$. If $w:= f \psi^2$, then $w \in D(\mathfrak{h}_{\Lambda})$ by Lemma~\ref{lem:sobolev} and
\[ 
\langle \nabla f, \nabla w \rangle = \langle \psi \nabla f , \psi \nabla f \rangle + 2 \langle \psi \nabla f , f \nabla \psi \rangle \, . 
\]
Denoting $g:= (H_{\Lambda}^{(n)}-E)f$ we thus get
\begin{align*}
\| \psi \nabla f \|^2 & = \langle \nabla f , \nabla w \rangle - 2 \langle \psi \nabla f , f \nabla \psi \rangle \\
& = \langle g, w \rangle - \langle (V^{\omega}-E) f, w \rangle - 2 \langle \psi \nabla f, f \nabla \psi \rangle \\
& = \langle g \psi, f \psi \rangle - \langle V^{\omega} f \psi, f \psi \rangle + E \|f \psi \|^2 - 2 \langle \psi \nabla f, f \nabla \psi \rangle \\
& \le \|g \|_Q \|f \|_Q + C_2 \|f\|_Q^2 + 2 C_1 \| \psi \nabla f \| \|f \|_Q \, ,
\end{align*}
where $\| \phi \|_Q := \| \chi_Q \phi \|$ and $C_2:= |E_+ - nq_-|$. Hence
\[
\big( \| \psi \nabla f\| - C_1 \|f\|_Q \big)^2 \le \|g \|_Q \|f \|_Q + (C_1^2+C_2) \|f\|_Q^2 \le \Big(C_3 \|f\|_Q + \frac{1}{2C_3} \|g\|_Q \Big)^2,
\]
where $C_3 := \sqrt{ C_1^2 + C_2}$. The assertion follows by taking square roots.
\end{proof}

\begin{thm}      \label{thm:GRI}
Let $\Lambda_l^{(n)} \subset \Lambda_L^{(n)}$ be cubes with $l \ge 7$, let $A \subseteq \Lambda_{l-6}^{(n)}$, $B \subseteq \Lambda_L^{(n)} \setminus \Lambda_l^{(n)}$ be cellular sets and let $E_+ \in \R$. Then there exists $C=C(E_+,n,d,q_-)>0$ such that for all $E \in \rho(H_{\Lambda_L}^{(n)}) \cap \rho(H_{\Lambda_l}^{(n)}) \cap (-\infty,E_+]:$
\[ 
\|\chi_A G_{\Lambda_L^{(n)}}(E) \chi_B \| \le C \cdot \|\chi_A G_{\Lambda_l^{(n)}}(E) \chi_{\Lambda_l^{\out}} \| \cdot \| \chi_{\Lambda_l^{\out}} G_{\Lambda_L^{(n)}}(E) \chi_B \| \, .  \tag{\textsf{GRI.1}} 
\]
In particular, if $\mathbf{u} \in \Lambda_{l-7}^{(n)}$ and $\Lambda_l^{(n)} \subset \Lambda_{L-7}^{(n)}$, then given $\mathbf{y} \in \mathbf{B}_L^{\out}$, we have
\[ 
\| G_{\Lambda_L^{(n)}}(\mathbf{u},\mathbf{y};E) \| \le C \cdot |\mathbf{B}_l^{\out}|^2 \max_{\mathbf{w} \in \mathbf{B}_l^{\out}} \| G_{\Lambda_l^{(n)}}(\mathbf{u},\mathbf{w};E) \|  \max_{\mathbf{z} \in \mathbf{B}_l^{\out}} \| G_{\Lambda_L^{(n)}}(\mathbf{z},\mathbf{y};E) \| \, .  \tag{\textsf{GRI.2}} 
\]
\end{thm}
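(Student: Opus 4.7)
My plan is to derive (\textsf{GRI.1}) by applying the geometric resolvent equation (\textsf{GRE}) to a carefully chosen cutoff $\psi$ and then using (\textsf{SOL}) to dispose of the first-order derivatives appearing in the commutator. Then (\textsf{GRI.2}) will follow by a cellular decomposition of $\Lambda_l^{\out}$.

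First I pick a real $\psi \in \tilde{C}_c^\infty(\Gamma \cap \Lambda_l^{(n)})$ with $0\le\psi\le 1$, $\psi \equiv 1$ on $\Gamma \cap \Lambda_{l-5}^{(n)}$, $\supp\psi \subset \Lambda_{l-1}^{(n)}$, and $\supp\nabla\psi \subset \tilde Q := \Lambda_{l-1}^{(n)}\setminus\Lambda_{l-5}^{(n)}$, with $\|\nabla\psi\|_\infty,\|\Delta\psi\|_\infty \le C_0(n,d)$. Setting $T := (\nabla\psi)\cdot\nabla + \nabla\cdot(\nabla\psi)$ and applying (\textsf{GRE}) with $\Lambda_1=\Lambda_l^{(n)}$, $\Lambda_2=\Lambda_L^{(n)}$, then multiplying by $\chi_A$ on the left and $\chi_B$ on the right, I use that $\psi\chi_B=0$ (since $\supp\psi\subset\Lambda_l^{(n)}$ and $B\subseteq\Lambda_L^{(n)}\setminus\Lambda_l^{(n)}$) to kill the left-hand side, and $\chi_A\psi=\chi_A$ (since $\psi\equiv 1$ on $A\subseteq\Lambda_{l-6}^{(n)}$) to simplify the right-hand side. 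Since $T h$ is supported in $\supp\nabla\psi\subset \Lambda_l^{\out}$ for any $h$, I may also insert $\chi_{\Lambda_l^{\out}}$ to the right of $G_{\Lambda_l}(E)$ to get
\[
\|\chi_A G_{\Lambda_L}(E)\chi_B\| \le \|\chi_A G_{\Lambda_l}(E)\chi_{\Lambda_l^{\out}}\| \cdot \sup_{\|g\|\le 1}\|T G_{\Lambda_L}(E)\chi_B g\|.
\]

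The heart of the argument is to estimate this last supremum by $\|\chi_{\Lambda_l^{\out}} G_{\Lambda_L}(E)\chi_B\|$. For $f := G_{\Lambda_L}(E)\chi_B g \in D(H_{\Lambda_L}^{(n)})$, expanding $\nabla\cdot(\nabla\psi)f = (\Delta\psi)f + (\nabla\psi)\cdot\nabla f$ gives $\|Tf\|\le 2C_0\|\chi_{\tilde Q}\nabla f\| + C_0\|\chi_{\tilde Q} f\|$. I apply (\textsf{SOL}) with the cellular pair $\tilde Q \subset Q := \Lambda_l^{\out}$: by construction $\dist(\partial Q,\partial\tilde Q)\ge 1$ on both the inner and outer boundary components of the annulus, and crucially $Q\cap B = \emptyset$ since $Q\subseteq\Lambda_l^{(n)}$. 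As $(H_{\Lambda_L}^{(n)}-E)f=\chi_B g$, the data term in (\textsf{SOL}) vanishes, yielding $\|\chi_{\tilde Q}\nabla f\| \le C\|\chi_Q f\| = C\|\chi_{\Lambda_l^{\out}} f\|$. Combined with $\|\chi_{\tilde Q} f\| \le \|\chi_{\Lambda_l^{\out}} f\|$, this gives $\|Tf\| \le C' \|\chi_{\Lambda_l^{\out}} G_{\Lambda_L}(E)\chi_B g\|$, and taking the supremum over $\|g\|\le 1$ establishes (\textsf{GRI.1}) with constant $C = C(E_+,n,d,q_-)$ coming from Lemma~\ref{lem:SOL}.

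For (\textsf{GRI.2}) I take $A = \mathbf{C}(\mathbf{u})$ and $B = \mathbf{C}(\mathbf{y})$; the assumptions $\mathbf{u}\in\Lambda_{l-7}^{(n)}$ and $\Lambda_l^{(n)}\subset\Lambda_{L-7}^{(n)}$ ensure $\mathbf{C}(\mathbf{u})\subseteq\Lambda_{l-6}^{(n)}$ and $\mathbf{C}(\mathbf{y})\subseteq\Lambda_L^{(n)}\setminus\Lambda_l^{(n)}$, so (\textsf{GRI.1}) applies. Bounding $\chi_{\Lambda_l^{\out}}\le \sum_{\mathbf{w}\in\mathbf{B}_l^{\out}}\chi_{\mathbf{w}}$ and using the triangle inequality in each of the two factors produced by (\textsf{GRI.1}) turns each of them into at most $|\mathbf{B}_l^{\out}|$ times a maximum of cell-to-cell operator norms $\|G(\cdot,\cdot;E)\|$, which is precisely (\textsf{GRI.2}). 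The delicate point I expect to require the most care is the geometry in the SOL step: the support of $\nabla\psi$ has to be squeezed strictly inside $\Lambda_l^{\out}$ so that the one-unit SOL buffer still fits inside $\Lambda_l^{\out}$ on both sides simultaneously, and this is exactly what the $6$-unit width in the definition $\Lambda_l^{\out}=\Lambda_l^{(n)}\setminus\Lambda_{l-6}^{(n)}$ is designed to accommodate.
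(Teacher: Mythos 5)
Your proof is correct and follows the same overall architecture as the paper (a geometric resolvent equation followed by Lemma~\ref{lem:SOL} to discard the derivatives), but there is one genuine organizational difference worth pointing out. After applying \textsf{(GRE)} you factor at the level of \emph{operators}, $\chi_A G_{\Lambda_L}\chi_B = -\chi_A G_{\Lambda_l}\chi_{\Lambda_l^{\out}}\,T\, G_{\Lambda_L}\chi_B$, and then bound $\|T G_{\Lambda_L}\chi_B\|$ by expanding the divergence via the Leibniz rule $\nabla\cdot\big((\nabla\psi)f\big)=(\Delta\psi)f+(\nabla\psi)\cdot\nabla f$, which requires only one application of \textsf{(SOL)} (to $G_{\Lambda_L}\chi_B$) and an additional bound $\|\Delta\psi\|_\infty\le C_0$. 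The paper instead splits $T$ into its two summands, estimates each by duality, and uses Lemma~\ref{lem:sym} (the integration-by-parts identity on $\Gamma^{(n)}\cap\Lambda$) to move $\nabla\cdot(\nabla\psi)$ onto the other resolvent; this costs two applications of \textsf{(SOL)} (one for $\nabla G_{\Lambda_L}\chi_B$ and one for $\nabla G_{\Lambda_l}\chi_A$) but avoids mentioning $\Delta\psi$ and bypasses the Leibniz computation. Your route therefore dispenses with Lemma~\ref{lem:sym} entirely in this proof (that lemma is still needed for \textsf{(GRE)} itself), at the negligible cost of controlling $\|\Delta\psi\|_\infty$, which is harmless since $\psi$ is chosen smooth. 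Both yield the same constant structure $C=C(E_+,n,d,q_-)$, and your deduction of \textsf{(GRI.2)} from \textsf{(GRI.1)} by taking $A=\mathbf{C}(\mathbf{u})$, $B=\mathbf{C}(\mathbf{y})$ and summing over the cells covering $\Lambda_l^{\out}$ matches the paper exactly.

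One minor remark: when inserting $\chi_{\Lambda_l^{\out}}$ to the right of $G_{\Lambda_l}$, you use that $T h$ is supported in $\supp\nabla\psi$, which is true because every term of $T h$ carries a factor of $\nabla\psi$ or $\Delta\psi$ and $\supp\Delta\psi\subseteq\supp\nabla\psi$; it is worth stating that inclusion explicitly, as it is what makes the insertion legitimate for the $\nabla\cdot(\nabla\psi)$ part as well as the first-order part.
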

\begin{proof} 
Let $G_{\Lambda} := G_{\Lambda}(E)$, $Q = \inter \Lambda_l^{\out}$ and choose a real $\psi \in \tilde{C}_c^{\infty}(\Gamma \cap \Lambda_l^{(n)})$ such that $\psi = 1$ on $\Gamma \cap \Lambda_{l-4}^{(n)}$, $\supp \psi \subset \Lambda_{l-2}^{(n)}$ and $\| \nabla \psi\|_{\infty}$ is bounded independently of $\Lambda_l^{(n)}$. Then
\begin{align*}
\|\chi_A G_{\Lambda_L^{(n)}} \chi_B \| & = \|\chi_A(\psi G_{\Lambda_L^{(n)}} - G_{\Lambda_l^{(n)}} \psi) \chi_B \| \qquad (\psi |_{\Gamma \cap A} = 1, \psi |_{\Gamma \cap B} = 0) \\
& = \| \chi_A ( G_{\Lambda_l^{(n)}} ( ( \nabla \psi) \cdot \nabla + \nabla \cdot ( \nabla \psi)) G_{\Lambda_L^{(n)}}) \chi_B \| \qquad \textsf{(GRE)} \\
& \le \| \chi_A G_{\Lambda_l^{(n)}} ( \nabla \psi) \cdot \nabla G_{\Lambda_L^{(n)}} \chi_B \| + \| \chi_A G_{\Lambda_l^{(n)}} \nabla \cdot ( \nabla \psi) G_{\Lambda_L^{(n)}} \chi_B \| \, .
\end{align*}
Now let $\tilde{Q} = \inter(\Lambda_{l-1}^{(n)} \setminus \Lambda_{l-5}^{(n)})$, so $\supp \nabla \psi \subset \tilde{Q}$ and $\dist(\partial Q , \partial \tilde{Q}) = 1$. Hence given $f_1,f_2 \in L^2(\Gamma \cap \Lambda_L^{(n)})$, $\| f_1 \| = \|f_2\|= 1$, we have
\begin{align*}
|\langle \chi_A G_{\Lambda_l^{(n)}} ( \nabla \psi) \cdot \nabla G_{\Lambda_L^{(n)}} \chi_B f_1, f_2 \rangle| & = |\langle \nabla G_{\Lambda_L^{(n)}} \chi_B f_1, (\nabla \psi) G_{\Lambda_l^{(n)}} \chi_A f_2 \rangle| \\
& \le \| \nabla \psi \|_{\infty} \|\chi_{\tilde{Q}} \nabla G_{\Lambda_L^{(n)}} \chi_B \| \| \chi_{\tilde{Q}} G_{\Lambda_l^{(n)}} \chi_A \| \, .
\end{align*}
Furthermore, using Lemma~\ref{lem:sym} we have
\begin{align*}
|\langle \chi_A G_{\Lambda_l^{(n)}} \nabla \cdot ( \nabla \psi) G_{\Lambda_L^{(n)}} \chi_B f_1 , f_2 \rangle| & =  |\langle f_1, \chi_B G_{\Lambda_L^{(n)}} ( \nabla \psi) \cdot \nabla G_{\Lambda_l^{(n)}} \chi_A f_2 \rangle| \\
& \le \| \nabla \psi\|_{\infty} \| \chi_B G_{\Lambda_L^{(n)}} \chi_{\tilde{Q}} \| \| \chi_{\tilde{Q}} \nabla G_{\Lambda_l^{(n)}} \chi_A \| \, .
\end{align*}
Noting that for a bounded operator $T$ we have $\|T\| = \|T^{\ast}\|$, we thus get
\[ 
\|\chi_A G_{\Lambda_L^{(n)}} \chi_B \| \le \| \nabla \psi \|_{\infty} (\| \chi_A G_{\Lambda_l^{(n)}} \chi_{\tilde{Q}} \| \| \chi_{\tilde{Q}} \nabla G_{\Lambda_L^{(n)}} \chi_B \| + \| \chi_{\tilde{Q}} \nabla G_{\Lambda_l^{(n)}} \chi_A \| \| \chi_{\tilde{Q}} G_{\Lambda_L^{(n)}} \chi_B \|).
\]
Now by Lemma~\ref{lem:SOL}, we can find $C_1$ such that
\[ 
\| \chi_{\tilde{Q}} \nabla G_{\Lambda_L^{(n)}} \chi_B \| \le C_1 \cdot \| \chi_{Q} G_{\Lambda_L^{(n)}} \chi_B \| \, .
\]
Indeed, given $u \in L^2(\Gamma \cap \Lambda_L^{(n)})$, apply \textsf{(SOL)} to $f = G_{\Lambda_L^{(n)}} \chi_B u$. Noting that $(H_{\Lambda_L}^{(n)} - E) f =  \chi_B u = 0$ on $Q$ we get $\| \chi_{\tilde{Q}} \nabla G_{\Lambda_L^{(n)}} \chi_B u \| \le C_1 \cdot \| \chi_{Q} G_{\Lambda_L^{(n)}} \chi_B u \|$. As $u$ is arbitrary, the assertion follows. In the same way we find $C_2$ such that
\[ 
\| \chi_{\tilde{Q}} \nabla G_{\Lambda_l^{(n)}} \chi_A \| \le C_2 \cdot \| \chi_Q G_{\Lambda_l^{(n)}} \chi_A \|  = C_2 \cdot \| \chi_A G_{\Lambda_l^{(n)}} \chi_Q \| \, . 
\]
Noting that $\tilde{Q} \subset Q$, we finally get
\[
\|\chi_A G_{\Lambda_L^{(n)}} \chi_B \|  \le C \cdot \| \chi_A G_{\Lambda_l^{(n)}} \chi_Q \| \cdot \| \chi_Q G_{\Lambda_L^{(n)}} \chi_B \|
\]
for $C = \max(2C_1 \| \nabla \psi\|_{\infty}, 2C_2 \| \nabla \psi \|_{\infty})$. We thus have \textsf{(GRI.1)}. 

For \textsf{(GRI.2)}, note that $\Lambda_l^{\out} \subseteq \bigcup _{\mathbf{w} \in \mathbf{B}_l^{\out}} \mathbf{C}(\mathbf{w}) $, so \textsf{(GRI.1)} gives us
\[ 
\| \chi_{\mathbf{u}} G_{\Lambda_L^{(n)}}(E) \chi_{\mathbf{y}} \| \le C \sum_{\mathbf{w},\mathbf{z} \in \mathbf{B}_l^{\out}} \| \chi_{\mathbf{u}} G_{\Lambda_l^{(n)}}(E) \chi_{\mathbf{w}} \| \| \chi_{\mathbf{z}} G_{\Lambda_L^{(n)}}(E) \chi_{\mathbf{y}} \|.  \qedhere 
\]
\end{proof}

We now give a resolvent inequality which is special to multi-particle systems.

\begin{thm}      \label{thm:GRI.2}
Let $\Lambda_L^{(n)}(\mathbf{u})$ be a $\mathcal{J}$-decomposable cube, let $\mathbf{x},\mathbf{y} \in \mathbf{B}^{(n)}_L(\mathbf{u})$ and suppose that $E \in \rho(H_{\Lambda_L(\mathbf{u})}^{(n)})$. There exists $S^{\ast} = S^{\ast}(n,q_-,E)$ such that for $S>S^{\ast}$, and under the notations of Remark~\ref{rem:dec}, if $\delta_1 := |x_{\mathcal{J}^c} - y_{\mathcal{J}^c}| >2$, then
\[ 
\|G_{\Lambda_L^{(n)}(\mathbf{u})}(\mathbf{x},\mathbf{y};E) \| \le M_1 \cdot \max \limits_{a \le M_1 } \|G_{\Lambda_L^{(n'')}(u_{\mathcal{J}^c})}(x_{\mathcal{J}^c},y_{\mathcal{J}^c};E-\lambda_a)\| + |\Lambda_L^{(n')}| e^{- \delta_1 S} \tag{\textsf{GRI.3}} 
\]
for $M_1 =  \Big(\lfloor \frac{d^{n'}((4S)^2+E-nq_-)^{n'/2}}{(4\pi)^{n'/2} \Gamma(n'/2)} \rfloor +1\Big) \cdot |\Lambda_L^{(n')}| $, and if $\delta_2 := |x_{\mathcal{J}} - y_{\mathcal{J}}| > 2$, then
\[ 
\|G_{\Lambda_L^{(n)}(\mathbf{u})}(\mathbf{x},\mathbf{y};E) \| \le M_2 \cdot \max \limits_{b \le M_2} \|G_{\Lambda_L^{(n')}(u_{\mathcal{J}})}(x_{\mathcal{J}},y_{\mathcal{J}};E-\mu_b)\| + |\Lambda_L^{(n'')}| e^{- \delta_2 S} \tag{\textsf{GRI.3}} 
\]
for $M_2 = \Big(\lfloor \frac{d^{n''}((4S)^2+E-nq_-)^{n''/2}}{(4\pi)^{n''/2} \Gamma(n''/2)} \rfloor +1\Big) \cdot |\Lambda_L^{(n'')}|$.
\end{thm}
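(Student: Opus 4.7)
The strategy is to exploit the tensor decomposition of Remark~\ref{rem:dec}: since $\Lambda_L^{(n)}(\mathbf{u})$ is $\mathcal{J}$-decomposable, $H_{\Lambda_L(\mathbf{u})}^{(n)}=H_{\Lambda_L(u_{\mathcal{J}})}^{(n')}\otimes I+I\otimes H_{\Lambda_L(u_{\mathcal{J}^c})}^{(n'')}$ with joint eigenbasis $\Psi_{a,b}=\varphi_a\otimes\psi_b$. Specializing (\ref{eq:dec}) to the resolvent function and collapsing the inner sum over $b$ by the functional calculus for $H^{(n'')}$, and using the factorization $\chi_{\mathbf{x}}=\chi_{x_{\mathcal{J}}}\otimes\chi_{x_{\mathcal{J}^c}}$ (and similarly for $\mathbf{y}$), I obtain
\[
\chi_{\mathbf{x}}G_{\Lambda_L(\mathbf{u})}(E)\chi_{\mathbf{y}}=\sum_a\bigl(\chi_{x_{\mathcal{J}}}P_a\chi_{y_{\mathcal{J}}}\bigr)\otimes G_{\Lambda_L(u_{\mathcal{J}^c})}^{(n'')}(x_{\mathcal{J}^c},y_{\mathcal{J}^c};E-\lambda_a),
\]
reducing the proof of the first inequality of \textsf{(GRI.3)} to estimating this single $a$-sum.

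I would split at the threshold $T:=(4S)^2+E-n''q_-$. For the low piece $\{a:\lambda_a\le T\}$, the Weyl law (Lemma~\ref{lem:WEYL}) applied to $H_{\Lambda_L(u_{\mathcal{J}})}^{(n')}$ bounds the number of eigenvalues $\le T$ by $M_1$; indeed $T-n'q_-=(4S)^2+E-nq_-$, which matches the explicit constant in $M_1$ provided $S$ exceeds the threshold appearing in Lemma~\ref{lem:WEYL}. Bounding the norm of each summand by $\|\chi_{x_{\mathcal{J}}}P_a\chi_{y_{\mathcal{J}}}\|\cdot\|G^{(n'')}(x_{\mathcal{J}^c},y_{\mathcal{J}^c};E-\lambda_a)\|\le\|G^{(n'')}(x_{\mathcal{J}^c},y_{\mathcal{J}^c};E-\lambda_a)\|$ (all characteristic functions and the projection $P_a$ are $L^2$-contractions) and summing the at most $M_1$ contributions yields the first term on the right of the statement.

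For the high piece $\{a:\lambda_a>T\}$, I use the integral representation of the resolvent. Our choice of $T$ gives $E-\lambda_a<s_\omega^{(n'')}-(4S)^2$, so the Laplace transform $G^{(n'')}(E-\lambda_a)=\int_0^\infty e^{t(E-\lambda_a)}e^{-tH^{(n'')}}\,dt$ converges in operator norm. Summing over the high $a$ and using $\sum_{\lambda_a>T}e^{-t\lambda_a}P_a=Q_Te^{-tH^{(n')}}$ together with $e^{-tH^{(n)}}=e^{-tH^{(n')}}\otimes e^{-tH^{(n'')}}$ rewrites the high part as $\int_0^\infty e^{tE}(Q_Te^{-tH^{(n')}})\otimes e^{-tH^{(n'')}}\,dt$. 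Sandwiching by $\chi_{\mathbf{x}}$ and $\chi_{\mathbf{y}}$, the integrand factors as $(\chi_{x_{\mathcal{J}}}Q_Te^{-tH^{(n')}}\chi_{y_{\mathcal{J}}})\otimes(\chi_{x_{\mathcal{J}^c}}e^{-tH^{(n'')}}\chi_{y_{\mathcal{J}^c}})$, and the multiplicativity $\|A\otimes B\|=\|A\|\cdot\|B\|$ lets me estimate the two factors independently: the first by $e^{-tT}$ via the spectral theorem for $Q_T e^{-tH^{(n')}}$, the second by Lemma~\ref{lem:DG} (applicable since $\delta_1>2\ge1$) as $e^{-ts_\omega^{(n'')}}e^{-\delta_1^2/4t}$. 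Integrating in $t$ produces the Gaussian integral already computed in the proof of Theorem~\ref{thm:CT2}, equal to $(\delta_1/\sqrt{\eta})K_1(\delta_1\sqrt{\eta})$ with $\eta=T+s_\omega^{(n'')}-E\ge(4S)^2$. The Bessel bound $K_1(z)\le\sqrt{\pi/(2z)}\,e^{-z}(1+\tfrac{3}{8z})$ then yields an estimate of the form $C\sqrt{\delta_1}\,S^{-3/2}\,e^{-4S\delta_1}$, which for $S$ exceeding an appropriate $S^\ast(n,q_-,E)$ is majorized by $e^{-\delta_1 S}$, a fortiori by $|\Lambda_L^{(n')}|\,e^{-\delta_1 S}$.

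The second inequality ($\delta_2>2$) follows from the symmetric argument in which the roles of $\mathcal{J}$ and $\mathcal{J}^c$ are interchanged: sum (\ref{eq:dec}) in $a$ instead of $b$, apply the Weyl bound to $H_{\Lambda_L(u_{\mathcal{J}^c})}^{(n'')}$, and extract the Davies-Gaffney decay on the $\mathcal{J}$-side. The main conceptual step, and the one I anticipate as the potential obstacle, is recognizing that the operator norm at the high-energy stage factorizes \emph{exactly}, because both $\chi_{\mathbf{x}}$, $\chi_{\mathbf{y}}$ and the propagator $e^{-tH^{(n)}}$ decouple along the same $\mathcal{J}/\mathcal{J}^c$ splitting of $\Lambda_L^{(n)}(\mathbf{u})$. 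Once this is observed, the high-energy computation runs almost verbatim as in the proof of Theorem~\ref{thm:CT2}, with the spectral cutoff $Q_T$ playing the role of a shift of the spectrum by $(4S)^2$.
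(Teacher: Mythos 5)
Your proof is correct, and the treatment of the high-energy tail is genuinely different from the paper's. Both proofs begin the same way: tensor-decompose the resolvent via \eqref{eq:dec}, bound the number of low eigenvalues of $H^{(n')}_{\Lambda_L(u_{\mathcal{J}})}$ below threshold $(4S)^2 + E - n''q_-$ by $M_1$ using \textsf{(WEYL.$n'$)}, and bound each low summand by the $n''$-dimensional resolvent norm. For the tail $\{a : \lambda_a > T\}$, however, the paper keeps the sum over $a$, partitions it into quadratic blocks using $S_j=(4Sj)^2$, applies the Combes--Thomas estimate (Theorem~\ref{thm:CT2}) to each block, and then sums the resulting geometric series $\sum_j e^{-2\delta S j}$. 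You instead \emph{resum} the tail: writing each resolvent via its Laplace transform and using $\sum_{\lambda_a>T}e^{-t\lambda_a}P_a = Q_T e^{-tH^{(n')}}$, you collapse the tail into a single Bochner integral whose integrand factorizes across the tensor product, so that $\|Q_T e^{-tH^{(n')}}\|\le e^{-tT}$ supplies the missing shift $(4S)^2$ while the improved Davies--Gaffney bound (Lemma~\ref{lem:DG}) controls the $n''$-factor. This is cleaner in that it produces a single Bessel-type integral rather than an infinite sum of them, effectively re-deriving the Combes--Thomas estimate for the cut-off Hamiltonian on the fly. One small slip worth correcting: the distance that enters Lemma~\ref{lem:DG} and the Bessel estimate is $\delta := \operatorname{dist}(\mathbf{C}(x_{\mathcal{J}^c}),\mathbf{C}(y_{\mathcal{J}^c})) = \delta_1 - 2$, not $\delta_1$ itself (the cells $\mathbf{C}(\cdot)$ have side $2$); this is what the paper uses. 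Since $\delta_1\ge 3$ implies $4S(\delta_1-2) > S\delta_1$, the final majorization by $e^{-\delta_1 S}$ still holds with plenty of margin, so this does not affect the conclusion.
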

\begin{proof} 
We only prove the first bound; the second one is similar. Put $\Lambda^{(n)} := \Lambda_L^{(n)}(\mathbf{u}) $, $\Lambda_1^{(n')} := \Lambda_L^{(n')}(u_{\mathcal{J}})$ and $\Lambda_2^{(n'')} := \Lambda_L^{(n'')}(u_{\mathcal{J}^c})$. Using (\ref{eq:dec}) with $\eta (t) := (t-E)^{-1}$ we get
\[
G_{\Lambda^{(n)}}(E) = \sum_a P_a \otimes \Big( \sum_b \frac{1}{\mu_b - (E- \lambda_a)} P_b \Big) = \sum_a P_a \otimes G_{\Lambda_2^{(n'')}}(E- \lambda_a) \, .
\]
Hence noting that $G_{\Lambda^{(n)}}(\mathbf{x},\mathbf{y};E) := \chi_{\mathbf{x}} G_{\Lambda^{(n)}}(E) \chi_{\mathbf{y}}$, we get
\begin{align*}
\|G_{\Lambda^{(n)}}(\mathbf{x},\mathbf{y};E) \| & \le \sum_a \| \chi_{x_{\mathcal{J}}} P_a \chi_{y_{\mathcal{J}}} \otimes \chi_{x_{\mathcal{J}^c}} G_{\Lambda_2^{(n'')}}(E- \lambda_a) \chi_{y_{\mathcal{J}^c}} \| \\
& \le \sum_a \| \chi_{x_{\mathcal{J}^c}} G_{\Lambda_2^{(n'')}}(E- \lambda_a) \chi_{y_{\mathcal{J}^c}} \| \, .
\end{align*}
Now given $S_j \gg 1$, by \textsf{(WEYL.$n'$)} the constants $C_j = \lfloor \frac{d^{n'}(S_j+E-nq_-)^{n'/2}}{(4\pi)^{n'/2}\Gamma(n'/2)} \rfloor +1$ satisfy 
\[ 
a>C_j|\Lambda_1^{(n')}| \implies \lambda_a > S_j+E-n''q_- \implies \eta_a >S_j \, , 
\]
where $\eta_a := n''q_- - (E - \lambda_a)$. Hence if $\delta_1>2$, taking $\delta := \dist(C(x_{\mathcal{J}^c}),C(y_{\mathcal{J}^c})) = \delta_1 -2$ and $S_j := (4Sj)^2$, we get by Combes-Thomas estimate,
\[
\sum_{a = C_j | \Lambda_1^{(n')}| +1}^{C_{j+1} | \Lambda_1^{(n')}|} \| G_{\Lambda_2^{(n'')}}(x_{\mathcal{J}^c},y_{\mathcal{J}^c};E-\lambda_a)\| \le (C_{j+1} - C_j) |\Lambda_1^{(n')}| e^{-\delta \sqrt{\frac{S_j}{2}}} \le |\Lambda_1^{(n')}| e^{-\delta \frac{\sqrt{S_j}}{2}}
\]
provided $S$ is large enough. Hence
\[ 
\sum_{a>C_1|\Lambda_1^{(n')}|} \| G_{\Lambda_2^{(n'')}}(x_{\mathcal{J}^c},y_{\mathcal{J}^c};E-\lambda_a)\| \le |\Lambda_1^{(n')}| \sum_{j=1}^{\infty} e^{-\delta \frac{\sqrt{S_j}}{2}} \, .
\]
But
\[
\sum_{j=1}^{\infty} e^{-\delta \frac{\sqrt{S_j}}{2}} = \sum_{j=1}^{\infty} e^{-(2 \delta S)j} = \frac{e^{-2 \delta S}}{1 - e^{-2 \delta S}} \le 2 e^{-2 \delta S} \le e^{- \delta_1 S} \, .
\]
We thus obtain the first bound with $M_1 := C_1 |\Lambda_1^{(n')}|$.
\end{proof}

\section{Wegner Estimates}    \label{sec:W}
To establish Wegner estimates we use some ideas of \cite{CS}, but we rely entirely on measure-theoretic arguments. For a probability measure $\mu$ on $\R$ we put
\[ 
s(\mu,\varepsilon) := \sup \, \{ \mu[a,b]:b-a \le \varepsilon \} \, .
\]

Given $J \subset \mathcal{E}(\Gamma^{(1)})$ and $\omega \in \Omega$, we denote $\omega = (\omega^J, \omega^{J^c})$, where $\omega^J := (\omega_e)_{e \in J}$. If $A \subseteq \Omega$ is measurable and $\omega^{J^c}$ is fixed, we define the section $A_{\omega^{J^c}} := \{ \omega^J : (\omega^J,\omega^{J^c}) \in A \}$ and put $\prob_J := \mathop \otimes_{e \in J} \mu$. Then by definition of a product measure, we have $\prob(A) = \expect_{J^c} \{\prob_J (A_{\omega^{J^c}})\}$, where $\expect_{J^c}$ denotes the integration over $\omega^{J^c}$. 

\begin{thm}      \label{thm:Wegner1}
Let $E \in \R$ and $\varepsilon >0$. There exists a non-random $C=C(n,d,E+\varepsilon -nq_-)$ such that for any $\Lambda_{\LL}^{(n)}(\mathbf{u})$ and any $1 \le i \le n$, if $J:= \mathcal{E}\big(\Gamma^{(1)} \cap \Pi_i \Lambda_{\LL}^{(n)}(\mathbf{u})\big)$, then
\[ 
\prob_J \big(\{ \dist(\sigma(H_{\Lambda_{\LL}(\mathbf{u})}^{(n)}(\omega)),E) < \varepsilon \big\}_{\omega^{J^c}} \big) \le C \cdot |\Lambda_{\LL}^{(n)}(\mathbf{u})| \cdot |\Pi_i\Lambda_{\LL}^{(n)}(\mathbf{u})| \cdot s(\mu,2 \varepsilon)
\]
for any $\omega^{J^c}$.
\end{thm}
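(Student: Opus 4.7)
Fix $\omega^{J^c} \in \Omega^{J^c}$ throughout, so that only the variables $\omega^J = (\omega_e)_{e \in J}$ remain random. My plan is a standard application of Stollmann's lemma combined with Weyl's law: I will first observe that the eigenvalues of $H^{(n)}_{\Lambda_\LL(\mathbf{u})}(\omega)$, viewed as functions of $\omega^J$ alone, are monotone in Stollmann's sense; then bound the probability that a single eigenvalue lies near $E$ by $|J|\cdot s(\mu,2\varepsilon)$; and finally use the Weyl bound \textsf{(WEYL.$n$)} to limit the number of relevant eigenvalues.

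\textbf{Step 1 (diagonal monotonicity).} Every cube $\kappa = (e_1,\ldots,e_n) \in \mathcal{K}(\Gamma^{(n)} \cap \Lambda_\LL^{(n)}(\mathbf{u}))$ satisfies $e_i \in \mathcal{E}(\Gamma^{(1)} \cap \Pi_i \Lambda_\LL^{(n)}(\mathbf{u})) = J$ by definition of the $i$-th projection. Hence if we shift $\omega^J \mapsto \omega^J + t\,\mathbf{1}$ with $t>0$, then
\[
W_\kappa^{\omega + t\mathbf{1}} - W_\kappa^\omega = t\cdot\#\{k : e_k \in J\} \ge t
\]
on every such $\kappa$, so the full potential $V_\kappa^{\omega}$ is pointwise shifted by at least $t$. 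By the min-max principle applied to $\mathfrak{h}_{\omega,\Lambda_\LL(\mathbf{u})}^{(n)}$, every eigenvalue satisfies
\[
E_j\big(H^{(n)}_{\Lambda_\LL(\mathbf{u})}(\omega^J + t\mathbf{1},\omega^{J^c})\big) \ge E_j\big(H^{(n)}_{\Lambda_\LL(\mathbf{u})}(\omega^J,\omega^{J^c})\big) + t.
\]

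\textbf{Step 2 (eigenvalue concentration via Stollmann).} The map $F_j: \omega^J \mapsto E_j(H^{(n)}_{\Lambda_\LL(\mathbf{u})}(\omega^J,\omega^{J^c}))$ therefore satisfies the hypothesis of Stollmann's lemma on $\R^J$ with product measure $\prob_J = \bigotimes_{e\in J}\mu$, giving
\[
\prob_J\big(\{\omega^J : F_j(\omega^J) \in [E-\varepsilon, E+\varepsilon]\}\big) \le |J|\cdot s(\mu,2\varepsilon).
\]

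\textbf{Step 3 (counting with Weyl).} The section $\{\dist(\sigma(H^{(n)}_{\Lambda_\LL(\mathbf{u})}(\omega)),E) < \varepsilon\}_{\omega^{J^c}}$ is contained in $\bigcup_j \{F_j \in [E-\varepsilon,E+\varepsilon]\}$. Only indices $j$ with $F_j \le E+\varepsilon$ can contribute; by \textsf{(WEYL.$n$)}, there exists a non-random constant $C_1 = C_1(n,d,E+\varepsilon-nq_-)$, independent of $\omega$, such that this forces $j \le C_1\cdot |\Lambda_\LL^{(n)}(\mathbf{u})|$. Combining Steps 2 and 3 via the union bound and invoking \textsf{(NB.$1$)} to write $|J|\le d\cdot |\Pi_i\Lambda_\LL^{(n)}(\mathbf{u})|$,
\[
\prob_J\big(\{\dist(\sigma(H^{(n)}_{\Lambda_\LL(\mathbf{u})}(\omega)),E) < \varepsilon\}_{\omega^{J^c}}\big) \le C_1|\Lambda_\LL^{(n)}(\mathbf{u})|\cdot d|\Pi_i\Lambda_\LL^{(n)}(\mathbf{u})|\cdot s(\mu,2\varepsilon),
\]
which is the claimed bound with $C := d\,C_1$. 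Since the estimate holds for every fixed $\omega^{J^c}$ and $C$ does not depend on $\omega^{J^c}$, the conclusion follows.

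The only non-routine point is the monotonicity statement of Step 1, and that is handled by the trivial but crucial observation that \emph{every} cube in $\Lambda_\LL^{(n)}(\mathbf{u})$ has its $i$-th coordinate edge inside $J$; all other steps are mechanical.
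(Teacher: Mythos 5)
Your proof is correct and follows essentially the same route as the paper's: both arguments decompose the potential according to which edges belong to $J$, observe that every $\kappa$ in the box contributes at least one coordinate from $J$ (giving the diagonal lower bound $E_j(\omega^J+t\mathbf{1})\ge E_j(\omega^J)+t$), apply Stollmann's lemma, and then truncate the eigenvalue sum via \textsf{(WEYL.$n$)} and bound $|J|$ via \textsf{(NB.1)}. One tiny point worth making explicit in Step 2: Stollmann's lemma also requires coordinatewise monotonicity of each $E_j$ in $\omega^J$, which here is immediate because $W_\kappa^{\omega}$ depends on each $\omega_e$ through a nonnegative coefficient $c_\kappa(e)\in\{0,1\}$; the paper records this separately, and it's cleaner to state it rather than fold it silently into "the hypothesis of Stollmann's lemma".
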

\begin{proof}
Put $\Lambda := \Lambda_{\LL}^{(n)}(\mathbf{u})$ and fix $\omega^{J^c}$. By Lemma~\ref{lem:WEYL}, we may find $C'=C'(n,d,E+\varepsilon -nq_-)$ such that $E_j(\omega) := E_j(H_{\Lambda}^{(n)}(\omega)) > E+\varepsilon$ if $j>C' \cdot |\Lambda|$. Hence,
\begin{equation}
\prob_J \big( \{ \dist(\sigma(H_{\Lambda}^{(n)}(\omega)),E) < \varepsilon \}_{\omega^{J^c}} \big) \le \sum_{j \le C'|\Lambda|} \prob_J \big( \{ |E_j(\omega) - E| < \varepsilon \}_{\omega^{J^c}} \big) \, .       \label{eq:w1} 
\end{equation}

Given $\kappa = (e_1, \ldots , e_n) \in \mathcal{K}(\Gamma \cap \Lambda)$ we have
\[
W_{\kappa}^{\omega} = \omega_{e_1} + \ldots + \omega_{e_n} = \sum_{e \in \mathcal{E}} c_{\kappa}(e) \omega_e, \quad \text{where } c_{\kappa}(e):=
\begin{cases}
1&\text{if }e= e_j \text{ for some }j,\\
0&\text{otherwise}.\end{cases} 
\]

Hence
\[ 
W_{\kappa}^{\omega} = \sum_{e \in J} c_{\kappa}(e) \omega_e + \sum_{e \in J^c} c_{\kappa}(e) \omega_e = W_{\omega^J}(\kappa) + W_{\omega^{J^c}}(\kappa) \, . 
\]

Now
\[ 
H_{\Lambda}^{(n)}(\omega) = -\Delta + U + W_{\omega^{J^c}} + W_{\omega^J} = K_{\omega^{J^c}} + W_{\omega^J}, 
\]
where the operator $K_{\omega^{J^c}}$ does not depend on $\omega^J$. Let $(f_{\kappa}) \in L^2(\Gamma \cap \Lambda)$ with $\| (f_{\kappa})\|=1$, let $t \ge 0$ and denote $\one := (1,\ldots,1) \in \R^J$. Then
\[
H_{\Lambda}^{(n)}(\omega^J+ t \cdot \one, \omega^{J^c}) (f_{\kappa}) = (K_{\omega^{J^c}} + W_{\omega^J+t\cdot \one})(f_{\kappa}) = (K_{\omega^{J^c}} + W_{\omega^J})(f_{\kappa}) + t(n_{\kappa} f_{\kappa}),
\]
where $n_{\kappa}:= \sum_{e \in J} c_{\kappa}(e)$. Since every $\kappa \in \mathcal{K}(\Gamma \cap \Lambda)$ takes the form $(e_1,\ldots,e_n)$ with $e_i \in \mathcal{E}(\Gamma \cap \Pi_i \Lambda) = J$, we have $1 \le n_{\kappa} \le n$. Hence
\[
\langle H_{\Lambda}^{(n)}(\omega^J+ t \cdot \one, \omega^{J^c}) (f_{\kappa}), (f_{\kappa}) \rangle \ge \langle H_{\Lambda}^{(n)}(\omega^J, \omega^{J^c}) (f_{\kappa}), (f_{\kappa}) \rangle + t \,.
\]
By the min-max principle, it follows that $E_j(\omega^J + t \cdot \one, \omega^{J^c}) \ge E_j(\omega^J, \omega^{J^c}) + t$. Finally, if $v_e \le w_e$ for all $e \in J$, then $H(v^J,\omega^{J^c}) \le H(w^J,\omega^{J^c})$ and thus $E_j(v^J,\omega^{J^c}) \le E_j(w^J,\omega^{J^c})$. Hence the $E_j(\,\cdot\, , \omega^{J^c}) : \R^J \to \R$ satisfy the hypotheses of Stollmann's lemma (see \cite{Sto2} and \cite{Sab2}) for any $\omega^{J^c}$, so we get
\[ 
\prob_J \big( \{ \omega^J: |E_j(\omega^J,\omega^{J^c}) - E| < \varepsilon \} \big) \le |J| \cdot s(\mu,2\varepsilon) \le d \cdot | \Pi_i \Lambda| \cdot s(\mu,2 \varepsilon) 
\]
by \textsf{(NB.1)}. The theorem follows by (\ref{eq:w1}).
\end{proof}

\begin{thm}      \label{thm:Wegner2}
Let $I=[a,b]$ be a bounded interval and let $\varepsilon >0$. There exists $C=C(n,d,b+\varepsilon-nq_-)$ such that for any pre-separable $\Lambda_{\LL}^{(n)}(\mathbf{u})$ and $\Lambda_{\KK}^{(n)}(\mathbf{v})$ we have
\[ 
\prob \{ \dist (\sigma_I(H_{\Lambda_{\LL}(\mathbf{u})}^{(n)}), \sigma_I(H_{\Lambda_{\KK}(\mathbf{v})}^{(n)}) ) < \varepsilon \} \le C \cdot |\Lambda_{\LL}^{(n)}(\mathbf{u})| \cdot |\Lambda_{\KK}^{(n)}(\mathbf{v})| \cdot |\Pi_0 \Lambda| \cdot s(\mu,2\varepsilon), 
\]
where $\sigma_I(H_{\Lambda}^{(n)}) := \sigma(H_{\Lambda}^{(n)}(\omega)) \cap I$ and $|\Pi_0 \Lambda| := \max_{i,j} \big( |\Pi_i \Lambda_{\LL}^{(n)}(\mathbf{u})|, |\Pi_j \Lambda_{\KK}^{(n)}(\mathbf{v})| \big)$.
\end{thm}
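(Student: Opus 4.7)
The plan is to exploit pre-separability to find a set of edges $J$ which influences only one of the two Hamiltonians, condition on the complementary randomness $\omega^{J^c}$, and reduce to Theorem~\ref{thm:Wegner1}. By the definition of pre-separability, after interchanging the roles of the two cubes if necessary, there is a nonempty $\mathcal{J} \subseteq \{1,\ldots,n\}$ with
\[
\Pi_{\mathcal{J}}\Lambda_{\LL}^{(n)}(\mathbf{u}) \cap \bigl(\Pi_{\mathcal{J}^c}\Lambda_{\LL}^{(n)}(\mathbf{u}) \cup \Pi\Lambda_{\KK}^{(n)}(\mathbf{v})\bigr) = \emptyset.
\]
Pick any $i \in \mathcal{J}$ and let $J := \mathcal{E}(\Gamma^{(1)} \cap \Pi_i\Lambda_{\LL}^{(n)}(\mathbf{u}))$. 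Since $\Pi_i\Lambda_{\LL}^{(n)}(\mathbf{u}) \subseteq \Pi_{\mathcal{J}}\Lambda_{\LL}^{(n)}(\mathbf{u})$ is disjoint from $\Pi\Lambda_{\KK}^{(n)}(\mathbf{v})$, no edge $e \in J$ labels a coordinate of any cube $\kappa \in \mathcal{K}(\Gamma^{(n)} \cap \Lambda_{\KK}^{(n)}(\mathbf{v}))$. Consequently $H_{\Lambda_{\KK}(\mathbf{v})}^{(n)}(\omega)$ depends only on $\omega^{J^c}$.

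Next I would write $\prob = \expect_{J^c}\{\prob_J((\cdot)_{\omega^{J^c}})\}$ and work at fixed $\omega^{J^c}$. Then $\sigma_I(H_{\Lambda_{\KK}(\mathbf{v})}^{(n)})$ is a deterministic finite set whose cardinality is bounded by $M := C_1(n,d,b-nq_-)\cdot|\Lambda_{\KK}^{(n)}(\mathbf{v})|$ thanks to Lemma~\ref{lem:WEYL}. Enumerating it as $E_1(\omega^{J^c}),\ldots,E_{M'}(\omega^{J^c})$ with $M' \le M$, one observes the inclusion
\[
\{\dist(\sigma_I(H_{\Lambda_{\LL}(\mathbf{u})}^{(n)}), \sigma_I(H_{\Lambda_{\KK}(\mathbf{v})}^{(n)})) < \varepsilon\}_{\omega^{J^c}} \subseteq \bigcup_{k=1}^{M'}\{\dist(\sigma(H_{\Lambda_{\LL}(\mathbf{u})}^{(n)}),E_k(\omega^{J^c})) < \varepsilon\},
\]
since if some element of $\sigma_I(H_{\Lambda_{\LL}(\mathbf{u})}^{(n)})$ lies within $\varepsilon$ of some $E_k(\omega^{J^c}) \in I$, then $\dist(\sigma(H_{\Lambda_{\LL}(\mathbf{u})}^{(n)}), E_k(\omega^{J^c})) < \varepsilon$ a fortiori.

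Each event in this union is of the form treated by Theorem~\ref{thm:Wegner1}, applied with $E = E_k(\omega^{J^c}) \in [a,b]$ and with the edge set $J$ just constructed, so its $\prob_J$-probability is at most $C_2(n,d,b+\varepsilon-nq_-)\cdot|\Lambda_{\LL}^{(n)}(\mathbf{u})|\cdot|\Pi_i\Lambda_{\LL}^{(n)}(\mathbf{u})|\cdot s(\mu,2\varepsilon)$, uniformly in $\omega^{J^c}$. A union bound over the at most $M$ indices $k$, together with the trivial estimate $|\Pi_i\Lambda_{\LL}^{(n)}(\mathbf{u})| \le |\Pi_0\Lambda|$, yields the claimed inequality after taking the expectation in $\omega^{J^c}$, with $C := C_1 C_2$. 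The symmetric case where $\Lambda_{\KK}^{(n)}(\mathbf{v})$ is $\mathcal{J}$-pre-separable from $\Lambda_{\LL}^{(n)}(\mathbf{u})$ is handled identically after swapping the two cubes; the final bound and the definition of $|\Pi_0\Lambda|$ are symmetric in $(\LL,\mathbf{u})$ and $(\KK,\mathbf{v})$.

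The only real obstacle is the first step: identifying, from the geometric condition of pre-separability, a set of edges $J$ that affects only one of the two finite-volume Hamiltonians while still being large enough that $H_{\Lambda_{\LL}(\mathbf{u})}^{(n)}$ has the monotonicity needed for Stollmann-type control. Once this independence is in place, the rest is the standard ``Wegner from single-cube Wegner'' pattern: Fubini, Weyl's law on the cube where the Hamiltonian is frozen, and a union bound.
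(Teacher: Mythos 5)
Your proof is correct and follows essentially the same route as the paper's: use pre-separability to isolate an edge set $J$ that influences only one of the two restricted Hamiltonians, condition on $\omega^{J^c}$, bound the number of eigenvalues of the frozen operator in $I$ via Lemma~\ref{lem:WEYL}, then apply Theorem~\ref{thm:Wegner1} together with a union bound. The only difference is cosmetic (you draw $J$ from $\Lambda_{\LL}^{(n)}(\mathbf{u})$ and freeze $H^{(n)}_{\Lambda_{\KK}(\mathbf{v})}$, while the paper does the reverse), which is just a relabeling of the two cubes.
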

\begin{proof} 
Suppose $\Lambda_{\KK}^{(n)}(\mathbf{v})$ is $\mathcal{J}$-pre-separable of $\Lambda_{\LL}^{(n)}(\mathbf{u})$ for some $\emptyset \neq \mathcal{J} \subseteq \{1,\ldots,n\}$, i.e. $\Pi_{\mathcal{J}} \Lambda_{\KK}^{(n)}(\mathbf{v}) \cap \big(\Pi_{\mathcal{J}^c} \Lambda_{\KK}^{(n)}(\mathbf{v}) \cup \Pi \Lambda_{\LL}^{(n)}(\mathbf{u}) \big) = \emptyset$. Fix $i \in \mathcal{J}$ and put $J:= \mathcal{E}(\Gamma^{(1)} \cap \Pi_i \Lambda_{\KK}^{(n)}(\mathbf{v}))$. Since the eigenvalues $E_j^{(\mathbf{u})}(\omega)$ of $H_{\Lambda_{\LL}^{(n)}(\mathbf{u})}(\omega)$ do not depend on $\omega^J$, we may apply Theorem~\ref{thm:Wegner1} with $E=E_j^{\mathbf{u}}=E_j^{\mathbf{u}}(\omega^{J^c})$ to get
\begin{align*}
& \prob \{ \dist (\sigma_I(H_{\Lambda_{\LL}(\mathbf{u})}^{(n)}), \sigma_I(H_{\Lambda_{\KK}(\mathbf{v})}^{(n)}) ) < \varepsilon \} \\
& \quad = \expect_{J^c} \Big\{ \prob_J \{ \dist (\sigma_I(H_{\Lambda_{\LL}(\mathbf{u})}^{(n)}), \sigma_I(H_{\Lambda_{\KK}(\mathbf{v})}^{(n)}) ) < \varepsilon \}_{\omega^{J^c}} \Big\} \\
& \quad = \expect_{J^c} \Big\{ \prob_J \Big\{ \min_{a \le E_j^{(\mathbf{u})} \le b} \dist(E_j^{(\mathbf{u})}, \sigma_I(H_{\Lambda_{\KK}(\mathbf{v})}^{(n)})) < \varepsilon \Big\}_{\omega^{J^c}} \Big\} \\
& \quad \le \expect_{J^c} \sum_{j \le C_1|\Lambda_{\LL}^{(n)}(\mathbf{u})|} \prob_J \big(\{ \dist(E_j^{(\mathbf{u})}, \sigma_I(H_{\Lambda_{\KK}(\mathbf{v})}^{(n)})) < \varepsilon \}_{\omega^{J^c}} \big) \\
& \quad \le C \cdot |\Lambda_{\LL}^{(n)}(\mathbf{u})| \cdot |\Lambda_{\KK}^{(n)}(\mathbf{v})| \cdot |\Pi_i \Lambda_{\KK}^{(n)}(\mathbf{v})| \cdot s(\mu,2\varepsilon)
\end{align*} 
where we used Lemma~\ref{lem:WEYL} to obtain $C_1=C_1(n,d,b-nq_-)$. If however $\Lambda_{\LL}^{(n)}(\mathbf{u})$ was $\mathcal{J}$-pre-separable of $\Lambda_{\KK}^{(n)}(\mathbf{v})$, we would get for $i \in \mathcal{J}$,
\[ 
\prob \{ \dist (\sigma_I(H_{\Lambda_{\LL}(\mathbf{u})}^{(n)}), \sigma_I(H_{\Lambda_{\KK}(\mathbf{v})}^{(n)}) ) < \varepsilon \} \le C \cdot |\Lambda_{\LL}^{(n)}(\mathbf{u})| \cdot |\Lambda_{\KK}^{(n)}(\mathbf{v})| \cdot |\Pi_i \Lambda_{\LL}^{(n)}(\mathbf{u})| \cdot s(\mu,2\varepsilon) \, . \qedhere 
\] 
\end{proof}

\section{Initial Length Scale Estimate}    \label{sec:ILS}
In this section we follow the ideas of \cite{Sto} and use a Cheeger inequality from \cite{Post} to prove Lifshitz-type asymptotics for $1$-particle systems. We then deduce the Initial Length Scale estimate (ILS) for our model. We speak of Lifshitz-type asymptotics because our result is not formulated in terms of the integrated density of states $N(E)$, as it is not needed here. Theorem~\ref{thm:Lif1} easily implies bounds of the form $N(E) \le e^{-\gamma' (E-q_-)^{-1/2}}$ for $E$ near $q_-$ if one knows that $N(E) \le \frac{1}{|\Lambda|} \expect \big\{ \tr[ \chi_{(-\infty, E)}(H_{\Lambda}^{(1)})] \big\}$; see \cite[Theorem 2.1.4]{Sto}. The existence of $N(E)$ was established in \cite{HV}, see also \cite{GLV}.

In the following for $l \in \N^{\ast}$ we put
\[ 
n_l := \#\mathcal{E}(\Gamma^{(1)} \cap \Lambda^{(1)}_l) = d(2l)(2l-1)^{d-1} \, . 
\]
\begin{thm}      \label{thm:Lif1} 
There exist $b>0$ and $\gamma>0$ such that for any $u \in \Z^d$,
\[ 
\prob\{E_1(H_{\Lambda_l(u)}^{(1)}(\omega)) \le q_- + b n_l^{-2}\} \le e^{-\gamma n_l} \, .
\]
\end{thm}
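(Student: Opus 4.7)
The plan is to link the rare event $\{E_1(H) \leq q_- + b n_l^{-2}\}$ (writing $H := H_{\Lambda_l(u)}^{(1)}(\omega)$) with the rare event that the random potential is small on a macroscopic fraction of the edges in $\Lambda_l(u)$, and then to bound the latter by a standard Chernoff estimate. The spectral ingredient is the Cheeger/Poincar\'e inequality on the Kirchhoff Laplacian on $X := \Gamma^{(1)} \cap \Lambda_l(u)$ supplied by Post.

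First, I would let $\varphi$ be a normalized ground state of $H$ and use the variational identity
\[
E_1(H) - q_- = \|\varphi'\|^2 + \sum_{e \in \mathcal{E}(\Gamma^{(1)} \cap \Lambda_l(u))} (\omega_e - q_-)\|\varphi_e\|^2,
\]
in which both terms on the right are non-negative. On the event $\{E_1(H) \leq q_- + b n_l^{-2}\}$, each is then bounded by $b n_l^{-2}$. Fixing $a > 0$ (to be chosen later) and setting $\Gamma_0(\omega) := \{e : \omega_e \leq q_- + a\}$, the potential bound yields $\|\chi_{\Gamma_0}\varphi\|^2 \geq 1 - b/(a n_l^2)$, which will exceed $1/2$ as soon as $b \leq a/2$.

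Next I would invoke Post's Cheeger inequality. The graph $X$ is connected of total length $n_l$ and every proper partition of its edge set has at least one separating vertex, so its Cheeger constant satisfies $h(X) \geq 2/n_l$, and Post's inequality yields $\lambda_2(X) \geq c n_l^{-2}$ for some absolute $c > 0$ (the first eigenvalue being $0$, with constant eigenfunction). Writing $\bar\varphi := n_l^{-1} \int_X \varphi$, so that $\varphi - \bar\varphi$ is orthogonal to the constants, the Poincar\'e inequality gives $\|\varphi - \bar\varphi\|^2 \leq \|\varphi'\|^2 / \lambda_2(X) \leq b/c$, while Cauchy--Schwarz gives $\bar\varphi^2 n_l \leq 1$. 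Combining these with the elementary bound $\|\chi_{\Gamma_0}\varphi\|^2 \leq 2 \bar\varphi^2 |\Gamma_0| + 2\|\varphi - \bar\varphi\|^2$ yields $\bar\varphi^2 |\Gamma_0| \geq 1/4 - b/c$, and for $b \leq c/8$ this forces $|\Gamma_0(\omega)| \geq n_l/8$ (using $\bar\varphi^2 \leq n_l^{-1}$).

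To conclude, $|\Gamma_0(\omega)|$ is a $\mathrm{Binomial}(n_l, p)$ random variable with $p := \mu([q_-, q_- + a])$, since the $\omega_e$ are i.i.d.\ with law $\mu$. The H\"older (or merely atom-free at $q_-$) hypothesis on $\mu$ lets me fix $a > 0$ small enough so that $p < 1/16$; a standard Chernoff bound then gives $\prob\{|\Gamma_0(\omega)| \geq n_l/8\} \leq e^{-\gamma n_l}$ with $\gamma := D(1/8 \,\|\, p) > 0$. Setting $b := \min(a/2, c/8)$ chains the three steps together and yields the claim. The delicate point is the scale matching in the middle step: the Cheeger bound $h(X) \gtrsim 1/n_l$ is exactly what is needed for the $n_l^{-2}$ factor from $\|\varphi'\|^2 \leq b n_l^{-2}$ to be absorbed into an $O(1)$ control on $\|\varphi - \bar\varphi\|^2$, so that $\varphi$ is essentially determined by its mean $\bar\varphi$ and concentration of $\varphi$ on $\Gamma_0$ translates into $\Gamma_0$ containing a positive fraction of the edges.
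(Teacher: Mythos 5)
Your route is genuinely different from the paper's. The paper treats the potential perturbatively, studying $t\mapsto (-\Delta + t\tilde{W}^{\omega})_{\Lambda_l}$ around $t=0$ where the ground state is the constant $\phi_0$: it computes $E_1'(\omega,0)=f_l(\omega)$ by Feynman--Hellmann, uses Cheeger to bound the spectral gap of $-\Delta_{\Lambda_l}$ below by $n_l^{-2}$ so that the Kato-type estimate \cite[Theorem 4.1.31]{Sto} gives $|E_1(\omega,t)-tE_1'(\omega,0)|\le c_2n_l^2t^2$, and then closes with a large deviation bound for the average $f_l(\omega)=n_l^{-1}\sum(\omega_e-q_-)$ from \cite[Lemma 2.1.1]{Sto}. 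You instead work directly with the true ground state $\varphi$ of $H^{(1)}_{\Lambda_l}(\omega)$, combine the variational identity with the Cheeger/Poincar\'e bound to show $\varphi$ is essentially constant on the rare event, conclude that a macroscopic fraction of edges must carry low potential, and finish with a Chernoff bound. This is appreciably more elementary and self-contained (no analytic perturbation theory, no external LDP lemma), and the accounting of scales is exactly right.

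There is, however, a gap at your Chernoff step. You need $p:=\mu([q_-,q_-+a])<1/16$ for some $a>0$, and you explicitly invoke H\"older continuity or absence of an atom at $q_-$ to get this. But Theorem~\ref{thm:Lif1} is stated and proved under no such hypothesis: the only standing assumption in Section~\ref{sec:ILS} is $\supp\mu=[q_-,q_+]$ with $q_-<q_+$ (H\"older continuity of $\mu$ is first introduced at the start of Section~\ref{sec:induction}, and the removal of extra assumptions on $\mu$ near $q_-$ is advertised in the introduction as one of the improvements over \cite{EHS}). Under the paper's weaker assumption $\mu$ may carry a large atom at $q_-$; for instance $\mu=\tfrac12\delta_{q_-}+\tfrac12\,\mathrm{Unif}[q_-,q_+]$ has $\supp\mu=[q_-,q_+]$ yet $p\ge\tfrac12$ for every $a$. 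In that situation your argument, as written, collapses at the last step, while the paper's argument only needs $\expect[\omega_e-q_-]>0$.

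The gap is repairable without changing your structure. Replace $(x+y)^2\le 2x^2+2y^2$ by the weighted version $(x+y)^2\le(1+\epsilon)x^2+(1+\epsilon^{-1})y^2$, so that
\[
1-\frac{b}{a n_l^{2}}\ \le\ \|\chi_{\Gamma_0}\varphi\|^2\ \le\ (1+\epsilon)\,\bar\varphi^{\,2}|\Gamma_0|+(1+\epsilon^{-1})\frac{b}{c}\,,
\]
and combine with $\bar\varphi^{\,2}\le n_l^{-1}$. Choosing first $\epsilon$ and then $b$ small (relative to $a$, $c$, $\epsilon$) forces $|\Gamma_0|\ge(1-\delta)n_l$ on the rare event, for any prescribed $\delta\in(0,1)$. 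Since $\supp\mu=[q_-,q_+]$ guarantees $p=\mu([q_-,q_-+a])<1$ whenever $a<q_+-q_-$, you may pick $\delta$ with $1-\delta>p$, and the Chernoff bound then gives the exponential decay with $\gamma=D(1-\delta\,\|\,p)>0$, recovering the theorem under the paper's actual hypothesis.
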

\begin{proof}
Put $\tilde{H}_{\Lambda_l}^{(1)}(\omega) = H_{\Lambda_l(u)}^{(1)}(\omega) -q_-$. Then
\[ 
\prob\{E_1(H_{\Lambda_l(u)}^{(1)}(\omega)) \le q_- + b n_l^{-2}\} = \prob\{E_1(\tilde{H}_{\Lambda_l}^{(1)}(\omega)) \le b n_l^{-2}\} \, .
\]
Now $\tilde{H}_{\Lambda_l}^{(1)}(\omega) = (-\Delta + \tilde{W}^{\omega})_{\Lambda_l^{(1)}}$, where $\tilde{W}^{\omega} : (f_e) \mapsto ((\omega_e-q_-)f_e)$. We may assume $\tilde{W}^{\omega} \le 1$ for all $\omega$, since if $\tilde{W}^{\omega}$ is larger, $E_1(\tilde{H}_{\Lambda_l}^{(1)}(\omega))$ gets larger and the probability gets smaller. Define for $t \in [-1,1]$,
\[ 
H(\omega,t) := (-\Delta + t \cdot \tilde{W}^{\omega})_{\Lambda_l^{(1)}}, \qquad E_j(\omega,t) := E_j(H(\omega,t)) \, .
\]
Since the normalized ground state $\phi_0$ of the Kirchhoff Laplacian $H(\omega,0)=- \Delta_{\Lambda_l^{(1)}}$ is the constant function $(n_l^{-1/2})$, we have by the Feynman--Hellmann theorem
\begin{equation}
E_1(\omega,0)' = \langle \tilde{W}^{\omega} \phi_0, \phi_0 \rangle = \frac{1}{n_l} \sum_{e \in \mathcal{E}(\Gamma^{(1)} \cap \Lambda_l^{(1)})} q_e(\omega)=: f_l(\omega) \, ,            \label{eq:ils1}
\end{equation}
where $q_e(\omega) = \omega_e - q_- \ge 0$. By \cite[Lemma 2.1.1]{Sto} we can find $s_0,\gamma>0$ such that
\[ 
\prob \{ f_l(\omega) \le s_0 \} \le e^{-\gamma n_l} \, .
\]
We now estimate the distance between $0=E_1(\omega,0)$ and the rest of the spectrum of $H(\omega,0)$ using Cheeger inequality. Let $X:= \Gamma^{(1)}  \cap \Lambda_l^{(1)}$ and $\mathcal{O} := \{ Y \subset X : Y \text{ open}, Y \neq X, Y \neq \emptyset \}$. For $Y \in \mathcal{O}$, let $| \partial Y|$ be the number of points on the boundary of $Y$, $\vol_1 Y$ be the total length of $Y$ and put $Y^c :=X \setminus Y$. Then any $Y \in \mathcal{O}$ satisfies $\min(\vol_1Y, \vol_1Y^c) \le \frac{1}{2} (\vol_1 X) = \frac{n_l}{2}$, hence the \emph{Cheeger constant} of $X$ satisfies 
\[
h(X) := \inf_{Y \in \mathcal{O}} \frac{| \partial Y|}{\min(\vol_1Y, \vol_1Y^c)} \ge \frac{2}{n_l} \, .
\]
By \cite[Theorem 6.1]{Post}, it follows that $E_2(\omega,0) \ge \frac{1}{4} h(X)^2 \ge n_l^{-2}$. A similar estimate can also be obtained using the Faber-Krahn inequality; see \cite{Nic}.

We may now apply \cite[Theorem 4.1.31]{Sto}, to find $c_1,c_2>0$ such that
\[ 
|E_1(\omega,t) - t \cdot E_1(\omega,0)'| \le c_2n_l^2t^2 \qquad \forall \, 0 \le t \le c_1n_l^{-2}. 
\]
Since $E_1(\omega,t) \le E_1(\omega,1) = E_1(\tilde{H}_{\Lambda_l}^{(1)}(\omega))$,  this gives by (\ref{eq:ils1})
\[ 
f_l(\omega) = E_1(\omega,0)' \le c_2n_l^2t + E_1(\tilde{H}_{\Lambda_l}^{(1)}(\omega))t^{-1} \qquad \forall \, 0 \le t \le c_1n_l^{-2}. 
\]
Choose $0 < c_3 \le c_1$ such that $c_2c_3 \le \frac{1}{2} s_0$. Then for $t = c_3 n_l^{-2}$ we get
\[
f_l(\omega) \le (s_0/2) + E_1(\tilde{H}_{\Lambda_l}^{(1)}(\omega)) c_3^{-1} n_l^2 \, .
\]
Hence, choosing $b>0$ such that $b c_3^{-1} \le \frac{1}{2} s_0$ we finally obtain
\[ 
\prob \{ E_1(\tilde{H}_{\Lambda_l}^{(1)}(\omega)) \le b n_l^{-2} \} \le \prob \{ f_l(\omega) \le s_0 \} \le e^{-\gamma n_l} \, . \qedhere 
\]
\end{proof}

\begin{thm}      \label{thm:Lif2}
There exist $b>0$ and $\gamma>0$ such that for any $\mathbf{u} \in \Z^{nd}$,
\begin{equation}
\prob\{E_1(H_{\Lambda_l(\mathbf{u})}^{(n)}(\omega)) \le nq_- + nb n_l^{-2}\} \le e^{-\gamma n_l} \, .     \label{eq:firstpart} 
\end{equation}
Consequently, for all $\xi >0$ and $\beta \in (0,1)$, we may find $L_0 = L_0 (N,d,\beta,\xi) $ as large as necessary such that for any $\mathbf{u} \in \Z^{nd}$,
\[ 
\prob \{ \dist(\sigma(H_{\Lambda_{L_0}(\mathbf{u})}^{(n)}(\omega)),nq_-) \le L_0^{\beta -1} \} \le L_0^{-\xi} \, .
\]
\end{thm}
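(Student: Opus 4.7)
The plan is to prove inequality (\ref{eq:firstpart}) by reducing to the single-particle Lifshitz estimate of Theorem~\ref{thm:Lif1}, then derive the ILS bound via a Neumann/Kirchhoff bracketing on sub-cubes of a carefully chosen intermediate scale.

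For (\ref{eq:firstpart}), the non-negativity of $U^{(n)}$ yields the form inequality $H_{\Lambda_l(\mathbf{u})}^{(n)}(\omega) \ge H_{\Lambda_l(\mathbf{u})}^{(n),0}(\omega)$, where $H^{(n),0}$ denotes the non-interacting operator obtained by dropping $U^{(n)}$. Since $\Lambda_l(\mathbf{u}) = \prod_{j=1}^n \Lambda_l^{(1)}(u_j)$ and $W_\kappa^{\omega} = \omega_{e_1}+\cdots+\omega_{e_n}$ is additive, $H_{\Lambda_l(\mathbf{u})}^{(n),0}(\omega)$ decomposes as the tensor sum $\sum_{j=1}^n I\otimes\cdots\otimes H_{\Lambda_l(u_j)}^{(1)}(\omega)\otimes\cdots\otimes I$, whose ground state energy equals $\sum_{j=1}^n E_1(H_{\Lambda_l(u_j)}^{(1)}(\omega))$. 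Each summand is $\ge q_-$, so the event $\{E_1(H_{\Lambda_l(\mathbf{u})}^{(n)}) \le nq_- + nb n_l^{-2}\}$ forces at least one $E_1(H_{\Lambda_l(u_j)}^{(1)}) \le q_-+bn_l^{-2}$. A union bound combined with Theorem~\ref{thm:Lif1} then yields (\ref{eq:firstpart}) with suitably adjusted $\gamma$ (the factor $n$ in the union bound is harmless once $n_l$ is moderately large; the finitely many small $l$ are handled by the trivial bound $\prob \le 1$).

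For the second statement, fix $\xi>0$ and $\beta\in(0,1)$ and choose $\alpha\in\bigl(0,(1-\beta)/(2d)\bigr)$. Let $l=l(L_0)$ be an integer of order $L_0^{\alpha}$, taken along a subsequence of $L_0$ for which $l$ divides $L_0$, so that $\Lambda_{L_0}^{(n)}(\mathbf{u})$ partitions exactly into $(L_0/l)^{nd}$ sub-cubes $\Lambda_l^{(n)}(\mathbf{v}_k)$, $k\in K$, with all $\mathbf{v}_k\in\Z^{nd}$. The key bracketing observation is that $D(\mathfrak{h}_{\omega,\Lambda_{L_0}(\mathbf{u})}^{(n)})$ enforces continuity on the internal faces $\sigma^i$ separating adjacent sub-cubes, whereas the decoupled form domain $D\bigl(\bigoplus_k \mathfrak{h}_{\omega,\Lambda_l(\mathbf{v}_k)}^{(n)}\bigr)$ does not. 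Both forms act identically on the smaller domain, so the min--max principle gives
\[
E_1\bigl(H_{\Lambda_{L_0}(\mathbf{u})}^{(n)}(\omega)\bigr) \ge \min_{k\in K} E_1\bigl(H_{\Lambda_l(\mathbf{v}_k)}^{(n)}(\omega)\bigr).
\]

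Since $-\Delta \ge 0$ and $V_\kappa^\omega \ge nq_-$, we have $\inf\sigma(H_{\Lambda_{L_0}(\mathbf{u})}^{(n)})\ge nq_-$, so the event $\{\dist(\sigma(H_{\Lambda_{L_0}(\mathbf{u})}^{(n)}),nq_-)\le L_0^{\beta-1}\}$ coincides with $\{E_1(H_{\Lambda_{L_0}(\mathbf{u})}^{(n)})\le nq_-+L_0^{\beta-1}\}$. The choice $\alpha<(1-\beta)/(2d)$ ensures $L_0^{\beta-1}\le nb n_l^{-2}$ for all $L_0$ large. Applying the bracketing, a union bound over the $|K|=(L_0/l)^{nd}$ sub-cubes, and (\ref{eq:firstpart}), we obtain
\[
\prob\bigl\{\dist(\sigma(H_{\Lambda_{L_0}(\mathbf{u})}^{(n)}),nq_-)\le L_0^{\beta-1}\bigr\} \le (L_0/l)^{nd} e^{-\gamma n_l} \le C L_0^{nd}\, e^{-\gamma' L_0^{\alpha d}},
\]
and the stretched-exponential factor dominates any polynomial, so this is $\le L_0^{-\xi}$ for $L_0$ sufficiently large. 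The main obstacle is justifying the form bracketing in the quantum-graph setting: cutting along interior faces amounts to relaxing the continuity requirement on certain $\sigma^i$, strictly enlarging the form domain while leaving the form itself unchanged on the original domain, which is exactly what the min--max inequality requires. The arithmetic nuisance of arranging $l\mid L_0$ is easily circumvented by passing to a suitable subsequence of $L_0$, or by allowing the sub-cubes to overlap at the cost of a benign multiplicative constant.
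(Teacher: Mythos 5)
Your proof is correct and follows essentially the same route as the paper's: for (\ref{eq:firstpart}) you drop $U^{(n)}\ge 0$, decompose the non-interacting $n$-particle restriction as a tensor sum of one-particle restrictions, reduce to $\min_j E_1(H^{(1)}_{\Lambda_l(u_j)})\le q_-+bn_l^{-2}$, and conclude by Theorem~\ref{thm:Lif1}; for the ILS you partition $\Lambda_{L_0}$ into sub-cubes of an intermediate scale $l\sim L_0^{\alpha}$, use Neumann decoupling ($H^{(n)}_{\Lambda_{L_0}}\ge\bigoplus_k H^{(n)}_{\Lambda_l^k}$ from enlarging the form domain) to pass to the minimum over sub-cubes, and apply (\ref{eq:firstpart}) plus a union bound. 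The only cosmetic difference is that the paper takes the borderline exponent $l\sim L^{(1-\beta)/(2d)}$ (with a small numeric constant built in) whereas you take a strictly smaller $\alpha$; both choices satisfy $L_0^{\beta-1}\lesssim n_l^{-2}$ and make the stretched-exponential kill the polynomial volume factor.
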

\begin{proof} 
Let $\widehat{H}_{\Lambda_l(\mathbf{u})}^{(n)}(\omega) := H_{\Lambda_l(\mathbf{u})}^{(n)} - U_{\Lambda_l(\mathbf{u})}^{(n)} = (-\Delta +W^{\omega})_{\Lambda_l^{(n)}(\mathbf{u})}$. Since $U^{(n)} \ge 0$, we have $H_{\Lambda_l(\mathbf{u})}^{(n)} \ge \widehat{H}_{\Lambda_l(\mathbf{u})}^{(n)}$, hence $E_1(H_{\Lambda_l(\mathbf{u})}^{(n)}) \ge E_1(\widehat{H}_{\Lambda_l(\mathbf{u})}^{(n)})$ and 
\[ 
\prob\{E_1(H_{\Lambda_l(\mathbf{u})}^{(n)}(\omega)) \le nq_- + nb n_l^{-2}\} \le \prob\{E_1(\widehat{H}_{\Lambda_l(\mathbf{u})}^{(n)}(\omega)) \le nq_- + nb n_l^{-2}\} \, .
\]
But $\widehat{H}_{\Lambda_l(\mathbf{u})}^{(n)} = H_{\Lambda_l(u_1)}^{(1)} \otimes I^{n-1} + \ldots + I^{k-1} \otimes H_{\Lambda_l(u_k)}^{(1)} \otimes I^{n-k} + \ldots + I^{n-1} \otimes H_{\Lambda_l(u_n)}^{(1)}$, where $H_{\Lambda_l}^{(1)} = \widehat{H}^{(1)}_{\Lambda_l} :(f_e) \mapsto (-f_e''+\omega_e f_e)$. Thus $E_1(\widehat{H}_{\Lambda_l(\mathbf{u})}^{(n)}) = \sum_{j=1}^n E_1(H_{\Lambda_l(u_j)}^{(1)}) \ge n E_1(H_{\Lambda_l(u_{j_0})}^{(1)})$, where $E_1(H_{\Lambda_l(u_{j_0})}^{(1)}) := \min_{1 \le j \le n} E_1(H_{\Lambda_l(u_j)}^{(1)})$. Hence
\[ 
\prob\{E_1(H_{\Lambda_l(\mathbf{u})}^{(n)}(\omega)) \le nq_- + nb n_l^{-2}\} \le \prob\{E_1(H_{\Lambda_l(u_{j_0})}^{(1)}(\omega)) \le q_- + b n_l^{-2}\}. 
\]
The existence of $b$ and $\gamma$ now follows from Theorem~\ref{thm:Lif1}. So take these $b$, $\gamma$, and given $\xi >0$, $\beta \in (0,1)$, choose $L^{\ast}(n,d,\beta,\xi)$ such that for $L \ge L^{\ast}$, we have
\begin{equation}
6^{nd} b^{-n/2} d^n L^{nd+ \frac{n(\beta-1)}{2}} e^{-\gamma 2^{-d} (bL^{1-\beta})^{1/2}} \le (2L)^{-\xi} \, .    \label{eq:ils2}
\end{equation}
Let $L_{\ast} := \max_{1 \le n \le N} L^{\ast}(n,d,\beta,\xi)$. Given $L \ge L_{\ast}$, let $l := \lfloor \frac{1}{2} (\frac{bL^{1-\beta}}{d^2})^{1/2d} \rfloor$ and choose $L\le L_0 \le 2L$ such that $L_0 = rl$ for some $r \in \N$. Then $\Lambda_{L_0}^{(n)} := \Lambda_{L_0}^{(n)}(\mathbf{u})$ may be divided into $M=L_0^{nd}l^{-nd}$ disjoint cubes $\Lambda_l^k$. Since $H_{\Lambda_{L_0}}^{(n)} \ge \mathop \oplus_k H_{\Lambda_l^k}^{(n)}$, we get $E_1(H_{\Lambda_{L_0}}^{(n)}) \ge \min_k E_1(H_{\Lambda_l^k}^{(n)})$. But $n_l^2 \le d^2 (2l)^{2d} \le bL^{1-\beta}$ and thus $L^{\beta-1} \le b n_l^{-2}$. So using (\ref{eq:firstpart}) we get
\begin{align*}
\prob \{ \dist(\sigma(H_{\Lambda_{L_0}}^{(n)}),nq_-) \le L_0^{\beta -1} \} & \le \prob \{ E_1(H_{\Lambda_{L_0}}^{(n)}) -nq_- \le L^{\beta -1} \} \\
& \le \prob \{ E_1(H_{\Lambda_l^k}^{(n)}) -nq_- \le b n_l^{-2} \text{ for some } k \} \le M e^{- \gamma n_l}.
\end{align*}
Noting that $M \le (2L)^{nd} l^{-nd} \le (2L)^{nd} (\frac{1}{3} (\frac{bL^{1-\beta}}{d^2})^{1/2d})^{-nd} = 6^{nd} b^{-n/2} d^n L^{nd+ \frac{n(\beta-1)}{2}}$ and $n_l \ge d(2l-1)^d \ge d(l+1)^d \ge 2^{-d} (bL^{1-\beta})^{1/2}$, then using (\ref{eq:ils2}) we may bound the RHS by $(2L)^{-\xi} \le L_0^{-\xi}$, which completes the proof.
\end{proof}

\begin{defa} 
Let $E \in \R$, $m>0$ and $\omega \in \Omega$. A cube $\Lambda_L^{(n)}(\mathbf{u})$ is said to be \emph{$(E,m)$-Non Singular} ($(E,m)$-NS) if $E \in \rho(H_{\Lambda_L(\mathbf{u})}^{(n)}(\omega))$ and
\[ 
\max_{\mathbf{y} \in \mathbf{B}_L^{\out}(\mathbf{u})} \|G_{\Lambda_L^{(n)}(\mathbf{u})}(\mathbf{u},\mathbf{y};E) \| \le e^{- mL} \, ,
\]
otherwise it is said to be \emph{$(E,m)$-Singular} ($(E,m)$-S).
\end{defa}

\begin{coro}[ILS estimate]      \label{cor:ILS}
For any $p>0$ and $\beta \in (0,1)$, we may find $L_0 = L_0(N,d,p,\beta)$ as large as necessary such that for $\varepsilon_0 = \frac{L_0^{\beta-1}}{2}$, $I_n=[nq_- - \frac{1}{2},nq_-+\varepsilon_0]$, $m_{L_0} = \frac{L_0^{(\beta-1)/2}}{3}$ and any cube $\Lambda_{L_0}^{(n)}(\mathbf{u})$, we have
\begin{equation} 
\prob\{ \exists E \in I_n: \Lambda_{L_0}^{(n)}(\mathbf{u}) \text{ is } (E,m_{L_0})\text{-S } \} \le L_0^{-2p}.  \label{eq:ils}
\end{equation}
\end{coro}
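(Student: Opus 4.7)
The plan is to combine the Lifshitz-type bound of Theorem~\ref{thm:Lif2} with the Combes--Thomas estimate of Theorem~\ref{thm:CT2}. Since $U^{(n)} \ge 0$ and $W^{\omega}_{\kappa} \ge nq_-$ pointwise, we have $H^{(n)}_{\Lambda_{L_0}(\mathbf{u})}(\omega) \ge nq_-$ for every $\omega$; consequently the event
\[
\Omega_0 := \big\{\omega : \dist\big(\sigma(H^{(n)}_{\Lambda_{L_0}(\mathbf{u})}(\omega)),\,nq_-\big) > L_0^{\beta-1}\big\}
\]
coincides with $\{\inf\sigma(H^{(n)}_{\Lambda_{L_0}(\mathbf{u})}(\omega)) > nq_- + L_0^{\beta-1}\}$. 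First I apply Theorem~\ref{thm:Lif2} with $\xi = 2p$ and the given $\beta$: this furnishes an $L_0 = L_0(N,d,\beta,2p)$, as large as needed, for which $\prob(\Omega_0^c) \le L_0^{-2p}$. It therefore suffices to prove that on $\Omega_0$ the cube $\Lambda^{(n)}_{L_0}(\mathbf{u})$ is $(E,m_{L_0})$-NS for every $E \in I_n$.

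Fix $\omega \in \Omega_0$ and $E \in I_n$. Since $E \le nq_- + \varepsilon_0 = nq_- + L_0^{\beta-1}/2$, the spectral gap satisfies
\[
\eta := \inf \sigma(H^{(n)}_{\Lambda_{L_0}(\mathbf{u})}(\omega)) - E \;\ge\; L_0^{\beta-1} - \varepsilon_0 \;=\; \tfrac{1}{2} L_0^{\beta-1}.
\]
Given $\mathbf{y} \in \mathbf{B}^{\out}_{L_0}(\mathbf{u})$, the cells $\mathbf{C}(\mathbf{u})$ and $\mathbf{C}(\mathbf{y})$ are cellular sets, and the bounds $L_0 - 6 \le |\mathbf{u} - \mathbf{y}| < L_0$ yield $\delta := \dist(\mathbf{C}(\mathbf{u}),\mathbf{C}(\mathbf{y})) \ge L_0 - 8 \ge 1$ for $L_0$ large. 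Theorem~\ref{thm:CT2} then gives
\[
\|G_{\Lambda^{(n)}_{L_0}(\mathbf{u})}(\mathbf{u},\mathbf{y};E)\| \;\le\; P(L_0)\,e^{-\delta\sqrt{\eta}},
\]
where $P(L_0)$ is a polynomial factor in $L_0$ coming from $\eta^{-3/4}\sqrt{\delta} + \eta^{-5/4}/\sqrt{\delta}$.

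The calibration is now straightforward. We have $\delta\sqrt{\eta} \ge (L_0-8) L_0^{(\beta-1)/2}/\sqrt{2}$, while the target rate is $m_{L_0} L_0 = L_0^{(\beta+1)/2}/3$. Since $1/\sqrt{2} > 1/3$, the ratio $\delta\sqrt{\eta}/(m_{L_0}L_0) \to 3/\sqrt{2} > 1$, leaving a slack of order $L_0^{(\beta+1)/2}$ which easily dominates $\log P(L_0) = O(\log L_0)$. Enlarging $L_0 = L_0(N,d,p,\beta)$ once more if necessary, we obtain $\|G_{\Lambda^{(n)}_{L_0}(\mathbf{u})}(\mathbf{u},\mathbf{y};E)\| \le e^{-m_{L_0}L_0}$ uniformly in $E \in I_n$ and $\mathbf{y} \in \mathbf{B}^{\out}_{L_0}(\mathbf{u})$, so $\Lambda^{(n)}_{L_0}(\mathbf{u})$ is $(E,m_{L_0})$-NS on $\Omega_0$. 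The entire proof is then $\prob(\{\exists E \in I_n: \Lambda^{(n)}_{L_0}(\mathbf{u}) \text{ is } (E,m_{L_0})\text{-S}\}) \le \prob(\Omega_0^c) \le L_0^{-2p}$. The main obstacle is purely bookkeeping: one must choose $L_0$ simultaneously large enough for Theorem~\ref{thm:Lif2} to deliver the probability bound and for the exponent $\delta\sqrt{\eta}$ to absorb both the polynomial prefactor $P(L_0)$ and the target rate $m_{L_0}L_0$.
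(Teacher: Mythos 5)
Your proof is correct and follows essentially the same path as the paper's: use Theorem~\ref{thm:Lif2} with $\xi=2p$ to control $\prob\{\dist(\sigma(H^{(n)}_{\Lambda_{L_0}(\mathbf{u})}),nq_-)\le L_0^{\beta-1}\}$, and on the complementary event apply the Combes--Thomas estimate of Theorem~\ref{thm:CT2} with $\eta\ge L_0^{\beta-1}/2$ and $\delta\ge L_0-8$ to verify non-singularity, noting that $\delta\sqrt{\eta}\sim L_0^{(\beta+1)/2}/\sqrt{2}$ dominates $m_{L_0}L_0 = L_0^{(\beta+1)/2}/3$ plus the polynomial prefactor.
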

\begin{proof} 
Given $2 p \equiv \xi >0$, $\beta \in (0,1)$, we find $L_0$ as large as needed satisfying Theorem~\ref{thm:Lif2}. Now let $\mathbf{y} \in \mathbf{B}_{L_0}^{\out}(\mathbf{u})$, so $L_0-8 \le \dist(\mathbf{C}(\textbf{u}),\mathbf{C}(\textbf{y})) \le L_0$. Let $\varepsilon_0 = \frac{L_0^{\beta-1}}{2}$ and suppose $s_{\omega} - nq_- >L_0^{\beta - 1}$, where $s_{\omega} :=\inf \sigma(H_{\Lambda_{L_0}(\mathbf{u})}^{(n)}(\omega))$. Then every $E \in I_n$ satisfies $E < s_{\omega}$ and $\eta:=s_{\omega} - E \ge \frac{L_0^{\beta-1}}{2}$. So by Theorem~\ref{thm:CT2},
\begin{align*} 
\|G_{\Lambda_{L_0}^{(n)}(\mathbf{u})}(\mathbf{u},\textbf{y};E) \| & \le \sqrt{\frac{\pi}{2}} \Big( \frac{L_0^{1/2}}{(L_0^{\beta-1}/2)^{3/4}} + \frac{3}{8(L_0-8)^{1/2}(L_0^{\beta-1}/2)^{5/4}} \Big) e^{-(L_0-8) \sqrt{\frac{L_0^{\beta-1}}{2}}} \\
& \le e^{-m_{L_0}L_0} 
\end{align*}
for $L_0$ large enough. Hence $\Lambda_{L_0}^{(n)}(\mathbf{u})$ is $(E,m_{L_0})$-NS. We thus showed that
\[
\prob\{ \exists E \in I_n: \Lambda_{L_0}^{(n)}(\mathbf{u}) \text{ is } (E,m_{L_0})\text{-S} \} \le \prob\{ s_{\omega}-nq_- \le L_0^{\beta-1} \} \, .
\]
The claim follows by Theorem~\ref{thm:Lif2}, since $s_{\omega}-nq_- = \dist(\sigma(H_{\Lambda_{L_0}(\mathbf{u})}^{(n)}(\omega)),nq_-)$.
\end{proof}

\section{Multi-Particle Multiscale Analysis}    \label{sec:induction}
We now introduce a multi-particle multiscale analysis following the main ideas of \cite{BCS}, providing modifications as necessary. Throughout this section we fix
\[ 
\alpha = 3/2,\qquad \beta = 1/2 \, ,
\]
and denote $K(n):=n^n$ for $1 \le n \le N$. We also denote by $\lfloor x \rfloor$ the integer part of $x \in \R$ and assume that
\[
\mu \text{ is H\"older continuous.}
\]

\begin{defa}
We say that a cube $\Lambda_L^{(n)}(\mathbf{u})$ is \emph{$E$-Non Resonant} ($E$-NR) if
\[ 
\dist(\sigma(H_{\Lambda_L(\mathbf{u})}^{(n)}),E) \ge e^{-L^{\beta}}. 
\]
We say it is \emph{$E$-Completely Non-Resonant} ($E$-CNR)
if any cube $\Lambda_{\ell}^{(n)} \subseteq \Lambda_L^{(n)}(\mathbf{u})$ with $\ell \in \N^{\ast}$, $L^{1/ \alpha} \le \ell \le L$ is $E$-NR. In this case, $\Lambda_L^{(n)}(\mathbf{u})$ is thus $E$-NR in particular.
\end{defa}

\begin{defa}
Let $l \in \N^{\ast}$, $L= \lfloor l^{\alpha} \rfloor +1$ and $J \in \N$. We say that a cube $\Lambda_L^{(n)}(\mathbf{x})$ is \emph{$(E,m_l,J)$-good} if it contains at most $J$ pairwise separable cubes $\Lambda_l^{(n)}$ which are $(E,m_l)$-S. Otherwise, we say it is \emph{$(E,m_l,J)$-bad}; in this case, there are at least $J+1$ separable cubes $\Lambda_l^{(n)}$ which are $(E,m_l)$-S.
\end{defa}

We start by adapting \cite[Lemma 4.2]{DK} to $n$-graphs. For this, we first prove the following geometric argument: given a collection of cubes, either they are already pairwise disjoint, or we can construct larger cubes around each cluster, such that the larger cubes are disjoint. For technical reasons, we consider $\epsilon$-enlargements of the cubes, with $\epsilon=7$.

\begin{lem}      \label{lem:geo}
Given $k$ cubes $\Lambda_L^{(n)}(\mathbf{u}_{(r)})$, $r=1,\ldots,k$, there exists $\tilde{k} \le k$ disjoint cubes $\Lambda_{l_j}^{(n)}$, $j=1,\ldots,\tilde{k}$ such that $\bigcup_{j=1}^{\tilde{k}} \Lambda_{l_j-7}^{(n)} \supseteq \bigcup_{r=1}^k \Lambda_L^{(n)}(\mathbf{u}_{(r)})$, $l_j = n_j(L+7)$ for some $n_j \in \N^{\ast}$ and $\sum_{j=1}^{\tilde{k}} l_j = k(L+7)$.
\end{lem}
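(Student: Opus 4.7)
The plan is a greedy cluster-merging argument on the $7$-enlargements of the given cubes, with the constant $7$ chosen precisely to match the $-7$ appearing in the target conclusion. Start with the collection $\mathcal{C}_0 := \{\Lambda_{L+7}^{(n)}(\mathbf{u}_{(r)}) : r = 1, \ldots, k\}$ and maintain throughout three invariants for the current collection $\mathcal{C} = \{\Lambda_{l_j}^{(n)}(\mathbf{v}_j)\}_j$ with $\mathbf{v}_j \in \Z^{nd}$: (i) $l_j = n_j(L+7)$ for some $n_j \in \N^{\ast}$, (ii) $\sum_j l_j = k(L+7)$, and (iii) each original cube $\Lambda_L^{(n)}(\mathbf{u}_{(r)})$ is contained in $\Lambda_{l_j - 7}^{(n)}(\mathbf{v}_j)$ for at least one $j$. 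All three hold trivially at $\mathcal{C}_0$, since then $l_r = 1\cdot(L+7)$, $\sum_r l_r = k(L+7)$, and $\Lambda_{(L+7)-7}^{(n)}(\mathbf{u}_{(r)}) = \Lambda_L^{(n)}(\mathbf{u}_{(r)})$.

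The merge step is as follows: while $\mathcal{C}$ contains two intersecting cubes $\Lambda_{l_a}^{(n)}(\mathbf{v}_a)$ and $\Lambda_{l_b}^{(n)}(\mathbf{v}_b)$, replace them by a single cube $\Lambda_{l_a+l_b}^{(n)}(\mathbf{v})$ centered at some $\mathbf{v} \in \Z^{nd}$ with $|\mathbf{v}-\mathbf{v}_a| \le l_b$ and $|\mathbf{v}-\mathbf{v}_b| \le l_a$. Such a $\mathbf{v}$ exists because intersection forces $|v_{a,k}-v_{b,k}| < l_a + l_b$ in each coordinate $k$, and, all quantities being integers, the integer intervals $[v_{a,k}-l_b,\, v_{a,k}+l_b]$ and $[v_{b,k}-l_a,\, v_{b,k}+l_a]$ must therefore overlap. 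Applying the sup-norm triangle inequality $|\mathbf{x}-\mathbf{v}| \le |\mathbf{x}-\mathbf{v}_\alpha| + |\mathbf{v}_\alpha - \mathbf{v}|$ with $\alpha \in \{a,b\}$, once at radius $l_\alpha$ and once at radius $l_\alpha - 7$, one obtains simultaneously
\[
\Lambda_{l_a}^{(n)}(\mathbf{v}_a) \cup \Lambda_{l_b}^{(n)}(\mathbf{v}_b) \subseteq \Lambda_{l_a+l_b}^{(n)}(\mathbf{v}) \quad \text{and} \quad \Lambda_{l_a-7}^{(n)}(\mathbf{v}_a) \cup \Lambda_{l_b-7}^{(n)}(\mathbf{v}_b) \subseteq \Lambda_{l_a+l_b-7}^{(n)}(\mathbf{v}).
\]

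The three invariants are preserved by each merge: (i) because $l_a + l_b = (n_a + n_b)(L+7)$; (ii) because the total $\sum_j l_j$ is untouched; and (iii) because any original cube previously covered by the $7$-shrinkage of $\Lambda_{l_a}^{(n)}(\mathbf{v}_a)$ or $\Lambda_{l_b}^{(n)}(\mathbf{v}_b)$ is, by the second inclusion above, also covered by the $7$-shrinkage of the new merged cube, while the other entries of $\mathcal{C}$ are unaffected. Each merge strictly reduces $|\mathcal{C}|$, so after at most $k-1$ steps the process terminates at a pairwise disjoint collection $\{\Lambda_{l_j}^{(n)}(\mathbf{v}_j)\}_{j=1}^{\tilde k}$ with $\tilde k \le k$, and the three invariants read off at termination deliver exactly the three conclusions of the lemma.

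The main obstacle is not computational but design-theoretic: one must see that the algorithm should be initialized with the $(L+7)$-enlargements of the original cubes rather than with the cubes themselves, so that the $-7$ in the target conclusion is bookkept correctly through every merge, and dually that the ``budget'' $L+7$ must be spent additively, so that the form $l_j = n_j(L+7)$ and the identity $\sum_j l_j = k(L+7)$ emerge as conservation laws of the merging process. Once this viewpoint is in place, the remaining work reduces to the routine sup-norm triangle-inequality verification sketched above and to the observation that each merge decreases the number of cubes by one, ensuring termination.
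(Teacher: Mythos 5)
Your proof is correct and takes essentially the same approach as the paper: start from the $(L+7)$-enlargements, greedily merge overlapping cubes while tracking the invariants $l_j = n_j(L+7)$, $\sum_j l_j = k(L+7)$, and coverage by the $7$-shrinkages. The only mechanical difference is that you merge one intersecting pair at a time (and explicitly verify that an integer center $\mathbf{v}$ exists via the coordinatewise interval overlap), whereas the paper merges entire connected components in each round; both implementations preserve the same invariants and terminate in the same disjoint configuration.
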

\begin{proof}
If the cubes $\Lambda_{L+7}^{(n)}(\mathbf{u}_{(r)})$ are disjoint, we put $\Lambda_{l_j}^{(n)} = \Lambda_{L+7}^{(n)}(\mathbf{u}_{(j)})$. Otherwise, divide $\bigcup_{r=1}^k \Lambda_{L+7}^{(n)}(\mathbf{u}_{(r)})$ into $k'$ connected components with $1 \le k' \le k$ and order them. If the $i$-th component contains $n_i$ cubes, find a cube $\Lambda'_{l_i}$ containing it with $l_i:=n_i(L+7)$. If these $k'$ cubes are disjoint, then we are done. If not, divide them into $k''$ connected components and again find cubes $\Lambda''_{l_i}$ around each component with $l_i = n_i(L+7)$, where $n_i$ is the number of the original cubes $\Lambda_{L+7}^{(n)}(\mathbf{u}_{(r)})$ which this component contains. Repeating this procedure we finally obtain the assertion.
\end{proof}

\begin{lem}      \label{lem:DKn}
Let $l \in \N^{\ast}$, $J\in \N$, $m_l>\frac{8NJK(N)}{l^{1-\beta}}$, $E_+ \in \R$ and $E \le E_+$. Let $L= \lfloor l^{\alpha} \rfloor +1$ and suppose that $\Lambda_L^{(n)}(\mathbf{x})$ is $E$-CNR and $(E,m_l,J)$-good. Then there exists $l^{\ast}=l^{\ast}(E_+,N,d,J,q_-,r_0)$ such that, if $l > l^{\ast}$, then $\Lambda_L^{(n)}(\mathbf{x})$ is $(E,m_L+L^{\beta -1})$-NS, where
\[ 
m_L := m_l- \biggl( \frac{16NJK(N)}{l^{\alpha-1}}m_l + \frac{3}{l^{\alpha(1-\beta)}} \biggr) > \frac{8NJK(N)}{L^{1-\beta}} \, .
\]
\end{lem}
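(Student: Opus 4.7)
The strategy is to repeatedly apply the Geometric Resolvent Inequality \textsf{(GRI.2)} from the center $\mathbf{u}_0 := \mathbf{x}$ outward, building a sequence of centers $\mathbf{u}_0, \mathbf{u}_1, \ldots, \mathbf{u}_K$ with $\mathbf{u}_{k+1} \in \mathbf{B}_l^{\out}(\mathbf{u}_k)$ that advances radially by $\approx l$ per step. After $K \asymp L/l$ iterations, iterated \textsf{(GRI.2)} gives
\[
\|G_{\Lambda_L(\mathbf{x})}(\mathbf{x},\mathbf{y};E)\| \le \bigl(C\,|\mathbf{B}_l^{\out}|^2\bigr)^K \prod_{k=0}^{K-1} \|G_{\Lambda_l(\mathbf{u}_k)}(\mathbf{u}_k,\mathbf{w}_k;E)\| \cdot \|G_{\Lambda_L(\mathbf{x})}(\mathbf{u}_K,\mathbf{y};E)\|,
\]
where the iteration is stopped once $\mathbf{u}_K$ is too close to $\partial \Lambda_L(\mathbf{x})$ to fit another $l$-cube. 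The residual factor $\|G_{\Lambda_L(\mathbf{x})}(\mathbf{u}_K,\mathbf{y};E)\|$ is then bounded by $\|G_{\Lambda_L(\mathbf{x})}(E)\| \le e^{L^{\beta}}$, using the $E$-NR component of the $E$-CNR hypothesis.

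Each internal factor $\|G_{\Lambda_l(\mathbf{u}_k)}(\mathbf{u}_k,\mathbf{w}_k;E)\|$ is handled by cases. If $\Lambda_l^{(n)}(\mathbf{u}_k)$ is $(E,m_l)$-NS, the factor is at most $e^{-m_l l}$, contributing the good exponential decay. Otherwise $\Lambda_l^{(n)}(\mathbf{u}_k)$ is $(E,m_l)$-S, and I replace it in a local application of \textsf{(GRI.1)} by the slightly larger cube $\Lambda_{\ell}^{(n)}(\mathbf{u}_k)$ with $\ell := \lceil L^{1/\alpha} \rceil$; this $\ell$ lies in the CNR window $[L^{1/\alpha},L]$, so $\|G_{\Lambda_{\ell}(\mathbf{u}_k)}(E)\| \le e^{\ell^{\beta}} \le e^{L^{\beta}}$, which is the bad but admissible bound for that step.

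The heart of the argument is bounding the number $K_S$ of S iterations. Whenever iteration $k$ produces an S cube, I force the next chosen center to exit the separability exclusion set $\bigcup_{j=1}^{K(n)} \Lambda_{r_{n,l}}^{(n)}(\mathbf{u}_k^{(j)})$ from Lemma~\ref{lem:separable}(2); each exclusion box has radius $r_{n,l} \asymp 8nl$, and there are $K(n) \le K(N)$ reflected points per S cube, so escaping the zone costs at most $\asymp 8n K(n)$ further iterations, after which any subsequent S cube is pairwise separable from $\Lambda_l^{(n)}(\mathbf{u}_k)$. By the $(E,m_l,J)$-goodness hypothesis, at most $J$ pairwise separable cubes can be S, yielding the bookkeeping bound $K_S \cdot l \le 16 N J K(N) \cdot l / l^{\alpha} \cdot L$, i.e.\ $K_S \cdot l \le 16 N J K(N)$ when rescaled over the radial advance. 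The $K_S$ bad contributions add $\le (K_S+1) L^{\beta}$ to the exponent, while the $K-K_S$ good steps contribute $-(K-K_S) m_l l$.

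Taking logarithms, dividing by $L$, and substituting $K \asymp L/l$, $K_S l / L \le 16 N J K(N) / l^{\alpha-1}$, $|\mathbf{B}_l^{\out}| \le C(n,d) l^{nd-1}$, $L \asymp l^{\alpha}$, $\alpha=3/2$ and $\beta=1/2$, the three error terms match exactly the three summands in the definition of $m_L + L^{\beta-1}$; the polynomial prefactor $\log(C|\mathbf{B}_l^{\out}|^2)/l$ is absorbed into $3/l^{\alpha(1-\beta)}$ once $l \ge l^{\ast}$ is large enough, which fixes the quantitative threshold $l^{\ast}(E_+,N,d,J,q_-,r_0)$. The principal obstacle is the combinatorial bookkeeping in the counting step: one must choose the iterates not merely to move radially outward, but also to keep all encountered S cubes mutually separable, and the $K(N)$ factor in the statement is precisely the price of the reflected-point exclusion in Lemma~\ref{lem:separable}(2).
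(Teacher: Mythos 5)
Your overall strategy — iterate \textsf{(GRI.2)} outward, treat NS and S steps separately, use $E$-CNR to bound the $G_{\Lambda_\ell}$ factor on the bad steps, and count the bad steps via the $(E,m_l,J)$-goodness hypothesis together with Lemma~\ref{lem:separable}(2) — is essentially the route the paper takes; the paper's concrete implementation differs in that it first groups the $JK(n)$ exclusion boxes into disjoint cluster cubes $\Lambda_{l_j}^{(n)}$ of total diameter $\le l_{N,J}$ via Lemma~\ref{lem:geo}, traverses each cluster in a \emph{single} (GRI.2)-step paying $e^{l_j^{\beta}}$, and then pairs that bad step with the ensuing NS step so that the combined factor $e^{l_{N,J}^{\beta}-m_l l+\log\text{prefactors}}=e^{-m_l'l}$ is $\le1$. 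But there is a concrete numerical gap in your accounting.

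You bound each bad step by $\|G_{\Lambda_{\ell}}\|\le e^{\ell^{\beta}}\le e^{L^{\beta}}$, then declare that the bad steps contribute $(K_S+1)L^{\beta}$ to the exponent. Dividing by $L$, this is $(K_S+1)L^{\beta-1}$; with $K_S\approx 16NJK(N)$ this is $\approx 16NJK(N)\,l^{\alpha(\beta-1)}$, which is much larger than the $3/l^{\alpha(1-\beta)}$ slot it must fit into (both are $\sim l^{-3/4}$, and $16NJK(N)\gg3$). The excess cannot be absorbed by the first error term either: you have at most $16NJK(N)m_ll$ of slack there (from $K_S\le16NJK(N)$), and comparing $16NJK(N)m_ll$ to $(K_S-1)L^{\beta}\sim16NJK(N)L^{\beta}$ fails for large $l$ since the only lower bound on $m_l$ coming from the hypothesis is $m_l>8NJK(N)/l^{1-\beta}$, so $m_ll$ can be $\ll L^{\beta}$. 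Thus, as written, your $\tilde m_L$ falls short of $m_L$, and the argument does not establish the stated conclusion. The fix is simply not to throw away the improvement $\ell\approx l$: keep the bound $e^{\ell^{\beta}}$ per bad step, so the bad contribution is $K_S\ell^{\beta}\sim16NJK(N)l^{1/2}$, which divided by $L\sim l^{3/2}$ gives $O(1/l)=o(1/l^{3/4})$ and is harmlessly absorbed into $3/l^{\alpha(1-\beta)}$ for $l>l^{\ast}$. Alternatively, follow the paper and arrange, via the $m_l'>0$ computation, that each bad-plus-NS block carries a factor $\le1$; either way the step where you replace $\ell^{\beta}$ by $L^{\beta}$ must be undone.
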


\begin{proof}
By hypothesis there are at most $J$ pairwise separable cubes $\Lambda_l^{(n)}(\mathbf{u}_{(s)}) \subset \Lambda_L^{(n)}(\mathbf{x})$ which are $(E,m_l)$-S. Applying Lemma~\ref{lem:separable} to each of them, we may find $JK(n)$ cubes $\Lambda_{r_{n,l}}^{(n)}(\mathbf{z}^{(k)})$ such that if $\mathbf{v} \notin \bigcup_{k=1}^{JK(n)} \Lambda_{r_{n,l}}^{(n)}(\mathbf{z}^{(k)})$, then $\Lambda_l^{(n)}(\mathbf{v})$ is separable from all the $\Lambda_l^{(n)}(\mathbf{u}_{(s)})$. Now applying Lemma~\ref{lem:geo} to the $JK(n)$ cubes $\Lambda_{r_{n,l}}^{(n)}(\mathbf{z}^{(k)})$, we may construct disjoint cubes $\Lambda_{l_j}^{(n)}$ such that $\bigcup_j \Lambda_{l_j-7}^{(n)} \supseteq \bigcup_k \Lambda_{r_{n,l}}(\mathbf{z}^{(k)})$, $l_j = n_j(r_{n,l}+7)$ for some $n_j \in \N^{\ast}$ and $\sum l_j \le JK(n)(r_{n,l}+7) \le JK(N)(r_{N,l}+7) =: l_{N,J}$. Thus, $\Lambda_l^{(n)}(\mathbf{v})$ is $(E,m_l)$-NS whenever $\mathbf{v} \in \Lambda_{L-l}^{(n)}(\mathbf{x}) \setminus \bigcup_j \Lambda_{l_j-7}^{(n)}$. 

We first assume all the ``bad cubes'' $\Lambda_{l_j}^{(n)}$ are inside $\Lambda_{L-l-7}^{(n)}(\mathbf{x})$. Note that if $\mathbf{v}\in\mathbf{B}_{L-l}^{(n)}(\mathbf{x})$ satisfies $\mathbf{v} \in\mathbf{B}_{l_j}^{\out}=\mathbf{B}_{l_j}^{(n)}\setminus\mathbf{B}_{l_j-6}^{(n)}$ for some $j$, then $\Lambda_l^{(n)}(\mathbf{v})$ is $(E,m_l)\text{-NS}$ since $\mathbf{v} \notin \Lambda_{l_j-7}^{(n)}$ and $\mathbf{v}\notin\Lambda_{l_r-7}^{(n)}$ for $r \neq j$ (because $\mathbf{v} \in \Lambda_{l_j}^{(n)}$ and $\Lambda_{l_j}^{(n)}$ is disjoint from the other $\Lambda_{l_r}^{(n)}$).

Now fix $\mathbf{y} \in \mathbf{B}_L^{\out}(\mathbf{x})$ and let $\mathbf{u} \in \mathbf{B}_{L-l-7}^{(n)}(\mathbf{x})$. We have 2 cases:
\begin{enumerate}[(a)]
\item $\Lambda_l^{(n)}(\mathbf{u})$ is $(E,m_l)$-NS. Then applying \textsf{(GRI.2)} to $\Lambda_l^{(n)}:= \Lambda_l^{(n)}(\mathbf{u})$,
\begin{align*}
\|G_{\Lambda_L^{(n)}(\mathbf{x})}(\mathbf{u},\mathbf{y};E) \| & \le C \cdot |\mathbf{B}_l^{\out}|^2 \max_{\mathbf{k} \in \mathbf{B}_l^{\out}} \| G_{\Lambda_l^{(n)}}(\mathbf{u},\mathbf{k};E) \| \max_{\mathbf{k}' \in \mathbf{B}_l^{\out}} \| G_{\Lambda_L^{(n)}(\mathbf{x})}(\mathbf{k}',\mathbf{y};E) \| \\
& \le C_1(2l-1)^{2(nd-1)} e^{-m_l l}\| G_{\Lambda_L^{(n)}(\mathbf{x})}(\mathbf{w}_1,\mathbf{y};E)\|
\end{align*}
for some $\mathbf{w}_1 \in \mathbf{B}_l^{\out}(\mathbf{u})$.
\item $\Lambda_l^{(n)}(\mathbf{u})$ is $(E,m_l)$-S. In this case, $\mathbf{u} \in \Lambda_{l_j-7}^{(n)}$ for some $j$, so applying \textsf{(GRI.2)} to $\Lambda_{l_j}^{(n)}$,
\begin{align*}
\| G_{\Lambda_L^{(n)}(\mathbf{x})}(\mathbf{u},\mathbf{y};E) \| & \le C \cdot |\mathbf{B}_{l_j}^{\out}|^2 \max_{\mathbf{k} \in \mathbf{B}_{l_j}^{\out}} \| G_{\Lambda_{l_j}^{(n)}}(\mathbf{u},\mathbf{k};E) \| \max_{\mathbf{k}' \in \mathbf{B}_{l_j}^{\out}} \| G_{\Lambda_L^{(n)}(\mathbf{x})}(\mathbf{k}',\mathbf{y};E) \| \\
& \le C_2 (2l_{N,J}-1)^{2(nd-1)} e^{l_{N,J}^{\beta}} \| G_{\Lambda_L^{(n)}(\mathbf{x})}(\mathbf{w},\mathbf{y};E) \|
\end{align*}
for some $\mathbf{w} \in \mathbf{B}_{l_j}^{\out}$ because $\Lambda_L^{(n)}(\mathbf{x})$ is $E$-CNR. But then $\Lambda_l^{(n)}(\mathbf{w})$ is $(E,m_l)\text{-NS}$, so applying \textsf{(GRI.2)} once more we get
\[ 
\| G_{\Lambda_L^{(n)}(\mathbf{x})}(\mathbf{u},\mathbf{y};E) \| \le C_3 ((2l_{N,J}-1)(2l-1))^{2(nd-1)} e^{l_{N,J}^{\beta} - m_l l} \| G_{\Lambda_L^{(n)}(\mathbf{x})}(\mathbf{w}_1,\mathbf{y};E) \| 
\]
for some $\mathbf{w}_1 \in \mathbf{B}_l^{\out}(\mathbf{w})$. Hence
\[ 
\| G_{\Lambda_L^{(n)}(\mathbf{x})}(\mathbf{u},\mathbf{y};E) \| \le e^{-m_l' l} \| G_{\Lambda_L^{(n)}(\mathbf{x})}(\mathbf{w}_1,\mathbf{y};E) \|, 
\]
where
\[ 
m_l' = m_l - l^{-1} \big\{ l_{N,J}^{\beta} + 2(nd-1) \log((2l_{N,J} - 1)(2l-1)) + \log C_3 \big\}  >0 
\]
because for large $l$,
\[ 
l_{N,J} = JK(N)(2(4N-3)l + 4(N-1)r_0 + 7) \le (8N-5)JK(N)l 
\]
so that $m_l' \ge m_l - \frac{8NJK(N)}{l^{1-\beta}} >0$ for $l$ large enough.
\end{enumerate}
Hence starting at $\mathbf{u} = \mathbf{w}_0 :=\mathbf{x}$, we may iterate the procedure $p$ times as long as $\mathbf{w}_{p-1} \in \mathbf{B}_{L-l-7}^{(n)}(\mathbf{x})$. If (a) occurs $n_+$ times and (b) occurs $n_0 = p - n_+$ times, we obtain
\[ 
\| G_{\Lambda_L^{(n)}(\mathbf{x})}(\mathbf{x},\mathbf{y};E) \| \le \big(C_1(2l-1)^{2(nd-1)} e^{-m_l l} \big)^{n_+} e^{-n_0 m_l' l} \| G_{\Lambda_L^{(n)}(\mathbf{x})}(\mathbf{w}_p,\mathbf{y};E) \|. 
\]
Now $\Lambda_L^{(n)}(\mathbf{x})$ is $E\text{-NR}$ and $e^{-n_0 m_l' l} \le 1$ since $m'_l > 0$. Hence
\[ 
\| G_{\Lambda_L^{(n)}(\mathbf{x})}(\mathbf{x},\mathbf{y};E) \| \le \big(C_1(2l-1)^{2(nd-1)} e^{-m_l l} \big)^{n_+} e^{L^{\beta}} \le e^{-(\tilde{m}_L+L^{\beta -1}) L}, 
\]
where 
\[ 
(\tilde{m}_L+L^{\beta -1}) L = - n_+ \big( \log C_1 + 2(nd-1) \log(2l-1)-m_l l \big) -L^{\beta}.
\]

In case (a), $\mathbf{w}_k \in \mathbf{B}_l^{\out}(\mathbf{w}_{k-1})$, so each step cuts a length between $l-6$ and $l-1$. We thus have $\lfloor \frac{L-2l_{N,J}-l-7}{l-1} \rfloor \le n_+ \le \lfloor \frac{L-l-7}{l-6} \rfloor $. Indeed, the lower bound represents the worst scenario in which the iteration met all the bad cubes in its way, a total length of $2l_{N,J}$. The upper bound occurs when it meets no bad cube. In particular, we have $\frac{L- 2l_{N,J} - l -7}{l} - 1 \le n_+ \le \frac{L}{l-6}$, so we get
\[ 
\tilde{m}_LL \ge m_l (L-2l_{N,J}-2l-7) - \frac{L}{l-6} (\log C_1 + 2(nd-1) \log(2l-1) ) -2L^{\beta}. 
\]
But $2l_{N,J}+2l+7 = ((16N-12)JK(N)+2)l + C(r_0,N,J) \le 16NJK(N)l$. Hence
\begin{align*}
\tilde{m}_L L & \ge m_l L - 16NJK(N)m_ll - \frac{2ndL\log(2l-1)}{l-6} -2L^{\beta} \\
& \ge m_l L - 16NJK(N)m_ll - \frac{L}{l^{\alpha(1-\beta)}} -2L^{\beta}
\end{align*}
for large $l$, because $\alpha(1-\beta)=3/4<1$. Noting that $L \ge l^{\alpha}$, we finally get
\[
\tilde{m}_L \ge m_l - \frac{16NJK(N)}{l^{\alpha - 1}} m_l - \frac{3}{l^{\alpha(1-\beta)}} = m_L \, .
\]
Thus, $\| G_{\Lambda_L^{(n)}(\mathbf{x})}(\mathbf{x},\mathbf{y};E) \| \le e^{-(\tilde{m}_L+L^{\beta -1}) L} \le e^{-(m_L+L^{\beta -1}) L}$ and $\Lambda_L^{(n)}(\mathbf{x})$ is $(E,m_L+L^{\beta -1})$-NS. For the lower bound on $m_L$, note that for large $l$,
\begin{equation}
\Big(1-\frac{16NJK(N)}{l^{\alpha - 1}}\Big) m_l \ge \frac{1}{2} m_l \ge \frac{4NJK(N)}{l^{1-\beta}} > \frac{3}{l^{\alpha(1-\beta)}} + \frac{8NJK(N)}{L^{1-\beta}} \, .    \label{eq:lb}
\end{equation}

Finally, if a bad cube lies completely outside $F:= \Lambda_{L-l-7}^{(n)}(\mathbf{x})$, the situation is obviously better. If a bad cube is not contained in $F$ but intersects $F$, we stop the iteration if we reach this bad cube. Then again the situation is better (because here only part of the length $2l_j$ of this cube is counted as bad).
\end{proof}

We define for $n \ge 2$,
\[ 
p_n := \frac{p_{n-1}}{\alpha^2(1+\theta)} - \frac{(2n-1)d}{2\alpha} - nd - 1,
\]
where $\theta := \frac{1}{2p_1}$. We then choose $p_1$ sufficiently large to make sure that
\[ 
p_N \ge 3Nd+1.
\]
In particular, $0 < \theta < 1$.

Fix 
\[
E_+ := \max_{1 \le n \le N} (nq_- +1), \qquad J=6 \, , 
\]
and let $l^{\ast}$ be as in Lemma~\ref{lem:DKn}. Then by Corollary~\ref{cor:ILS}, we may find $L_0 > l^{\ast}$ as large as necessary such that (\ref{eq:ils}) is satisfied for all $1 \le n \le N$, with $\varepsilon_0 = \frac{L_0^{\beta-1}}{2}$, $I_n = [nq_- - \frac{1}{2}, nq_- + \varepsilon_0 ]$, $m_{L_0} = \frac{1}{3L_0^{(1-\beta)/2}}$ and $p:= p_1$. We then define the sequences
\[ 
L_{k+1} := \lfloor L_k^{\alpha} \rfloor +1 \, ,
\]
\[
m_{L_{k+1}} := m_{L_k} - \biggl( \frac{96NK(N)}{L_k^{\alpha-1}}m_{L_k} + \frac{3}{L_k^{\alpha(1-\beta)}} \biggr) \, .
\]
Note that $m_{L_k}>\frac{48NK(N)}{L_k^{1-\beta}}$. Indeed, $m_{L_0} = \frac{1}{3L_0^{(1-\beta)/2}} > \frac{48NK(N)}{L_0^{1-\beta}}$ since $L_0$ is large, hence $m_{L_k}>\frac{48NK(N)}{L_k^{1-\beta}}$ by induction, using (\ref{eq:lb}). We now introduce the property
\[
\left\{\begin{array}{l}
\text{For all pairs of separable cubes }\Lambda_{L_k}^{(n)}(\mathbf{u})\text{ and } \Lambda_{L_k}^{(n)}(\mathbf{v}):\\ 
\prob \{ \exists E \in I_n : \Lambda_{L_k}^{(n)}(\mathbf{u})\text{ and }\Lambda_{L_k}^{(n)}(\mathbf{v})\text{ are }(E,m_{L_k})\text{-S} \} \le L_k^{-2p_n(1+\theta)^k}. \\ 
\end{array} \right. \tag{\textsf{DS}$\,:n,k,m_{L_k},I_n$}
\]

The term $(1+\theta)^k$ in the exponent was introduced in \cite{BCS} and is new in comparison with the usual multiscale analysis. While it complicates a few estimates, it has a powerful advantage, namely it allows to prove dynamical localization of any order $s$ in $I_N$, with $\varepsilon_0$ independent of $s$. This result (among others) was previously obtained for single-particle systems in the continuum using the bootstrap multiscale analysis of \cite{GK}.

To prove this property, we shall need Lemma~\ref{lem:DKn} and the following Wegner bound:
\[
\left\{\begin{array}{l}
\text{For all pairs of separable cubes }\Lambda_{L_k}^{(n)}(\mathbf{u})\text{ and } \Lambda_{L_k}^{(n)}(\mathbf{v}):\\ 
\prob\{\exists E\in I_n : \Lambda_{L_k}^{(n)}(\mathbf{u})\text{ and }\Lambda_{L_k}^{(n)}(\mathbf{v})\text{ are not }E\text{-CNR}\} \le \frac{1}{4} L_k^{-2p_1(1+\theta)^k} \, . \\
\end{array}\right. \tag{\textsf{W2}$\,:n,k,I_n$}
\]

\begin{lem}      \label{lem:regu}
The property \emph{(\textsf{W2}$\,:n,k,I_n)$} holds for all $k \ge 0$ and $1 \le n \le N$.
\end{lem}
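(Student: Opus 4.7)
The plan is to apply the two-cube Wegner estimate (Theorem~\ref{thm:Wegner2}) to every pair of candidate sub-cubes drawn from $\Lambda_{L_k}^{(n)}(\mathbf{u})$ and $\Lambda_{L_k}^{(n)}(\mathbf{v})$ and then take a union bound. Suppose both parent cubes fail to be $E$-CNR at some common $E \in I_n$. Unpacking the definition, there exist integers $L_k^{1/\alpha} \le \ell_1, \ell_2 \le L_k$ and sub-cubes $\Lambda_{\ell_1}^{(n)}(\mathbf{u}') \subseteq \Lambda_{L_k}^{(n)}(\mathbf{u})$, $\Lambda_{\ell_2}^{(n)}(\mathbf{v}') \subseteq \Lambda_{L_k}^{(n)}(\mathbf{v})$ with $\dist(\sigma(H^{(n)}_{\Lambda_{\ell_i}(\cdot)}), E) < e^{-\ell_i^{\beta}} \le e^{-L_k^{\beta/\alpha}} = e^{-L_k^{1/3}}$. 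By the triangle inequality, after enlarging to $I' := [nq_- - 1, nq_- + \varepsilon_0 + 1] \supset I_n$ so that the eigenvalues closest to $E$ certainly lie in $I'$, we get $\dist(\sigma_{I'}(H^{(n)}_{\Lambda_{\ell_1}(\mathbf{u}')}), \sigma_{I'}(H^{(n)}_{\Lambda_{\ell_2}(\mathbf{v}')})) < 2e^{-L_k^{1/3}}$.

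The key geometric observation that I will record is that sub-cubes inherit pre-separability from their parents: if $\Lambda_{L_k}^{(n)}(\mathbf{u})$ is $\mathcal{J}$-pre-separable from $\Lambda_{L_k}^{(n)}(\mathbf{v})$, then since $\Pi_j \Lambda_{\ell}^{(n)}(\mathbf{w}') \subseteq \Pi_j \Lambda_{L_k}^{(n)}(\mathbf{w})$ for each index $j$ and each sub-cube, the sub-cubes $\Lambda_{\ell_1}^{(n)}(\mathbf{u}')$ and $\Lambda_{\ell_2}^{(n)}(\mathbf{v}')$ remain $\mathcal{J}$-pre-separable. Hence Theorem~\ref{thm:Wegner2} applies with interval $I'$ and $\varepsilon = 2 e^{-L_k^{1/3}}$, bounding each pair's probability by
\[
C \cdot |\Lambda_{\ell_1}^{(n)}| \cdot |\Lambda_{\ell_2}^{(n)}| \cdot |\Pi_0 \Lambda| \cdot s(\mu, 4e^{-L_k^{1/3}}) \le C' L_k^{2nd+d}\, s(\mu, 4e^{-L_k^{1/3}}),
\]
with $C$ uniform in $k$ because $I'$ is fixed. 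H\"older continuity of $\mu$ yields $s(\mu, h) \le C'' h^{\tau}$ for some $\tau > 0$, so each pair contributes at most $C''' L_k^{2nd+d} e^{-\tau L_k^{1/3}}$.

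Finally, the number of sub-cubes of the prescribed sizes inside each parent is bounded by $O(L_k^{nd+1})$ (at most $L_k$ admissible sizes times at most $(2L_k)^{nd}$ centres), so a union bound over the $O(L_k^{2(nd+1)})$ pairs of witnessing sub-cubes produces a total bound of the form $L_k^{Q(N,d)} e^{-\tau L_k^{1/3}}$ for some polynomial exponent $Q$. Since the recurrence $L_{k+1} = \lfloor L_k^{\alpha} \rfloor + 1$ forces $L_k \ge L_0^{(3/2)^k}$, the quantity $L_k^{1/3}$ ultimately dominates $(1+\theta)^k \log L_k$, so this bound is majorized by $\tfrac{1}{4} L_k^{-2p_1(1+\theta)^k}$ provided $L_0$ is taken sufficiently large. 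I expect the only real difficulties to be bookkeeping: verifying stability of pre-separability under passage to sub-cubes, choosing the enlarged interval $I'$ so that the Wegner constant stays uniform in $k$ while still capturing the resonant eigenvalues, and tracking the polynomial prefactor carefully enough to absorb it into the super-exponential factor $e^{-\tau L_k^{1/3}}$.
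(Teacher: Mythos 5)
Your argument reproduces the paper's proof step by step: identifying the witnessing sub-cubes, using that pre-separability passes to sub-cubes (which the paper asserts and you correctly justify via monotonicity of the projections $\Pi_j$), applying Theorem~\ref{thm:Wegner2} with $\varepsilon$ of order $e^{-L_k^{\beta/\alpha}}$, taking a union bound over the $O(L_k^{2(nd+1)})$ pairs of sub-cubes, invoking H\"older continuity of $\mu$, and absorbing the polynomial prefactor into the stretched-exponential decay to beat $\tfrac{1}{4}L_k^{-2p_1(1+\theta)^k}$ for $L_0$ large. The only cosmetic difference is your use of a fixed enlarged interval $I'$ in place of the paper's $\varepsilon$-dependent $J_n$; both keep the Wegner constant uniform in $k$.
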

\begin{proof}
Let $\Lambda_{L_k}^{(n)}(\mathbf{u})$ and $\Lambda_{L_k}^{(n)}(\mathbf{v})$ be separable. If $\Lambda_{\ell_1}^{(n)} \subseteq \Lambda_{L_k}^{(n)}(\mathbf{u})$ and $\Lambda_{\ell_2}^{(n)} \subseteq \Lambda_{L_k}^{(n)}(\mathbf{v})$, then $\Lambda_{\ell_1}^{(n)}$ and $\Lambda_{\ell_2}^{(n)}$ are pre-separable. Hence by Theorem~\ref{thm:Wegner2},
\begin{align*}
& \prob\{ \exists E \in I_n : \dist(\sigma(H_{\Lambda_{\ell_1}}^{(n)}),E) < \varepsilon \text{ and } \dist(\sigma(H_{\Lambda_{\ell_2}}^{(n)}),E) < \varepsilon \} \\
& \quad \le \prob\{ \dist(\sigma_{J_n}(H_{\Lambda_{\ell_1}}^{(n)}), \sigma_{J_n}(H_{\Lambda_{\ell_2}}^{(n)})) < 2 \varepsilon \} \le  C (2L_k)^{2nd+d} s(\mu,4\varepsilon) \, ,
\end{align*}
where $J_n=[nq_--\frac{1}{2}-\varepsilon,nq_-+\varepsilon_0+\varepsilon]$. Bounding the number of cubes in $\Lambda_{L_k}^{(n)}$ by $| \mathbf{B}_{L_k}^{(n)}| \le (2L_k)^{nd}$ and the number of $\ell \in \N^{\ast}$ satisfying $L_k^{1/\alpha} \le \ell \le L_k$ by $L_k$, we get for $\varepsilon := \max(e^{-\ell_1^{\beta}},e^{-\ell_2^{\beta}}) \le e^{-L_k^{\beta/\alpha}}$,
\[
\prob\{\exists E\in I_n : \Lambda_{L_k}^{(n)}(\mathbf{u})\text{ and }\Lambda_{L_k}^{(n)}(\mathbf{v})\text{ are not }E\text{-CNR}\} \le C (2L_k)^{4nd+d+2} s(\mu,4e^{-L_k^{\beta/\alpha}}) \, .
\]
Since $\mu$ is H\"older continuous, there exist $c_{\mu}$ and $b>0$ such that
\[
C (2L_k)^{4nd+d+2} s(\mu,4e^{-L_k^{\beta/\alpha}}) \le C c_{\mu} (2L_k)^{4nd+d+2} (4e^{-L_k^{\beta/\alpha}})^b \le e^{-L_k^{\zeta}}
\]
for some $\zeta>0$, since $L_0$ is large. Now for any $k \ge 0$,
\[
-\log(1/4)+2p_1 (1+\theta)^k \log L_k \le \log(4)+ 2p_1 2^k \log L_k \le C_{N,d} 2^k \alpha^k \log L_0 \le L_0^{\alpha^k \zeta}
\]
since $\alpha^k \ge \frac{ \log C}{ \zeta \log L_0} + k \frac{\log 2 \alpha }{ \zeta \log L_0} + \frac{ \log \log L_0}{ \zeta \log L_0}$ for large $L_0$, independently of $k$. But $L_0^{\alpha^k \zeta} \le L_k^{\zeta}$. We thus showed that $e^{-L_k^{\zeta}} \le \exp(\log(1/4)-2p_1 (1+\theta)^k \log L_k) = \frac{1}{4} L_k^{-2p_1(1+\theta)^k}$.
\end{proof}

\subsection{Single-particle case}
For $n=1$, separable cubes are just disjoint cubes; see Definition~\ref{def:sep}.

\begin{thm}       \label{thm:1-part}
\emph{(\textsf{DS}$\,:1,k,m_{L_k},I_1$)} implies \emph{(\textsf{DS}$\,:1,k+1,m_{L_{k+1}},I_1$)}.
\end{thm}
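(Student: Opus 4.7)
The plan is to propagate the pair decay estimate from scale $L_k$ to scale $L_{k+1}$ via Lemma~\ref{lem:DKn}, exploiting crucially that for $n=1$ separability of two cubes reduces to disjointness, so the restrictions of $H^{(1)}(\omega)$ to them are measurable with respect to independent families of random variables. Fix separable $\Lambda_{L_{k+1}}^{(1)}(\mathbf{u})$ and $\Lambda_{L_{k+1}}^{(1)}(\mathbf{v})$, i.e.\ $|\mathbf{u}-\mathbf{v}|\ge 2L_{k+1}$, and set $S:=\{\exists E\in I_1:\text{both cubes are }(E,m_{L_{k+1}})\text{-S}\}$. The mass recurrence defining $m_{L_{k+1}}$ is precisely the output of Lemma~\ref{lem:DKn} with $J=6$, $l=L_k$ and $L=L_{k+1}$; its contrapositive therefore says that any $(E,m_{L_{k+1}})$-S cube must either fail to be $E$-CNR or contain at least $7$ pairwise separable $(E,m_{L_k})$-S sub-cubes of side $L_k$ (i.e.\ be $(E,m_{L_k},6)$-bad). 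This splits $S$ into four sub-events $A_1,\ldots,A_4$ depending on which alternative holds for each of $\mathbf{u}$ and $\mathbf{v}$.

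The event $A_1$ in which both big cubes fail $E$-CNR at a common $E$ is immediately bounded by Lemma~\ref{lem:regu} (property \textsf{W2}$\,:1,k+1,I_1$), contributing $\le \tfrac14 L_{k+1}^{-2p_1(1+\theta)^{k+1}}$. For $A_4$, where both big cubes are $(E,m_{L_k},6)$-bad at a common $E$, each cube must contain at least two pairwise separable $(E,m_{L_k})$-S sub-cubes. The single-cube event ``$\Lambda_{L_{k+1}}^{(1)}(\mathbf{u})$ contains such a pair for some $E\in I_1$'' depends only on $\omega$ restricted to $\mathcal{E}(\Gamma^{(1)}\cap \Lambda_{L_{k+1}}(\mathbf{u}))$, and likewise for $\mathbf{v}$; these events are therefore independent. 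Since a common energy makes $A_4$ a subset of the product of the two one-cube events, independence factorises and
\[
\prob(A_4)\le\Bigl[(2L_{k+1})^{2d}\,L_k^{-2p_1(1+\theta)^k}\Bigr]^{2},
\]
where the inner bound is a union bound over $\le(2L_{k+1})^{2d}$ separable $L_k$-sub-cube pairs combined with (\textsf{DS}$\,:1,k,m_{L_k},I_1$). Using $L_k\ge(L_{k+1}/2)^{1/\alpha}$, this is absorbed into $\tfrac14 L_{k+1}^{-2p_1(1+\theta)^{k+1}}$ as soon as $\alpha(1+\theta)<2$, which holds since $\theta=1/(2p_1)$ and $p_1$ is large by the choice forcing $p_N\ge 3Nd+1$.

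The main obstacle is the mixed cases $A_2,A_3$, where one big cube fails $E$-CNR and the other is $(E,m_{L_k},6)$-bad at the same $E$. The independence trick used for $A_4$ degenerates, because the single-cube event ``$\Lambda_{L_{k+1}}^{(1)}$ is not $E$-CNR for \emph{some} $E\in I_1$'' has essentially trivial probability, while the naive union bound over pairs of $L_k$-sub-cubes, one in each big cube, gives only $(2L_{k+1})^{2d}L_k^{-2p_1(1+\theta)^k}$ and cannot be absorbed since $\alpha(1+\theta)>1$. The way I would address this is to use the Combes--Thomas bound of Theorem~\ref{thm:CT2} to upgrade $(E,m_{L_k})$-singularity of an $L_k$-sub-cube into the quantitative spectral gap $\dist(\sigma(H_{\Lambda_{L_k}}),E)=O(m_{L_k}^{2})=O(L_k^{\beta-1})$. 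Paired with $\dist(\sigma(H_{\Lambda_{\ell}}),E)<e^{-\ell^{\beta}}$ for a sub-cube of the other big cube, this converts $A_2\cup A_3$ into a two-cube near-resonance event for sub-cubes in disjoint parts of the lattice, to which Theorem~\ref{thm:Wegner2} applies directly; the H\"older continuity of $\mu$ then furnishes the polynomial gain needed to obtain $\prob(A_2)+\prob(A_3)\le \tfrac12 L_{k+1}^{-2p_1(1+\theta)^{k+1}}$. Summing the four contributions yields (\textsf{DS}$\,:1,k+1,m_{L_{k+1}},I_1$).
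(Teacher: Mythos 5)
Your decomposition via contraposition of Lemma~\ref{lem:DKn} is the right starting point, and your handling of $A_1$ (both cubes failing $E$-CNR) via Lemma~\ref{lem:regu} matches the paper. But your treatment of the ``bad'' alternative misses the crucial structural point, and this leaves a genuine gap in $A_2,A_3$.

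The key observation you overlook is that a \emph{single} cube $\Lambda_{L_{k+1}}^{(1)}(\mathbf{u})$ being $(E,m_{L_k},6)$-bad at some common $E$ contains at least $7$ pairwise disjoint $(E,m_{L_k})$-S sub-cubes of side $L_k$; grouping these into $3$ mutually disjoint pairs, each pair depends on disjoint sets of variables $\omega_e$, so the corresponding events are independent \emph{within that one cube}. A union bound over positions together with (\textsf{DS}$\,:1,k,m_{L_k},I_1$) applied to each pair gives
\[
\prob\bigl\{\exists E\in I_1:\Lambda_{L_{k+1}}^{(1)}(\mathbf{u})\text{ is }(E,m_{L_k},6)\text{-bad}\bigr\}
\le\bigl[(2L_{k+1})^{2d}L_k^{-2p_1(1+\theta)^k}\bigr]^{3}\le\tfrac14 L_{k+1}^{-2p_1(1+\theta)^{k+1}},
\]
and this is exactly why $J=6$, $\alpha=3/2$, $d\le(p_1-1)/3$ are arranged to fit together. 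With this one-cube bound in hand, the mixed events are handled automatically: $A_2\subseteq\{\exists E:\Lambda_{L_{k+1}}^{(1)}(\mathbf{v})\text{ is bad}\}$ and $A_3\subseteq\{\exists E:\Lambda_{L_{k+1}}^{(1)}(\mathbf{u})\text{ is bad}\}$. No Combes--Thomas upgrade, no cross-cube Wegner argument, and no tensorization between the two big cubes is needed; the pairing happens inside a single cube. Your bound on $A_4$ (squaring a one-pair estimate across the two big cubes) happens to close, but it sidesteps this intra-cube pairing and hence cannot be adapted to $A_2,A_3$.

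Your proposed fix for $A_2,A_3$ also cannot work as stated. First, $(E,m_{L_k})$-singularity does not yield $\dist(\sigma(H_{\Lambda_{L_k}}),E)=O(L_k^{\beta-1})$: Theorem~\ref{thm:CT2} only applies for $E<\inf\sigma(H_{\Lambda})$, and since $I_1$ reaches above $q_-$ while $\inf\sigma(H_{\Lambda})\ge q_-$ one may have $E\ge\inf\sigma$, in which case singularity provides no spectral gap information. Second, even granting such a polynomial gap $\varepsilon\sim L_k^{\beta-1}$, Theorem~\ref{thm:Wegner2} with H\"older continuity gives $s(\mu,\varepsilon)\lesssim L_k^{-c}$ for a fixed $c>0$, i.e.\ a bound with a \emph{$k$-independent} power of $L_k$, which cannot match the target $L_{k+1}^{-2p_1(1+\theta)^{k+1}}$ for all $k$. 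The exponent $(1+\theta)^{k+1}$ can only be produced by cubing the inductive input (\textsf{DS}) at scale $L_k$, not by a one-shot Wegner estimate.
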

\begin{proof}
Put $L=L_{k+1}$, $l=L_k$ and let $\Lambda_L^{(1)}(u)$ and $\Lambda_L^{(1)}(v)$ be a pair of disjoint cubes. Since $\varepsilon_0<1$, any $E \in I_1$ satisfies $E\le E_+=\max_n(nq_-+1)$, so applying Lemma~\ref{lem:DKn} with $J=6$, noting that $l>l^{\ast}$ because $L_0 > l^{\ast}$, we have
\begin{align*}
& \prob\{ \exists E \in I_1 : \Lambda_L^{(1)}(u) \text{ and } \Lambda_L^{(1)}(v) \text{ are } (E,m_L)\text{-S} \} \\
& \quad \le 3 \max_{x=u,v} \prob \{ \exists E \in I_1 : \Lambda_L^{(1)}(x) \text{ is } (E,m_l,6)\text{-bad} \} \\
& \qquad + \prob\{ \exists E \in I_1 : \Lambda_L^{(1)}(u) \text{ and } \Lambda_L^{(1)}(v) \text{ are not } E\text{-CNR} \} \, ,
\end{align*}
since an ($E,m_L$)-S cube is a fortiori ($E,m_L + L^{\beta-1}$)-S. Now by (\textsf{W2}$\,:1,k+1,I_1$),
\[
\prob\{ \exists E \in I_1 : \Lambda_L^{(1)}(u) \text{ and } \Lambda_L^{(1)}(v) \text{ are not } E\text{-CNR} \} \le \frac{1}{4} L^{-2p_1(1+\theta)^{k+1}} \, .
\]
Moreover, given $J \in 2\N^{\ast}$, if $\Lambda_L^{(1)}$ is $(E,m_l,J-1)$-bad then it contains at least $J$ separable cubes which are $(E,m_l)$-S. Since $n=1$, Hamiltonians on disjoint cubes are independent. So by grouping these $J$ cubes two by two, using (\textsf{DS}$\,:1,k,m_{L_k},I_1$) and bounding the number of pairs of cubes in $\Lambda_L^{(1)}$ by $| \mathbf{B}_L^{(1)} |^2 \le (2L)^{2d}$, we get
\begin{align}
\prob \{ \exists E \in I_1 : \Lambda_L^{(1)} \text{ is } (E,m_l,J)\text{-bad} \} & \le \prob \{ \exists E \in I_1 : \Lambda_L^{(1)} \text{ is } (E,m_l,J-1)\text{-bad} \}  \label{eq:useless} \\
& \le \prob \{ \exists E \in I_1 : \Lambda_L^{(1)} \text{ is } (E,m_l,1)\text{-bad}  \}^{J/2} \notag \\
& \le ((2L)^{2d} l^{-2p_1(1+\theta)^k})^{J/2} \notag \\
& \le cL^{(d-\frac{p_1(1+\theta)^k}{\alpha})J} \le cL^{(\frac{p_1 -1}{3}-\frac{2p_1(1+\theta)^k}{3})J} \notag
\end{align}
because $\alpha = 3/2$ and $d \le \frac{p_N -1}{3} \le \frac{p_1 -1}{3}$. Now
\[
cL^{\frac{Jp_1}{3}(1-2(1+\theta)^k-\frac{1}{p_1})} \le \frac{1}{4} L^{\frac{Jp_1}{3}(1-2(1+\theta)^k-\theta)}
\]
and since $(1-\theta) \le (1-\theta)(1+\theta)^k=(1+\theta)^k-\theta(1+\theta)^k$, we have
\begin{equation}
1 - 2(1+\theta)^k - \theta \le - (1+\theta)^k - \theta(1+\theta)^k = -(1+\theta)^{k+1} \, .  \label{eq:ms1}
\end{equation}
Hence,
\[
\prob \{ \exists E \in I_1 : \Lambda_L^{(1)} \text{ is } (E,m_l,J)\text{-bad} \} \le \frac{1}{4} L^{-\frac{Jp_1}{3}(1+\theta)^{k+1}}.
\]
The claim now follows by taking $J=6$.
\end{proof}

\subsection{Multi-particle case: Strategy}             \label{sec:stratn}
The deduction of (\textsf{DS}$\,:1,k+1,m_{L_{k+1}},I_1$) from (\textsf{DS}$\,:1,k,m_{L_k},I_1$) was fairly simple. Once $n\ge 2$ however, we face a difficulty when trying to estimate the probability that a cube is $(E,m_{L_k},J)$-bad. Indeed, Hamiltonians on separable sub-cubes are not independent, so we can no longer multiply the probabilities as in the previous subsection.

To overcome this, we reason as follows: if a cube $\Lambda_{L_{k+1}}^{(N)}$ is $(E,m_{L_k},J)$-bad, then it contains at least $J+1$ pairwise separable cubes $\Lambda_{L_k}^{(N)}$ which are $(E,m_{L_k})$-S. Hence, either it contains $2$ separable $(E,m_{L_k})$-S PI cubes, or it contains at least $J$ separable $(E,m_{L_k})$-S FI cubes. Now separable FI cubes are completely separated by Lemma~\ref{lem:FI}, so taking $J=6$, we can again multiply the probabilities. The main difficulty is in assessing the probability that a cube contains $2$ separable $(E,m_{L_k})$-S PI cubes. The idea is as follows: on PI cubes, the interaction potential decouples by Lemma~\ref{lem:PI}, so the corresponding Hamiltonians take the form $H_{\Lambda_{L_k}(\mathbf{u})}^{(N)} = H_{\Lambda_{L_k}(u_{\mathcal{J}})}^{(n')} \otimes I + I \otimes H_{\Lambda_{L_k}(u_{\mathcal{J}^c})}^{(n'')}$, where $n', n'' <N$. Now using the new resolvent inequality (\textsf{GRI.3}), we may deduce that $\Lambda_{L_k}^{(N)}(\mathbf{u})$ is non-singular if both projections $\Lambda_{L_k}^{(n')}(u_{\mathcal{J}})$ and $\Lambda_{L_k}^{(n'')}(u_{\mathcal{J}^c})$ are non-singular for an array of energies. To show that both projections are indeed non-singular, we show that they cannot contain a lot of bad sub-cubes $\Lambda_{L_{k-1}}^{(n)}$, $n=n',n''$.

Notice that in the above scheme, we reduced the decay problem on PI $N$-cubes to that on $n$-cubes for $n < N$, which indicates that an induction on $n$ will be performed. Also notice that unlike single-particle systems, here we will need good decay bounds on both orders $k-1$ and $k$ to finally deduce the decay for $k+1$.

\subsection{Pairs of PI cubes}
We assume through this subsection that $2 \le n \le N$.

Recall that if $\Lambda_{L_k}^{(n)}(\mathbf{u})$ is a PI cube, then it is $\mathcal{J}$-decomposable for some $\mathcal{J}$ by Lemma~\ref{lem:PI}. We may thus denote it $\Lambda_{L_k}^{(n)}(\mathbf{u}) = \Lambda_{L_k}^{(n')}(u_{\mathcal{J}}) \times \Lambda_{L_k}^{(n'')}(u_{\mathcal{J}^c})$, where $n' = \# \mathcal{J}$ and $n'' = n-n'$. We also denote by $\Sigma'$ and $\Sigma''$ the spectra of $H_{\Lambda_{L_k}(u_{\mathcal{J}})}^{(n')}$ and $H_{\Lambda_{L_k}(u_{\mathcal{J}^c})}^{(n'')}$, respectively.

\begin{defa}          \label{def:tun}
Let $\Lambda_{L_k}^{(n)}(\mathbf{u}) = \Lambda_{L_k}^{(n')}(u_{\mathcal{J}}) \times \Lambda_{L_k}^{(n'')}(u_{\mathcal{J}^c})$ be a PI cube, $k \ge 1$. We say that $\Lambda_{L_k}^{(n)}(\mathbf{u})$ is \emph{$(E,m_{L_{k-1}})$-Non Tunneling} ($(E,m_{L_{k-1}})$-NT) if
\begin{enumerate}[(i)]
\item $\forall \mu_b \in \Sigma'':\Lambda_{L_k}^{(n')}(u_{\mathcal{J}})$ is $(E-\mu_b,m_{L_{k-1}},1)$-good.
\item $\forall \lambda_a \in \Sigma':\Lambda_{L_k}^{(n'')}(u_{\mathcal{J}^c})$ is $(E-\lambda_a,m_{L_{k-1}},1)$-good.
\end{enumerate}
Otherwise, we say it is \emph{$(E,m_{L_{k-1}})$-Tunneling} ($(E,m_{L_{k-1}})$-T).
\end{defa}

The following definition is taken from \cite{KN}, see Definition 3.16.

\begin{defa}          \label{def:hnr}
Let $\Lambda_{L_k}^{(n)}(\mathbf{u}) = \Lambda_{L_k}^{(n')}(u_{\mathcal{J}}) \times \Lambda_{L_k}^{(n'')}(u_{\mathcal{J}^c})$ be a PI cube. We say that $\Lambda_{L_k}^{(n)}(\mathbf{u})$ is \emph{$E$-Highly Non-Resonant} ($E$-HNR) if
\begin{enumerate}[(i)]
\item $\forall \mu_b \in \Sigma'':\Lambda_{L_k}^{(n')}(u_{\mathcal{J}})$ is $(E-\mu_b)$-CNR.
\item $\forall \lambda_a \in \Sigma':\Lambda_{L_k}^{(n'')}(u_{\mathcal{J}^c})$ is $(E-\lambda_a)$-CNR.
\end{enumerate}
\end{defa}

\begin{lem}      \label{lem:NT}
Let $\Lambda_{L_k}^{(n)}(\mathbf{u})$ be a PI cube, $k \ge 1$, and let $E \in I_n$. If $\Lambda_{L_k}^{(n)}(\mathbf{u})$ is $E$-HNR and $(E,m_{L_{k-1}})$-NT, then $\Lambda_{L_k}^{(n)}(\mathbf{u})$ is $(E,m_{L_k})$-NS.
\end{lem}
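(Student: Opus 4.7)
The plan is to combine the tensor-product structure available on a PI cube with the multi-particle resolvent inequality (\textsf{GRI.3}) and the deterministic decay bootstrap of Lemma~\ref{lem:DKn}. Since $\Lambda_{L_k}^{(n)}(\mathbf{u})$ is PI, Lemma~\ref{lem:PI} supplies a partition $\{1,\dots,n\} = \mathcal{J} \cup \mathcal{J}^c$ making it $\mathcal{J}$-decomposable, so I write $\Lambda_{L_k}^{(n)}(\mathbf{u}) = \Lambda_{L_k}^{(n')}(u_{\mathcal{J}}) \times \Lambda_{L_k}^{(n'')}(u_{\mathcal{J}^c})$ with spectra $\Sigma'$ and $\Sigma''$ as in Definitions~\ref{def:tun}--\ref{def:hnr}. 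The HNR hypothesis includes the $E$-NR property of the full cube, so $E \in \rho(H_{\Lambda_{L_k}(\mathbf{u})}^{(n)})$ and it remains only to estimate $\|G_{\Lambda_{L_k}^{(n)}(\mathbf{u})}(\mathbf{u},\mathbf{y};E)\|$ for an arbitrary $\mathbf{y} \in \mathbf{B}_{L_k}^{\out}(\mathbf{u})$. From $|\mathbf{u}-\mathbf{y}| \ge L_k-6$ one of $|u_{\mathcal{J}}-y_{\mathcal{J}}|$ or $|u_{\mathcal{J}^c}-y_{\mathcal{J}^c}|$ must be $\ge L_k-6$; by the symmetric structure of conditions (i) and (ii) in Definitions~\ref{def:tun} and \ref{def:hnr} I may assume $\delta_1 := |u_{\mathcal{J}^c}-y_{\mathcal{J}^c}| \ge L_k-6$, which in particular exceeds $2$ once $L_0$ is large enough.

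Next, I fix a constant $S > S^{\ast}(n,q_-,E_+)$ chosen once and for all and large enough for the estimates below; then (\textsf{GRI.3}) gives
\[
\|G_{\Lambda_{L_k}^{(n)}(\mathbf{u})}(\mathbf{u},\mathbf{y};E)\| \le M_1 \cdot \max_{a \le M_1} \|G_{\Lambda_{L_k}^{(n'')}(u_{\mathcal{J}^c})}(u_{\mathcal{J}^c}, y_{\mathcal{J}^c}; E-\lambda_a)\| + |\Lambda_{L_k}^{(n')}| \, e^{-\delta_1 S},
\]
with $M_1$ polynomial in $L_k$ and $\lambda_a$ running through $\Sigma'$ (note that $\lambda_a \ge n'q_-$, so $E-\lambda_a$ remains bounded above by a fixed constant $\tilde E_+$). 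For each such $\lambda_a$, condition (ii) of HNR makes $\Lambda_{L_k}^{(n'')}(u_{\mathcal{J}^c})$ an $(E-\lambda_a)$-CNR cube, while condition (ii) of NT makes it $(E-\lambda_a, m_{L_{k-1}}, 1)$-good; since $m_{L_{k-1}} > 48NK(N)/L_{k-1}^{1-\beta}$ dominates $8NK(N)/L_{k-1}^{1-\beta}$, Lemma~\ref{lem:DKn} applies at scale $l = L_{k-1}$, $L = L_k$, $J = 1$ and yields that $\Lambda_{L_k}^{(n'')}(u_{\mathcal{J}^c})$ is $(E-\lambda_a, \tilde m_{L_k} + L_k^{\beta-1})$-NS for
\[
\tilde m_{L_k} = m_{L_{k-1}} - \frac{16NK(N) \, m_{L_{k-1}}}{L_{k-1}^{\alpha-1}} - \frac{3}{L_{k-1}^{\alpha(1-\beta)}} \ge m_{L_k},
\]
the last inequality holding because the sequence $(m_{L_k})$ is defined with the strictly worse $J=6$ correction. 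Since $y_{\mathcal{J}^c}$ lies in the outer shell $\mathbf{B}_{L_k}^{\out}(u_{\mathcal{J}^c})$, this gives $\|G_{\Lambda_{L_k}^{(n'')}(u_{\mathcal{J}^c})}(u_{\mathcal{J}^c},y_{\mathcal{J}^c};E-\lambda_a)\| \le e^{-(m_{L_k}+L_k^{\beta-1}) L_k}$.

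Plugging back produces
\[
\|G_{\Lambda_{L_k}^{(n)}(\mathbf{u})}(\mathbf{u},\mathbf{y};E)\| \le M_1 e^{-L_k^{\beta}} e^{-m_{L_k} L_k} + |\Lambda_{L_k}^{(n')}| \, e^{-(L_k-6)S},
\]
and both summands are $\le \tfrac{1}{2} e^{-m_{L_k} L_k}$ as soon as $L_0$ is large: the first because $M_1$ is polynomial in $L_k$ while the penalty $L_k^{\beta} = L_k^{1/2}$ is super-polynomial, the second because the telescoping recursion defining $(m_{L_k})$ shows it is bounded below by a positive constant (e.g.\ $m_{L_0}/4$ once $L_0$ is large), so any fixed $S$ exceeding that constant does the job. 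The main bookkeeping obstacle is step three: one must certify that the mass $\tilde m_{L_k}$ produced by Lemma~\ref{lem:DKn} with $J=1$ exceeds the sequence value $m_{L_k}$ defined with the coarser $J=6$ correction, verify the applicability of Lemma~\ref{lem:DKn} uniformly in the shifted energies $E-\lambda_a$, and fix $S$ independently of $k$ without letting the polynomial factor $M_1$ (which grows polynomially in $S$) spoil the gain $e^{-L_k^{\beta}}$.
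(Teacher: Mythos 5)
Your proof is correct and follows essentially the same route as the paper's: decompose the PI cube, apply one of the two bounds in \textsf{(GRI.3)} according to which block of coordinates carries the large separation, invoke Lemma~\ref{lem:DKn} at scale $L_{k-1}\to L_k$ with $J=1$ on the shifted energies $E-\lambda_a$ (using conditions (ii) of both HNR and NT), then absorb the two resulting terms. You make explicit one point the paper leaves implicit, namely that Lemma~\ref{lem:DKn} with $J=1$ produces a mass $\tilde m_{L_k}$ exceeding the sequence value $m_{L_k}$ defined with the coarser $J=6$ correction, which is a worthwhile clarification of why the conclusion can be stated with $m_{L_k}$.

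One minor slip in the last sentence: to absorb $|\Lambda_{L_k}^{(n')}|e^{-(L_k-6)S}$ into $\tfrac12 e^{-m_{L_k}L_k}$ you need $(L_k-6)S \gtrsim m_{L_k}L_k + O(\log L_k)$ uniformly in $k$, which forces $S$ to exceed $\sup_k m_{L_k}=m_{L_0}$ (the paper fixes $S>2m_{L_0}$); the lower bound $m_{L_k}\gtrsim m_{L_0}/2$ that you cite is in the wrong direction and is not the quantity that constrains $S$ here. The lower bound matters for ensuring a positive limiting mass in Theorem~\ref{thm:MPMSA}, but plays no role in fixing $S$ in this lemma.
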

\begin{proof}
Since $\mu_b \ge n''q_-$ for all $\mu_b \in \Sigma''$, given $E \in I_n$ and $\mu_b \in \Sigma''$ we have
\begin{equation}
E-\mu_b \le E- n''q_- \le (nq_-+\varepsilon_0)-n''q_- = n'q_- + \varepsilon_0.      \label{eq:in}
\end{equation}
As $\varepsilon_0<1$, $E- \mu_b \le E_+=\max_n(nq_-+1)$. By hypothesis, $\Lambda_{L_k}^{(n')}(u_{\mathcal{J}})$ is $(E-\mu_b)$-CNR and $(E-\mu_b,m_{L_{k-1}},1)$-good for all $\mu_b \in \Sigma''$, hence $\Lambda_{L_k}^{(n')}(u_{\mathcal{J}})$ is $(E-\mu_b,m_{L_k}+L_k^{\beta-1})$-NS by Lemma~\ref{lem:DKn}. Similarly, $\Lambda_{L_k}^{(n'')}(u_{\mathcal{J}^c})$ is $(E-\lambda_a,m_{L_k}+L_k^{\beta-1})\text{-NS}$ for any $\lambda_a \in \Sigma'$.

Now let $\mathbf{v} \in \mathbf{B}_{L_k}^{\out}(\mathbf{u})$. Then $|u_{\mathcal{J}^c} - v_{\mathcal{J}^c} | \ge L_k-6$ or $|u_{\mathcal{J}} - v_{\mathcal{J}} | \ge L_k-6$. In the first case, we take a large $S>2m_{L_0} \ge 2m_{L_k}$ and apply the first bound of \textsf{(GRI.3)} to obtain
\[
\| G_{\Lambda_{L_k}^{(n)}(\mathbf{u})}(\mathbf{u},\mathbf{v};E) \| \le cL_k^{n'd}e^{-(m_{L_k}+L_k^{\beta-1})L_k} + c'L_k^{n'd} e^{-(L_k-6) S} \le e^{-m_{L_k}L_k} 
\]
since $L_0$ is large. The second case is similar, using the second bound of \textsf{(GRI.3)}.
\end{proof}

\begin{lem}[cf. \cite{KN}, Lemma 3.18]       \label{lem:hnr}
Let $\Lambda_{L_k}^{(n)}(\mathbf{u})=\Lambda_{L_k}^{(n')}(u_{\mathcal{J}}) \times \Lambda_{L_k}^{(n'')}(u_{\mathcal{J}^c})$ be a PI cube. If $\Lambda_{L_k}^{(n)}(\mathbf{u})$ is not $E$-HNR, then
\begin{enumerate}[\rm a.]
\item either there exists a cube $\Lambda_{\ell}^{(n')} \subseteq \Lambda_{L_k}^{(n')}(u_{\mathcal{J}})$ with $\ell \in \N^{\ast}$, $L_k^{1/\alpha} \le \ell \le L_k$ such that for $\Lambda_{\LL}^{(n)} := \Lambda^{(n')}_{\ell} \times \Lambda^{(n'')}_{L_k}(u_{\mathcal{J}^c})$ we have $\dist(\sigma(H_{\Lambda_{\LL}}^{(n)}),E) < e^{-\ell^{\beta}}$,
\item or there exists a cube $\Lambda_{\ell}^{(n'')} \subseteq \Lambda_{L_k}^{(n'')}(u_{\mathcal{J}^c})$ with $\ell \in \N^{\ast}$, $L_k^{1/\alpha} \le \ell \le L_k$ such that for $\Lambda_{\LL}^{(n)} := \Lambda^{(n')}_{L_k}(u_{\mathcal{J}}) \times \Lambda^{(n'')}_{\ell}$ we have $\dist(\sigma(H_{\Lambda_{\LL}}^{(n)}),E) < e^{-\ell^{\beta}}$.
\end{enumerate}
\end{lem}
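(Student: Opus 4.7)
The plan is to unfold the definition of ``not $E$-HNR'' and reduce to one of the two alternatives by symmetry. By Definition~\ref{def:hnr}, if $\Lambda_{L_k}^{(n)}(\mathbf{u})$ is not $E$-HNR then either (i) there exists $\mu_b \in \Sigma''$ such that $\Lambda_{L_k}^{(n')}(u_{\mathcal{J}})$ is not $(E-\mu_b)$-CNR, or (ii) there exists $\lambda_a \in \Sigma'$ such that $\Lambda_{L_k}^{(n'')}(u_{\mathcal{J}^c})$ is not $(E-\lambda_a)$-CNR. The two cases are completely symmetric; case (i) will produce conclusion (a) and case (ii) will produce conclusion (b). So I will treat only case (i).

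In case (i), by the definition of CNR I can extract a sub-cube $\Lambda_{\ell}^{(n')} \subseteq \Lambda_{L_k}^{(n')}(u_{\mathcal{J}})$ with $\ell \in \N^{\ast}$, $L_k^{1/\alpha} \le \ell \le L_k$, which fails to be $(E-\mu_b)$-NR, i.e.\ $\dist(\sigma(H_{\Lambda_{\ell}}^{(n')}),E-\mu_b) < e^{-\ell^{\beta}}$. I then take $\Lambda_{\LL}^{(n)} := \Lambda_{\ell}^{(n')} \times \Lambda_{L_k}^{(n'')}(u_{\mathcal{J}^c})$ as the candidate rectangle and will show that the spectral gap is transferred to $H_{\Lambda_{\LL}}^{(n)}$.

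The key input is that $\Lambda_{\LL}^{(n)}$ inherits the decomposability of $\Lambda_{L_k}^{(n)}(\mathbf{u})$: since $\Lambda_{\LL}^{(n)}$ is contained (componentwise) in $\Lambda_{L_k}^{(n)}(\mathbf{u})$ and the latter is $\mathcal{J}$-decomposable, every $\mathbf{x}\in\Lambda_{\LL}^{(n)}$ satisfies $\dist(x_{\mathcal{J}},x_{\mathcal{J}^c}) > r_0$, so $U^{(n)}(\mathbf{x}) = U^{(n')}(x_{\mathcal{J}}) + U^{(n'')}(x_{\mathcal{J}^c})$. Exactly as in Remark~\ref{rem:dec}, the Hilbert space factors as a tensor product and the Hamiltonian decouples:
\[
H_{\Lambda_{\LL}}^{(n)} = H_{\Lambda_{\ell}}^{(n')} \otimes I + I \otimes H_{\Lambda_{L_k}(u_{\mathcal{J}^c})}^{(n'')}.
\]
Hence the spectrum of $H_{\Lambda_{\LL}}^{(n)}$ is $\{\lambda + \mu : \lambda \in \sigma(H_{\Lambda_{\ell}}^{(n')}),\, \mu \in \Sigma''\}$. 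Pick $\lambda \in \sigma(H_{\Lambda_{\ell}}^{(n')})$ with $|\lambda -(E-\mu_b)| < e^{-\ell^{\beta}}$; then $\lambda+\mu_b \in \sigma(H_{\Lambda_{\LL}}^{(n)})$ and $|(\lambda+\mu_b)-E| < e^{-\ell^{\beta}}$, giving conclusion (a).

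The only subtle point—and, in my view, the main thing to verify carefully—is that Remark~\ref{rem:dec} still applies to the rectangle $\Lambda_{\LL}^{(n)}$ rather than to a cube with equal side length on both factors. But that remark only uses (1) the product structure of $\Gamma^{(n)} \cap \Lambda_{\LL}^{(n)}$ as $(\Gamma^{(n')} \cap \Lambda_{\ell}^{(n')}) \times (\Gamma^{(n'')} \cap \Lambda_{L_k}^{(n'')}(u_{\mathcal{J}^c}))$, which is immediate, and (2) the decoupling of $U^{(n)}$, which I verified above. Both ingredients hold verbatim, so the tensor decomposition of $H_{\Lambda_{\LL}}^{(n)}$ is justified and the proof is complete.
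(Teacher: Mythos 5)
Your proof is correct and follows essentially the same route as the paper: negate the definition of $E$-HNR to extract a sub-cube $\Lambda_\ell^{(n')}$ (or $\Lambda_\ell^{(n'')}$) that fails to be NR at a shifted energy, observe that the interaction still decouples on the rectangle $\Lambda_{\LL}^{(n)}$ because it sits inside the $\mathcal{J}$-decomposable cube $\Lambda_{L_k}^{(n)}(\mathbf{u})$, and use the tensor-sum structure of $H_{\Lambda_{\LL}}^{(n)}$ to transfer the near-resonance from $\sigma(H_{\Lambda_\ell}^{(n')})$ to $\sigma(H_{\Lambda_{\LL}}^{(n)})$ via the sum $\eta+\mu$. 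Your explicit check that the decoupling and Remark~\ref{rem:dec} apply on the rectangle (not just on cubes) is exactly the subtle point the paper also addresses, and you have handled it correctly.
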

\begin{proof}
Suppose condition (i) of Definition~\ref{def:hnr} is not satisfied. Then there exist $\mu \in \Sigma''$ and $\Lambda_{\ell}^{(n')} \subseteq \Lambda_{L_k}^{(n')}(u_{\mathcal{J}})$, $L_k^{1/\alpha} \le \ell \le L_k$ such that $\dist(\sigma(H^{(n')}_{\Lambda_{\ell}}),E-\mu) <e^{-\ell^{\beta}}$. Thus, there exists $\eta \in \sigma(H^{(n')}_{\Lambda_{\ell}})$ such that $|E-\mu - \eta| < e^{-\ell^{\beta}}$.

But $\Lambda_{L_k}^{(n')}(u_{\mathcal{J}}) \times \Lambda_{L_k}^{(n'')}(u_{\mathcal{J}^c})$ is PI and $\Lambda_{\ell}^{(n')} \subseteq \Lambda_{L_k}^{(n')}(u_{\mathcal{J}})$, so the interaction $U^{(n)}$ also decouples on $\Lambda_{\LL}^{(n)} := \Lambda^{(n')}_{\ell} \times \Lambda^{(n'')}_{L_k}(u_{\mathcal{J}^c})$ and we get $H_{\Lambda_{\LL}}^{(n)} = H_{\Lambda_{\ell}}^{(n')} \otimes I + I \otimes H_{\Lambda_{L_k}^{(n'')}(u_{\mathcal{J}^c})}$. In particular, the eigenvalues of $H_{\Lambda_{\LL}}^{(n)}$ take the form $E_{a,b} = \eta_a + \mu_b$ for $\eta_a \in \sigma(H^{(n')}_{\Lambda_{\ell}})$ and $\mu_b \in \Sigma''$. We thus showed that $\dist(\sigma(H_{\Lambda_{\LL}}^{(n)}),E) \le |(\eta + \mu) - E| < e^{-\ell^{\beta}}$.

If instead (ii) of Definition~\ref{def:hnr} is not satisfied, we reason similarly and obtain b.
\end{proof}

\begin{lem}      \label{lem:PIn}
Let $\Lambda_{L_k}^{(n)}(\mathbf{u})$, $k \ge 1$ be a PI cube and suppose \emph{(\textsf{DS}}$\,:n',k-1,m_{L_{k-1}},I_{n'})$ holds for all $n'<n$. Then there exists $C_1 = C_1(n,d,q_-)$ such that
\[ 
\prob \{ \exists E \in I_n : \Lambda_{L_k}^{(n)}(\mathbf{u}) \text{ is } (E,m_{L_{k-1}})\text{-}T \} \le C_1 L_k^{(2n-1)d-\frac{2p_{n-1}(1+\theta)^{k-1}}{\alpha}}. 
\]
\end{lem}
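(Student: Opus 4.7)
The plan is as follows. By the symmetry $\mathcal{J}\leftrightarrow\mathcal{J}^c$, $n'\leftrightarrow n''$ built into Definition~\ref{def:tun}, it suffices to bound
\[
p:=\prob\bigl\{\exists\, E\in I_n,\; \exists\,\mu_b\in\Sigma'' : \Lambda_{L_k}^{(n')}(u_{\mathcal{J}})\text{ is }(E-\mu_b,m_{L_{k-1}},1)\text{-bad}\bigr\},
\]
since the failure of condition (ii) in Definition~\ref{def:tun} is handled identically. This event means that $\Lambda_{L_k}^{(n')}(u_{\mathcal{J}})$ contains two pairwise separable sub-cubes $\Lambda_{L_{k-1}}^{(n')}(\mathbf{x}_1)$ and $\Lambda_{L_{k-1}}^{(n')}(\mathbf{x}_2)$ that are both $(E-\mu_b,m_{L_{k-1}})$-S.

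The key input will be that a PI cube is $\mathcal{J}$-decomposable (Lemma~\ref{lem:PI}), so $\Pi_{\mathcal{J}}\Lambda_{L_k}^{(n)}(\mathbf{u})$ and $\Pi_{\mathcal{J}^c}\Lambda_{L_k}^{(n)}(\mathbf{u})$ are disjoint. With $J,J^c$ the corresponding edge-sets, $H^{(n')}_{\Lambda_{L_k}^{(n')}(u_{\mathcal{J}})}$ is $\omega^J$-measurable while the eigenvalues $\mu_b$ are $\omega^{J^c}$-measurable. I would condition on $\omega^{J^c}$ (rendering the $\mu_b$ deterministic) and then union-bound. The number of pairs of separable sub-cubes in $\Lambda_{L_k}^{(n')}(u_{\mathcal{J}})$ is at most $|\mathbf{B}_{L_k}^{(n')}|^2\le(2L_k)^{2n'd}$. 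For the range of $\mu_b$: $\mu_b\ge n''q_-$ by the finite-volume analogue of Theorem~\ref{thm:as}; conversely if $\mu_b>n''q_-+\tfrac12+\varepsilon_0$ then $E-\mu_b<n'q_--\tfrac12$ for every $E\in I_n$, so Theorem~\ref{thm:CT2} (Combes--Thomas) forces any $\Lambda_{L_{k-1}}^{(n')}$ sub-cube to be $(E-\mu_b,m_{L_{k-1}})$-NS for $L_0$ large (since $m_{L_{k-1}}\to 0$). Only $\mu_b\in[n''q_-,n''q_-+\varepsilon_0+\tfrac12]$ are ``relevant'', and Lemma~\ref{lem:WEYL} bounds their number by $C(2L_k)^{n''d}$. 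The same Combes--Thomas estimate shows that for such relevant $\mu_b$, the existence of an $E\in I_n$ producing two $(E-\mu_b,m_{L_{k-1}})$-S sub-cubes in fact forces $E-\mu_b\in I_{n'}$; so, conditionally on $\omega^{J^c}$, the induction hypothesis (\textsf{DS}$\,:n',k-1,m_{L_{k-1}},I_{n'}$) controls the latter event by $L_{k-1}^{-2p_{n'}(1+\theta)^{k-1}}$, a bound preserved by integration in $\omega^{J^c}$.

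Collecting the three bounds yields
\[
p\le C\,L_k^{(2n'+n'')d}\,L_{k-1}^{-2p_{n'}(1+\theta)^{k-1}}\le C\,L_k^{(2n-1)d}\,L_{k-1}^{-2p_{n-1}(1+\theta)^{k-1}},
\]
using $n'+n''=n$, $n'\le n-1$, and the monotonicity $p_{n'}\ge p_{n-1}$. To rewrite the $L_{k-1}$-factor in terms of $L_k$, one uses $L_{k-1}^\alpha\le L_k\le L_{k-1}^\alpha+1$ to get $L_{k-1}^{-A}\le L_k^{-A/\alpha}\exp\bigl(A/(\alpha L_{k-1}^\alpha)\bigr)$ with $A=2p_{n-1}(1+\theta)^{k-1}$; since $1+\theta<\alpha=3/2$ and $L_{k-1}\gtrsim L_0^{\alpha^{k-1}}$, the exponential factor is bounded uniformly in $k\ge 1$ for $L_0$ large, and can be absorbed into $C_1=C_1(n,d,q_-)$. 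The main obstacle will be the first step: disentangling the random shift $\mu_b(\omega)$ from the $\omega$-dependence of the Green's function at the shifted energy $E-\mu_b$. The PI/decomposability structure, and the ensuing independence of $\omega^J$ from $\omega^{J^c}$, is precisely what enables this; everything else is a careful Weyl-type count of $\mu_b$'s together with a Combes--Thomas cutoff for the irrelevant ones.
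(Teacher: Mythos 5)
Your proposal is correct and follows essentially the same route as the paper: use the PI/decomposability structure to reduce to $n'$- and $n''$-particle problems, invoke the Weyl law to limit the relevant $\mu_b$ to $\lesssim|\Lambda^{(n'')}|$ values, dispose of the out-of-range shifts $E-\mu_b<n'q_--\tfrac12$ by Combes--Thomas (exploiting that $m_{L_{k-1}}\le m_{L_0}$ is small), and bound the in-range case by \textsf{(DS$\,:n',k-1,m_{L_{k-1}},I_{n'}$)}, finally converting $L_{k-1}$ to $L_k$ with a uniformly bounded correction. The only divergence is your explicit conditioning on $\omega^{J^c}$, which is valid but unnecessary: since the DS estimate already quantifies over all $E'\in I_{n'}$, the event $\{\exists E-\mu_b\in I_{n'}:\text{both subcubes S}\}$ is contained in the DS event regardless of the random shift, so no conditioning (and in fact no union over $b$) is needed.
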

\begin{proof}
Let $\Lambda_{L_k}^{(n)}(\mathbf{u}) = \Lambda_{L_k}^{(n')}(u_{\mathcal{J}}) \times \Lambda_{L_k}^{(n'')}(u_{\mathcal{J}^c})$ be PI and $\Sigma'' := \sigma\big(H_{\Lambda_{L_k}(u_{\mathcal{J}^c})}^{(n'')}\big)$. By (\ref{eq:in}), given $E \in I_n$ and $\mu_b \in \Sigma''$, either $E- \mu_b \in [n'q_- - \frac{1}{2},n'q_-+\varepsilon_0] = I_{n'}$, or $E - \mu_b < n'q_- - \frac{1}{2}$. Suppose $E - \mu_b < n'q_- - \frac{1}{2}$, let $\Lambda_{L_{k-1}}^{(n')}(v_1),\Lambda_{L_{k-1}}^{(n')}(v_2)  \subset \Lambda_{L_k}^{(n')}(u_{\mathcal{J}})$ be two separable cubes and let $\eta_b := n'q_- - (E-\mu_b) > \frac{1}{2}$. Then by Theorem~\ref{thm:CT2} given $y_i \in B_{L_{k-1}}^{\out}(v_i)$,
\begin{align*}
\| G_{\Lambda_{L_{k-1}}^{(n')}(v_i)}(v_i,y_i;E-\mu_b) \| & \le \sqrt{\frac{\pi}{2}} \Big(\frac{\sqrt{L_{k-1}}}{\eta_b^{3/4}} + \frac{3}{8 \sqrt{L_{k-1}-8} \eta_b^{5/4}} \Big) e^{-(L_{k-1}-8) \sqrt{\eta_b}} \\
& \le e^{-m_{L_{k-1}} L_{k-1}} 
\end{align*}
because $\sqrt{\eta_b} > \frac{1}{\sqrt{2}} \ge 2 m_{L_{k-1}}$ (in fact $\frac{1}{\sqrt{2}} \gg \frac{c}{L_0^{(1-\beta)/2}} = 2m_{L_0} \ge 2m_{L_{k-1}}$ for $L_0$ large enough). Thus both cubes are $(E,m_{L_{k-1}})\text{-NS}$ in this case. On the other hand,
\[
\prob \{ \exists E-\mu_b \in I_{n'} : \Lambda_{L_{k-1}}^{(n')}(v_1) \text{ and } \Lambda_{L_{k-1}}^{(n')}(v_2) \text{ are } (E-\mu_b,m_{L_{k-1}})\text{-S} \} \le L_{k-1}^{-2p_{n'}(1+\theta)^{k-1}}
\]
by (\textsf{DS}$\,:n',k-1,m_{L_{k-1}},I_{n'}$). But by Lemma~\ref{lem:WEYL} there exists $C >0$ such that 
\[ 
b > C \cdot |\Lambda^{(n'')}_{L_k}(u_{\mathcal{J}^c}) | \implies \mu_b > E-n'q_-+ \textstyle{\frac{1}{2}} \implies E - \mu_b < n'q_- - \textstyle{ \frac{1}{2} }. 
\]
As the number of pairs of cubes in $\Lambda_{L_k}^{(n')}(u_{\mathcal{J}})$ is bounded by $| \mathbf{B}_{L_k}^{(n')}(u_{\mathcal{J}})|^2$, we finally obtain
\begin{align*}
& \prob \{ \exists E \in I_n, \exists \mu_b \in \Sigma'' \text{ such that } \Lambda_{L_k}^{(n')}(u_{\mathcal{J}}) \text{ is }(E-\mu_b,m_{L_{k-1}},1)\text{-bad} \} \\
& \quad \le | \mathbf{B}_{L_k}^{(n')}(u_{\mathcal{J}})|^2 \sum_{b \le C| \Lambda^{(n'')}|} L_{k-1}^{-2p_{n'}(1+\theta)^{k-1}}  \\
& \quad \le \tilde{C} L_k^{2n'd+n''d} L_k^{\frac{-2p_{n'}(1+\theta)^{k-1}}{\alpha}} = \tilde{C} L_k^{(n+n')d-\frac{2p_{n'}(1+\theta)^{k-1}}{\alpha}} \le \frac{C_1}{2} L_k^{(2n-1)d-\frac{2p_{n-1}(1+\theta)^{k-1}}{\alpha}}
\end{align*}
because $p_{n'} \ge p_{n-1}$ for $n'=1,\ldots,n-1$. The same reasoning with $\Lambda_{L_k}^{(n'')}(u_{\mathcal{J}^c})$ and the spectrum $\Sigma'$ of $H_{\Lambda_{L_k}(u_{\mathcal{J}})}^{(n')}$ yields the theorem.
\end{proof}

From now on we declare that
\[
(\textsf{DS}:n',-1,m_{L_{-1}},I_{n'}) = \text{no assumption.}
\]

\begin{thm}      \label{thm:PIn}
Let $k \ge 0$. Suppose that \emph{(\textsf{DS}}$\,:n',k-1,m_{L_{k-1}},I_{n'})$ holds for all $n'<n$ and let $\Lambda_{L_k}^{(n)}(\mathbf{u})$ and $\Lambda_{L_k}^{(n)}(\mathbf{v})$ be separable PI cubes. Then there exists $C_2=C_2(n,d,q_-)$ such that
\[ 
\prob \{ \exists E \in I_n : \Lambda_{L_k}^{(n)}(\mathbf{u}) \text{ and } \Lambda_{L_k}^{(n)}(\mathbf{v}) \text{ are } (E,m_{L_k})\text{-}S \} \le C_2 L_k^{(2n-1)d-\frac{2p_{n-1}(1+\theta)^{k-1}}{\alpha}} \, .
\]
\end{thm}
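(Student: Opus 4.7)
The plan follows the single-particle strategy of Theorem~\ref{thm:1-part} but replaces the use of independence by the decomposition/resolvent machinery developed for PI cubes. The first step is to combine Lemma~\ref{lem:NT} with a three-way union bound on the bad event. Indeed, by the contrapositive of Lemma~\ref{lem:NT}, if a PI cube $\Lambda_{L_k}^{(n)}(\mathbf{x})$ is $(E,m_{L_k})$-S at some $E \in I_n$, then it must be either not $E$-HNR or $(E,m_{L_{k-1}})$-T. Applying this to each of the two cubes at a common $E$ gives the inclusion
\[
\{\exists E \in I_n:\Lambda_{L_k}^{(n)}(\mathbf{u})\text{ and }\Lambda_{L_k}^{(n)}(\mathbf{v})\text{ are }(E,m_{L_k})\text{-S}\}\subseteq T_{\mathbf u}\cup T_{\mathbf v}\cup B,
\]
where $T_{\mathbf x}=\{\exists E\in I_n:\Lambda_{L_k}^{(n)}(\mathbf x)\text{ is }(E,m_{L_{k-1}})\text{-T}\}$ and $B=\{\exists E\in I_n:\text{neither cube is }E\text{-HNR}\}$.

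The tunneling terms are bounded directly by Lemma~\ref{lem:PIn}, whose hypothesis is exactly the inductive assumption of the theorem, yielding $\prob(T_{\mathbf u})+\prob(T_{\mathbf v})\le 2C_1 L_k^{(2n-1)d-2p_{n-1}(1+\theta)^{k-1}/\alpha}$; this already matches the form of the conclusion. For $B$, Lemma~\ref{lem:hnr} provides, on this event and for each of the two cubes, a ``product'' subrectangle $\Lambda_{\LL_i}^{(n)}\subset\Lambda_{L_k}^{(n)}(\cdot)$ and a scale $\ell_i\in[L_k^{1/\alpha},L_k]\cap\N^{\ast}$ with $\dist(\sigma(H_{\Lambda_{\LL_i}}^{(n)}),E)<e^{-\ell_i^{\beta}}$; in particular the two spectra lie within $2e^{-L_k^{\beta/\alpha}}$ of each other on $I_n$ enlarged by $e^{-L_k^{\beta/\alpha}}$. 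The two-volume Wegner estimate Theorem~\ref{thm:Wegner2} applies once the subrectangles are verified to be pre-separable (see below); a union bound over the $O(L_k^C)$ admissible centers and scales, combined with the Hölder continuity of $\mu$ via $s(\mu,4e^{-L_k^{\beta/\alpha}})\le c_\mu 4^b e^{-bL_k^{\beta/\alpha}}$, then produces $\prob(B)\le e^{-cL_k^{\beta/\alpha}}$ for $L_0$ large. By the same absorption trick used at the end of Lemma~\ref{lem:regu}, this is smaller than any polynomial in $L_k^{-1}$ and is swallowed into the constant, giving the theorem with $C_2:=2C_1+1$.

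The only non-routine step, and the main geometric point of the proof, is verifying that pre-separability of the parent cubes is inherited by the subrectangles produced by Lemma~\ref{lem:hnr}. This is the key lemma that makes the Wegner bound applicable. Fix a partition $\mathcal J_0$ witnessing that $\Lambda_{L_k}^{(n)}(\mathbf u)$ is pre-separable from $\Lambda_{L_k}^{(n)}(\mathbf v)$. Because each $\Lambda_{\LL_i}^{(n)}$ arises by shrinking only one factor of the PI decomposition of its parent, while leaving the other factor unchanged, we have $\Pi_j\Lambda_{\LL_1}^{(n)}\subseteq\Pi_j\Lambda_{L_k}^{(n)}(\mathbf u)$ and $\Pi_j\Lambda_{\LL_2}^{(n)}\subseteq\Pi_j\Lambda_{L_k}^{(n)}(\mathbf v)$ for every $j$. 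The same partition $\mathcal J_0$ therefore witnesses $\mathcal J_0$-pre-separability of $\Lambda_{\LL_1}^{(n)}$ from $\Lambda_{\LL_2}^{(n)}$, and the three-term union bound closes the proof.
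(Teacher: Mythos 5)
Your proof follows the paper's approach essentially verbatim: the tunneling/non-HNR dichotomy via the contrapositive of Lemma~\ref{lem:NT}, Lemma~\ref{lem:PIn} for the tunneling term, and Lemma~\ref{lem:hnr} plus the two-volume Wegner estimate Theorem~\ref{thm:Wegner2} for the joint non-HNR term. Your explicit verification that pre-separability passes to the subrectangles produced by Lemma~\ref{lem:hnr} is a correct expansion of a step the paper merely asserts (the coordinate-wise monotonicity $\Pi_j \Lambda_{\LL_i} \subseteq \Pi_j \Lambda_{L_k}^{(n)}(\cdot)$ indeed transfers any witnessing partition $\mathcal J_0$ from the parents to the subrectangles), and your union-bound decomposition $T_{\mathbf u}\cup T_{\mathbf v}\cup B$ is slightly tighter than the paper's factor of three, which is immaterial since it is absorbed into $C_2$.

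The one gap is the base case $k=0$. Both Lemma~\ref{lem:NT} and Lemma~\ref{lem:PIn} are stated for $k \ge 1$ (they involve $m_{L_{k-1}}$ and the inductive hypothesis at step $k-1$), so the three-way union bound you invoke does not apply when $k=0$, and the paper's convention makes $(\textsf{DS}:n',-1,\ldots)$ vacuous. The paper dispatches $k=0$ separately by recalling that $L_0$ was chosen in Corollary~\ref{cor:ILS} so that $\prob\{\exists E \in I_n:\text{both S}\}\le L_0^{-2p_1}$, which is already $\le C_2 L_0^{(2n-1)d - 2p_{n-1}/(\alpha(1+\theta))}$ since $p_1 \ge p_{n-1}$ and $\alpha(1+\theta)>1$. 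You should insert this one-line base case so your proof covers all $k\ge0$ as the statement requires.
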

\begin{proof}
If $k=0$, recall that $L_0$ is chosen so that for $m_{L_0} = \frac{L_0^{(\beta-1)/2}}{3}$,
\[ 
\prob \{ \exists E \in I_n : \Lambda_{L_0}^{(n)}(\mathbf{u}) \text{ and } \Lambda_{L_0}^{(n)}(\mathbf{v}) \text{ are } (E,m_{L_0})\text{-}S \} \le L_0^{-2p_1} \le C_2 L_0^{(2n-1)d-\frac{2p_{n-1}}{\alpha(1+\theta)}} \, .
\]
So suppose $k \ge 1$. By Lemma~\ref{lem:NT},
\begin{align*} 
& \prob \{ \exists E \in I_n : \Lambda_{L_k}^{(n)}(\mathbf{u}) \text{ and } \Lambda_{L_k}^{(n)}(\mathbf{v}) \text{ are } (E,m_{L_k})\text{-}S \} \\
& \quad \le 3 \max_{\mathbf{x}=\mathbf{u},\mathbf{v}} \prob \{\exists E\in I_n:\Lambda_{L_k}^{(n)}(\mathbf{x}) \text{ is } (E,m_{L_{k-1}})\text{-T}\} \\
& \qquad + \prob \{ \exists E \in I_n : \Lambda_{L_k}^{(n)}(\mathbf{u}) \text{ and } \Lambda_{L_k}^{(n)}(\mathbf{v}) \text{ are not } E\text{-HNR} \} \, .
\end{align*}
For $\mathbf{x}=\mathbf{u},\mathbf{v}$, taking $C_2:=4C_1$, we have by Lemma~\ref{lem:PIn}
\[ 
\prob \{\exists E\in I_n:\Lambda_{L_k}^{(n)}(\mathbf{x}) \text{ is } (E,m_{L_{k-1}})\text{-T}\} \le \frac{C_2}{4}L_k^{(2n-1)d-\frac{2p_{n-1}(1+\theta)^{k-1}}{\alpha}}. 
\]
Since both cubes are PI, they are decomposable, say $\Lambda_{L_k}^{(n)}(\mathbf{u}) = \Lambda_{L_k}^{(n')}(u_{\mathcal{J}}) \times \Lambda_{L_k}^{(n'')}(u_{\mathcal{J}^c})$ and $\Lambda_{L_k}^{(n)}(\mathbf{v}) = \Lambda_{L_k}^{(r')}(v_{\mathcal{I}}) \times \Lambda_{L_k}^{(r'')}(v_{\mathcal{I}^c})$, where $n'+n''=r'+r''=n$. If $\Lambda_{\ell_1}^{(n')} \subseteq \Lambda_{L_k}^{(n')}(u_{\mathcal{J}})$ and $\Lambda_{\ell_2}^{(r')} \subseteq \Lambda_{L_k}^{(r')}(v_{\mathcal{I}})$, where $L_k^{1/\alpha} \le \ell_1,\ell_2 \le L_k$, then the rectangles $\Lambda_{\LL}^{(n)} := \Lambda_{\ell_1}^{(n')} \times \Lambda_{L_k}^{(n'')}(u_{\mathcal{J}^c})$ and $\Lambda_{\KK}^{(n)} := \Lambda_{\ell_2}^{(r')} \times \Lambda_{L_k}^{(r'')}(v_{\mathcal{I}^c})$ are pre-separable. Let $J_n = [nq_--\frac{1}{2}-\varepsilon,nq_-+\varepsilon_0+\varepsilon]$ be an $\varepsilon$-enlargement of $I_n$. Then by Theorem~\ref{thm:Wegner2}, we may find $C=C(n,d,q_-)$ such that
\begin{align*}
& \prob \{ \exists E \in I_n : \dist(\sigma(H_{\Lambda_{\LL}}^{(n)}),E) < \varepsilon \text{ and } \dist(\sigma(H_{\Lambda_{\KK}}^{(n)}),E) < \varepsilon \} \\
& \quad \le \prob \{ \dist(\sigma_{J_n}(H_{\Lambda_{\LL}}^{(n)}), \sigma_{J_n}(H_{\Lambda_{\KK}}^{(n)})) < 2\varepsilon \} \le C (2L_k)^{2nd+nd} s(\mu,4 \varepsilon) \, .
\end{align*}
Reasoning similarly for $\Lambda_{\ell_3}^{(n'')} \subseteq \Lambda_{L_k}^{(n'')}(u_{\mathcal{J}^c})$ and $\Lambda_{\ell_4}^{(r'')} \subseteq \Lambda_{L_k}^{(r'')}(v_{\mathcal{I}^c})$, using Lemma~\ref{lem:hnr}, bounding the number of cubes in $\Lambda_{L_k}^{(s)}$ by $|\mathbf{B}_{L_k}^{(s)}| \le (2L_k)^{nd}$ for $s=n',n'',r',r''$, and the number of $\ell \in \N^{\ast}$ satisfying $L_k^{1/\alpha} \le \ell \le L_k$ by $L_k$, we get for $\varepsilon := \max_j e^{-\ell_j^{\beta}} \le e^{-L_k^{\beta/\alpha}}$,
\[ 
\prob \{ \exists E \in I_n : \Lambda_{L_k}^{(n)}(\mathbf{u}) \text{ and } \Lambda_{L_k}^{(n)}(\mathbf{v}) \text{ are not } E\text{-HNR} \} \le 4 C (2L_k)^{4nd+d+2} s(\mu, 4e^{-L_k^{\beta/\alpha}}) \, ,
\]
where $4C$ appear because we apply the above argument 4 times, since Lemma~\ref{lem:hnr} provides 2 cases for $\Lambda_{L_k}^{(n)}(\mathbf{u})$ and 2 cases for $\Lambda_{L_k}^{(n)}(\mathbf{v})$. As estimated in Lemma~\ref{lem:regu},
\[
4 C (2L_k)^{4nd+d+2} s(\mu, 4e^{-L_k^{\beta/\alpha}}) \le L_k^{-2p_1(1+\theta)^k} \le \frac{C_2}{4} L_k^{(2n-1)d-\frac{2p_{n-1}(1+\theta)^{k-1}}{\alpha}}. 
\] 
We thus obtain the theorem for $k \ge 1$. 
\end{proof}

\subsection{General pairs of cubes}
We assume through this subsection that $2 \le n \le N$.
 
\begin{lem}      \label{lem:M}
Let $k \ge 0$. Suppose \emph{(\textsf{DS}}$\,:n,k,m_{L_k},I_n)$ and \emph{(\textsf{DS}}$\,:n',k-1,m_{L_{k-1}},I_{n'})$ hold for all $n'<n$. Then for any cube $\Lambda_{L_{k+1}}^{(n)}(\mathbf{z})$ and $J \in 2\N^{\ast}$,
\[ 
\prob \{ \exists E \in I_n : \Lambda_{L_{k+1}}^{(n)}(\mathbf{z}) \text{ is } (E,m_{L_k},J)\text{-bad } \} \le \frac{1}{8}(L_{k+1}^{-2p_n(1+\theta)^{k+1}}+L_{k+1}^{-Jp_n(1+\theta)^{k+1}/3}) \, .
\]
\end{lem}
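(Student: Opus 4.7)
The plan is to execute the decomposition strategy described in Section~\ref{sec:stratn}. If $\Lambda_{L_{k+1}}^{(n)}(\mathbf{z})$ is $(E,m_{L_k},J)$-bad, then it contains at least $J+1 \ge 3$ pairwise separable $(E,m_{L_k})$-S sub-cubes $\Lambda_{L_k}^{(n)}(\mathbf{u}_i)$. Each such sub-cube is either PI or FI, so either (A) at least two of them are PI, or (B) at least $J$ of them are FI. The two terms in the target bound will correspond to these two cases.

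For Case (A), I would fix any pair of separable PI bad sub-cubes, apply Theorem~\ref{thm:PIn}, and sum over the at most $(2L_{k+1})^{2nd}$ possible pairs. Using $L_{k+1} \le 2L_k^{\alpha}$ (so $L_k \ge (L_{k+1}/2)^{1/\alpha}$), the total bound is essentially
\[
L_{k+1}^{2nd + \frac{(2n-1)d}{\alpha} - \frac{2p_{n-1}(1+\theta)^{k-1}}{\alpha^2}}.
\]
To match this against $\tfrac{1}{16}L_{k+1}^{-2p_n(1+\theta)^{k+1}}$, I would use the identity
\[
\frac{2p_{n-1}(1+\theta)^{k-1}}{\alpha^2} = 2p_n(1+\theta)^k + \Big(\frac{(2n-1)d}{\alpha} + 2nd + 2\Big)(1+\theta)^k,
\]
obtained by rearranging $p_n = \frac{p_{n-1}}{\alpha^2(1+\theta)} - \frac{(2n-1)d}{2\alpha} - nd - 1$, together with $2p_n\theta \le 1$, which holds because $p_n \le p_1$ and $\theta = 1/(2p_1)$.

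For Case (B), I would pick any $J$ of the FI bad sub-cubes; since they are pairwise separable, Lemma~\ref{lem:FI} ensures they are pairwise completely separated, so the projections $\Pi \Lambda_{L_k}^{(n)}(\mathbf{u}_i)$ onto $\R^d$ are pairwise disjoint. Consequently the Hamiltonians $H_{\Lambda_{L_k}(\mathbf{u}_i)}^{(n)}$ depend on mutually disjoint subsets of the random variables $(\omega_e)$, so after grouping the $J$ cubes into $J/2$ disjoint pairs the pair-events ``both members are $(E,m_{L_k})$-S for some common $E\in I_n$'' are independent. Applying (\textsf{DS}$\,:n,k,m_{L_k},I_n$) to each pair, multiplying, and summing over the at most $(2L_{k+1})^{ndJ}$ configurations of $J$-tuples yields a bound $\le \tfrac{1}{16}L_{k+1}^{-Jp_n(1+\theta)^{k+1}/3}$ once the polynomial prefactor is absorbed; this works because $1/\alpha = 2/3$ converts $L_k^{-Jp_n(1+\theta)^k}$ into roughly $L_{k+1}^{-2Jp_n(1+\theta)^k/3}$, which exceeds the target exponent $Jp_n(1+\theta)^{k+1}/3$ by the slack $Jp_n(1+\theta)^k(1-\theta)/3$ thanks to $1+\theta < 2$.

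The main obstacle is the bookkeeping in Case (A): the exponent coming from Theorem~\ref{thm:PIn} pays a factor $1/\alpha^2$ because that case effectively descends two scales (from $L_{k+1}$ down to $L_{k-1}$ through the PI decoupling), whereas the target is formulated one scale up at $L_{k+1}$. The recursive definition of $p_n$ and the choice $\theta = 1/(2p_1)$ are tuned precisely so that this two-scale drop, combined with the target $-2p_n(1+\theta)^{k+1}$, leaves exactly enough slack to absorb the polynomial prefactors from pair-counting and from the scale comparison $L_k \sim L_{k+1}^{1/\alpha}$. Everything else is routine power-of-$L$ accounting that succeeds for $L_0$ chosen sufficiently large.
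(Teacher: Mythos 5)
The proposal is essentially correct and follows the same route as the paper: the dichotomy (at least two separable PI bad sub-cubes vs.\ at least $J$ separable FI bad sub-cubes), the application of Theorem~\ref{thm:PIn} for the PI case and of Lemma~\ref{lem:FI} plus independence of completely separated cubes for the FI case, the recursive identity for $p_n$, and the inequality $1-2(1+\theta)^k-\theta\le-(1+\theta)^{k+1}$. The minor differences (using $\tfrac{1}{16}$ instead of $\tfrac{1}{8}$ for each of the two terms, describing the final power-counting more loosely, and citing $2p_n\theta\le 1$ where the paper uses $2\theta p_n<2$) do not affect correctness.
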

\begin{proof}
Put $L=L_{k+1}$, $l=L_k$. If $\Lambda_L^{(n)}(\mathbf{z})$ is $(E,m_l,J)$-bad, then it contains at least $J+1$ pairwise separable cubes which are $(E,m_l)$-S. Hence, either it contains $2$ separable $(E,m_l)$-S PI cubes, or it contains at least $J$ separable $(E,m_l)$-S FI cubes. By Theorem~\ref{thm:PIn},
\begin{align*}
& \prob \{ \exists E \in I_n : \Lambda_L^{(n)}(\mathbf{z}) \text{ contains } 2 \text{ separable } (E,m_l)\text{-S PI cubes}  \} \\
& \quad \le C_2(2L)^{2nd} l^{(2n-1)d-\frac{2p_{n-1}(1+\theta)^{k-1}}{\alpha}} \le c L^{2nd+\frac{(2n-1)d}{\alpha}-\frac{2p_{n-1}(1+\theta)^{k-1}}{\alpha^2}} \, ,
\end{align*}
where we bounded the number of pairs of cubes in $\Lambda_L^{(n)}$ by $| \mathbf{B}_L^{(n)} |^2 \le (2L)^{2nd}$. Now
\begin{align*}
2p_n(1+\theta)^{k+1} & = (2p_n+2 \theta p_n)(1+\theta)^k \\
& < (2p_n + 2)(1+\theta)^k \\
& = \Big(\frac{2p_{n-1}}{\alpha^2(1+\theta)} - \frac{(2n-1)d}{\alpha} - 2nd \Big) (1+\theta)^k \\
& \le \frac{2p_{n-1}(1+\theta)^{k-1}}{\alpha^2} - \frac{(2n-1)d}{\alpha} - 2nd \, .
\end{align*}
Hence,
\[
\prob\{ \exists E \in I_n : \Lambda_L^{(n)}(\mathbf{z}) \text{ contains } 2 \text{ separable } (E,m_l)\text{-S PI cubes}\} \le \frac{1}{8} L^{-2p_n(1+\theta)^{k+1}} \, .
\]

Next, by Lemma~\ref{lem:FI}, pairs of separable FI cubes are completely separated, so the corresponding Hamiltonians $H_{\Lambda_l}^{(n)}$ are independent. Thus, bounding again the number of pairs of cubes in $\Lambda_L^{(n)}$ by $(2L)^{2nd}$, we get by (\textsf{DS}$\,:n,k,m_l,I_n)$,
\begin{align*}
& \prob \{ \exists E \in I_n : \Lambda_L^{(n)}(\mathbf{z}) \text{ contains at least } J \text{ separable } (E,m_l)\text{-S FI cubes} \} \\
& \quad \le \prob \{ \exists E \in I_n : \Lambda_L^{(n)}(\mathbf{z}) \text{ contains at least } 2 \text{ separable } (E,m_l)\text{-S FI cubes} \}^{J/2} \\
& \quad \le \big( (2L)^{2nd} l^{-2p_n(1+\theta)^k} \big)^{J/2} \le cL^{(nd - \frac{p_n(1+\theta)^k}{\alpha})J} \le cL^{(\frac{p_n - 1}{3} - \frac{2p_n(1+\theta)^k}{3})J}
\end{align*}
because $\alpha = 3/2$ and $nd \le \frac{p_N-1}{3} \le \frac{p_n-1}{3}$. Moreover,
\[
cL^{\frac{Jp_n}{3}(1-2(1+\theta)^k-\frac{1}{p_n})} \le \frac{1}{8} L^{\frac{Jp_n}{3}(1-2(1+\theta)^k-\theta)} \, .
\]
We thus showed that
\[
\prob \{ \exists E \in I_n : \Lambda_L^{(n)}(\mathbf{z}) \text{ is } (E,m_l,J)\text{-bad} \} \le \frac{1}{8} L^{-2p_n(1+\theta)^{k+1}} + \frac{1}{8} L^{\frac{Jp_n}{3}(1-2(1+\theta)^k-\theta)} \, ,
\]
which completes the proof by (\ref{eq:ms1}).
\end{proof}

\begin{thm}      \label{thm:FIn}
Let $k \ge 0$. Then the properties \emph{(\textsf{DS}}$\,:n',k-1,m_{L_{k-1}},I_{n'})$ for $n'<n$ and \emph{(\textsf{DS}}$\,:n,k,m_{L_k},I_n)$ imply \emph{(\textsf{DS}}$\,:n,k+1,m_{L_{k+1}},I_n)$.
\end{thm}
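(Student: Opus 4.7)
The plan is to imitate the derivation of \textsf{(DS}$\,:1,k+1,\ldots)$ from \textsf{(DS}$\,:1,k,\ldots)$ in Theorem~\ref{thm:1-part}, substituting the multi-particle combinatorial estimate of Lemma~\ref{lem:M} for the direct product-of-probabilities argument available only for $n=1$. All the multi-particle machinery (treatment of separable PI pairs via Theorem~\ref{thm:PIn} and of separable FI pairs via complete separation, Lemma~\ref{lem:FI}) has been absorbed into Lemma~\ref{lem:M}, so this final step is essentially an assembly.

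Fix a pair of separable cubes $\Lambda_{L_{k+1}}^{(n)}(\mathbf{u})$ and $\Lambda_{L_{k+1}}^{(n)}(\mathbf{v})$ and write $L=L_{k+1}$, $l=L_k$. The first step is to apply Lemma~\ref{lem:DKn} with $J=6$ and $E_+=\max_n(nq_-+1)$: the required lower bound $m_l>\tfrac{48NK(N)}{l^{1-\beta}}$ is precisely the one noted immediately after the definition of the sequence $(m_{L_k})$, the condition $l>l^{\ast}$ is ensured because $L_0>l^{\ast}$ and the $L_k$ are increasing, and with this choice of $J$ the number $m_L$ produced by the lemma agrees with the recursively defined $m_{L_{k+1}}$. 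Taking the contrapositive, any cube $\Lambda_L^{(n)}(\mathbf{x})$ that is $(E,m_{L_{k+1}})$-S (hence a fortiori $(E,m_{L_{k+1}}+L^{\beta-1})$-S) must be either $(E,m_l,6)$-bad or not $E$-CNR.

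Combining the contrapositive for $\mathbf{x}=\mathbf{u},\mathbf{v}$ and using the elementary set inclusion
\[
\{\text{both cubes are }(E,m_{L_{k+1}})\text{-S}\} \subseteq \bigcup_{\mathbf{x}=\mathbf{u},\mathbf{v}}\{\mathbf{x}\text{ is }(E,m_l,6)\text{-bad}\} \,\cup\, \{\text{both cubes are not }E\text{-CNR}\},
\]
I would bound
\[
\prob\{\exists E\in I_n:\text{both }(E,m_{L_{k+1}})\text{-S}\} \le 3\max_{\mathbf{x}=\mathbf{u},\mathbf{v}} \prob\{\exists E\in I_n:\Lambda_L^{(n)}(\mathbf{x})\text{ is }(E,m_l,6)\text{-bad}\} + \prob\{\exists E\in I_n:\text{both not }E\text{-CNR}\}.
\]
For the first term, Lemma~\ref{lem:M} with $J=6$ gives the bound $\tfrac{1}{8}(L^{-2p_n(1+\theta)^{k+1}}+L^{-2p_n(1+\theta)^{k+1}})=\tfrac14 L^{-2p_n(1+\theta)^{k+1}}$ (here $Jp_n/3=2p_n$ because $J=6$), so the first summand contributes at most $\tfrac34 L^{-2p_n(1+\theta)^{k+1}}$. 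For the second term, Lemma~\ref{lem:regu} supplies \textsf{(W2$\,:n,k+1,I_n$)}, which gives $\tfrac14 L^{-2p_1(1+\theta)^{k+1}} \le \tfrac14 L^{-2p_n(1+\theta)^{k+1}}$ since $p_n\le p_1$. Adding these two contributions gives exactly $L^{-2p_n(1+\theta)^{k+1}}$, which is \textsf{(DS}$\,:n,k+1,m_{L_{k+1}},I_n)$.

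There is no real obstacle at this stage, as the difficult combinatorics and probabilistic independence issues are handled upstream. The only subtleties I would highlight are, first, that $J=6$ must be even for Lemma~\ref{lem:M} to apply (both for pairing separable FI cubes into independent pairs and for the exponent-balancing identity $1-2(1+\theta)^k-\theta\le-(1+\theta)^{k+1}$ established in (\ref{eq:ms1})); and second, that the choice $J=6$ is the minimal even integer for which the resulting $m_L$ from Lemma~\ref{lem:DKn} coincides with the $m_{L_{k+1}}$ used to drive the induction, so that the scale of localization does not deteriorate beyond the prescribed sequence.
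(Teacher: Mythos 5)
Your proposal is correct and follows essentially the same route as the paper: apply Lemma~\ref{lem:DKn} with $J=6$ (verifying $m_{L_k}>48NK(N)/L_k^{1-\beta}$ and $L_k>l^{\ast}$, and noting that the $m_L$ produced coincides with the recursive $m_{L_{k+1}}$), use the contrapositive to write the two-singular-cubes event as a union of $(E,m_{L_k},6)$-bad events and a joint not-$E$-CNR event, then estimate the former by Lemma~\ref{lem:M} (which collapses to $\tfrac14 L_{k+1}^{-2p_n(1+\theta)^{k+1}}$ when $J=6$) and the latter by Lemma~\ref{lem:regu}. The arithmetic $3\cdot\tfrac14+\tfrac14=1$ closes the induction exactly as in the paper.
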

\begin{proof}
Put $L=L_{k+1}$, $l=L_k$ and let $\Lambda_L^{(n)}(\mathbf{u})$ and $\Lambda_L^{(n)}(\mathbf{v})$ be a pair of separable cubes. Since $\varepsilon_0<1$, any $E \in I_n$ satisfies $E\le E_+=\max_n(nq_-+1)$, so applying Lemma~\ref{lem:DKn} with $J=6$, noting that $l>l^{\ast}$ because $L_0 > l^{\ast}$, we have
\begin{align*}
& \prob\{ \exists E \in I_n : \Lambda_L^{(n)}(\mathbf{u}) \text{ and } \Lambda_L^{(n)}(\mathbf{v}) \text{ are } (E,m_L)\text{-S} \} \\
& \quad \le 3 \max_{\mathbf{z}=\mathbf{u},\mathbf{v}} \prob \{ \exists E \in I_n : \Lambda_L^{(n)}(\mathbf{z}) \text{ is } (E,m_l,6)\text{-bad} \} \\
& \qquad + \prob\{ \exists E \in I_n : \Lambda_L^{(n)}(\mathbf{u}) \text{ and } \Lambda_L^{(n)}(\mathbf{v}) \text{ are not } E\text{-CNR} \} \, ,
\end{align*}
since an ($E,m_L$)-S cube is a fortiori ($E,m_L + L^{\beta-1}$)-S. Now by Lemma~\ref{lem:M},
\[
\prob \{ \exists E \in I_n : \Lambda_L^{(n)}(\mathbf{z}) \text{ is } (E,m_l,6)\text{-bad} \} \le \frac{1}{4}L^{-2p_n(1+\theta)^{k+1}}
\]
for $\mathbf{z}=\mathbf{u},\mathbf{v}$. The assertion follows, using (\textsf{W2}$\,:n,k+1,I_n)$.
\end{proof}

\subsection{Conclusion}
\begin{thm}      \label{thm:MPMSA}
There exists $m>0$ such that \emph{(\textsf{DS}}$\,:N,k,m,I_N)$ holds for all $k \ge 0$.
\end{thm}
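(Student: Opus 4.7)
The plan is a double induction that yields $(\textsf{DS}\,{:}\,N,k,m_{L_k},I_N)$ for every $k \ge 0$, after which a uniform positive mass $m$ is extracted by analyzing the recursion defining $m_{L_k}$. The outer induction is on the particle number $n$ and the inner induction is on the scale index $k$.

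For $n=1$, Corollary~\ref{cor:ILS} supplies the initial-scale statement $(\textsf{DS}\,{:}\,1,0,m_{L_0},I_1)$, and Theorem~\ref{thm:1-part} iterates it to $(\textsf{DS}\,{:}\,1,k,m_{L_k},I_1)$ for every $k \ge 0$. Now fix $2 \le n \le N$ and assume inductively that $(\textsf{DS}\,{:}\,n',k,m_{L_k},I_{n'})$ holds for every $n' < n$ and every $k \ge 0$; the base case $k=0$ of the inner induction is again Corollary~\ref{cor:ILS}. For the inductive step, given $(\textsf{DS}\,{:}\,n,k,m_{L_k},I_n)$, Theorem~\ref{thm:FIn} produces $(\textsf{DS}\,{:}\,n,k+1,m_{L_{k+1}},I_n)$: its two hypotheses are precisely the inner inductive hypothesis and the outer hypothesis at level $k-1$, the case $k=0$ being covered by the vacuous statement $(\textsf{DS}\,{:}\,n',-1,m_{L_{-1}},I_{n'})$ declared just before Theorem~\ref{thm:PIn}. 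Taking $n = N$ closes the induction at the level of the scale-dependent masses $m_{L_k}$.

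It remains to extract a uniform mass. Since $(m_{L_k})$ is decreasing and any cube that is $(E,m)$-singular with $m \le m_{L_k}$ is a fortiori $(E,m_{L_k})$-singular, it suffices to show $m := \inf_k m_{L_k} > 0$. With $\alpha - 1 = 1/2$ and $\alpha(1-\beta) = 3/4$, unrolling the recursion $m_{L_{k+1}} = m_{L_k}(1 - 96NK(N)L_k^{-1/2}) - 3 L_k^{-3/4}$ yields
\[
m_{L_k} \ge m_{L_0}\prod_{j=0}^{k-1}\Bigl(1 - \frac{96NK(N)}{L_j^{1/2}}\Bigr) - 3\sum_{j=0}^{k-1}\frac{1}{L_j^{3/4}}.
\]
Because $L_{j+1} = \lfloor L_j^{3/2}\rfloor + 1$ forces $L_j \ge L_0^{(3/2)^j}$, both the infinite product and the infinite sum converge doubly-geometrically. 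Taking $L_0$ sufficiently large (beyond what Corollary~\ref{cor:ILS} and the inductive applications of Theorem~\ref{thm:FIn} already require), the product is bounded below by $1/2$ while the tail sum is bounded above by $m_{L_0}/4$, so $m_{L_k} \ge m_{L_0}/4 =: m > 0$ uniformly in $k$.

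No new estimate is needed; the proof is essentially a bookkeeping exercise threading Corollary~\ref{cor:ILS}, Theorem~\ref{thm:1-part}, and Theorem~\ref{thm:FIn}. The main conceptual subtlety---the multi-particle leap from disjoint to separable cubes, where Hamiltonians lose independence---was already absorbed into Theorem~\ref{thm:FIn} through the PI/FI dichotomy and the decomposition estimate \textsf{(GRI.3)}. The only genuine obstacle at this last step is therefore checking that the scales $L_k$ and the initial mass $m_{L_0} \sim L_0^{-1/4}$ have been tuned compatibly so that the limiting mass $m$ remains strictly positive; this is precisely what the choices $\alpha = 3/2$ and $\beta = 1/2$ achieve.
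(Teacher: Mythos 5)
Your proof is correct and follows essentially the same route as the paper: a double induction (outer on the particle number $n$, inner on the scale index $k$) anchored at Corollary~\ref{cor:ILS}, with the $n=1$ case handled by Theorem~\ref{thm:1-part} and the $n\geq 2$ cases by Theorem~\ref{thm:FIn}, followed by extraction of a uniform mass from the recursion for $m_{L_k}$. The only minor divergence is cosmetic: where you unroll the recursion into a product times $m_{L_0}$ minus a tail sum and bound each piece separately, the paper instead telescopes the increments $S=\sum_j (m_{L_j}-m_{L_{j+1}})$ and shows $S\leq m_{L_0}/2$; both reduce to the same doubly-geometric convergence of $\sum_j L_j^{-(\alpha-1)}$ and $\sum_j L_j^{-\alpha(1-\beta)}$ coming from $L_j\geq L_0^{\alpha^j}$, and both correctly use that $(E,m)$-singularity with $m\leq m_{L_k}$ implies $(E,m_{L_k})$-singularity.
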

\begin{proof}
By construction $L_0$ is a large integer such that (\textsf{DS}$\,:n,0,m_{L_0},I_n)$ holds for all $1 \le n \le N$, with $m_{L_0} = \frac{1}{3L_0^{(1-\beta)/2}} > \frac{48 NK(N)}{L_0^{1-\beta}}$. We prove the theorem by induction on $n$.

For $n=1$, we know that (\textsf{DS}$\,:1,k,m_{L_k},I_1)$ holds for all $k \ge 0$ by Theorem~\ref{thm:1-part} and induction on $k$.

Now fix $n \ge 2$ and suppose that (\textsf{DS}$\,:n',k,m_{L_k},I_{n'})$ holds for all $k\ge 0$ and all $n'<n$. We may then apply Theorem~\ref{thm:FIn} to obtain (\textsf{DS}$\,:n,k,m_{L_k},I_n)$ for all $k \ge 0$, by induction on $k$. (Recall that (\textsf{DS}$\,:n',-1,m_{L_{-1}},I_{n'})$ means no assumption).

This completes the induction and we obtain (\textsf{DS}$\,:N,k,m_{L_k},I_N)$ for all $k\ge 0$. Now
\[
S:= \sum_{j=0}^{\infty} (m_{L_j} - m_{L_{j+1}}) = 96NK(N) \sum_{j=0}^{\infty} \frac{m_{L_j}}{L_j^{\alpha-1}} + 3 \sum_{j=0}^{\infty} \frac{1}{L_j^{\alpha(1-\beta)}} \, .
\]
Since $m_{L_j} \le m_{L_0}$, we have
\[ 
S \le 96NK(N)m_{L_0} \sum_{j=0}^{\infty} \frac{1}{L_0^{(\alpha-1)\alpha^j}} + 3 \sum_{j=0}^{\infty} \frac{1}{L_0^{\alpha(1-\beta)\alpha^j}} \le \frac{m_{L_0}}{2} \le m_{L_0} - m
\]
for any $0<m \le \frac{m_{L_0}}{2}$, assuming $L_0$ is large enough. Now given $k \ge 1$, put $S_k := \sum_{j=k}^{\infty} (m_{L_j} - m_{L_{j+1}})$. Again the $m_{L_j}$ are decreasing, so $S_k \ge 0$ for all $k$. Since 
\[
m_{L_0} -m \ge S = m_{L_0} - m_{L_k} + S_k \, ,
\]
we get
\[
m \le m_{L_k} - S_k \le m_{L_k} \, ,
\]
so in particular, (\textsf{DS}$\,:N,k,m,I_N)$ holds for all $k \ge 0$.
\end{proof}

\section{Generalized Eigenfunctions}    \label{sec:EGN}
In this section we prove a generalized eigenfunction expansion for $H^{(n)}(\omega)$ which plays an important role in the proof of localization. For this we show that our model satisfies the hypotheses of \cite[Theorem 3.1]{KKS}. 

Given a bounded potential $v=(v_{\kappa})\ge0$, we define $H_v$ to be the operator associated with the form 
\[
\mathfrak{h}_v[f,g] = \sum_{\kappa \in \mathcal{K}} \mathfrak{a}_{v_{\kappa}}[f_{\kappa}, g_{\kappa}], \qquad D(\mathfrak{h}_v)=W^{1,2}(\Gamma^{(n)}),
\]
where
\[
\mathfrak{a}_{v_{\kappa}}[\phi,\psi] := \langle \nabla \phi, \nabla \psi \rangle + \langle v_{\kappa} \phi, \psi \rangle, \qquad D(\mathfrak{a}_{v_{\kappa}}) = W^{1,2}((0,1)^n).
\]
We first show that $\mathfrak{h}_v$ is a Dirichlet form and that $(e^{-tH^{(n)}(\omega)})_{t \ge0}$ is ultracontractive. For this we follow \cite{O}, as it covers the case where the Hilbert space is over $\C$.
 
\begin{lem}     \label{lem:Dir}
$\mathfrak{h}_v$ is a Dirichlet form.
\end{lem}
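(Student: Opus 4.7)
The plan is to apply the characterization of Dirichlet forms on a complex Hilbert space from \cite{O}: one must check that $\mathfrak{h}_v$ is densely defined, closed, symmetric, non-negative, and that for every $u \in D(\mathfrak{h}_v)$ the normal contraction $Pu := \sgn(u) \cdot (|u| \wedge 1)$ lies in $D(\mathfrak{h}_v)$ and satisfies $\mathfrak{h}_v[Pu, Pu] \le \mathfrak{h}_v[u, u]$, where $\sgn(z) := z/|z|$ for $z \ne 0$ and $\sgn(0) := 0$. Dense definition is immediate since $C_c^\infty((0,1)^n)$ embeds into each summand of $W^{1,2}(\Gamma^{(n)})$; symmetry and non-negativity are transparent from $v_\kappa \ge 0$; and closedness follows by the same argument used in the proof of Theorem~\ref{thm:dom}, where only boundedness from below of the potential entered the closedness step.

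The heart of the proof is thus the contraction property. I would work cube by cube: on each $\kappa$, the standard chain rule for compositions of Sobolev functions with Lipschitz maps (see e.g.\ \cite[Propositions 4.4--4.7]{O}) gives $Pu_\kappa \in W^{1,2}((0,1)^n)$ together with the pointwise bounds
\[
|\nabla Pu_\kappa| \le |\nabla u_\kappa|, \qquad |Pu_\kappa| \le |u_\kappa|
\]
almost everywhere. Combining these bounds with $v_\kappa \ge 0$ yields $\mathfrak{a}_{v_\kappa}[Pu_\kappa, Pu_\kappa] \le \mathfrak{a}_{v_\kappa}[u_\kappa, u_\kappa]$ for each $\kappa$, and summing over $\mathcal{K}$ gives $\mathfrak{h}_v[Pu, Pu] \le \mathfrak{h}_v[u, u]$ \emph{provided} $Pu$ genuinely lies in $D(\mathfrak{h}_v)$.

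That membership reduces to verifying continuity of $Pu$ across every inner face $\sigma^i$, since square-summability of the $W^{1,2}$-norms is immediate from the pointwise bounds. For this I would mimic the strategy of Lemma~\ref{lem:sobolev}: approximate $u_{\kappa_1}, u_{\kappa_2}$ by $C^\infty([0,1]^n)$-functions, apply continuity of the trace operator $\gamma : W^{1,2}((0,1)^n) \to L^2((0,1)^{n-1})$, and exploit the uniform Lipschitz continuity of $P$ to pass the limit through and obtain $\gamma(Pu_{\kappa_1}) = P(\gamma u_{\kappa_1}) = P(\gamma u_{\kappa_2}) = \gamma(Pu_{\kappa_2})$ on any face common to $\kappa_1$ and $\kappa_2$. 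The main subtlety is precisely this compatibility between $P$ and the trace: because $P$ fails to be $C^1$ at $|z|=1$, one cannot directly invoke a classical chain rule at the trace level, and the Lipschitz approximation argument has to be carried out with the same care as in Lemma~\ref{lem:sobolev}. Once this is in place, all remaining estimates are routine, and the Ouhabaz criterion delivers the Dirichlet property.
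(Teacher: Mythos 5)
Your proposal is correct and follows essentially the same route as the paper: the contraction property on each cube comes from Ouhabaz, and the nontrivial point --- that face-continuity in $W^{1,2}(\Gamma^{(n)})$ is preserved by normal contractions --- is handled exactly as in Lemma~\ref{lem:sobolev}, by approximating in $C^\infty([0,1]^n)$ and combining trace continuity with the Lipschitz/contraction bound. The only cosmetic difference is that the paper phrases this last step as a direct $L^2$-norm estimate on the trace difference rather than as the commutation $\gamma\circ P = P\circ\gamma$, but these are equivalent.
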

\begin{proof}
Combine \cite[Corollary 4.3]{O}, \cite[Corollary 4.10]{O} and \cite[Theorem 2.25]{O} to see that $p(D(\mathfrak{a}_{v_{\kappa}})) \subseteq D(\mathfrak{a}_{v_{\kappa}})$ and $\mathfrak{a}_{v_{\kappa}}[p \circ f] \le \mathfrak{a}_{v_{\kappa}}[f]$ for every $f \in D(\mathfrak{a}_{v_{\kappa}})$ and every normal contraction $p$. Now let $u=(u_{\kappa}) \in D(\mathfrak{h}_v)$ such that $u_{\kappa} \in C([0,1]^n)$ for all $\kappa$ and let $p$ be a normal contraction. If $\sigma^i \equiv (0,1)^{n-1}$ is a common face to $\kappa_1$ and $\kappa_2$ and if $\gamma:W^{1,2}((0,1)^n) \to L^2((0,1)^{n-1})$ is the trace operator, then
\begin{align*}
\| \gamma(p(u_{\kappa_1})) - \gamma(p(u_{\kappa_2})) \|_{L^2(0,1)^{n-1}}^2 & = \| \gamma(p(u_{\kappa_1}) - p(u_{\kappa_2})) \|_{L^2(0,1)^{n-1}}^2 \\
& = \int_{(0,1)^{n-1}} |p(u_{\kappa_1}(x)) - p(u_{\kappa_2}(x))|^2 dx \\
& \le \int_{(0,1)^{n-1}} |u_{\kappa_1}(x) - u_{\kappa_2}(x)|^2 dx  \\
& = \| \gamma(u_{\kappa_1}) - \gamma(u_{\kappa_2}) \|^2_{L^2(0,1)^{n-1}} = 0 \, ,
\end{align*}
where the last equality holds since $u$ is continuous on $\sigma^i$. By the density of $C^{\infty}([0,1]^n)$ in $W^{1,2}((0,1)^n)$ and the continuity of $\gamma$ and $p$, the same is true for all $u \in D(\mathfrak{h}_v)$. Hence $p \circ u$ is continuous on $\sigma^i$ for all $u \in D(\mathfrak{h}_v)$. Thus $p(D(\mathfrak{h}_v)) \subseteq D(\mathfrak{h}_v)$ . Furthermore,
\[
\mathfrak{h}_v[p \circ u] = \sum_{\kappa \in \mathcal{K}} \mathfrak{a}_{v_{\kappa}}[p \circ u_{\kappa}] \le \sum_{\kappa \in \mathcal{K}} \mathfrak{a}_{v_{\kappa}}[u_{\kappa}] = \mathfrak{h}_v[u] \, .
\]
Hence by \cite[Theorem 2.25]{O}, $(e^{-t H_v})_{t \ge 0}$ is sub-Markovian. Thus $\mathfrak{h}_v$ is a Dirichlet form. 
\end{proof}

\begin{lem}      \label{lem:SG}
There exists $c=c(n)>0$ such that for all $\omega \in \Omega$,
\[ 
\forall t>0 : \|e^{-tH^{(n)}(\omega)}\|_{L^2(\Gamma) \to L^{\infty}(\Gamma)} \le ct^{- n /4}e^{-(nq_- -1)t} \, .
\]
\end{lem}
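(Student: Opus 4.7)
The plan is to combine a shift, a Nash-type inequality, and Ouhabaz's ultracontractivity theorem for Dirichlet forms. First, since $V^\omega \ge nq_-$ pointwise, the shifted operator $\widetilde{H} := H^{(n)}(\omega) - nq_- + 1$ satisfies $\widetilde{H} \ge 1$ and
\[
\|e^{-tH^{(n)}(\omega)}\|_{L^2 \to L^\infty} = e^{-(nq_- - 1)t}\,\|e^{-t\widetilde{H}}\|_{L^2 \to L^\infty},
\]
so it suffices to prove $\|e^{-t\widetilde{H}}\|_{L^2 \to L^\infty} \le c\,t^{-n/4}$ uniformly in $t > 0$. The form $\widetilde{\mathfrak{h}}$ of $\widetilde{H}$ equals $\mathfrak{h}_v + \|\cdot\|_2^2$ with $v = V^\omega - nq_- \ge 0$, so in particular $\widetilde{\mathfrak{h}}[f] \ge \mathfrak{h}_0[f] + \|f\|_2^2$, where $\mathfrak{h}_0$ is the pure Kirchhoff form (the form $\mathfrak{h}_v$ with $v \equiv 0$).

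The analytic heart is a Nash-type inequality on $\Gamma^{(n)}$:
\[
\|f\|_2^{2+4/n} \le C\bigl(\mathfrak{h}_0[f] + \|f\|_2^2\bigr)\,\|f\|_1^{4/n}, \qquad f \in W^{1,2}(\Gamma^{(n)}).
\]
A natural route is first to establish a global Sobolev-type embedding $W^{1,2}(\Gamma^{(n)}) \hookrightarrow L^{2^{\ast}}(\Gamma^{(n)})$ (with $2^{\ast} = 2n/(n-2)$ for $n \ge 3$; analogous statements for $n = 1, 2$). Locally on each $\kappa \cong (0,1)^n$ one has the classical Euclidean Sobolev embedding, and since the face-continuity built into $W^{1,2}(\Gamma^{(n)})$ is preserved by the trace theorems, summing over $\kappa$ and invoking the inclusion $\ell^2 \subset \ell^{2^{\ast}}$ yields the global embedding. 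A Hölder interpolation between $L^{2^{\ast}}$ and $L^1$ then converts Sobolev into the claimed Nash inequality. Alternatively, one may exploit the tensor-product structure $\Gamma^{(n)} \cong (\Gamma^{(1)})^n$ and the decomposition $\mathfrak{h}_0 = \sum_{j=1}^n I \otimes \cdots \otimes \mathfrak{h}_0^{(1)} \otimes \cdots \otimes I$ to reduce to the one-dimensional case on $\Gamma^{(1)}$.

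Combining the Nash inequality with $\widetilde{\mathfrak{h}}[f] \ge \mathfrak{h}_0[f] + \|f\|_2^2$ upgrades it to a pure Nash bound $\|f\|_2^{2+4/n} \le C\widetilde{\mathfrak{h}}[f]\,\|f\|_1^{4/n}$. By Lemma~\ref{lem:Dir}, $\mathfrak{h}_v$ is a Dirichlet form, hence so is $\widetilde{\mathfrak{h}}$. The Nash-implies-ultracontractivity machinery of \cite{O} then delivers $\|e^{-t\widetilde{H}}\|_{L^1 \to L^2} \le c\,t^{-n/4}$; dualizing via self-adjointness of $\widetilde{H}$ gives $\|e^{-t\widetilde{H}}\|_{L^2 \to L^\infty} \le c\,t^{-n/4}$, and combining with the initial shift produces the claimed bound. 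The main obstacle is the Nash/Sobolev inequality: while the Euclidean version on each cube is classical, one must handle the gluing across faces carefully and verify that the resulting constants depend only on $n$ and not on any geometric parameter of the infinite graph. Once this inequality is in hand, the remaining steps are standard applications of the Dirichlet-form framework.
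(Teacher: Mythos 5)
Your proposal is correct and follows essentially the same route as the paper: shift to make the potential $\ge 1$, establish a Nash-type inequality on $\Gamma^{(n)}$ by working locally on each unit cube $\kappa\cong(0,1)^n$ and summing, combine with Lemma~\ref{lem:Dir} and Ouhabaz's ultracontractivity theorem, and dualize. The only difference is cosmetic: the paper obtains the local estimate directly from the Gagliardo--Nirenberg interpolation inequality on $(0,1)^n$ and sums via Hölder, whereas you propose passing through a global Sobolev embedding plus interpolation (which is equivalent; in fact the face-continuity you worry about plays no role in that step, since the embedding is a purely additive sum of per-cube bounds).
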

\begin{proof}
Let $Q:= (0,1)^n$. By the Gagliardo-Nirenberg interpolation inequality (see \cite{Nir66}), we have for any $u \in W^{1,2}(Q)$,
\[
\| u\|_{L^2(Q)} \le C( \| \nabla u \|^a_{L^2(Q)} + \| u\|^a_{L^1(Q)} ) \|u\|^{1-a}_{L^1(Q)} \, ,
\]
where $a = \frac{n}{n+2}$. By H\"older inequality, we have $\|u \|_{L^1(Q)} \le \|u\|_{L^2(Q)}$. Using H\"older inequality again, with $p=\frac{2}{a}$ and $q=\frac{2}{2-a}$, we get $(x^a+y^a) \le 2^{1/q}(x^2+y^2)^{a/2}$. Thus,
\[
\|u\|_{L^2(Q)} \le \tilde{C} ( \| \nabla u \|^2_{L^2(Q)} + \| u\|^2_{L^2(Q)} )^{a/2} \|u\|^{1-a}_{L^1(Q)} \le \tilde{C}(\mathfrak{a}_{v_{\kappa}}[u])^{a/2} \|u\|_{L^1(Q)}^{1-a}
\]
for any bounded potential $v_{\kappa} \ge 1$. Hence, for any $f \in D(\mathfrak{h}_v) \cap L^1(\Gamma)$ we have
\[ 
\|f\|_{L^2(\Gamma)}^2 = \sum_{\kappa \in \mathcal{K}} \|f_{\kappa}\|_{L^2(Q)}^2 \le \tilde{C}^2 \sum_{\kappa \in \mathcal{K}} (\mathfrak{a}_{v_{\kappa}}[f_{\kappa}])^{a} \|f_{\kappa}\|_{L^1(Q)}^{2(1-a)} \, .
\]
Using H\"older inequality with $p = \frac{1}{a}$ and $q=\frac{1}{1-a}$ we get
\begin{align*}
\|f\|_{L^2(\Gamma)}^2 & \le \tilde{C}^2 \Big( \sum \mathfrak{a}_{v_{\kappa}}[f_{\kappa}] \Big)^a \Big( \sum \|f_{\kappa}\|^2_{L^1(Q)} \Big)^{(1-a)} \\
& \le \tilde{C}^2 \Big( \sum \mathfrak{a}_{v_{\kappa}}[f_{\kappa}] \Big)^a \Big( \sum \|f_{\kappa}\|_{L^1(Q)} \Big)^{2(1-a)} = \tilde{C}^2 (\mathfrak{h}_v[f])^a \|f\|_{L^1(\Gamma)}^{2(1-a)} \, .
\end{align*}
Using Lemma~\ref{lem:Dir} and applying \cite[Theorem 6.3]{O}, it follows that
\[ 
\forall t>0 : \|e^{-tH_v}\|_{L^1(\Gamma) \to L^2(\Gamma)} \le ct^{- n /4} \, .
\]
But $\|e^{-tH_v}\|_{L^1 \to L^2} = \|e^{-tH_v}\|_{L^2 \to L^{\infty}}$ by duality. So the assertion follows by taking $v:= V^{\omega} -(nq_- - 1) \ge 1$ and noting that 
\[
e^{-tH_v} = \exp(-t(H^{(n)}(\omega) - (nq_- - 1))) = e^{(nq_- - 1)t} e^{-tH^{(n)}(\omega)} \, . \qedhere
\]
\end{proof}

Let $T$ be the self-adjoint operator on $L^2(\Gamma^{(n)})$ given by 
\[
Tf(\mathbf{x}) := w(\mathbf{x}) f(\mathbf{x}), \qquad \text{where }  w(\mathbf{x}) = (1+ \| \mathbf{x} \|_2^2)^{\gamma/4}
\]
for some fixed $\gamma>nd+1$. We now establish

\begin{lem}     \label{lem:tr}
There exists $C=C(n)$ such that for all $\omega \in \Omega$ and $t>0:$
\[
\tr (T^{-1} e^{-2tH^{(n)}(\omega)} T^{-1}) \le Ct^{-n/2} e^{-2(nq_--1)t} \| w^{-1} \|_{L^2}^2 < \infty \, .
\]
Furthermore, if $E_{\omega}$ is the spectral projection of $H^{(n)}(\omega)$, then the set function $\nu_{\omega}$ on $\R$ given by
\[
\nu_{\omega}(J) := \tr (T^{-1} E_{\omega}(J) T^{-1}) = \| E_{\omega}(J) T^{-1} \|_2^2
\]
is a spectral measure for $H^{(n)}(\omega)$ which is finite on bounded Borel sets $J$.
\end{lem}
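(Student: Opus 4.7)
The plan is to exploit the ultracontractivity bound from Lemma~\ref{lem:SG} to get an integral-kernel estimate for $e^{-tH^{(n)}(\omega)}$, then factor $T^{-1}e^{-2tH^{(n)}(\omega)}T^{-1}$ as a product of adjoint operators so that its trace becomes a squared Hilbert--Schmidt norm computable directly from the kernel on the diagonal. Writing $H := H^{(n)}(\omega)$ for brevity, the key factorisation is
\[
T^{-1}e^{-2tH}T^{-1} = \bigl(e^{-tH}T^{-1}\bigr)^{\ast}\bigl(e^{-tH}T^{-1}\bigr) \, ,
\]
so $\tr(T^{-1}e^{-2tH}T^{-1}) = \|e^{-tH}T^{-1}\|_2^2$, and it suffices to estimate the latter.

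By Lemma~\ref{lem:SG} and duality (using self-adjointness of $H$), $e^{-tH}$ is also bounded $L^1(\Gamma^{(n)}) \to L^2(\Gamma^{(n)})$ with the same norm. The $L^2 \to L^\infty$ bound means that for $\text{a.e.}$ $\mathbf{x}$ the linear functional $f \mapsto (e^{-tH}f)(\mathbf{x})$ is continuous on $L^2$, and is therefore represented by a kernel $k_t(\mathbf{x},\cdot) \in L^2(\Gamma^{(n)})$ with $\|k_t(\mathbf{x},\cdot)\|_{L^2} \le ct^{-n/4}e^{-(nq_- - 1)t}$. Self-adjointness gives $k_t(\mathbf{x},\mathbf{y}) = \overline{k_t(\mathbf{y},\mathbf{x})}$, and the semigroup property yields $k_{2t}(\mathbf{x},\mathbf{x}) = \int |k_t(\mathbf{x},\mathbf{y})|^2\, \dd m^{(n)}(\mathbf{y}) \le c^2 t^{-n/2} e^{-2(nq_- - 1)t}$. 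Since $T^{-1}$ is multiplication by the bounded function $w^{-1}$, the kernel of $e^{-tH}T^{-1}$ is $k_t(\mathbf{x},\mathbf{y})w^{-1}(\mathbf{y})$, and Fubini gives
\[
\|e^{-tH}T^{-1}\|_2^2 = \int w^{-2}(\mathbf{y})\, k_{2t}(\mathbf{y},\mathbf{y})\, \dd m^{(n)}(\mathbf{y}) \le c^2 t^{-n/2} e^{-2(nq_- - 1)t}\, \|w^{-1}\|_{L^2}^2 \, .
\]
Finiteness of $\|w^{-1}\|_{L^2}$ follows from $\gamma > nd+1$ combined with the growth estimate $\#\{\kappa \subset \Lambda_L^{(n)}\} \lesssim L^{nd}$ of Lemma~\ref{lem:NB}, since $\int_{\Gamma^{(n)}} (1+\|\mathbf{x}\|_2^2)^{-\gamma/2}\dd m^{(n)} \lesssim \sum_k 2^{k(nd-\gamma)} < \infty$.

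For the spectral-measure claim, the same cyclic-trace identity applied to the projection $E_\omega(J)$ (writing $E_\omega(J) = E_\omega(J)^2$) gives $\nu_\omega(J) = \|E_\omega(J)T^{-1}\|_2^2$. Countable additivity of $\nu_\omega$ follows from strong additivity of $E_\omega$ and pairwise orthogonality of the ranges of $E_\omega(J_k)$ for disjoint $J_k$. Since $w^{-1} > 0$ pointwise, $T^{-1}$ has dense range, hence $E_\omega(J)T^{-1} = 0$ iff $E_\omega(J) = 0$, so $\nu_\omega$ is a spectral measure for $H$. For a bounded Borel set $J \subseteq [-N,N]$, the operator inequality $E_\omega(J) \le \chi_{[-N,N]}(H) \le e^{2Nt} e^{-2tH}$ (valid because $e^{-2tE} \ge e^{-2tN}$ whenever $E \le N$) combined with monotonicity of the trace on positive operators yields $\nu_\omega(J) \le e^{2Nt}\tr(T^{-1}e^{-2tH}T^{-1}) < \infty$ for any fixed $t>0$.

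The main delicate step is the passage from the operator norm bound $L^2 \to L^\infty$ to a pointwise-in-$\mathbf{x}$ kernel statement with the diagonal identity $k_{2t}(\mathbf{x},\mathbf{x}) = \int |k_t(\mathbf{x},\mathbf{y})|^2 \dd m^{(n)}(\mathbf{y})$, which relies on the standard Dunford--Pettis type representation in the $\sigma$-finite measure space $(\Gamma^{(n)}, \dd m^{(n)})$; everything else is routine Hilbert--Schmidt bookkeeping.
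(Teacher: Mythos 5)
Your argument is correct and reaches the same estimate, but via a genuinely different route. The paper deduces that $T^{-1}e^{-tH}$ is Hilbert--Schmidt by abstract absolutely-summing-operator theory: multiplication by $w^{-1}\in L^2$ viewed as $T^{-1}:L^\infty\to L^2$ is $2$-summing, the composition of a bounded map with a $2$-summing map is $2$-summing, and a $2$-summing operator between Hilbert spaces is Hilbert--Schmidt with $\|T^{-1}e^{-tH}\|_2\le\|w^{-1}\|_{L^2}\,\|e^{-tH}\|_{L^2\to L^\infty}$. You instead extract a Dunford--Pettis kernel $k_t(\mathbf{x},\cdot)\in L^2$ from the $L^2\to L^\infty$ bound, with $\|k_t(\mathbf{x},\cdot)\|_{2}\le\|e^{-tH}\|_{L^2\to L^\infty}$ for a.e.\ $\mathbf{x}$, then compute $\|e^{-tH}T^{-1}\|_2^2$ directly by Fubini. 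In fact you don't even need the $k_{2t}$-diagonal detour: by the Hermitian symmetry of the kernel, $\int|k_t(\mathbf{x},\mathbf{y})|^2\,\dd m(\mathbf{x})=\|k_t(\mathbf{y},\cdot)\|_{2}^2$, and the bound follows immediately. Your approach is more elementary and self-contained (no reference to the Diestel--Jarchow--Tonge machinery) at the modest cost of the measurable-kernel selection argument, which you correctly flag as the one technically delicate step; it does go through because $(\Gamma^{(n)},m^{(n)})$ is $\sigma$-finite and $L^2(\Gamma^{(n)})$ is separable. The spectral-measure part of your proof (cyclic trace identity, countable additivity via orthogonality of ranges, dense range of $T^{-1}$ for the ``spectral measure'' claim, and the operator inequality $E_\omega(J)\le e^{2Nt}e^{-2tH}$ for finiteness on bounded sets) matches the paper's reasoning, with the minor difference that you keep $t>0$ free where the paper fixes $t=1$.
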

\begin{proof}
Divide $\Gamma^{(n)}$ into annuli $\Gamma^{(n)} \cap (\Lambda^{(n)}_{k+1}(\mathbf{0}) \setminus \Lambda^{(n)}_k(\mathbf{0}))$. Then by (\textsf{NB.$n$}),
\[ 
\int_{\Gamma^{(n)}} |w^{-1}|^2 \, \dd m \le \sum_k \frac{m(\Gamma^{(n)} \cap \Lambda^{(n)}_{k+1}(\mathbf{0}))}{(1+ k^2)^{\gamma/2}} \le C \sum_k \frac{(k+1)^{nd}}{(1+ k^2)^{\gamma/2}} < \infty \, .
\]
Thus $w^{-1} \in L^2$ and $T^{-1}:L^{\infty} \to L^2$ is $2$-summing (see \cite[Examples 2.9, p. 40]{DJH}). Using Lemma~\ref{lem:SG}, \cite[Theorem 2.4, p. 37]{DJH} and \cite[Theorem 4.10, p. 84]{DJH}, it follows that $T^{-1} e^{-t H}$ is Hilbert-Schmidt and
\[
\| T^{-1} e^{-t H} \|_2 \le \|w^{-1} \|_{L^2} \|e^{-tH}\|_{L^2 \to L^{\infty}} \le c t^{-n/4} e^{-(nq_- - 1)t} \|w^{-1} \|_{L^2} \, .
\]
Noting that $\| T^{-1} e^{-t H} \|_2 = \| (T^{-1} e^{-t H})^{\ast} \|_2 = \| e^{-t H} T^{-1} \|_2$, this yields
\[
\tr (T^{-1} e^{-2tH} T^{-1}) = \|e^{-t H}T^{-1} \|_2^2 \le C t^{-n/2} e^{-2(nq_- - 1)t} \|w^{-1} \|_{L^2}^2 \, .
\]

Now let $J$ be a bounded Borel set and put $b:=\sup \{ \lambda \in J\}$. Then
\[
0 \le e^{-2b} E_{\omega}(J) \le \int_J e^{-2 \lambda} \, \dd E_{\omega}(\lambda) \le \int_{\sigma(H)} e^{-2 \lambda} \, \dd E_{\omega}(\lambda) = e^{-2H} \, .
\]
Hence $\nu_{\omega}(J) \le e^{2b} \tr (T^{-1} e^{-2H} T^{-1}) \le C_J \|w^{-1} \|^2_{L^2}$ and $\nu_{\omega}$ is finite on bounded Borel sets. It is easy to see that $\nu_{\omega}$ is a Borel measure. Finally, $\nu_{\omega}(J) = 0 \iff E_{\omega}(J) = 0$, so $\nu_{\omega}$ is a spectral measure for $H$.
\end{proof}

We note in passing that given a bounded interval $I$, the previous proof yields a constant $C=C(I,n,q_-)>0$ independent of $\omega$ such that
\begin{equation}
\sup_{\omega} \nu_{\omega}(I) \le C \|w^{-1}\|^2_{L^2} =:C_{\tr} \, .  \label{eq:ge1}
\end{equation}

Let $\mathcal{H}_+$ be the space $D(T)$ equipped with the norm $\| \phi\|_+ = \|T \phi\|$ and $\mathcal{H}_-$ the completion of $\mathcal{H}$ in the norm $\| \psi \|_- = \| T^{-1} \psi \|$. By construction $\mathcal{H}_+ \subset \mathcal{H} \subset \mathcal{H}_-$ is then a triple of Hilbert spaces with natural injections $\iota_+ : \mathcal{H}_+ \to \mathcal{H}$ and $\iota_- : \mathcal{H} \to \mathcal{H}_-$ continuous with dense range. The inner product $\langle \, , \, \rangle_{\mathcal{H}}$ extends to a sesquilinear form on $\mathcal{H}_+ \times \mathcal{H}_-$ which turns $\mathcal{H}_+$ and $\mathcal{H}_-$ into conjugate duals (see \cite[Lemma 1]{PSW} and \cite{BSU}). The adjoint of an operator $O$ with respect to this duality is denoted by $O^{\dag}$.

\begin{lem}      \label{lem:dense}
For all $\omega \in \Omega$, the space
\[ 
\mathcal{D}_+ = \{ f \in D(H^{(n)}(\omega)) \cap \mathcal{H}_+ : H^{(n)}(\omega) f \in \mathcal{H}_+ \} 
\]
is dense in $\mathcal{H}_+$ and is an operator core for $H^{(n)}(\omega)$.
\end{lem}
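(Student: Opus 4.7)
The plan is to produce elements of $\mathcal{D}_+$ using the resolvent $R_n := (H^{(n)}(\omega)+n)^{-1}$ for $n$ so large that $-n < nq_- \le s_\omega$, so $-n \in \rho(H^{(n)}(\omega))$. The key ingredient is the Combes--Thomas bound (Theorem~\ref{thm:CT2}), which with $E = -n$ gives, for cellular $A,B$ with $\delta := \dist(A,B) \ge 1$,
\[
\|\chi_A R_n \chi_B\| \le \phi(n,\delta)\,e^{-\delta\sqrt{s_\omega + n}},
\qquad \phi(n,\delta)\ \text{algebraic in $n,\delta$,}
\]
and $\|R_n\| \le 1/(s_\omega+n)$.

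First I would show that $R_n$ maps $\mathcal{H}_+$ boundedly into itself. Decomposing $\phi = \sum_{\mathbf{y}\in\Z^{nd}}\chi_{\mathbf{C}(\mathbf{y})}\phi$, the estimate above together with $w(\mathbf{x}) \le C\,w(\mathbf{y})(1+|\mathbf{x}-\mathbf{y}|)^{\gamma/4}$ lets us absorb the polynomial weight into the Gaussian-type decay and, via Schur's test, obtain $\|wR_n\phi\|_{L^2} \le (C/n)\|w\phi\|_{L^2}$ for all $n$ sufficiently large. Since $R_n\phi \in D(H^{(n)}(\omega))$, $R_n\phi \in \mathcal{H}_+$, and
\[
H^{(n)}(\omega)R_n\phi = \phi - nR_n\phi \in \mathcal{H}_+,
\]
we conclude $R_n(\mathcal{H}_+) \subset \mathcal{D}_+$.

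For the core property, I invoke the standard criterion that a subspace $D \subset D(H^{(n)}(\omega))$ is a core iff $(H^{(n)}(\omega)-z)D$ is dense in $\mathcal{H}$ for some $z \in \rho(H^{(n)}(\omega))$. Taking $z = -n$,
\[
(H^{(n)}(\omega)+n)\,\mathcal{D}_+ \supseteq (H^{(n)}(\omega)+n)R_n(\mathcal{H}_+) = \mathcal{H}_+,
\]
which is dense in $\mathcal{H}$ because every compactly supported element of $\mathcal{H}$ lies in $\mathcal{H}_+$ (the weight $w$ is locally bounded). Hence $\mathcal{D}_+$ is an operator core.

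For density of $\mathcal{D}_+$ in $\mathcal{H}_+$, I would show that $nR_n\phi \to \phi$ in $\mathcal{H}_+$ for every $\phi\in\mathcal{H}_+$; since $nR_n\phi\in\mathcal{D}_+$, this gives density. Compactly supported vectors are dense in $\mathcal{H}_+$ by dominated convergence, and Step~1 gives a uniform bound $\|nR_n\|_{\mathcal{H}_+\to\mathcal{H}_+} \le C$ for $n$ large, so it suffices to treat $\phi$ supported in some $B_R$. Split $w(nR_n\phi - \phi)$ into the parts on $B_{2R}$ and on $B_{2R}^c$: on $B_{2R}$ one uses $\|w\|_{L^\infty(B_{2R})} < \infty$ together with strong convergence $nR_n \to I$ in $L^2$; on $B_{2R}^c$ one has $\phi=0$, and Combes--Thomas yields
\[
\|\chi_{\mathbf{x}}\,nR_n\phi\| \le n\,\phi(n,|\mathbf{x}|-R)\,e^{-(|\mathbf{x}|-R)\sqrt{s_\omega + n}}\|\phi\|
\qquad (|\mathbf{x}|\ge 2R-1),
\]
whose $w^2$-weighted $\ell^2$ sum is crushed as $n\to\infty$ by the Gaussian factor $e^{-2(R-1)\sqrt{n}}$.

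The technical heart of the argument is the first step: combining the polynomial growth of $w$ and the algebraic prefactors in Combes--Thomas with its exponential decay, carefully enough to get a Schur-type bound that is uniform for large $n$. Once that estimate is in place, the remaining density and core assertions follow from routine approximation arguments and the classical resolvent criterion.
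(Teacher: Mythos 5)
Your proof is correct, but it takes a genuinely different route from the paper's. The paper handles density and the core property separately and more cheaply: for density, it notes $\mathcal{D}_+ \supseteq C_c^\infty(\Gamma)$ (since $-\Delta f + V^\omega f$ has compact support for $f \in C_c^\infty(\Gamma)$), and approximates any $f \in \mathcal{H}_+$ by $T^{-1}g_j$ where $g_j \in C_c^\infty(\Gamma)$ approximates $Tf$ in $L^2$; for the core property it follows a result of Boutet de Monvel--Stollmann, taking $E < nq_-$ and $D_0 := (H-E)^{-1}C_c(\Gamma)$, which is automatically a core, and then using Combes--Thomas to show each $f \in D_0$ is exponentially decreasing, hence lies in $\mathcal{H}_+$ with $Hf = \varphi + Ef \in \mathcal{H}_+$. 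Your argument instead funnels everything through a single operator-norm estimate $\|w R_n w^{-1}\| \le C/n$ proved by a weighted Schur test: this immediately gives $R_n(\mathcal{H}_+) \subset \mathcal{D}_+$, from which both the core property ($(H+n)R_n(\mathcal{H}_+) = \mathcal{H}_+$ dense) and density in $\mathcal{H}_+$ ($nR_n \phi \to \phi$) follow. The trade-off is clear: your route is more uniform and essentially self-contained, but the weighted resolvent estimate is technically heavier than anything the paper needs (the paper only applies Combes--Thomas to the resolvent acting on individual compactly supported vectors, not as a bounded map between weighted spaces). One small imprecision worth noting: the weight-comparison inequality should read $w(\mathbf{x}) \le C\, w(\mathbf{y})(1 + |\mathbf{x}-\mathbf{y}|^2)^{\gamma/4}$, i.e.\ the polynomial in $|\mathbf{x}-\mathbf{y}|$ has degree $\gamma/2$, not $\gamma/4$; this does not affect the conclusion since any polynomial is still absorbed by the Combes--Thomas exponential.
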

\begin{proof}
Set $H:= H^{(n)}(\omega)$ and let $C_c^{\infty}(\Gamma):= (\mathop \oplus_{\kappa} C_c^{\infty}(0,1)^n) \cap C_c(\Gamma)$. Clearly $\mathcal{D}_+ \supseteq C_c^{\infty}(\Gamma)$ (see the definition of $D(H)$ in the Appendix, Section~\ref{sec:app}). Moreover, $C_c^{\infty}(\Gamma)$ is dense in $L^2(\Gamma)$. Now let $f \in \mathcal{H}_+$, then $Tf \in L^2(\Gamma)$ may be approximated by $g_j \in C_c^{\infty}(\Gamma)$, hence $\|f - T^{-1}g_j\|_+ \to 0$ and clearly $T^{-1} g_j \in C_c^{\infty}(\Gamma)$. Hence $\mathcal{D}_+$ is dense in $\mathcal{H}_+$.

To show $\mathcal{D}_+$ is a core we follow \cite[Proposition 2.4]{BS}: let $E<nq_-$ and consider $D_0=(H-E)^{-1} C_c(\Gamma)$. Since $C_c(\Gamma)$ is dense in $L^2(\Gamma)$, $D_0$ is a core for $H$. By Combes-Thomas estimate, each $f \in D_0$ is exponentially decreasing. Hence $f \in \mathcal{H}_+$ and
\[ 
Hf = (H-E)f + E f = \varphi + E f \in \mathcal{H}_+ 
\]
since $f = (H-E)^{-1} \varphi$ for some $\varphi \in C_c(\Gamma)$. This proves the claim. 
\end{proof}
By \cite[Lemma 3.1]{KKS}, $H^{(n)}(\omega)$ regarded as an operator on $\mathcal{H}_-$ is thus closable and densely defined. We denote its closure by $H^{(n)}_-(\omega)$. We say that $\psi \in \mathcal{H}_-$ is a \emph{generalized eigenfunction} of $H^{(n)}(\omega)$ with corresponding \emph{generalized eigenvalue} $\lambda \in \C$ if $\psi$ is an eigenfunction of $H^{(n)}_-(\omega)$ with eigenvalue $\lambda$, i.e. if $\psi \in D(H^{(n)}_-(\omega))$ and $H^{(n)}_-(\omega) \psi = \lambda \psi$. 

By \cite[Lemma 3.2]{KKS}, we have $H_-^{(n)}(\omega) \psi = H^{(n)}(\omega) \psi$ for any $\psi \in D(H_-^{(n)}) \cap \mathcal{H}$. In particular, if a generalized eigenfunction lies in $\mathcal{H}$, then it is an eigenfunction.

We may now state the main result of this section. Here $\mathcal{T}_1(\mathcal{H}_+,\mathcal{H}_-)$ and $\mathcal{T}_{1,+}(\mathcal{H}_+,\mathcal{H}_-)$ are the spaces of trace class and positive trace class operators from $\mathcal{H}_+$ to $\mathcal{H}_-$, respectively (see \cite{KKS} for details). 

\begin{thm}      \label{thm:GEGN} 
Let $\nu_{\omega}$ be the spectral measure of $H^{(n)}(\omega)$ introduced in Lemma~\ref{lem:tr}. There exists a $\nu_{\omega}$-locally integrable function $P_{\omega}:\R \to \mathcal{T}_{1,+}(\mathcal{H}_+,\mathcal{H}_-)$ such that
\[
\iota_- f(H^{(n)}(\omega)) E_{\omega}(J) \iota_+ = \int_J f(\lambda) P_{\omega}(\lambda) \dd \nu_{\omega}(\lambda)
\]
for all bounded Borel sets $J$ and all bounded Borel functions $f$, where the integral is the Bochner integral of $\mathcal{T}_1(\mathcal{H}_+,\mathcal{H}_-)$-valued functions. Furthermore, for $\nu_{\omega}$-a.e. $\lambda \in \R$,
\[
P_{\omega}(\lambda) = P_{\omega}(\lambda)^{\dag}, \qquad \tr P_{\omega}(\lambda) = 1
\]
and $P_{\omega}(\lambda) \phi \in \mathcal{H}_-$ is a generalized eigenfunction of $H^{(n)}(\omega)$ with generalized eigenvalue $\lambda$ for any $\phi \in \mathcal{H}_+$.
\end{thm}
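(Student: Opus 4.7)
The plan is to reduce the statement directly to \cite[Theorem 3.1]{KKS}, whose abstract hypotheses are precisely what the preceding three lemmas of this section are designed to verify. That theorem requires: (i) a self-adjoint operator $H$ on $\mathcal{H}$; (ii) a Gelfand triple $\mathcal{H}_+ \subset \mathcal{H} \subset \mathcal{H}_-$ arising from a positive self-adjoint operator $T$ with $T^{-1}$ bounded and such that $T^{-1} e^{-tH} T^{-1}$ is trace class for all $t>0$ (so that $\nu_\omega(J) := \tr(T^{-1} E_\omega(J) T^{-1})$ is a spectral measure, finite on bounded Borel sets); and (iii) a subspace $\mathcal{D}_+ \subset \mathcal{H}_+$ which is dense in $\mathcal{H}_+$, contained in $D(H)$, invariant enough that $H\mathcal{D}_+ \subset \mathcal{H}_+$, and a core for $H$.

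First I would record that the construction of $T$ with $w(\mathbf{x}) = (1+\|\mathbf{x}\|_2^2)^{\gamma/4}$, $\gamma > nd+1$, together with Lemmas~\ref{lem:Dir}--\ref{lem:tr}, furnishes (ii): the ultracontractivity $\|e^{-tH^{(n)}(\omega)}\|_{L^2 \to L^\infty} \le ct^{-n/4}e^{-(nq_--1)t}$ combined with the $2$-summing property of $T^{-1}:L^\infty \to L^2$ yields the trace-class bound and the spectral measure $\nu_\omega$ finite on bounded Borel sets. Next, Lemma~\ref{lem:dense} gives (iii): the subspace $\mathcal{D}_+ = \{f \in D(H^{(n)}(\omega)) \cap \mathcal{H}_+ : H^{(n)}(\omega) f \in \mathcal{H}_+\}$ is dense in $\mathcal{H}_+$ (via $C_c^\infty(\Gamma)$) and is a core for $H^{(n)}(\omega)$ (via the Combes--Thomas exponential decay of $(H-E)^{-1}C_c(\Gamma)$ for $E<nq_-$, Theorem~\ref{thm:CT2}). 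Having verified (i)--(iii), one invokes \cite[Theorem 3.1]{KKS} to obtain a $\nu_\omega$-locally integrable map $P_\omega : \mathbb{R} \to \mathcal{T}_{1,+}(\mathcal{H}_+,\mathcal{H}_-)$ satisfying the Bochner integral representation, with $P_\omega(\lambda) = P_\omega(\lambda)^\dagger$ and $\tr P_\omega(\lambda) = 1$ $\nu_\omega$-a.e.

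The last step is the generalized-eigenfunction property. Applying the integral representation to $f(\lambda) = (H^{(n)}_-(\omega) - \lambda) \circ (\text{resolvent-like test functions})$ and using that $\mathcal{D}_+$ is a core so that $H^{(n)}$ extends by closure to $H^{(n)}_-$ on $\mathcal{H}_-$ compatibly with the duality $\langle \cdot,\cdot \rangle_{\mathcal{H}_+ \times \mathcal{H}_-}$ (Lemmas 3.1--3.2 of \cite{KKS}), one obtains that for any $\phi \in \mathcal{H}_+$, $P_\omega(\lambda)\phi \in D(H_-^{(n)}(\omega))$ and $H_-^{(n)}(\omega) P_\omega(\lambda)\phi = \lambda P_\omega(\lambda)\phi$ for $\nu_\omega$-a.e.\ $\lambda$, which is the required assertion.

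The main obstacle is really conceptual rather than computational: one must carefully set up the Gelfand triple so that both the trace-class/Hilbert--Schmidt bound on $T^{-1}e^{-tH}$ and the core condition for $\mathcal{D}_+$ hold simultaneously on the singular, branching geometry of $\Gamma^{(n)}$. The Hilbert--Schmidt estimate hinges on the ultracontractivity in Lemma~\ref{lem:SG}, which in turn requires establishing the Gagliardo--Nirenberg inequality cube-by-cube and summing it via the Dirichlet-form property of Lemma~\ref{lem:Dir}; the latter in turn demands the verification that normal contractions preserve the continuity condition on faces $\sigma^i$. Once these multi-particle graph-specific checks are in place, the proof reduces to citing \cite{KKS}.
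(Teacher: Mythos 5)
Your proposal follows the paper's route exactly: the entire section before the theorem (construction of $T$, Lemmas~\ref{lem:Dir}--\ref{lem:dense}) is designed to verify the hypotheses of \cite[Theorem 3.1]{KKS}, and the theorem is then obtained by citation. The one small discrepancy is that you attribute the self-adjointness $P_\omega(\lambda)=P_\omega(\lambda)^\dagger$ directly to \cite[Theorem 3.1]{KKS}, whereas the paper notes that this is the one part that needs a separate (though short) argument: it follows from the explicit formula \cite[Eq.~(46)]{KKS} combined with $\iota_+^\dagger=\iota_-$. Similarly, for the generalized-eigenfunction property the paper simply cites \cite[Corollary 3.1]{KKS} rather than re-deriving it via test functions as you sketch; but that is a stylistic rather than a substantive difference.
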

\begin{proof} 
Applying \cite[Theorem 3.1]{KKS} and \cite[Corollary 3.1]{KKS}, it only remains to show that $P_{\omega}(\lambda) = P_{\omega}(\lambda)^{\dag}$ $\nu_{\omega}$-a.e. This follows from \cite[Eq.(46)]{KKS} and the fact that $\iota_+^{\dag} = \iota_-$.
\end{proof}

\section{Exponential Localization}    \label{sec:exp}
The fundamental link between mutiscale analysis and localization is provided by the following eigenfunction decay inequality. Since we will not rely on the regularity of generalized eigenfunctions, the proof is a bit longer than in \cite{Sto}.

\begin{lem}      \label{lem:EDI}
Let $E_+ \in \R$. There exists $C=C(E_+,n,d,q_-)$ such that, if $\mathbf{x}_0 \in \Z^{nd}$ and $\mathbf{C}(\mathbf{x}) \subset \Lambda_{L-6}^{(n)}(\mathbf{x}_0)$, then every generalized eigenfunction $\psi$ of $H^{(n)}(\omega)$ corresponding to $\lambda \in \rho(H_{\Lambda_L(\mathbf{x}_0)}^{(n)}) \cap (-\infty,E_+]$ satisfies
\[
\| \chi_{\mathbf{x}} \psi \| \le C \cdot |\mathbf{B}_L^{\out}(\mathbf{x}_0)| \max_{\mathbf{y} \in \mathbf{B}_L^{\out}(\mathbf{x}_0)} \| G_{\Lambda_L^{(n)}(\mathbf{x}_0)}(\mathbf{x},\mathbf{y};\lambda) \| \cdot \| \chi_{\Lambda_L^{\out}(\mathbf{x}_0)} \psi \|. 
\]
\end{lem}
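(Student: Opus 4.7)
The plan is to avoid using any regularity on the generalized eigenfunction $\psi$ by establishing the estimate via duality. Fix a real cutoff $\varphi \in \tilde C_c^\infty(\Gamma \cap \Lambda_L^{(n)}(\mathbf{x}_0))$ with $\varphi\equiv 1$ on $\Gamma\cap\Lambda_{L-4}^{(n)}(\mathbf{x}_0)$, $\mathrm{supp}\,\varphi\subset\Lambda_{L-2}^{(n)}(\mathbf{x}_0)$, and derivatives bounded by constants depending only on $n,d$. Let $Q:=\mathrm{supp}\,\nabla\varphi$ and let $\tilde Q\supset Q$ be a slightly larger cellular set, both contained in $\Lambda_L^{\mathrm{out}}(\mathbf{x}_0)$ and disjoint from $\mathbf{C}(\mathbf{x})$ (possible since $\mathbf{C}(\mathbf{x})\subset \Lambda_{L-6}^{(n)}(\mathbf{x}_0)$), with $\dist(\partial\tilde Q,\partial Q)\ge 1$.

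For $f\in L^2(\Gamma^{(n)})$ of compact support, set $g := G_{\Lambda_L^{(n)}(\mathbf{x}_0)}(\lambda)\chi_{\mathbf{x}}f \in D(H_{\Lambda_L}^{(n)}(\omega))$, so that $(H_{\Lambda_L}^{(n)}(\omega)-\lambda)g = \chi_{\mathbf{x}}f$. Extending $g$ by $0$, Lemma~\ref{lem:sobolev} gives $\varphi g\in D(\mathfrak{h}_\omega^{(n)})$ with compact support in the interior of $\Lambda_L^{(n)}(\mathbf{x}_0)$, so the global and restricted forms agree on $\varphi g$. Since $\varphi g\in\mathcal{H}_+$ and $\mathcal{D}_+$ is a core by Lemma~\ref{lem:dense}, the identity $H_-^{(n)}(\omega)\psi=\lambda\psi$ (read through the $\mathcal{H}_-$–$\mathcal{H}_+$ duality of Section~\ref{sec:EGN}) yields $\langle\psi,(H^{(n)}(\omega)-\lambda)(\varphi g)\rangle=0$. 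Using $\chi_{\mathbf{x}}=\varphi\chi_{\mathbf{x}}$, the commutator identity (justified via Lemma~\ref{lem:sym} to handle the face continuity conditions on $\Gamma^{(n)}$) transforms this into
\[
\langle\chi_{\mathbf{x}}\psi,f\rangle \;=\; -\big\langle \chi_Q\psi,\; (\Delta\varphi)g + 2(\nabla\varphi)\cdot\nabla g\big\rangle,
\]
where crucially $\psi$ is paired only against $L^2$ functions — no derivatives of $\psi$ appear — and the right-hand side is supported on $Q\subset \Lambda_L^{\mathrm{out}}(\mathbf{x}_0)$.

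Cauchy–Schwarz and the boundedness of the derivatives of $\varphi$ give
\[
|\langle\chi_{\mathbf{x}}\psi,f\rangle| \;\le\; C_1\|\chi_{\Lambda_L^{\mathrm{out}}(\mathbf{x}_0)}\psi\|\,\bigl(\|\chi_Q g\|+\|\chi_Q\nabla g\|\bigr).
\]
Since $(H_{\Lambda_L}^{(n)}-\lambda)g=\chi_{\mathbf{x}}f$ vanishes on $\tilde Q$, Lemma~\ref{lem:SOL} applied with the pair $(\tilde Q,Q)$ and $E=\lambda\le E_+$ gives $\|\chi_Q\nabla g\| \le C_2\|\chi_{\tilde Q}g\|$, so $\|\chi_Q g\|+\|\chi_Q\nabla g\|\le C_3\|\chi_{\Lambda_L^{\mathrm{out}}(\mathbf{x}_0)}g\|$. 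Because $\lambda$ is real, $(G_{\Lambda_L^{(n)}(\mathbf{x}_0)}(\lambda))^{*}=G_{\Lambda_L^{(n)}(\mathbf{x}_0)}(\lambda)$, hence $\|\chi_{\Lambda_L^{\mathrm{out}}}g\|\le \|\chi_{\mathbf{x}}G_{\Lambda_L^{(n)}(\mathbf{x}_0)}(\lambda)\chi_{\Lambda_L^{\mathrm{out}}}\|\,\|f\|$, which in turn is controlled by $|\mathbf{B}_L^{\mathrm{out}}(\mathbf{x}_0)|\max_{\mathbf{y}\in\mathbf{B}_L^{\mathrm{out}}(\mathbf{x}_0)}\|G_{\Lambda_L^{(n)}(\mathbf{x}_0)}(\mathbf{x},\mathbf{y};\lambda)\|$. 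Taking the supremum over unit-norm $f$ then produces the stated inequality.

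The principal obstacle is the rigorous justification of the pairing $\langle\psi,(H^{(n)}(\omega)-\lambda)(\varphi g)\rangle=0$ when $\psi$ lives only in $\mathcal{H}_-$ and not in the form domain. This is resolved by combining Lemma~\ref{lem:dense} (so that $\varphi g\in\mathcal{D}_+$, a core for $H^{(n)}(\omega)$) with the definition of $H_-^{(n)}(\omega)$ as the $\mathcal{H}_-$-closure of $H^{(n)}(\omega)$: one tests against an approximating sequence $\psi_k\in D(H^{(n)})$ with $\psi_k\to\psi$ and $H^{(n)}\psi_k\to\lambda\psi$ in $\mathcal{H}_-$, and passes to the limit. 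A secondary technical point is that the commutator manipulation must respect the continuity conditions across the faces $\sigma^i$ of $\Gamma^{(n)}$, which is exactly what Lemmas~\ref{lem:sobolev} and~\ref{lem:sym} were designed to handle.
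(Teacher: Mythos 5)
Your argument is correct in its overall structure, and it amounts to a dualized version of the paper's proof: where you test $\chi_{\mathbf{x}}\psi$ against an arbitrary $f\in L^2$, set $g:=G_{\Lambda_L}(\lambda)\chi_{\mathbf{x}}f$, and take a supremum at the end, the paper simply tests against $f=\chi_{\mathbf{x}}\psi$ from the start, writing $\|\chi_{\mathbf{x}}\psi\|^2 = \langle\varphi\psi,(H_\Lambda-\lambda)v\rangle$ with $v:=G_\Lambda(\lambda)\chi_{\mathbf{x}}\psi$ and dividing out $\|\chi_{\mathbf{x}}\psi\|$ at the end. The cutoff, commutator, Lemma~\ref{lem:SOL}, and the regularization of $\psi$ by a sequence $\psi_k\in D(H^{(n)})$ converging to $\psi$ in $\mathcal{H}_-$ (justified by the very definition of $H_-^{(n)}(\omega)$ as a closure) all appear in both.

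There is one genuine soft spot. You claim $\varphi g\in\mathcal{D}_+$ and cite Lemma~\ref{lem:dense}, but that lemma only shows $\mathcal{D}_+$ is dense in $\mathcal{H}_+$ and is an operator core for $H^{(n)}(\omega)$; it says nothing about whether your particular function lands in $\mathcal{D}_+$. To make sense of the pairing $\langle\psi,(H^{(n)}(\omega)-\lambda)(\varphi g)\rangle$ you actually need $\varphi g\in D(H^{(n)}(\omega))$, not merely $\varphi g\in D(\mathfrak{h}^{(n)}_\omega)$ — and the paper explicitly warns (in the discussion after Theorem~\ref{thm:dom}) that operator-domain membership is a delicate regularity question for multi-particle quantum graphs. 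The claim is in fact true: for any $v\in D(\mathfrak{h}^{(n)}_\omega)$, a computation using $g\in D(H^{(n)}_{\Lambda_L})$ and Lemma~\ref{lem:sym} shows
\[
\mathfrak{h}^{(n)}_\omega[\varphi g, v] \;=\; \big\langle\,-(\Delta\varphi)g - 2(\nabla\varphi)\cdot\nabla g + \varphi\,H^{(n)}_{\Lambda_L}g\,,\; v\,\big\rangle ,
\]
with the right-hand side an $L^2$ pairing against a compactly supported function, which yields $\varphi g\in D(H^{(n)}(\omega))$ and $H^{(n)}(\omega)(\varphi g)\in\mathcal{H}_+$. You should either insert this computation or, better, mimic the paper and stay entirely at the form level: the paper never puts $\varphi f_j$ or $\varphi v$ through the operator, only through $\mathfrak{h}_\Lambda$, which requires only Lemma~\ref{lem:sobolev} and avoids the domain question altogether. (A minor cosmetic point: your commutator identity has the wrong sign, $\langle\chi_{\mathbf{x}}\psi,f\rangle = +\langle\chi_Q\psi,(\Delta\varphi)g + 2(\nabla\varphi)\cdot\nabla g\rangle$; this is of course irrelevant once you take absolute values.)
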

\begin{proof}
Let $\Lambda := \Lambda_L^{(n)}(\mathbf{x}_0)$ and $\varphi \in \tilde{C}_c^1(\Gamma \cap \Lambda)$ such that $\varphi = 1$ on a neighborhood of $\Gamma^{(n)} \cap \mathbf{C}(\mathbf{x})$, $\supp \nabla \varphi \subset \tilde{Q} := \inter (\Lambda_{L-2}^{(n)}(\mathbf{x}_0) \setminus \Lambda_{L-4}^{(n)}(\mathbf{x}_0))$, and $\| \nabla \varphi \|_{\infty} \le C_1(nd)$. Then
\[
\| \chi_{\mathbf{x}} \psi \|^2 = \langle \varphi \psi, \chi_{\mathbf{x}} \psi \rangle = \langle \varphi \psi , (H_{\Lambda} - \lambda) G_{\Lambda}(\lambda) \chi_{\mathbf{x}} \psi \rangle.
\]
Put $H:= H^{(n)}(\omega)$. Since $\psi \in D(H_-)$ and $H_-$ is the closure of $H$, there exists $(f_j)$ in $D(H)$ such that $\| f_j - \psi \|_- \to 0$ and $\| Hf_j - H_- \psi\|_- \to 0$ as $j \to \infty$. Now for any $\chi$ of compact support we have
\begin{equation}
\| \chi f_j - \chi \psi \| \le \| \chi w \| \cdot \| f_j - \psi \|_- \to 0        \label{eq:exp1}
\end{equation}
(recall that $Tg :=wg$). Hence taking $v := G_{\Lambda}(\lambda) \chi_{\mathbf{x}} \psi$ we have
\[
\| \chi_{\mathbf{x}} \psi \|^2 = \lim_{j \to \infty} \langle \varphi f_j , (H_{\Lambda} - \lambda) v \rangle = \lim_{j \to \infty} (\mathfrak{h}_{\Lambda} - \lambda) [ \varphi f_j, v]
\]
since $\varphi f_j \in D(\mathfrak{h}_{\Lambda}) = W^{1,2}(\Gamma^{(n)} \cap \Lambda)$ by Lemma~\ref{lem:sobolev}. Now
\begin{align*}
(\mathfrak{h}_{\Lambda} - \lambda) [ \varphi f_j, v] & = \langle \nabla (\varphi f_j), \nabla v \rangle + \langle (V^{\omega}-\lambda) \varphi f_j, v \rangle \\
& = \big[ \langle \nabla f_j, \nabla(\varphi v) \rangle + \langle (V^{\omega} - \lambda) f_j, \varphi v \rangle \big] + \langle f_j \nabla \varphi, \nabla v \rangle - \langle \nabla f_j, v \nabla \varphi \rangle.
\end{align*}
Since $\varphi v \in W^{1,2}(\Gamma^{(n)} \cap \Lambda)$ has compact support in $\Lambda$, we may extend it by zero to a function $g$ in $D(\mathfrak{h}) \cap C_c(\Gamma)$. Hence
\begin{equation}
(\mathfrak{h}_{\Lambda} - \lambda) [ \varphi f_j, v] = \langle (H-\lambda) f_j, g \rangle + \langle f_j \nabla \varphi, \nabla v \rangle - \langle \nabla f_j, v \nabla \varphi \rangle. \label{eq:exp2}
\end{equation}
Now $H_- \psi = \lambda \psi$, so by the choice of $f_j$
\begin{equation}
\| (H-\lambda)f_j\|_- \le \| Hf_j -\lambda \psi \|_- +  |\lambda| \cdot \| f_j - \psi\|_- \to 0. \label{eq:exp3}
\end{equation}
Thus
\[
| \langle (H-\lambda) f_j, g \rangle| \le \| (H - \lambda) f_j \|_- \|g\|_+ \to 0.
\]
The second term in (\ref{eq:exp2}) tends to $\langle \psi \nabla \varphi, \nabla v \rangle$ by (\ref{eq:exp1}). For the third term, note that by Lemma~\ref{lem:SOL}, taking $Q := \inter \Lambda_L^{\out}(\mathbf{x}_0)$, we can find $c_1$ such that
\begin{align*}
\| \chi_{\tilde{Q}} \nabla f_j \| & \le c_1 ( \| \chi_Q (H- \lambda) f_j \| + \| \chi_Q f_j \|) \\
& \le c_1 ( \| \chi_Q w\| \cdot \|(H- \lambda) f_j \|_- + \| \chi_Q f_j \|) \to c_1 \| \chi_Q \psi \|
\end{align*}
using (\ref{eq:exp1}) and (\ref{eq:exp3}). 

Recalling that $\supp \nabla \varphi \subset \tilde{Q}$, the above derivation finally yields
\[
\| \chi_{\mathbf{x}} \psi \|^2 \le \| \nabla \varphi \|_{\infty} \| \chi_{\tilde{Q}} \psi \| \| \chi_{\tilde{Q}} \nabla v \| + c_1 \| \nabla \varphi \|_{\infty} \| \chi_{Q} \psi \| \| \chi_{\tilde{Q}} v \| \, .
\]
By Lemma~\ref{lem:SOL}, we can find $c_2$ such that
\[ 
\| \chi_{\tilde{Q}} \nabla v \| \le c_2 \| \chi_{Q} v \|
\]
(note that $(H_{\Lambda} - \lambda) v = \chi_{\mathbf{x}} \psi=0$ on $Q$). Taking $C = \max(2c_1 \| \nabla \varphi \|_{\infty} , 2c_2 \| \nabla \varphi \|_{\infty})$ and noting that $\tilde{Q} \subset Q \subset \Lambda_L^{\out}(\mathbf{x}_0)$ we thus get
\[
\| \chi_{\mathbf{x}} \psi \|^2 \le C \cdot \| \chi_{\Lambda_L^{\out}(\mathbf{x}_0)} \psi \| \cdot \| \chi_{\Lambda_L^{\out}(\mathbf{x}_0)} v \|.
\]
Since $\| \chi_{\Lambda_L^{\out}(\mathbf{x}_0)} v \| \le \| \chi_{\Lambda_L^{\out}(\mathbf{x}_0)} G_{\Lambda}(\lambda) \chi_{\mathbf{x}} \| \|\chi_{\mathbf{x}} \psi\|$, we get
\[
\| \chi_{\mathbf{x}} \psi \| \le C \cdot \| \chi_{\Lambda_L^{\out}(\mathbf{x}_0)} G_{\Lambda}(\lambda) \chi_{\mathbf{x}} \| \cdot \| \chi_{\Lambda_L^{\out}(\mathbf{x}_0)} \psi \|.
\]
The assertion now follows by the triangle inequality.
\end{proof}

We now prove exponential localization by adapting \cite[Theorem 2.3]{DK}.

\begin{proof}[Proof of Theorem~\ref{thm:exp}]
Choose $m$ and $\varepsilon_0$ such that (\textsf{DS}$\,:N,k,m,I_N)$ holds for all $k \ge 0$ in $I_N = [Nq_- - \frac{1}{2}, Nq_- + \varepsilon_0]$, as guaranteed by Theorem~\ref{thm:MPMSA}. Let $\sigma^{\omega}_{\text{gen}}$ be the set of generalized eigenvalues of $H^{(N)}(\omega)$. By Theorem~\ref{thm:GEGN} there exists $A^{\omega}_0 \subseteq \R$ of full $\nu_{\omega}$-measure such that $A^{\omega}_0 \subseteq \sigma^{\omega}_{\text{gen}}$. If we show that every $\lambda \in \sigma^{\omega}_{\text{gen}} \cap I$ is an eigenvalue, $A^{\omega}_0 \cap I$ will be countable (as $L^2(\Gamma^{(N)})$ is separable), so $\left. {\nu_{\omega}}\right|_{I}$ will be concentrated on a countable set and $\sigma(H^{(N)}(\omega)) \cap I$ will be pure point. It thus suffices to show that with probability one the generalized eigenfunctions of $H^{(N)}(\omega)$ corresponding to $\lambda \in \sigma^{\omega}_{\text{gen}} \cap I$ decay exponentially with mass $m$.

Let $b \in \N^{\ast}$ to be chosen later and define
\[ 
A_{k+1} = \mathbf{B}^{(N)}_{2br_{k+1}}(\mathbf{0}) \, \big\backslash \, \mathbf{B}^{(N)}_{2r_k}(\mathbf{0}),
\]
where $r_k:= r_{N,L_k}$. Then by Lemma~\ref{lem:separable}, any $\mathbf{x} \in A_{k+1}$ satisfies that $\Lambda_{L_k}^{(N)}(\mathbf{x})$ is separable from $\Lambda_{L_k}^{(N)}(\mathbf{y})$ for any $\mathbf{y} \in \mathbf{B}_{r_k}^{(N)}(\mathbf{0})$. Now define the event
\[ 
E_k = \{ \exists \lambda \in I , \mathbf{x} \in A_{k+1}, \mathbf{y} \in \mathbf{B}^{(N)}_{r_k}(\mathbf{0}) : \Lambda_{L_k}^{(N)}(\mathbf{x}) \text{ and } \Lambda_{L_k}^{(N)}(\mathbf{y}) \text{ are } (\lambda,m)\text{-S} \}. 
\]
Then by Theorem~\ref{thm:MPMSA}, we have
\[ 
\prob(E_k) \le (4br_{k+1}-1)^{Nd}(2r_k-1)^{Nd}L_k^{-2p_N(1+\theta)^k} \le c L_k^{2N \alpha d - 2p_N(1+\theta)^k} 
\]
Hence $\sum_{k=0}^{\infty} \prob(E_k) < \infty$. So by the Borel-Cantelli Lemma, if we define the event
\[ 
\Omega_1 = \{ E_k \text{ occurs finitely often} \}, 
\]
we have $\prob(\Omega_1) = 1$. Now let $\omega \in \Omega_1$ and $\lambda \in \sigma^{\omega}_{\text{gen}} \cap I$ correspond to a generalized eigenfunction $\psi$. If $\| \chi_{\mathbf{x}} \psi \| = 0$ for all $\mathbf{x} \in \Z^{Nd}$, then $\psi=0$ and the theorem holds. So suppose $\| \chi_{\mathbf{y}} \psi \| \neq 0$ for some $\mathbf{y} \in \Z^{Nd}$. Then by Lemma~\ref{lem:EDI} we may find $C_1=C_1(N,d,q_-,\gamma,\|\psi\|_-)$ such that
\[ 
\| \chi_{\mathbf{y}} \psi \| \le C_1 \cdot |\mathbf{B}_{L_k}^{\out}(\mathbf{y})| \max_{\mathbf{z} \in \mathbf{B}_{L_k}^{\out}(\mathbf{y})} \| G_{\Lambda_{L_k}^{(N)}(\mathbf{y})}(\mathbf{y},\mathbf{z};\lambda) \| \cdot (1+ (| \mathbf{y} | + L_k)^2)^{\gamma/4}.
\]
Now if $\Lambda_{L_k}^{(N)}(\mathbf{y})$ is $(\lambda,m)\text{-NS}$, we get
\[ 
\| \chi_{\mathbf{y}} \psi \| \le C_1' L_k^{Nd-1} e^{-mL_k} ( 1+ (| \mathbf{y} | + L_k)^2 )^{\gamma/4}. 
\]
Since $\| \chi_{\mathbf{y}} \psi \| \neq 0$, there exists $k_0$ such that $\Lambda_{L_k}^{(N)}(\mathbf{y})$ is $(\lambda,m)\text{-S}$ for all $k \ge k_0$. But there exists $k_1$ such that $\mathbf{y} \in \mathbf{B}_{r_k}^{(N)}(\mathbf{0})$ for all $k \ge k_1$. Finally, since $\omega \in \Omega_1$, we may find $k_2$ such that $E_k$ does not occur if $k \ge k_2$. Let $k_3 = \max(k_0,k_1,k_2)$. Then for $k \ge k_3$, we conclude that $\Lambda_{L_k}^{(N)}(\mathbf{x})$ is $(\lambda,m)\text{-NS}$ for all $\mathbf{x} \in A_{k+1}$. 

Now given $0<\rho<1$, we choose $b > \frac{1+\rho}{1-\rho}$ and define
\[ 
\tilde{A}_{k+1} = \mathbf{B}^{(N)}_{\frac{2b}{1+\rho}r_{k+1}}(\mathbf{0}) \, \big\backslash \, \mathbf{B}^{(N)}_{\frac{2}{1-\rho}r_k}(\mathbf{0}). 
\]
Then $\tilde{A}_{k+1} \subset A_{k+1}$ and for any $\mathbf{x} \in \tilde{A}_{k+1}$, we have
\[ 
\dist(\mathbf{x},\partial A_{k+1}) \ge \rho \cdot | \mathbf{x} |. 
\] 
Indeed, if $\mathbf{x} \in \tilde{A}_{k+1}$, then
\[
d(\mathbf{x},\partial \mathbf{B}^{(N)}_{2br_{k+1}}(\mathbf{0})) \ge 2br_{k+1} - \frac{2b}{1+\rho}r_{k+1} = \rho \frac{2b}{1+\rho}r_{k+1} \ge \rho \cdot |\mathbf{x}|,
\]
\[
d(\mathbf{x},\partial \mathbf{B}^{(N)}_{2r_k}(\mathbf{0})) = |\mathbf{x}| - 2r_k \ge |\mathbf{x}| - (1-\rho)|\mathbf{x}| = \rho \cdot |\mathbf{x}|.
\]
Now for $\mathbf{x} \in \tilde{A}_{k+1}$ with $k \ge k_3$, $\Lambda_{L_k}^{(N)}(\mathbf{x})$ is $(\lambda,m)\text{-NS}$, so by Lemma~\ref{lem:EDI},
\[ 
\| \chi_{\mathbf{x}} \psi \| \le C_2 L_k^{2(Nd-1)} e^{-mL_k} \| \chi_{\mathbf{w}_1} \psi \| 
\]
for some $\mathbf{w}_1 \in \mathbf{B}_{L_k}^{\out}(\mathbf{x})$. We may iterate at least $\lfloor \frac{ \rho \cdot | \mathbf{x}|}{L_k-1} \rfloor$ times and obtain
\begin{align*}
\| \chi_{\mathbf{x}} \psi \| & \le (C_2 L_k^{2(Nd-1)} e^{-mL_k} )^{\lfloor \frac{\rho \cdot | \mathbf{x} | }{L_k-1} \rfloor} C_3\| \psi \|_- ( 1+ ( 2br_{k+1})^2 )^{\gamma/4} \\
& \le e^{- m \rho' \rho \cdot | \mathbf{x} |}
\end{align*}
for any $0<\rho'<1$, provided $k \ge k_4$ for some $k_4 \ge k_3$. But if $\mathbf{x} \notin \mathbf{B}^{(N)}_{\frac{2}{1-\rho}r_{k_4}}(\mathbf{0})$, then $\mathbf{x} \in \tilde{A}_{k+1}$ for some $k\ge k_4$ (since $\frac{2b}{1+\rho} r_{k+1} > \frac{2}{1-\rho} r_{k+1}$) and the bound is satisfied. Thus,
\[ 
\log \| \chi_{\mathbf{x}} \psi \| \le -m \rho' \rho \cdot | \mathbf{x} | 
\]
whenever $\mathbf{x} \notin \mathbf{B}^{(N)}_{ \frac{2}{1-\rho}r_{k_4}}(\mathbf{0})$. Hence
\[ 
\limsup_{| \mathbf{x} | \to \infty } \frac{ \log \| \chi_{\mathbf{x}} \psi \|}{|\mathbf{x}|} \le -m\rho'\rho
\]
for all $\rho,\rho' \in (0,1)$, which completes the proof of the theorem.
\end{proof}

\section{Dynamical Localization}    \label{sec:dyn}
We finally establish dynamical localization for $H^{(N)}(\omega)$ using the approach of \cite{GK}. In the following we consider the event
\[
R(m,L,I,\mathbf{x},\mathbf{y}) := \{ \, \forall \lambda \in I: \Lambda_L^{(N)}(\mathbf{x}) \text{ or } \Lambda_L^{(N)}(\mathbf{y}) \text{ is } (\lambda,m)\text{-NS} \, \}
\]
for $\mathbf{x}$, $\mathbf{y}$ such that the corresponding cubes are separable. We start with the following key lemma.

\begin{lem}      \label{lem:proj}
Let $m>0$, $I \subset \R$ and assume $\omega \in R(m,L,I,\mathbf{x},\mathbf{y})$. Then 
\[
\| \chi_{\mathbf{x}} P_{\omega}(\lambda) \chi_{\mathbf{y}} \|_2 \le C e^{-mL/2} (1+|\mathbf{x}|)^{\gamma /2}(1+|\mathbf{y}|)^{\gamma /2}
\]
for $\nu_{\omega}$-a.e. $\lambda \in I$ and large $L$, with $C=C(I,m,N,d,\gamma,q_-)<\infty$.
\end{lem}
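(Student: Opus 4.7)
The plan is to decompose the Hilbert--Schmidt norm into a sum over an orthonormal basis, apply the eigenfunction decay inequality (Lemma~\ref{lem:EDI}) to each generalized eigenfunction produced by $P_\omega(\lambda)$, and reassemble. By Theorem~\ref{thm:GEGN}, for $\nu_\omega$-a.e.\ $\lambda \in I$ the operator $P_\omega(\lambda)$ is self-dual and lies in $\mathcal{T}_{1,+}(\mathcal{H}_+,\mathcal{H}_-)$ with $\tr P_\omega(\lambda) = 1$, and $P_\omega(\lambda)\phi$ is a generalized eigenfunction of $H^{(N)}(\omega)$ with generalized eigenvalue $\lambda$ for every $\phi \in \mathcal{H}_+$. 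Note that $\chi_{\mathbf{x}}$ maps $\mathcal{H}_-$ continuously into $\mathcal{H}$ with norm $\|\chi_{\mathbf{x}} w\|_\infty \le C(1+|\mathbf{x}|)^{\gamma/2}$ (since $w$ is bounded on $\mathbf{C}(\mathbf{x})$ by that quantity), and symmetrically $\chi_{\mathbf{y}}$ maps $\mathcal{H}$ continuously into $\mathcal{H}_+$. Since $P_\omega(\lambda) = P_\omega(\lambda)^\dagger$ and the HS norm is preserved under the duality adjoint, I can assume without loss of generality that $\Lambda_L^{(N)}(\mathbf{x})$ is the $(\lambda,m)$-NS cube.

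Pick an orthonormal basis $(e_n)$ of $\mathcal{H}$. Each $\psi_n := P_\omega(\lambda) \chi_{\mathbf{y}} e_n \in \mathcal{H}_-$ is then a generalized eigenfunction of $H^{(N)}(\omega)$ with generalized eigenvalue $\lambda$. For $L \ge 7$ the condition $\mathbf{C}(\mathbf{x}) \subset \Lambda_{L-6}^{(N)}(\mathbf{x})$ of Lemma~\ref{lem:EDI} is met, and $\lambda \in I$ gives the required upper energy bound; together with the NS property of $\Lambda_L^{(N)}(\mathbf{x})$ this yields
\[
\|\chi_{\mathbf{x}} \psi_n\| \le C\, L^{Nd-1} e^{-mL}\, \|\chi_{\Lambda_L^{\out}(\mathbf{x})} \psi_n\|,
\]
with $C$ depending only on $\sup I$, $N$, $d$, $q_-$. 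Squaring and summing in $n$ gives
\[
\|\chi_{\mathbf{x}} P_\omega(\lambda) \chi_{\mathbf{y}}\|_2 \le C\, L^{Nd-1} e^{-mL}\, \|\chi_{\Lambda_L^{\out}(\mathbf{x})} P_\omega(\lambda) \chi_{\mathbf{y}}\|_2.
\]

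To finish I would bound the remaining HS norm by factoring
\[
\chi_{\Lambda_L^{\out}(\mathbf{x})} P_\omega(\lambda) \chi_{\mathbf{y}} = (\chi_{\Lambda_L^{\out}(\mathbf{x})} T)\cdot(T^{-1} P_\omega(\lambda) T^{-1})\cdot(T \chi_{\mathbf{y}}).
\]
The outer factors are multiplication operators with sup-norms at most $C L^{\gamma/2}(1+|\mathbf{x}|)^{\gamma/2}$ and $C(1+|\mathbf{y}|)^{\gamma/2}$ respectively; the central factor is bounded in HS (indeed, in trace-class) norm by $\tr(T^{-1}P_\omega(\lambda)T^{-1})$, which by the definition of the trace on $\mathcal{T}_{1,+}(\mathcal{H}_+,\mathcal{H}_-)$ equals $\tr P_\omega(\lambda) = 1$. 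Combining,
\[
\|\chi_{\mathbf{x}} P_\omega(\lambda) \chi_{\mathbf{y}}\|_2 \le C\, L^{Nd-1+\gamma/2} e^{-mL} (1+|\mathbf{x}|)^{\gamma/2} (1+|\mathbf{y}|)^{\gamma/2},
\]
and for $L$ large the polynomial prefactor is absorbed into $e^{-mL/2}$. The main obstacle I anticipate is this last factorization step, namely verifying that the normalization $\tr P_\omega(\lambda) = 1$ translates, through the duality conventions underlying Theorem~\ref{thm:GEGN} and \cite{KKS}, into a $\lambda$-uniform (in fact constant) bound on $\|T^{-1} P_\omega(\lambda) T^{-1}\|_2$; everything else is a clean combination of the EDI with elementary HS estimates.
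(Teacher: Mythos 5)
Your proposal is correct and matches the paper's proof in structure and substance: apply Lemma~\ref{lem:EDI} to the generalized eigenfunctions $P_\omega(\lambda)\chi_{\mathbf{y}}\phi$, pass from the vector estimate to the HS-norm estimate, then control $\|\chi_{\Lambda_L^{\out}(\mathbf{x})} P_\omega(\lambda) \chi_{\mathbf{y}}\|_2$ via the normalization $\tr P_\omega(\lambda) = 1$. The one point you flag as a potential obstacle is not one: in the KKS framework the isometric isomorphism $\mathcal{T}_1(\mathcal{H}_+,\mathcal{H}_-) \to \mathcal{T}_1(\mathcal{H})$ is exactly $A \mapsto T^{-1}AT^{-1}$ (since $T^{-1}$ realizes the unitary identifications $\mathcal{H}_- \to \mathcal{H}$ and $\mathcal{H} \to \mathcal{H}_+$), so $\tr(T^{-1}P_\omega(\lambda)T^{-1}) = \tr P_\omega(\lambda) = 1$ by definition, and positivity of $P_\omega(\lambda)$ gives $\|T^{-1}P_\omega(\lambda)T^{-1}\|_2 \le \|T^{-1}P_\omega(\lambda)T^{-1}\|_1 = 1$. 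The paper phrases this final step slightly differently — it bounds the trace norm of the product $\chi_{\Lambda_L^{\out}(\mathbf{x})} P_\omega(\lambda) \chi_{\mathbf{y}}$ via the operator norms $\|\chi_{\Lambda_L^{\out}(\mathbf{x})}\|_{\mathcal{H}_-\to\mathcal{H}}$, $\|P_\omega(\lambda)\|_{\mathcal{T}_1}$, $\|\chi_{\mathbf{y}}\|_{\mathcal{H}\to\mathcal{H}_+}$ and then uses $\|\cdot\|_2 \le \|\cdot\|_1$ — but your explicit factorization through $T$ and $T^{-1}$ is equivalent and gives the same constants.
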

\begin{proof}
Let $A^{\omega}_0$ be the set of full $\nu_{\omega}$-measure such that Theorem~\ref{thm:GEGN} holds for all $\lambda \in A^{\omega}_0$. Given $\lambda \in I \cap A^{\omega}_0$, either $\Lambda_L^{(N)}(\mathbf{x})$ or $\Lambda_L^{(N)}(\mathbf{y})$ is $(\lambda,m)$-NS. Since $P_{\omega}(\lambda) = P_{\omega}(\lambda)^{\dag}$, we have $\| \chi_{\mathbf{x}} P_{\omega}(\lambda) \chi_{\mathbf{y}} \|_2 = \| \chi_{\mathbf{y}} P_{\omega}(\lambda) \chi_{\mathbf{x}} \|_2$, so we may assume that $\Lambda_L^{(N)}(\mathbf{x})$ is $(\lambda,m)$-NS. Now if $\phi \in \mathcal{H}$, then by Theorem~\ref{thm:GEGN}, the vector $P_{\omega}(\lambda) \chi_{\mathbf{y}} \phi$ is a generalized eigenfunction of $H^{(N)}(\omega)$, hence by Lemma~\ref{lem:EDI}, 
\[
\| \chi_{\mathbf{x}} P_{\omega}(\lambda) \chi_{\mathbf{y}} \phi \| \le C_1 (2L-1)^{Nd-1} e^{-mL} \| \chi_{\Lambda_L^{\out}(\mathbf{x})} P_{\omega}(\lambda) \chi_{\mathbf{y}} \phi \| \, .
\] 
Hence by definition of the HS norm, 
\[
\| \chi_{\mathbf{x}} P_{\omega}(\lambda) \chi_{\mathbf{y}} \|_2 \le C_1 (2L-1)^{Nd-1} e^{-mL} \| \chi_{\Lambda_L^{\out}(\mathbf{x})} P_{\omega}(\lambda) \chi_{\mathbf{y}} \|_2 \, .
\]
But
\begin{align*}
\| \chi_{\Lambda_L^{\out}(\mathbf{x})} P_{\omega}(\lambda) \chi_{\mathbf{y}} \|_1 & \le \| \chi_{\Lambda_L^{\out}(\mathbf{x})} \|_{\mathcal{H}_- \to \mathcal{H}} \| P_{\omega}(\lambda) \|_{\mathcal{T}_1(\mathcal{H}_+,\mathcal{H}_-)} \| \chi_{\mathbf{y}} \|_{\mathcal{H} \to \mathcal{H}_+} \\
& \le c (1+(|\mathbf{x}| +L)^2)^{\gamma/4}(1+(|\mathbf{y}|+1)^2)^{\gamma/4}
\end{align*}
since $\tr P_{\omega}(\lambda) =1$ and $P_{\omega}(\lambda)\ge 0$. The claim follows since $\| \cdot \|_2 \le \| \cdot \|_1$.
\end{proof}

We now establish the decay of the operator kernel. Given a bounded $K$ as in the statement of Theorem~\ref{thm:main}, we find $k_0>0$ such that $K \subset \Gamma \cap \Lambda_{r_{N,L_{k_0}}}^{(N)}(\mathbf{0})$. For $j \ge k_0$ put
\begin{align*}
F_j = \Lambda_{2r_{N,L_j}}^{(N)}(\mathbf{0}), \qquad & \qquad \tilde{F}_j = \mathbf{B}_{2r_{N,L_j}}^{(N)}(\mathbf{0}), \\
M_j=F_{j+1} \setminus F_j, \qquad & \qquad \tilde{M}_j = \tilde{F}_{j+1} \setminus \tilde{F}_j.
\end{align*}
In the following, we choose $m$ and $\varepsilon_0$ such that (\textsf{DS}$\,:N,k,m,I_N)$ holds for all $k \ge 0$ in $I_N = [Nq_- - \frac{1}{2}, Nq_- + \varepsilon_0]$, as guaranteed by Theorem~\ref{thm:MPMSA}.
 
\begin{lem}     \label{lem:kernel}
There exists $c=c(N,d,q_-,r_0,\gamma)$ such that for $\mathbf{x} \in \tilde{M}_j$ and $\mathbf{y} \in \mathbf{B}_{r_{N,L_j}}^{(N)}(\mathbf{0})$ with $j$ large enough, we have for $I=[Nq_-,Nq_-+\varepsilon_0]:$
\[
\expect \Big( \sup_{\| f\| \le 1} \| \chi_{\mathbf{x}} f(H^{(N)}(\omega)) E_{\omega}(I) \chi_{\mathbf{y}} \|_2^2 \Big) \le c (e^{-mL_j/2} + L_j^{-2p_N(1+\theta)^j+ \gamma}).
\]
\end{lem}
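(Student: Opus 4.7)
The plan is to split the expectation according to the good event $R := R(m,L_j,I,\mathbf{x},\mathbf{y})$ and its complement, controlling the probability of the bad event via the MSA and the Hilbert--Schmidt norm on the good event via the generalized eigenfunction expansion. First I would invoke Theorem~\ref{thm:GEGN} to write
\[
\chi_{\mathbf{x}} f(H^{(N)}(\omega)) E_\omega(I) \chi_{\mathbf{y}} = \int_I f(\lambda)\,\chi_\mathbf{x} P_\omega(\lambda) \chi_\mathbf{y}\, d\nu_\omega(\lambda),
\]
so that the triangle inequality for $\|\cdot\|_2$ together with $\|f\|_\infty \le 1$ yields, pointwise in $\omega$,
\[
\sup_{\|f\|\le 1}\|\chi_\mathbf{x} f(H^{(N)}(\omega))E_\omega(I)\chi_\mathbf{y}\|_2 \le \int_I \|\chi_\mathbf{x} P_\omega(\lambda)\chi_\mathbf{y}\|_2\, d\nu_\omega(\lambda).
\]
Because $\mathbf{x}\in\tilde M_j$ while $\mathbf{y}\in\mathbf{B}_{r_{N,L_j}}^{(N)}(\mathbf{0})$, part~3 of Lemma~\ref{lem:separable} guarantees that $\Lambda_{L_j}^{(N)}(\mathbf{x})$ and $\Lambda_{L_j}^{(N)}(\mathbf{y})$ are separable, so that (\textsf{DS}$\,:N,j,m,I_N$) from Theorem~\ref{thm:MPMSA} (applied to the smaller interval $I\subset I_N$) gives $\prob(R^c)\le L_j^{-2p_N(1+\theta)^j}$.

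On the good event $R$, Lemma~\ref{lem:proj} yields $\|\chi_\mathbf{x} P_\omega(\lambda)\chi_\mathbf{y}\|_2\le Ce^{-mL_j/2}(1+|\mathbf{x}|)^{\gamma/2}(1+|\mathbf{y}|)^{\gamma/2}$ for $\nu_\omega$-a.e.\ $\lambda\in I$. Integrating this against $\nu_\omega$ using the uniform bound $\nu_\omega(I)\le C_{\tr}$ from \eqref{eq:ge1} and squaring, one obtains on $R$
\[
\sup_{\|f\|\le 1}\|\chi_\mathbf{x} f(H^{(N)}(\omega))E_\omega(I)\chi_\mathbf{y}\|_2^2 \le C' e^{-mL_j}(1+|\mathbf{x}|)^\gamma(1+|\mathbf{y}|)^\gamma.
\]
Since $|\mathbf{x}|\lesssim r_{N,L_{j+1}}\lesssim L_j^\alpha$ and $|\mathbf{y}|\lesssim L_j$, the polynomial factor is absorbed into $e^{-mL_j/2}$ for $j$ large, producing the first term of the claimed bound.

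On the bad event $R^c$ a deterministic $\omega$-uniform bound is needed. The pointwise inequality $\chi_\mathbf{z}\le(\sup_{\mathbf{C}(\mathbf{z})}w)\,T^{-1}\chi_\mathbf{z}$, combined with the submultiplicativity $\|T^{-1}\chi_\mathbf{z} E_\omega(I)\|_2 \le \|T^{-1}E_\omega(I)\|_2 = \sqrt{\nu_\omega(I)}$ and Lemma~\ref{lem:tr} together with \eqref{eq:ge1}, yields $\|\chi_\mathbf{z} E_\omega(I)\|_2 \le C(1+|\mathbf{z}|)^{\gamma/2}$ uniformly in $\omega$. Writing
\[
\chi_\mathbf{x} f(H^{(N)}(\omega))E_\omega(I)\chi_\mathbf{y}=\bigl(\chi_\mathbf{x} f(H^{(N)}(\omega))E_\omega(I)\bigr)\bigl(E_\omega(I)\chi_\mathbf{y}\bigr)
\]
and using that the first factor has operator norm $\le\|f\|_\infty\le 1$, submultiplicativity of $\|\cdot\|_2$ gives $\|\chi_\mathbf{x} f(H^{(N)}(\omega))E_\omega(I)\chi_\mathbf{y}\|_2\le \|E_\omega(I)\chi_\mathbf{y}\|_2 \le C(1+|\mathbf{y}|)^{\gamma/2}\lesssim L_j^{\gamma/2}$, where the key gain is that $\mathbf{y}$ lies in the inner region. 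Squaring and multiplying by $\prob(R^c)$ contributes $L_j^{-2p_N(1+\theta)^j+\gamma}$, and summing with the good-event contribution gives the stated bound. The main technical point is precisely the choice to keep only the $\mathbf{y}$-factor in the deterministic bound: retaining $|\mathbf{x}|\lesssim L_j^\alpha$ instead would worsen the exponent to $-2p_N(1+\theta)^j+\gamma\alpha$, which could be insufficient at low $j$.
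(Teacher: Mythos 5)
Your proposal is correct and follows essentially the same route as the paper: expand via Theorem~\ref{thm:GEGN}, split on the event $R(m,L_j,I,\mathbf{x},\mathbf{y})$, use Lemma~\ref{lem:proj} together with $\nu_\omega(I)\le C_{\tr}$ on the good event, and on the bad event drop the $\chi_\mathbf{x}$ via an operator-norm bound while retaining only the $\mathbf{y}$-weight (the paper writes this as $\|\chi_\mathbf{y}T\|^2\|E_\omega(I)T^{-1}\|_2^2$), then apply Theorem~\ref{thm:MPMSA} to $\prob(R^c)$. The only differences are cosmetic (squaring before versus after absorbing the polynomial prefactor), and your closing remark about why one must keep the $\mathbf{y}$-factor rather than the $\mathbf{x}$-factor correctly identifies the reason the paper's bound is stated as it is.
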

\begin{proof}
Given a bounded Borel function $f$ put $f_I:= f \chi_I$ and $H_{\omega} := H^{(N)}(\omega)$. By Theorem~\ref{thm:GEGN} and standard properties of the Bochner integral in the space of HS operators we have
\[
\| \chi_{\mathbf{x}} f_I(H_{\omega}) \chi_{\mathbf{y}} \|_2 \le \int_I |f(\lambda)| \| \chi_{\mathbf{x}} P_{\omega}(\lambda) \chi_{\mathbf{y}} \|_2 \, \dd \nu_{\omega}(\lambda).
\]
Since $\mathbf{x} \in \tilde{M}_j$ and $\mathbf{y} \in \mathbf{B}_{r_{N,L_j}}^{(N)}(\mathbf{0})$, we know by Lemma~\ref{lem:separable} that $\Lambda_{L_j}^{(N)}(\mathbf{x})$ and $\Lambda_{L_j}^{(N)}(\mathbf{y})$ are separable. Hence if $\omega \in B_j:= R(m,L_j,I,\mathbf{x},\mathbf{y})$, we have by Lemma~\ref{lem:proj}
\[
\| \chi_{\mathbf{x}} P_{\omega}(\lambda) \chi_{\mathbf{y}} \|_2 \le C_1 L_{j+1}^{\gamma/2} L_j^{\gamma/2} e^{-mL_j/2} \le e^{-mL_j/4}
\]
for $\nu_{\omega}$-a.e. $\lambda \in I$ and $j$ large enough. Hence
\[
\| \chi_{\mathbf{x}} f_I(H_{\omega}) \chi_{\mathbf{y}} \|_2 \le \| f\|_{\infty} e^{-mL_j/4} \nu_{\omega}(I) \le C_{\tr} \| f\|_{\infty} e^{-mL_j/4}
\] 
where $C_{\tr}=C_{\tr}(N,d,q_-,\gamma)$ is given by (\ref{eq:ge1}). For $\omega \in B_j^c$ we have the bound
\begin{align*}
\| \chi_{\mathbf{x}} f_I(H_{\omega}) \chi_{\mathbf{y}} \|_2^2 & \le \|f\|_{\infty}^2 \|E_{\omega}(I) \chi_{\mathbf{y}} \|_2^2 \\
& \le \|f\|_{\infty}^2 \| \chi_{\mathbf{y}} T\|^2 \|E_{\omega}(I) T^{-1}\|_2^2 \le C_2 \|f\|_{\infty}^2 L_j^{\gamma} \nu_{\omega}(I)
\end{align*}
for $C_2=C_2(\gamma,N,d,r_0)$. Again $\nu_{\omega}(I) \le C_{\tr}$, so we finally get
\[
\expect \Big( \sup_{\| f\| \le 1} \| \chi_{\mathbf{x}} f_I(H_{\omega}) \chi_{\mathbf{y}} \|_2^2 \Big) \le C^2_{\tr} e^{-mL_j/2} \prob(B_j) + C_2 C_{\tr} L_j^{\gamma} \prob(B_j^c).
\]
Using Theorem~\ref{thm:MPMSA} to estimate $\prob(B_j^c)$, we obtain the assertion.
\end{proof}

We are finally ready to prove our main result. Note that if $R$ is a Hilbert-Schmidt operator on $L^2(\Gamma)$ and if $A, B \subset \R^{Nd}$ are disjoint, then
\begin{equation}
\| \chi_{A \cup B} R \|_2^2 = \tr[R^{\ast} \chi_{A \cup B} R] = \tr[R^{\ast} \chi_A R] + \tr[R^{\ast} \chi_B R] = \| \chi_A R \|_2^2 + \| \chi_B R \|_2^2, \label{eq:dyn1}
\end{equation}
\begin{equation}
\| R \chi_{A \cup B} \|_2^2 = \| \chi_{A \cup B} R^{\ast} \|_2^2 = \| \chi_A R^{\ast} \|_2^2 + \| \chi_B R^{\ast} \|_2^2 = \| R \chi_A\|_2^2 + \|R \chi_B \|_2^2. \label{eq:dyn2}
\end{equation}

\begin{proof}[Proof of Theorem~\ref{thm:main}]
Let $k \ge k_0$ be sufficiently large so that Lemma~\ref{lem:kernel} holds for $j \ge k$. Given $s>0$ and a bounded Borel function $f$ put $f_I := f \chi_I$ and $H_{\omega}:=H^{(N)}(\omega)$. Since $F_k \cup \big(\cup_{j \ge k} M_j \big) = \R^{Nd}$, we have by (\ref{eq:dyn1})
\begin{align*}
\expect \Big\{ \sup_{\|f\| \le 1} \| X^{s/2} f_I(H_{\omega}) \chi_K \|_2^2 \Big\} & \le \expect \Big\{ \sup_{\| f\| \le 1} \| \chi_{F_k} X^{s/2} f_I(H_{\omega}) \chi_K \|_2^2 \Big\} \\
& \quad + \expect \Big\{ \sum_{j \ge k} \sup_{\| f\| \le 1} \| \chi_{M_j} X^{s/2} f_I(H_{\omega}) \chi_K \|_2^2 \Big\} \, . 
\end{align*}
Let us estimate the first term. We have
\begin{align*}
\| \chi_{F_k} X^{s/2} f_I(H_{\omega}) \chi_K \|_2^2 & \le c_1 \|f\|_{\infty}^2 L_k^s \| E_{\omega}(I) \chi_K \|_2^2 \\
& \le c_1 \|f\|_{\infty}^2 L_k^s \| \chi_K T \|^2 \| E_{\omega}(I) T^{-1} \|_2^2.
\end{align*}
Since $\| E_{\omega}(I) T^{-1} \|_2^2 = \nu_{\omega}(I) \le C_{\tr}$ by (\ref{eq:ge1}), we get 
\[
\expect \Big\{ \sup_{\| f\| \le 1} \| \chi_{F_k} X^{s/2} f_I(H_{\omega}) \chi_K \|_2^2 \Big\} \le c_2 L_k^{s+\gamma} < \infty.
\]
For the second term, note that $\chi_{M_j} X^{s/2} g= X^{s/2} \chi_{M_j} g$ for $g \in D(X^{s/2})$, so using (\ref{eq:dyn1}) and (\ref{eq:dyn2}),
\begin{align*}
& \expect \Big\{ \sum_{j \ge k} \sup_{\| f\| \le 1} \| \chi_{M_j} X^{s/2} f_I(H_{\omega}) \chi_K \|_2^2 \Big\} \\
& \qquad \le \sum_{j \ge k} c_3 L_{j+1}^s \sum_{\mathbf{x} \in \tilde{M}_j, \mathbf{y} \in \mathbf{B}_{r_{N,L_{k_0}}}^{(N)}(\mathbf{0})} \expect \Big\{ \sup_{\| f\| \le 1} \| \chi_{\mathbf{x}} f_I(H_{\omega}) \chi_{\mathbf{y}} \|_2^2 \Big\} \, .
\end{align*}
Estimating $|\tilde{M}_j| \le c L_{j+1}^{Nd}$, $|\mathbf{B}_{r_{N,L_{k_0}}}^{(N)}| \le c'L_{k_0}^{Nd}$ and using Lemma~\ref{lem:kernel}, the series converges. This completes the proof of the theorem.
\end{proof}

\section{Appendix}    \label{sec:app}
In this section we prove various results used in the text. We shall repeat the statements of the theorems for the reader's convenience.

\begin{thm}
Given $\omega \in \Omega$, $\mathfrak{h}^{(n)}_{\omega}$ is closed, densely defined and bounded from below. The unique self-adjoint operator $H^{(n)}(\omega)$ associated with $\mathfrak{h}^{(n)}_{\omega}$ is given by
\[ 
H^{(n)}(\omega) :(f_{\kappa}) \mapsto (- \Delta f_{\kappa} + V_{\kappa}^{\omega} f_{\kappa}), \qquad \text{for } (f_{\kappa}) \in D(H^{(n)}(\omega)).
\]
\end{thm}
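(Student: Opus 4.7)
The plan is to verify the hypotheses of Kato's first representation theorem for densely defined, closed, semibounded symmetric forms, and then identify the associated operator by testing against functions supported in single cube interiors.

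For the prerequisites: density of $D(\mathfrak{h}^{(n)}_\omega) = W^{1,2}(\Gamma^{(n)})$ in $\mathcal{H}$ is immediate since the subspace of $(f_\kappa)$ with only finitely many nonzero components, each in $C_c^\infty((0,1)^n)$, is contained in the form domain (the face-continuity condition being vacuous, as the traces vanish identically) and is dense in $\mathcal{H}$. Semiboundedness follows from $V^\omega_\kappa = U_\kappa + W^\omega_\kappa \geq nq_-$ (using $U \geq 0$ and $\omega_e \geq q_-$) together with positivity of the gradient part, giving $\mathfrak{h}^{(n)}_\omega[f,f] \geq nq_-\|f\|^2$.

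The heart of the argument is closedness. Since $V^\omega$ is uniformly bounded on $\Gamma^{(n)}$ (by $u_0 + nq_+$), the potential contribution is a bounded symmetric perturbation, so closedness of $\mathfrak{h}^{(n)}_\omega$ reduces to closedness of the gradient form $\mathfrak{h}_0[f,g] = \sum_\kappa \langle \nabla f_\kappa, \nabla g_\kappa \rangle$. Let $(f^{(j)})$ be Cauchy in the form norm of $\mathfrak{h}_0$. Each sequence $(f^{(j)}_\kappa)$ is then Cauchy in $W^{1,2}((0,1)^n)$, converging to some $f_\kappa$, and Fatou's lemma applied to the $\ell^2$ summation over $\kappa$ yields $\sum_\kappa \|f_\kappa\|^2_{W^{1,2}} < \infty$. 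Continuity on each inner face $\sigma^i$ passes to the limit because the trace operator $W^{1,2}((0,1)^n) \to L^2(\sigma^i)$ is bounded, so from $f^{(j)}_{\kappa_1}|_{\sigma^i} = f^{(j)}_{\kappa_2}|_{\sigma^i}$ we deduce the same for $f_{\kappa_1}, f_{\kappa_2}$. Hence $f = (f_\kappa) \in W^{1,2}(\Gamma^{(n)}) = D(\mathfrak{h}_0)$ and $f^{(j)} \to f$ in the form norm.

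By Kato's first representation theorem there exists a unique self-adjoint operator $H^{(n)}(\omega)$ with $D(H^{(n)}(\omega)) \subset D(\mathfrak{h}^{(n)}_\omega)$ and $\langle H^{(n)}(\omega) f, g \rangle = \mathfrak{h}^{(n)}_\omega[f,g]$ for all $f \in D(H^{(n)}(\omega))$ and $g \in D(\mathfrak{h}^{(n)}_\omega)$. To identify its action, fix $\kappa_0$ and take $g = (g_\kappa)$ with $g_{\kappa_0} \in C_c^\infty((0,1)^n)$ and $g_\kappa = 0$ otherwise; then $\langle H^{(n)}(\omega) f, g \rangle = \langle \nabla f_{\kappa_0}, \nabla g_{\kappa_0} \rangle + \langle V^\omega_{\kappa_0} f_{\kappa_0}, g_{\kappa_0} \rangle$, and distributional integration by parts on $(0,1)^n$ identifies $(H^{(n)}(\omega) f)_{\kappa_0} = -\Delta f_{\kappa_0} + V^\omega_{\kappa_0} f_{\kappa_0}$ in $\mathcal{D}'((0,1)^n)$, hence in $L^2((0,1)^n)$ since the left-hand side lies there. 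The main technical point is the closedness step, specifically preserving the face-continuity condition under form-norm limits; this is handled by the continuity of the Sobolev trace on each single cube, after which the representation theorem supplies the rest.
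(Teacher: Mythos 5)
Your proof is correct and takes essentially the same approach as the paper: both argue semiboundedness via $V^\omega_\kappa \ge nq_-$, both establish closedness by reducing to the gradient form and invoking continuity of the cube-wise Sobolev trace to show that face-continuity survives form-norm limits, and both identify the action of the associated operator by distributional integration by parts against $C_c^\infty$ test functions supported away from the faces. The only cosmetic difference is that you run the closedness step as an explicit Cauchy-sequence argument with Fatou, whereas the paper phrases it abstractly as ``$W^{1,2}(\Gamma^{(n)})$ is a closed subspace of the Hilbert direct sum $\oplus_\kappa W^{1,2}((0,1)^n)$ and the form norm is equivalent to this norm''; the substance is the same.
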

\begin{proof}
As a direct sum of Hilbert spaces, the space 
\[
\big(\mathop \oplus _{\kappa \in \mathcal{K}} W^{1,2}((0,1)^n), \| \cdot \|_{W^{1,2}(\Gamma)} \big), \qquad \|f\|_{W^{1,2}(\Gamma)}^2: = \sum_{\kappa \in \mathcal{K}} \|f_{\kappa}\|_{W^{1,2}((0,1)^n)}^2
\]
is a Hilbert space. By the trace theorem for $W^{1,2}((0,1)^n)$ (see e.g. \cite[Theorem 1.1.2]{Necas}), $\big(W^{1,2}(\Gamma), \| \cdot \|_{W^{1,2}(\Gamma)}\big)$ is a closed subspace of $\big(\mathop \oplus_{\kappa} W^{1,2}((0,1)^n), \| \cdot \|_{W^{1,2}(\Gamma)}\big)$, hence a Hilbert space. Finally, $\mathfrak{h}^{(n)}_{\omega} \ge nq_-$. If for $f \in D(\mathfrak{h}^{(n)}_{\omega})$, we define $\|f\|^2_{\mathfrak{h}^{(n)}_{\omega}} := \mathfrak{h}^{(n)}_{\omega}[f]+(-nq_-+1)\|f\|^2_{L^2(\Gamma)}$, then $\| \, \|_{\mathfrak{h}^{(n)}_{\omega}}$ is equivalent to $\| \, \|_{W^{1,2}(\Gamma)}$. Hence, $\mathfrak{h}^{(n)}_{\omega}$ is closed.

Let $C_c^{\infty}(\Gamma):= (\mathop \oplus_{\kappa} C_c^{\infty}(0,1)^n) \cap C_c(\Gamma)$. Since $D(\mathfrak{h}^{(n)}_{\omega}) \supset C_c^{\infty}(\Gamma)$, $\mathfrak{h}^{(n)}_{\omega}$ is densely defined. By \cite[Theorem 4.1.5]{Sto}, the associated operator $H^{(n)}(\omega)$ is given by
\[ 
D(H^{(n)}(\omega)) = \{ f  \in D(\mathfrak{h}^{(n)}_{\omega}) \, | \, \exists g \in \mathcal{H} : \forall v \in D(\mathfrak{h}^{(n)}_{\omega}), \, \mathfrak{h}^{(n)}_{\omega}[f,v] = \langle g, v \rangle \}, 
\]
\[ 
H^{(n)}(\omega)f := g. 
\]
So let $f \in D(H^{(n)}(\omega))$. Then in particular, given $v \in C_c^{\infty}(\Gamma)$, we have
\[ 
\langle \nabla f, \nabla v \rangle = \langle g - V^{\omega} f , v \rangle. 
\]
Hence $- \Delta f = g - V^{\omega} f$ in the sense of distributions. As $g,V^{\omega}f \in L^2(\Gamma)$ and as $C_c^{\infty}(\Gamma)$ is dense in $L^2(\Gamma)$, the equality holds in the $L^2$ sense\footnote{It is this part that distinguishes the difficulty of the domain for multi-particles: for $n=1$, if $-f'' \in L^2$ then $f \in W^{2,2}$, but for $n>1$, the fact that $-\Delta f \in L^2$ does not imply that $f \in W^{2,2}$.}. Hence $H^{(n)}(\omega)f = g = - \Delta f + V^{\omega} f$.
\end{proof}

\begin{thm}
There exists $\Omega_0 \subseteq \Omega$ with $\prob(\Omega_0) = 1$ such that for all $\omega \in \Omega_0:$
\[ 
[nq_-,nq_+] \subset \sigma(H^{(n)}(\omega)) \subseteq [nq_-,+\infty). 
\]
In particular, $\inf \sigma(H^{(n)}(\omega)) = nq_-$ almost surely.
\end{thm}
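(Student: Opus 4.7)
I would handle the upper bound $\sigma(H^{(n)}(\omega)) \subseteq [nq_-, +\infty)$ deterministically: since $\omega_e \in [q_-, q_+]$ and $U^{(n)} \geq 0$, we have $V_\kappa^\omega \geq nq_-$ pointwise, so $\mathfrak{h}^{(n)}_\omega[f] \geq nq_-\|f\|^2$ for every $f \in D(\mathfrak{h}^{(n)}_\omega)$, and min-max yields $\inf \sigma(H^{(n)}(\omega)) \geq nq_-$ for \emph{every} $\omega$.

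For $[nq_-, nq_+] \subseteq \sigma(H^{(n)}(\omega))$ almost surely, I would fix a countable dense $D \subset [nq_-, nq_+]$ and let $\Omega_E := \{\omega : E \in \sigma(H^{(n)}(\omega))\}$; after establishing $\prob(\Omega_E) = 1$ for each $E \in D$, closedness of the spectrum forces $[nq_-, nq_+] \subseteq \sigma(H^{(n)}(\omega))$ on $\Omega_0 := \bigcap_{E \in D} \Omega_E$. Given $E \in D$, I would set $\tilde e := E/n \in [q_-, q_+] \subset \supp\mu$ and build an approximate eigenvector for $E$ as a tensor product of single-particle approximate eigenvectors supported in far-separated $1$-cubes, so that the interaction $U^{(n)}$ vanishes on the support and the operator decouples across the particles.

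Specifically, I would fix a profile $\phi \in C_c^\infty(\R^d)$ with $0 \leq \phi \leq 1$, $\phi \equiv 1$ on $\Lambda_{L-1}^{(1)}(0)$, $\supp\phi \subset \Lambda_L^{(1)}(0)$, with $L$-uniform bounds on $|\nabla \phi|$ and $|\mathrm{Hess}\,\phi|$. Its restriction to $\Gamma^{(1)}$ lies in $D(H^{(1)}(\omega))$: vertex continuity is inherited from $\R^d$-continuity, and the Kirchhoff sum $\sum_j h_j\cdot \nabla\phi(v) - \sum_j h_j\cdot\nabla\phi(v) = 0$ is automatic. Picking $\mathbf{u} = (u_1,\dots,u_n) \in (\Z^d)^n$ with $|u_i-u_j| \geq 2L+r_0$ for $i\neq j$, letting $\psi_i$ be the $u_i$-translate of $\phi|_{\Gamma^{(1)}}$, and setting $\Psi := \psi_1 \otimes \cdots \otimes \psi_n$, the function is supported on cells inside $\Lambda_L^{(n)}(\mathbf{u})$. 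Property (2) of $U^{(n)}$ forces $U^{(n)} \equiv 0$ on $\Lambda_L^{(n)}(\mathbf{u})$, and the continuity/Kirchhoff condition at each inner face $\sigma^i$ factors as $\prod_{j\neq i}\psi_j$ times the single-particle Kirchhoff sum for $\psi_i$, giving $\Psi \in D(H^{(n)}(\omega))$. Expanding $(H^{(n)}(\omega)-E)\Psi = \sum_i [(H^{(1)}(\omega)-\tilde e)\psi_i] \otimes \bigotimes_{j\neq i}\psi_j$ and Cauchy--Schwarz would yield
\[
\frac{\|(H^{(n)}(\omega)-E)\Psi\|^2}{\|\Psi\|^2} \leq n\sum_{i=1}^n \frac{\|(H^{(1)}(\omega) - \tilde e)\psi_i\|^2}{\|\psi_i\|^2},
\]
where each single-particle contribution splits into a Laplacian piece (bounded in $L^\infty$, supported on a boundary annulus with $O(L^{d-1})$ edges) and a $(\omega_{(\cdot)}-\tilde e)\psi_i$ piece (bounded pointwise by $\max_e |\omega_e - \tilde e|$), while $\|\psi_i\|^2 \gtrsim L^d$, giving the estimate $O(1/L + \max_e|\omega_e - \tilde e|^2)$.

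The almost-sure realization would then come from the second Borel--Cantelli lemma: I would pick translates $\mathbf{u}^{(k)}$ of $\mathbf{u}$ along $h_1$ sufficiently far apart that the edge sets $\mathcal{E}(\Gamma^{(1)} \cap \Lambda_L^{(1)}(u_i^{(k)}))$ over all $(i,k)$ are mutually disjoint, and set $B_k := \{|\omega_e - \tilde e| < \delta \text{ for all } e \in \bigcup_i \Lambda_L^{(1)}(u_i^{(k)})\}$. The $B_k$'s are then i.i.d.\ with $\prob(B_k) = \mu\bigl((\tilde e - \delta, \tilde e + \delta)\bigr)^{n\cdot n_L} > 0$ because $\tilde e \in \supp\mu$, so $\prob\{B_k \text{ i.o.}\} = 1$. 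On this event, choosing $L$ large and $\delta$ small so that $n^2(C/L + \delta^2) < \varepsilon^2$ and selecting any $k$ with $B_k$ occurring, the associated $\Psi$ witnesses $E \in \sigma(H^{(n)}(\omega))$. The main technical obstacle will be the domain verification $\Psi \in D(H^{(n)}(\omega))$: the factorization of the multi-face continuity and normal-derivative conditions through the single-particle Kirchhoff condition is the key structural observation on which the entire tensor-product construction rests.
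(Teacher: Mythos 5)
Your proof is correct and follows essentially the same approach as the paper: construct a Weyl sequence supported on a far-away cube in which all pairwise particle separations exceed $r_0$ (so $U^{(n)}$ vanishes and the random potential is uniformly within $\delta$ of $\tilde e = E/n$ on the relevant edges), normalize, and invoke the second Borel--Cantelli lemma over i.i.d.\ translates. The only stylistic difference is that you build the test function as an explicit tensor product $\psi_1\otimes\cdots\otimes\psi_n$ of single-particle bumps, which makes the domain check and the $O(1/L+\delta^2)$ error estimate factor-by-factor, whereas the paper just takes one smooth bump $g_m$ on the $n$-cube $\Lambda_m^{(n)}(\mathbf{x}_{k,m})$ and estimates $\|\Delta f_m\|+\tfrac1m\|f_m\|$ directly; both lead to the same conclusion.
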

\begin{proof}
Since $U_{\kappa}^{(n)} \ge 0$ and $W_{\kappa}^{\omega} \ge nq_-$, then $H^{(n)}(\omega) \ge nq_-$ and $\sigma(H^{(n)}(\omega)) \subseteq [nq_-,+\infty)$ for all $\omega \in \Omega$. To prove that $\sigma(H^{(n)}(\omega)) \supset [nq_-,nq_+]$ almost surely, let $E \in [nq_-,nq_+]$, put $I^E_m = [\frac{E}{n} - \frac{1}{nm}, \frac{E}{n} + \frac{1}{nm}]$ for $m=1,2,\ldots$ and let
\[
B_m := \big\{ (x_1,\ldots,x_n) \in (\Z^d)^n : \min_{i\neq j} |x_i-x_j| \ge 2m+r_0 \big\}, 
\]
where $r_0$ is the interaction range. Given $k \in \N^{\ast}$, consider the event
\[ 
\Omega^E_m(k) := \big\{ \omega \in \Omega : \omega_e \in I^E_m \quad \forall e \in \mathcal{E}\big(\Gamma^{(1)} \cap \Pi \Lambda_m^{(n)}(\mathbf{x}_{k,m})\big) \big\} \, ,
\]
where $\mathbf{x}_{k,m} := 2^{kn}(2m+r_0)(1,2,\ldots,n)$. Then $\mathbf{x}_{k,m} \in B_m$ for each $k$, $\prob(\Omega^E_m(\mathbf{x}_{k,m})) = \mu (I^E_m)^{\# \{ \mathcal{E}(\Gamma^{(1)} \cap \Pi \Lambda_{m+1}^{(n)}) \}}$ is the same for all $k$ and it is strictly positive since $\frac{E}{n} \in [q_-,q_+] = \supp \mu$. Hence, $\sum_{k \ge 1} \prob(\Omega_m(\mathbf{x}_{k,m})) = \infty$. Moreover, $\Pi \Lambda_m^{(n)}(\mathbf{x}_{k,m}) \cap \Pi \Lambda_m^{(n)}(\mathbf{x}_{k',m}) = \emptyset$ for $k \neq k'$, so the events $\{\Omega^E_m(k)\}_{k \in \N^{\ast}}$ are independent. Thus, by Borel-Cantelli lemma II, if $\Omega^E_m := \cap_{k' \ge 1} \cup_{k \ge k'} \Omega^E_m(k)$, then $\prob(\Omega^E_m) = 1$. Let $\Omega^E := \cap_{m \in \N^{\ast}} \Omega^E_m$, then $\prob(\Omega^E) = 1$.

Fix $\omega \in \Omega^E$ and let $m \in \N^{\ast}$. Then $\omega \in \Omega^E_m$, so we may find $k \in \N^{\ast}$ such that $\omega \in \Omega^E_m(k)$. We finally construct a Weyl sequence: choose $g_m \in D(H^{(n)})$ such that $0 \le g_m \le 1$, $g_m = 1$ on $\Gamma \cap \Lambda_{m-1}^{(n)}(\mathbf{x}_{k,m})$, $g_m = 0$ on $\Gamma \cap \Lambda_m^{(n)}(\mathbf{x}_{k,m})^c$ and $\| \Delta g_m\|_{\infty} \le C$, for some $C=C(nd)$. Let $f_m := c_m g_m$, where $c_m := \|g_m\|^{-1}$. Then $\|f_m\|=1$, $\| \Delta f_m\|_{\infty} \le C c_m$ and
\[
\|(H^{(n)}(\omega)-E)f_m\| = \|\chi_{\Lambda_m^{(n)}(\mathbf{x}_{k,m})} (-\Delta + U^{(n)} + W^{\omega} - E)f_m\| \,. 
\]
But $\mathbf{x}_{k,m} \in B_m$, so $U^{(n)}=0$ on $\Lambda_m^{(n)}(\mathbf{x}_{k,m})$. Also $\omega \in \Omega^E_m(k)$, so $|W_{\kappa}^{\omega} - E| \le \frac{1}{m}$ for all $\kappa \in \mathcal{K}(\Gamma \cap \Lambda_m^{(n)}(\mathbf{x}_m))$. Thus
\[
\|(H^{(n)}(\omega)-E)f_m\| = \|\chi_{\Lambda_m^{(n)}} (-\Delta + W^{\omega} - E)f_m\| \le \|\Delta f_m\| + \frac{1}{m}\|f_m\| \to 0 \, .
\]
Indeed, note that
\[ 
1 = \|f_m\|^2 \ge \|\chi_{\Lambda_{m-1}^{(n)}}f_m\|^2 = c_m^2 (\#\{\mathcal{K}(\Gamma \cap \Lambda_{m-1}^{(n)})\}),
\]
hence $c_m^2 \le (\#\{\mathcal{K}(\Gamma \cap \Lambda_{m-1}^{(n)})\})^{-1}$ and using \textsf{(NB.$n$)},
\begin{align*}
\|\Delta f_m\|^2 & = \|\chi_{\Lambda_m^{(n)} \setminus \Lambda_{m-1}^{(n)}}\Delta f_m\|^2 \le \|\Delta f_m\|_{\infty}^2 \left(\#\{\mathcal{K}(\Gamma \cap \Lambda_m^{(n)})\} - \#\{\mathcal{K}(\Gamma \cap \Lambda_{m-1}^{(n)})\} \right) \\
& \le C^2 \frac{(2m)^n(2m - 1)^{nd-n} - (2m-2)^n(2m-3)^{nd-n}}{(2m-2)^n(2m-3)^{nd-n}} \xrightarrow [m \rightarrow \infty] {} 0.
\end{align*}

Thus, for any $\omega \in \Omega^E$ we have $E \in \sigma(H^{(n)}(\omega))$. Let $\Omega_0 := \bigcap_{E \in [nq_-,nq_+] \cap \Q} \Omega^E$. Then $\prob(\Omega_0)$ = 1 and for any $\omega \in \Omega_0$ we have $\sigma(H^{(n)}(\omega)) \supset [nq_-,nq_+] \cap \Q$. Since the spectrum is closed, the proof is complete.
\end{proof}

\begin{lem}
The following estimates hold:
\begin{align*}
\#\{\mathcal{E}(\Gamma^{(1)} \cap \Lambda_L^{(1)}) \} & = d(2L)(2L-1)^{d-1} \le d \cdot |\Lambda_L^{(1)}|, \tag{\textsf{NB.$1$}} \\
\#\{\mathcal{K}(\Gamma^{(n)} \cap \Lambda_{\LL}^{(n)}) \} & = \prod_{j=1}^n\Big(d(2L_j)(2L_j-1)^{d-1}\Big) \le d^n \cdot |\Lambda_{\LL}^{(n)}|. \tag{\textsf{NB.$n$}}
\end{align*}
\end{lem}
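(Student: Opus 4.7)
The proof is a direct combinatorial count, so the plan is to simply unpack definitions carefully and count.

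For (\textsf{NB.1}), I will identify exactly which edges $e = (m,j) \in \mathcal{E}$ belong to the closure of $\Gamma^{(1)} \cap \Lambda_L^{(1)}(u)$. An edge $e=(m,j)$ has points $m + t h_j$, $t \in [0,1]$, with $j$-th coordinate sweeping $[m_j, m_j+1]$ and other coordinates equal to $m_i$ for $i \neq j$. The edge lies inside (the closure of) the open cube $\Lambda_L^{(1)}(u) = \prod_{i=1}^d (u_i - L, u_i + L)$ iff the open edge lies in $\Lambda_L^{(1)}(u)$. For $i \neq j$ the condition $u_i - L < m_i < u_i + L$ admits exactly $2L-1$ integers; in direction $j$, the condition $(m_j, m_j+1) \subset (u_j - L, u_j + L)$ rewrites as $u_j - L \le m_j \le u_j + L - 1$, admitting $2L$ integers. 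Summing over the $d$ possible choices of $j$ yields $d(2L)(2L-1)^{d-1}$. The estimate $d(2L)(2L-1)^{d-1} \le d(2L)^d = d \cdot |\Lambda_L^{(1)}|$ is immediate.

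For (\textsf{NB.$n$}), I use the product structure $\Gamma^{(n)} = \Gamma^{(1)} \times \ldots \times \Gamma^{(1)}$ and $\Lambda_{\LL}^{(n)}(\mathbf{u}) = \prod_{j=1}^n \Lambda_{L_j}^{(1)}(u_j)$. By the description of $\mathcal{K}$ in Section~\ref{sec:n-gra}, each $\kappa \in \mathcal{K}(\Gamma^{(n)} \cap \Lambda_{\LL}^{(n)}(\mathbf{u}))$ is uniquely of the form $(e_1, \ldots, e_n)$ with $e_j \in \mathcal{E}(\Gamma^{(1)} \cap \Lambda_{L_j}^{(1)}(u_j))$, and conversely each such tuple determines an $n$-cube. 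Therefore
\[
\#\{\mathcal{K}(\Gamma^{(n)} \cap \Lambda_{\LL}^{(n)}(\mathbf{u}))\} = \prod_{j=1}^n \#\{\mathcal{E}(\Gamma^{(1)} \cap \Lambda_{L_j}^{(1)}(u_j))\},
\]
and applying (\textsf{NB.1}) to each factor yields the claimed formula. The bound $\prod_j d(2L_j)(2L_j-1)^{d-1} \le d^n \prod_j (2L_j)^d = d^n \cdot |\Lambda_{\LL}^{(n)}|$ is again immediate.

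There is no real obstacle; the only subtlety is the convention that $\Gamma^{(n)} \cap \Lambda^{(n)}_{\LL}(\mathbf{u})$ denotes the closure (as stipulated just before the lemma), so that the boundary vertices of inner edges are included while edges whose interior escapes the open cube are not. Once this is handled, both estimates reduce to elementary counting of integer tuples in an axis-aligned box.
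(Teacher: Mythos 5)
Your proof is correct but follows a genuinely different route from the paper's. The paper proves (\textsf{NB.1}) by induction on the dimension $d$: it slices $\Lambda_L^{(1)}$ into the $2L-1$ hyperplanes $\{x_{d}=\text{const}\}$, invokes the inductive hypothesis to count edges within each slice, and then separately counts the $(2L-1)^{d-1}(2L)$ ``vertical'' edges connecting consecutive slices, giving $(d-1)(2L)(2L-1)^{d-1} + (2L-1)^{d-1}(2L) = d(2L)(2L-1)^{d-1}$. You instead give a direct, closed-form enumeration: fix the direction $j$ and observe that the free coordinate $m_j$ may take $2L$ integer values while each transverse coordinate $m_i$ ($i\neq j$) takes $2L-1$ values, for a total of $d(2L)(2L-1)^{d-1}$. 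Your approach avoids induction entirely and makes the structure of the count transparent in a single step; the paper's inductive argument is slightly less direct but mirrors the geometric ``slice-and-stack'' picture. For (\textsf{NB.$n$}) both proofs use exactly the same product-structure observation, so that part coincides. One small point worth appreciating in your write-up: you correctly articulate the closure convention (only edges whose open interior lies inside the open cube count, even though endpoint vertices on $\partial\Lambda_L^{(1)}$ are then adjoined), which is the only place a careless reader could miscount.
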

\begin{proof}
For $d=1$, it is obvious that $\#\{\mathcal{E}(\Gamma^{(1)} \cap \Lambda_L^{(1)}) \} = 2L$ since in this case $\Lambda_L^{(1)}$ is just an open segment of length $2L$ and each edge has length $1$.

So let us suppose the estimate is true for $d=m$ and calculate the number of edges in a $1$-cube in $\R^{m+1}$, with coordinate axes $x_1,\ldots,x_{m+1}$. Since this number is invariant by translations, we may suppose the cube is $\Lambda_L^{(1)}(0)$. By hypothesis, the hyperplane $\{x_{m+1} = L-1\} \cap \Lambda_L^{(1)}(0)$ contains $m(2L)(2L-1)^{m-1}$ edges. The same holds for the hyperplane $\{x_{m+1} = L-2\} \cap \Lambda_L^{(1)}(0)$ and so on, by calculating the number of edges in the hyperplanes $x_{m+1} = L-1,L-2,\ldots,-L+1$, we obtain $(2L-1)(m(2L)(2L-1)^{m-1}) = m(2L)(2L-1)^m$ edges. It remains to calculate the number of ``vertical'' edges, i.e. edges that lie in the translates of the axis $x_{m+1}$ in $\Lambda_L^{(1)}(0)$. There are $(2L-1)^m$ such translates (since each $x_j$, $j=1,\ldots,m$ varies from $L-1$ to $-L+1$), and each axis contains $2L$ edges by the case $d=1$. Hence we get $(2L-1)^m(2L)$ vertical edges. The total number of edges is thus $m(2L)(2L-1)^m + (2L-1)^m(2L) = (m+1)(2L)(2L-1)^m$. Thus \textsf{(NB.$1$)} holds $\forall d \ge 1$.

Since $\Gamma^{(n)} = \Gamma^{(1)} \times \ldots \times \Gamma^{(1)}$, \textsf{(NB.$n$)} follows directly from \textsf{(NB.$1$)}.
\end{proof}

\begin{lem}
$H_{\Lambda_{\LL}}^{(n)}(\omega)$ has a compact resolvent. Its discrete set of eigenvalues denoted by $E_j(H_{\Lambda_{\LL}}^{(n)}(\omega))$ counting multiplicity satisfies the following Weyl law:
\[ 
\forall S \in \R \ \exists C = C(n,d,S-nq_-):\ j > C |\Lambda_{\LL}^{(n)}| \implies E_j(H_{\Lambda_{\LL}}^{(n)}(\omega)) > S.  \tag{\textsf{WEYL.$n$}} \]
Moreover, $C$ is independent of $\omega$, and if $S>S^{\ast}(n,q_-)$, then $C \le \lfloor \frac{d^n(S-nq_-)^{n/2}}{(4\pi)^{n/2}\Gamma(n/2)} \rfloor +1$.
\end{lem}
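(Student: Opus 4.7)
The plan is to reduce the estimate to counting Neumann eigenvalues on a single unit cube, using min–max and the combinatorial bound \textsf{(NB.$n$)}.

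First I would address the compact resolvent. The form domain $W^{1,2}(\Gamma^{(n)}\cap\Lambda_{\LL}^{(n)})$ is by definition a closed subspace of the \emph{finite} direct sum $\bigoplus_{\kappa\in\mathcal{K}(\Gamma\cap\Lambda)} W^{1,2}((0,1)^n)$ (the finiteness being the content of \textsf{(NB.$n$)} applied with $\Lambda = \Lambda_{\LL}^{(n)}$). For each individual $\kappa$, the Rellich–Kondrachov theorem gives a compact embedding $W^{1,2}((0,1)^n)\hookrightarrow L^2((0,1)^n)$, and a finite direct sum of compact embeddings is compact. Therefore the embedding of $D(\mathfrak{h}_{\omega,\Lambda}^{(n)})$ (endowed with the form norm, equivalent to $\|\cdot\|_{W^{1,2}(\Gamma)}$, as in the proof of Theorem~\ref{thm:dom}) into $L^2(\Gamma\cap\Lambda_{\LL}^{(n)})$ is compact, which gives compactness of $(H_{\Lambda_{\LL}}^{(n)}(\omega)-z)^{-1}$ by standard arguments.

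Second, I would prove the eigenvalue bound by comparing $H_{\Lambda_{\LL}}^{(n)}(\omega)$ with the decoupled operator obtained by dropping the continuity constraints on the inner faces $\sigma^i$. Let $\tilde{\mathfrak{h}}$ be the quadratic form with the same formal expression as $\mathfrak{h}_{\omega,\Lambda}^{(n)}$ but defined on the larger domain $\bigoplus_{\kappa}W^{1,2}((0,1)^n)$ (no face-continuity imposed). Then $D(\mathfrak{h}_{\omega,\Lambda}^{(n)})\subseteq D(\tilde{\mathfrak{h}})$ and the two forms coincide on the smaller domain, so by the min-max principle $E_j(H_{\Lambda_{\LL}}^{(n)}(\omega)) \geq E_j(\tilde{H})$ for every $j$. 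The associated operator $\tilde{H}$ splits as a direct sum $\tilde H=\bigoplus_{\kappa}H_\kappa^{N,\omega}$, where $H_\kappa^{N,\omega}=-\Delta_N+V_\kappa^{\omega}$ is the pure Neumann realisation on the single cube $(0,1)^n$. Since $V_\kappa^{\omega}\ge nq_-$, a further min-max step gives $E_k(H_\kappa^{N,\omega})\ge E_k(-\Delta_N)+nq_-$, and the Neumann spectrum on $(0,1)^n$ is explicit: $\{\pi^2(k_1^2+\cdots+k_n^2):\mathbf{k}\in\N_0^n\}$.

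Third, I would count. Setting $N^\ast(E):=\#\{\mathbf{k}\in\N_0^n:\pi^2|\mathbf{k}|^2\le E\}$, the number of eigenvalues of $H_\kappa^{N,\omega}$ not exceeding $S$ is at most $N^\ast(S-nq_-)$. Summing over the $M:=\#\mathcal{K}(\Gamma^{(n)}\cap\Lambda_{\LL}^{(n)})\le d^n|\Lambda_{\LL}^{(n)}|$ cubes (by \textsf{(NB.$n$)}),
\[
\#\{j:E_j(H_{\Lambda_{\LL}}^{(n)}(\omega))\le S\}\le \#\{j:E_j(\tilde H)\le S\}\le d^n\,|\Lambda_{\LL}^{(n)}|\,N^\ast(S-nq_-).
\]
Taking $C=C(n,d,S-nq_-):=d^n\,N^\ast(S-nq_-)$ gives \textsf{(WEYL.$n$)} with $C$ manifestly independent of $\omega$.

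Finally, for the quantitative tail bound when $S>S^\ast(n,q_-)$, I would estimate $N^\ast(E)$ by a volume comparison: each $\mathbf{k}\in\N_0^n$ with $|\mathbf{k}|\le\sqrt{E}/\pi$ corresponds to a unit cube $\prod_i[k_i,k_i+1]$ inside $[0,\infty)^n\cap B(0,\sqrt{E}/\pi+\sqrt{n})$, so
\[
N^\ast(E)\le 2^{-n}\,\omega_n\,(\sqrt{E}/\pi+\sqrt{n})^n,
\]
where $\omega_n=\pi^{n/2}/\Gamma(n/2+1)$. For $E$ large enough this is dominated by the stated expression $\lfloor(S-nq_-)^{n/2}/((4\pi)^{n/2}\Gamma(n/2))\rfloor+1$ after absorbing lower-order terms into the floor/$+1$; this is where $S^\ast(n,q_-)$ enters. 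I do not expect any serious obstacle here—the only mildly delicate point is that the argument needs the Neumann comparison (dropping face continuity) rather than a Dirichlet comparison, because we need a \emph{lower} bound on the eigenvalues of $H_{\Lambda_{\LL}}^{(n)}(\omega)$; the face-continuity constraint defining our form is a restriction that can only raise eigenvalues, which is exactly what is used in the second step.
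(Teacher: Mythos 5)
Your argument follows the same strategy as the paper's: compare $H^{(n)}_{\Lambda}$ from below with the Neumann-decoupled Laplacian shifted by $nq_-$, realize the latter as $\#\mathcal{K}(\Gamma\cap\Lambda)\le d^n|\Lambda|$ copies of $-\Delta^{\mathrm N}_{(0,1)^n}$, and count. Your compact-resolvent step is a clean alternative: you invoke Rellich--Kondrachov on each unit cube and use \textsf{(NB.$n$)} to make the direct sum finite, whereas the paper deduces compactness from the operator inequality $H^{(n)}_\Lambda\ge -\Delta^{\mathrm N,\dec}_\Lambda + nq_-$ together with \cite[Theorem~XIII.64]{RS}; both are fine, and yours is arguably more elementary.

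One thing to fix in the last step. Your lattice-point count (which is correct) produces the leading term
\[
N^{\ast}(E)\sim \frac{E^{n/2}}{(4\pi)^{n/2}\,\Gamma(n/2+1)},
\]
which is the genuine Weyl constant for $-\Delta^{\mathrm N}_{(0,1)^n}$. The lemma, however, has $\Gamma(n/2)$ in the denominator. Since $\Gamma(n/2+1)=(n/2)\Gamma(n/2)$, your count is smaller than the stated expression only when $n\ge 2$; for $n=1$ it is larger by a factor $\approx 2$, and this discrepancy sits in the \emph{leading} term, so it cannot be absorbed into the ``$\lfloor\cdot\rfloor+1$'' no matter how large $S^{\ast}$ is taken. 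You should therefore not assert that your estimate ``is dominated by the stated expression'' in general. What your computation actually shows is that the correct Weyl constant has $\Gamma(n/2+1)$, and the $\Gamma(n/2)$ in the lemma statement appears to be a slip (it coincides with $\Gamma(n/2+1)$ exactly at $n=2$); either keep your $\Gamma(n/2+1)$ expression, or restrict the quantitative claim to $n\ge 2$.
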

\begin{proof}
Put $\Lambda = \Lambda_{\LL}^{(n)}$ and define the Neumann-decoupled Laplacian $-\Delta_{\Lambda}^{\mathrm{N},\dec}$ via the form $\mathfrak{h}_{\Lambda}^{\dec}[f,g] = \sum_{\kappa \in \mathcal{K}(\Gamma \cap \Lambda)} \langle \nabla f_{\kappa} , \nabla g_{\kappa} \rangle$, with $D(\mathfrak{h}_{\Lambda}^{\dec}) = \mathop \oplus_{\kappa \in \mathcal{K}(\Gamma \cap \Lambda)} W^{1,2}((0,1)^n)$. Then $D(\mathfrak{h}_{\omega, \Lambda}^{(n)}) \subset D(\mathfrak{h}_{\Lambda}^{\dec})$ and $\mathfrak{h}_{\omega, \Lambda}^{(n)}[f] \ge \mathfrak{h}_{\Lambda}^{\dec}[f] + nq_- \|f\|^2$, hence
\[ 
H_{\Lambda}^{(n)}(\omega) \ge -\Delta_{\Lambda}^{\mathrm{N},\dec} + nq_- \, . \qquad (\star) 
\]
Since $\Delta_{\Lambda}^{\mathrm{N},\dec} = \mathop \oplus_{\kappa \in \mathcal{K}(\Gamma \cap \Lambda^{(n)})} \Delta_{(0,1)^n}^{\mathrm{N}}$, the eigenvalues $E_j(- \Delta_{\Lambda}^{\mathrm{N},\dec})$ are just the eigenvalues $E_k(- \Delta_{(0,1)^n}^{\mathrm{N}})$ with multiplicity $\# \{\mathcal{K}(\Gamma \cap \Lambda^{(n)}) \} \le d^n |\Lambda^{(n)}|$. In particular, $E_j(- \Delta_{\Lambda}^{\mathrm{N},\dec}) \to \infty$ as $j \to \infty$, so by $(\star)$ and \cite[Theorem XIII.64]{RS}, $H_{\Lambda}^{(n)}$ has a compact resolvent and thus a discrete spectrum. Now by Weyl law for $E_k(- \Delta_{(0,1)^n}^{\mathrm{N}})$ (\cite[Section XIII.15]{RS}), there exists $C_1$ such that $k > C_1 \implies E_k(- \Delta_{(0,1)^n}^{\mathrm{N}}) > S - nq_-$, and if $S$ is large, $C_1 \approx \frac{(S-nq_-)^{n/2}}{(4\pi)^{n/2}\Gamma(n/2)}$. Thus $j > C_1 d^n |\Lambda^{(n)}| \implies E_j(-\Delta_{\Lambda}^{\mathrm{N},\dec}) > S - nq_-$. But by $(\star)$, $E_j(H_{\Lambda}^{(n)}(\omega)) \ge E_j(-\Delta_{\Lambda}^{\mathrm{N},\dec}) + nq_-$. Thus $j > d^n C_1 |\Lambda^{(n)}| \implies E_j(H_{\Lambda}^{(n)}(\omega)) > S$. We get \textsf{(WEYL.$n$)} with $C=d^n C_1$ and $C \le \lfloor \frac{d^n(S-nq_-)^{n/2}}{(4\pi)^{n/2}\Gamma(n/2)} \rfloor +1$ if $S>S^{\ast}(n,q_-)$.
\end{proof}

Before proceeding further, we need the following notion.

\begin{defa}         \label{def:R-con}
Given $\mathbf{y} \in \Z^{nd}$ and $\emptyset \neq \mathcal{J} \subseteq \{1,\ldots,n\}$, we say that $\mathcal{P} = \{ y_j : j \in \mathcal{J} \}$ is \emph{$R$-connected} if $\mathcal{Z}=\bigcup_{j \in \mathcal{J}} \Lambda_R^{(1)}(y_j) \subset \R^{d}$ is connected. In this case, it is easily shown by induction on $\# \mathcal{J}$ that if $\# \mathcal{J} \ge 2$, we have
\[ 
\forall i, j \in \mathcal{J}: |y_i - y_j | < ( \# \mathcal{J} -1 ) (2R) \le 2(n-1)R \, . 
\]
\end{defa}

\begin{lem}
A partially interactive cube is decomposable.
\end{lem}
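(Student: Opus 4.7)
The plan is to prove the contrapositive: assume $\Lambda_L^{(n)}(\mathbf{u})$ is \emph{not} decomposable and deduce $\dist(\mathbf{u},\D) < (n-1)(2L+r_0)$, contradicting the PI hypothesis. The heart of the matter is a combinatorial observation: in a connected graph on $n$ vertices, any two vertices are linked by a path of length at most $n-1$.

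First I would define the auxiliary graph $G$ on the vertex set $\{1,\ldots,n\}$ by declaring $\{i,j\}$ an edge iff $|u_i-u_j|<2L+r_0$. Equivalently (with $R:=L+r_0/2$), $\{i,j\}$ is an edge iff $\Lambda_R^{(1)}(u_i)\cap\Lambda_R^{(1)}(u_j)\ne\emptyset$, so edges of $G$ encode precisely the $R$-connectedness relation of Definition~\ref{def:R-con}. The next step is to show that non-decomposability forces $G$ to be connected. Indeed, if $G$ had two or more connected components, pick one component $\mathcal{J}\subsetneq\{1,\ldots,n\}$; then by construction no edge of $G$ crosses the partition $\mathcal{J}\cup\mathcal{J}^c$, i.e.\ $|u_i-u_j|\ge 2L+r_0$ for all $i\in\mathcal{J}$, $j\in\mathcal{J}^c$. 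Thus $\dist(u_{\mathcal{J}},u_{\mathcal{J}^c})\ge 2L+r_0$, making $\Lambda_L^{(n)}(\mathbf{u})$ $\mathcal{J}$-decomposable, a contradiction.

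Next I would invoke connectivity to bound the diameter. Since $G$ is connected on $n$ vertices, for any $i,j$ there is a path $i=i_0\sim i_1\sim\cdots\sim i_k=j$ with $k\le n-1$; the triangle inequality together with the strict bound on each edge yields
\[
|u_i-u_j|\;\le\;\sum_{t=0}^{k-1}|u_{i_t}-u_{i_{t+1}}|\;<\;k(2L+r_0)\;\le\;(n-1)(2L+r_0).
\]
This is the same estimate that the remark after Definition~\ref{def:R-con} provides for $R$-connected sets with $2R=2L+r_0$, so one could quote it directly instead of redoing the induction.

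Finally, to translate this into a bound on $\dist(\mathbf{u},\D)$, I would simply pick the diagonal point $\mathbf{x}:=(u_1,u_1,\ldots,u_1)\in\D$. Then
\[
\dist(\mathbf{u},\D)\;\le\;|\mathbf{u}-\mathbf{x}|\;=\;\max_{1\le j\le n}|u_j-u_1|\;<\;(n-1)(2L+r_0),
\]
which contradicts the PI assumption $\dist(\mathbf{u},\D)\ge(n-1)(2L+r_0)$. There is no serious obstacle; the only care needed is keeping track of strict versus non-strict inequalities to see that the strict bound from each edge survives the summation and collides with the non-strict PI lower bound.
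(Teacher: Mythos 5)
Your proof is correct and takes essentially the same route as the paper: show that non-decomposability forces the set $\{u_1,\ldots,u_n\}$ to be $R$-connected with $2R=2L+r_0$, bound its diameter by $(n-1)(2L+r_0)$ using the estimate noted after Definition~\ref{def:R-con}, and then compare $\mathbf{u}$ to the diagonal point $(u_1,\ldots,u_1)\in\D$. The only cosmetic difference is that the paper establishes connectivity by an incremental construction of growing subsets $\mathcal{J}_2\subset\mathcal{J}_3\subset\cdots$, whereas you observe directly that any connected component of the auxiliary intersection graph would already furnish a $\mathcal{J}$-decomposition; this is a cleaner way to package the same combinatorial fact.
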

\begin{proof}
Suppose $\Lambda_L^{(n)}(\mathbf{u})$ is not decomposable. Then $\exists \,i \neq 1$ such that $|u_1 - u_{i}| < 2L + r_0$ (otherwise $\mathcal{J} = \{1\}$ would give a possible partition). Let $\mathcal{J}_2 = \{1, i\}$. Since $\mathcal{J}_2$ is not a possible partition, $\exists \,i_2 \notin \mathcal{J}_2$ such that $|u_1 - u_{i_2}| < 2L+r_0$ or $|u_{i} - u_{i_2} | < 2L+r_0$. Taking $\mathcal{J}_3 = \{1,i,i_2\}$, the set $\{u_k:k \in \mathcal{J}_3\}$ is thus $(L+r_0/2)$-connected. As $\mathcal{J}_3$ is not a possible partition, we may repeat the procedure and finally obtain $\mathcal{J}_n = \{1,\ldots,n\}$ and $\{u_k:k \in \mathcal{J}_n\}$ is $(L+r_0/2)$-connected. Consequently,
\[ 
\forall 1 \le j \le n: |u_j - u_1 | < (n-1)(2L+r_0) 
\]
Hence 
\[ 
\dist(\mathbf{u},\D) \le |\mathbf{u} - (u_1,\ldots,u_1) | = \max_{1\le j\le n} |u_j - u_1| < (n-1)(2L+r_0)
\]
The lemma now results by contraposition.
\end{proof}

\begin{lem}
Let $\mathbf{x}, \mathbf{y} \in \Z^{nd}$, $L \ge 1$ and take $r_{n,L}$ as in Definition~\ref{def:sep}. Then
\begin{enumerate}[\rm 1)]
\item If $\mathbf{y} \notin \bigcup_{j=1}^{K(n)} \Lambda_{2nL}^{(n)}(\mathbf{x}^{(j)})$, then $\Lambda_L^{(n)}(\mathbf{y})$ and $\Lambda_L^{(n)}(\mathbf{x})$ are pre-separable.
\item If $\mathbf{y} \notin \bigcup_{j=1}^{K(n)} \Lambda_{r_{n,L}}^{(n)}(\mathbf{x}^{(j)})$, then $\Lambda_L^{(n)}(\mathbf{y})$ and $\Lambda_L^{(n)}(\mathbf{x})$ are separable.
\item If $\mathbf{y} \notin \Lambda_{2r_{n,L}}^{(n)}(\mathbf{0})$, then $\Lambda_L^{(n)}(\mathbf{y})$ is separable from any $\Lambda_L^{(n)}(\mathbf{x})$ satisfying $\mathbf{x} \in \Lambda_{r_{n,L}}^{(n)}(\mathbf{0})$.
\end{enumerate}
\end{lem}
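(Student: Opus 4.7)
The plan is to establish part (1) directly and derive parts (2) and (3) as short corollaries. For (1), I would begin by reformulating the hypothesis combinatorially: the $K(n)=n^n$ related points $\mathbf{x}^{(j)}$ are in bijection with all maps $f:\{1,\ldots,n\}\to\{1,\ldots,n\}$ sending $k$ to an entry $x_{f(k)}$ of $\mathbf{x}$, and $\mathbf{y}\in\Lambda_{2nL}^{(n)}(\mathbf{x}^{(j)})$ is the statement $\max_k|y_k-x_{f(k)}|<2nL$. Negating over all $f$, the assumption $\mathbf{y}\notin\bigcup_j\Lambda_{2nL}^{(n)}(\mathbf{x}^{(j)})$ is equivalent to the existence of a distinguished coordinate $k_0\in\{1,\ldots,n\}$ with $|y_{k_0}-x_i|\ge 2nL$ for every $i\in\{1,\ldots,n\}$.

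Given such $k_0$, the idea is to form the $2L$-cluster around $y_{k_0}$. I introduce the graph $G$ on $\{1,\ldots,n\}$ with $j\sim k$ iff $|y_j-y_k|<2L$, and take $\mathcal{J}$ to be the connected component of $k_0$ in $G$. By construction, $|y_j-y_k|\ge 2L$ whenever $j\in\mathcal{J}$ and $k\notin\mathcal{J}$, so $\Pi_{\mathcal{J}}\Lambda_L^{(n)}(\mathbf{y})\cap\Pi_{\mathcal{J}^c}\Lambda_L^{(n)}(\mathbf{y})=\emptyset$. For the second disjointness requirement, any $j\in\mathcal{J}$ is joined to $k_0$ by a path in $G$ of length at most $|\mathcal{J}|-1\le n-1$ whose edges all have length strictly less than $2L$, so $|y_j-y_{k_0}|<2(n-1)L$; the triangle inequality then delivers $|y_j-x_i|\ge 2nL-2(n-1)L=2L$ for every $i$, i.e.\ $\Pi_{\mathcal{J}}\Lambda_L^{(n)}(\mathbf{y})\cap\Pi\Lambda_L^{(n)}(\mathbf{x})=\emptyset$. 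This establishes $\mathcal{J}$-pre-separability of $\Lambda_L^{(n)}(\mathbf{y})$ from $\Lambda_L^{(n)}(\mathbf{x})$.

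Part (2) will follow at once: since $r_{n,L}=4(n-1)(2L+r_0)+2L\ge 2nL$, the hypothesis of (2) is stronger than that of (1), yielding pre-separability; and the required $|\mathbf{x}-\mathbf{y}|\ge r_{n,L}$ is automatic because the identity choice function exhibits $\mathbf{x}$ itself as one of the related points $\mathbf{x}^{(j)}$, so $\mathbf{y}\notin\Lambda_{r_{n,L}}^{(n)}(\mathbf{x})$. For (3), every coordinate of any $\mathbf{x}^{(j)}$ is some $x_i$, and if $\mathbf{x}\in\Lambda_{r_{n,L}}^{(n)}(\mathbf{0})$ then $|x_i|<r_{n,L}$ for all $i$, hence $|\mathbf{x}^{(j)}|<r_{n,L}$; the hypothesis $|\mathbf{y}|\ge 2r_{n,L}$ then forces $|\mathbf{y}-\mathbf{x}^{(j)}|>r_{n,L}$ for every $j$, and (2) applies.

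The only real subtlety, rather than a genuine obstacle, is the initial combinatorial reformulation of the hypothesis in terms of the collection of related points. Once one sees that $\{\mathbf{x}^{(j)}\}$ enumerates all coordinate-wise assignments from $\mathbf{y}$-positions to $\mathbf{x}$-values, the geometric chain argument through the $2L$-cluster of $y_{k_0}$ is transparent, and the specific scale $2nL$ in (1) (and hence the values of $r_{n,L}$ and $2r_{n,L}$ appearing in (2) and (3)) is seen to be dictated precisely by the worst-case chain length $n-1$.
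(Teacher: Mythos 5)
Your proof is correct and uses the same underlying geometry as the paper's: the $L$-connected decomposition of the $y$-coordinates, the chain-diameter bound $|y_i - y_j| < 2(n-1)L$ (cf.\ Definition~\ref{def:R-con}), and identical reductions of parts 2) and 3) to part 1). The only difference is presentational --- you argue directly, negating the hypothesis to isolate a distinguished coordinate $k_0$ far from all $x_i$ and then constructing only its cluster as $\mathcal{J}$, whereas the paper argues by contraposition, forming \emph{all} maximal $L$-connected clusters $\mathcal{J}_k$ of $\{y_1,\dots,y_n\}$ and applying the failure of $\mathcal{J}_k$-pre-separability to each to place every $y_j$ within $2nL$ of some $x_i$.
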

\begin{proof}
\begin{enumerate}[1)]
\item Decompose $\{ y_1,\ldots,y_n\}$ into maximal $L$-connected subsets 
\[
\mathcal{P}_k = \{ y_j : j \in \mathcal{J}_k \},\quad k=1,\ldots,m,
\]
and let $\mathcal{Z}_k = \bigcup_{j \in \mathcal{J}_k} \Lambda_L^{(1)}(y_j)$. Then $(\mathcal{Z}_k)_k$ forms a partition of $\Pi \Lambda_L^{(n)} (\mathbf{y})$. Suppose now that $\Lambda_L^{(n)}(\mathbf{x})$ and $\Lambda_L^{(n)}(\mathbf{y})$ are not pre-separable. Then 
\[ 
\forall \, \emptyset \neq \mathcal{J} \subseteq \{1, \ldots , n \}: \Pi_{\mathcal{J}} \Lambda_L^{(n)}(\mathbf{y}) \cap \big( \Pi_{\mathcal{J}^c} \Lambda_L^{(n)}(\mathbf{y}) \cup \Pi \Lambda_L^{(n)}(\mathbf{x}) \big) \neq \emptyset 
\]
Since $(\mathcal{Z}_k)_k$ forms a partition of $\Pi \Lambda_L^{(n)} (\mathbf{y})$, we have in particular
\[
\forall \, 1 \le k \le m : \mathcal{Z}_k \cap \Pi \Lambda_L^{(n)}(\mathbf{x}) \neq \emptyset,
\]
hence
\[
\forall \, 1 \le k \le m, \exists \, y_j \in \mathcal{P}_k, \exists \, x_i : |y_j - x_i | < 2L.
\]
But $\mathcal{P}_k$ are $L$-connected, hence $\forall i, j \in \mathcal{J}_k : |y_i - y_j | < 2(n-1)L$. Thus,
\[
\forall \, y_j \ \exists \, x_i : |y_j - x_i | < 2nL,
\]
so that $\mathbf{y} \in \Lambda_{2nL}^{(n)}(\mathbf{x}^{(k)})$ for some $k$. The claim follows by contraposition. 
\item This follows from 1) by noting that $r_{n,L} \ge 2nL$ and that $|\mathbf{y} - \mathbf{x}^{(j)} | \ge r_{n,L}$ for all $j$ implies $|\mathbf{y}-\mathbf{x}| \ge r_{n,L}$ (since $\mathbf{x}$ is one of the $\mathbf{x}^{(j)}$).
\item Let 
\[ 
F = \bigcup_{\mathbf{x} \in \Lambda_{r_{n,L}}^{(n)}(\mathbf{0})} \bigcup_{j=1}^{K(n)} \Lambda_{r_{n,L}}^{(n)}(\mathbf{x}^{(j)}).
\] 
Then by 2), if $\mathbf{y} \notin F$, then $\Lambda_L^{(n)}(\mathbf{y})$ is separable from any $\Lambda_L^{(n)}(\mathbf{x})$ with $\mathbf{x} \in \Lambda_{r_{n,L}}^{(n)}(\mathbf{0})$. Thus it suffices to show that $\Lambda_{2r_{n,L}}^{(n)}(\mathbf{0}) = F$. For this note that if $\mathbf{x} \in \Lambda_{r_{n,L}}^{(n)}(\mathbf{0})$, then $| x_k | < r_{n,L}$ for all $k$, so by definition of $\mathbf{x}^{(j)}$, $| \mathbf{x}^{(j)} | < r_{n,L}$ for all $j$ and so $\mathbf{x}^{(j)} \in \Lambda_{r_{n,L}}^{(n)}(\mathbf{0})$ for all $j$. Thus $F = \bigcup_{\mathbf{x} \in \Lambda_{r_{n,L}}^{(n)}(\mathbf{0})} \Lambda_{r_{n,L}}^{(n)}(\mathbf{x}) = \Lambda_{2r_{n,L}}^{(n)}(\mathbf{0})$.
\end{enumerate}
\end{proof}

\begin{lem}
Separable FI cubes are completely separated.
\end{lem}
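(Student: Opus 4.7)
The plan is to show that the only thing missing from separability to reach complete separation is the disjointness $\Pi \Lambda_L^{(n)}(\mathbf{u}) \cap \Pi \Lambda_L^{(n)}(\mathbf{v}) = \emptyset$, and that this disjointness is forced as soon as both cubes are FI. The FI hypothesis is exactly what is needed: it confines all the one-dimensional projections $\Lambda_L^{(1)}(u_j)$ to a neighbourhood of a single lattice point, so $\Pi \Lambda_L^{(n)}(\mathbf{u})$ has a small ``diameter'', and similarly for $\mathbf{v}$. Once these diameters are small compared to the separability distance $r_{n,L}$, a triangle inequality closes the argument.

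Concretely, first I would unwind the FI condition. Since $\dist(\mathbf{u},\D) < (n-1)(2L+r_0)$ and the distance is attained over $\Z^d$, one gets $x \in \Z^d$ with $|u_j - x| \le (n-1)(2L+r_0) - 1$ for every $j$, hence
\[
\Pi \Lambda_L^{(n)}(\mathbf{u}) = \bigcup_j \Lambda_L^{(1)}(u_j) \subset \Lambda_R^{(1)}(x), \qquad R := L + (n-1)(2L+r_0),
\]
and similarly $\Pi \Lambda_L^{(n)}(\mathbf{v}) \subset \Lambda_R^{(1)}(y)$ for some $y \in \Z^d$ with $|v_j - y| \le (n-1)(2L+r_0) - 1$.

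Next, I would exploit separability to pick a coordinate witnessing $|\mathbf{u}-\mathbf{v}| \ge r_{n,L}$. By definition of the sup norm, there is an index $j_0$ with $|u_{j_0} - v_{j_0}| \ge r_{n,L} = 4(n-1)(2L+r_0) + 2L$. Applying the triangle inequality to the three points $u_{j_0}, x, y, v_{j_0}$ yields
\[
|x - y| \ge |u_{j_0} - v_{j_0}| - |u_{j_0} - x| - |v_{j_0} - y| > 4(n-1)(2L+r_0) + 2L - 2(n-1)(2L+r_0) = 2R,
\]
which forces $\Lambda_R^{(1)}(x) \cap \Lambda_R^{(1)}(y) = \emptyset$, and therefore $\Pi \Lambda_L^{(n)}(\mathbf{u}) \cap \Pi \Lambda_L^{(n)}(\mathbf{v}) = \emptyset$. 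Combined with the hypothesis $|\mathbf{u}-\mathbf{v}| \ge r_{n,L}$, this is precisely the definition of complete separation with $\mathcal{J} = \{1,\ldots,n\}$.

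There is no real obstacle here; the only point to be careful about is bookkeeping of the strict versus non-strict inequalities (the cubes $\Lambda_L^{(1)}$ are open, while the FI and separability conditions mix $<$ and $\ge$), and making sure the chosen $x,y$ really can be taken in $\Z^d$ so that $(n-1)(2L+r_0)$ can be used as an integer bound on $|u_j-x|$ and $|v_j-y|$. Note that the pre-separability part of the separability hypothesis is not actually used: for FI cubes, the sole metric condition $|\mathbf{u}-\mathbf{v}| \ge r_{n,L}$ already upgrades to complete separation.
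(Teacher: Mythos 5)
Your proof is correct and is essentially the paper's proof: unpack the FI condition to find diagonal lattice points $x,y \in \Z^d$ with all the one-dimensional projections $\Lambda_L^{(1)}(u_j)$, $\Lambda_L^{(1)}(v_k)$ contained in $\Lambda_R^{(1)}(x)$, $\Lambda_R^{(1)}(y)$ respectively with $R = L + (n-1)(2L+r_0)$, then use $|\mathbf{u}-\mathbf{v}| \ge r_{n,L}$ and the triangle inequality to force $|x-y| > 2R$ and hence disjointness of the projections. The only cosmetic difference is that the paper applies the triangle inequality to the vectors $\mathbf{u},\mathbf{x},\mathbf{y},\mathbf{v}$ directly in the sup norm, whereas you pick a witnessing coordinate $j_0$, but these are the same step; your observation that pre-separability is not needed for FI pairs is also accurate and matches the paper's argument.
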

\begin{proof}
Let $\Lambda_L^{(n)}(\mathbf{u})$ and $\Lambda_L^{(n)}(\mathbf{v})$ be FI. Then there exists $\mathbf{x}, \mathbf{y} \in \D$ such that $|\mathbf{u} - \mathbf{x} | < (n-1)(2L+r_0)$ and $|\mathbf{v} - \mathbf{y} | < (n-1)(2L+r_0)$. Hence for all $j,k=1,\ldots,n:$
\[
\Pi_j \Lambda_L^{(n)}(\mathbf{u}) \subseteq \Pi_j \Lambda_{(n-1)(2L+r_0)+L}^{(n)}(\mathbf{x}) \quad \text{and} \quad \Pi_k \Lambda_L^{(n)}(\mathbf{v}) \subseteq \Pi_k \Lambda_{(n-1)(2L+r_0)+L}^{(n)}(\mathbf{y}). \tag{$\star$}
\]
Now
\[
| \mathbf{u}-\mathbf{v} | \le | \mathbf{u} - \mathbf{x} | + | \mathbf{x} - \mathbf{y} | + | \mathbf{y} -\mathbf{v} | < 2(n-1)(2L+r_0) + | \mathbf{x} - \mathbf{y} |.
\]
Moreover, $\Lambda_L^{(n)}(\mathbf{u})$ and $\Lambda_L^{(n)}(\mathbf{v})$ are separable, so by definition $| \mathbf{u} - \mathbf{v} | \ge r_{n,L} = 4(n-1)(2L+r_0)+2L$. We thus get
\[
| \mathbf{x} - \mathbf{y} | > | \mathbf{u} - \mathbf{v} | - 2(n-1)(2L+r_0) \ge 2(n-1)(2L+r_0)+2L.
\]
Since $\mathbf{x},\mathbf{y} \in \D$, this implies
\[
\Pi_j \Lambda_{(n-1)(2L+r_0)+L}^{(n)}(\mathbf{x}) \cap \Pi_k \Lambda_{(n-1)(2L+r_0)+L}^{(n)}(\mathbf{y}) = \emptyset \tag{$\star \star$}
\] 
for all $j,k = 1,\ldots,n$. By ($\star$) and ($\star \star$), we see that $\Pi_j \Lambda_L^{(n)}(\mathbf{u}) \cap \Pi_k \Lambda_L^{(n)}(\mathbf{v}) = \emptyset$ for all $j,k$. Hence $\Pi \Lambda_L^{(n)}(\mathbf{u}) \cap \Pi \Lambda_L^{(n)}(\mathbf{v}) = \emptyset$, as asserted.
\end{proof}

\subsection*{Acknowledgements}
This work is part of my PhD thesis, supervised by Professors Anne Boutet de Monvel and Victor Chulaevsky. I would like to express here my gratitude for their patience and their numerous suggestions, which led to significant improvements to the results presented here.

\providecommand{\bysame}{\leavevmode\hbox to3em{\hrulefill}\thinspace}
\providecommand{\MR}{\relax\ifhmode\unskip\space\fi MR }
\providecommand{\MRhref}[2]{%
  \href{http://www.ams.org/mathscinet-getitem?mr=#1}{#2}
}
\providecommand{\href}[2]{#2}


\begin{thebibliography}{10}

\bibitem{AS}
M.~Abramowitz, and I.A.~Stegun \emph{Handbook of {M}athematical {F}unctions}, Dover, 1965.

\bibitem{ASW}
M.~Aizenman, R.~Sims, and S.~Warzel, \emph{Absolutely continuous spectra of quantum tree graphs with weak disorder},
  Commun. Math. Phys. \textbf{264} (2006), 371--389.

\bibitem{BMNL}
C.~B{\u{a}}cu{\c{t}}{\u{a}}, A.L.~Mazzucato, V.~Nistor and L.~Zikatanov, \emph{Interface and mixed boundary value problems on $n$-dimensional polyhedral domains}, Doc. Math. \textbf{15} (2010), 687--745.

\bibitem{BSU}
Y. M.~Berezansky, Z. G.~Sheftel and G. F.~Us  \emph{{F}unctional {A}nalysis, {V}ol. II},
  Operator Theory Advances and Applications, vol.~86, Birkh\"{a}user, 1996.

\bibitem{BK}
J.~Bolte, and J.~Kerner, \emph{Quantum graphs with singular two-particle interactions}, J. Phys. A \textbf{46} (2013), 045206, 25.

\bibitem{BCS}
A.~Boutet~de Monvel, V.~Chulaevsky, and Y.~Suhov, \emph{Dynamical localization
  for a multi-particle model with an alloy-type external random potential},
  Nonlinearity \textbf{24} (2011), 1451--1472.

\bibitem{BCSS2}
A.~Boutet~de Monvel, V.~Chulaevsky, P.~Stollmann, and Y.~Suhov, \emph{Anderson localization for a multi-particle model with alloy-type external potential}, preprint arXiv: 1004.1300, (2010).

\bibitem{BS}
A.~Boutet~de Monvel and P.~Stollmann, \emph{Eigenfunction expansions for
  generators of {D}irichlet forms}, J. Reine Angew. Math. \textbf{561} (2003), no.~561, 131--144.

\bibitem{CS}
V.~Chulaevsky and Y.~Suhov, \emph{Wegner {B}ounds for a {T}wo-{P}article
  {T}ight {B}inding {M}odel}, Commun. Math. Phys. \textbf{283} (2008), 479--489

\bibitem{CS2}
\bysame, \emph{Multi-particle {A}nderson {L}ocalisation: {I}nduction on the
  {N}umber of {P}articles}, Math. Phys. Anal. Geom. \textbf{12} (2009),
  117--139.

\bibitem{CoSi}
T.~Coulhon and A.~Sikora, \emph{Gaussian heat kernel upper bounds via the {P}hragm\'{e}n-{L}indel\"{o}f theorem}, Proc.London Math. Soc. \textbf{96} (2008), 507--544.

\bibitem{DJH}
J.~Diestel, H.~Jarchow, and A.~Tonge, \emph{Absolutely summing operators}, Cambridge University Press, 1995.

\bibitem{DK}
H.~von Dreifus and A.~Klein, \emph{A {N}ew {P}roof of {L}ocalization in the
  {A}nderson {T}ight {B}inding {M}odel}, Commun. Math. Phys. \textbf{124} (1989), 285--299.

\bibitem{EHS}
P.~Exner, M.~Helm, and P.~Stollmann, \emph{Localization on a {Q}uantum {G}raph
  with {R}andom {P}otential on the {E}dges}, Rev. Math. Phys. \textbf{19} (2007), 923--939.

\bibitem{FLM}
W.~Fischer, H.~Leschke, and P.~M\"{u}ller, \emph{Spectral {L}ocalization by
  {G}aussian {R}andom {P}otentials in {M}ulti-{D}imensional {C}ontinuous
  space}, J. Statist. Phys. \textbf{101} (2000), 935--985.

\bibitem{FS}
J.~Fr\"ohlich and T.~Spencer, \emph{Absence of diffusion in the {A}nderson tight binding model for large disorder or low energy}, Commun. Math. Phys. \textbf{88} (1983), 151--184.

\bibitem{GK}
F.~Germinet and A.~Klein, \emph{Bootstrap Multiscale Analysis and Localization in Random Media}, Commun. Math. Phys. \textbf{222} (2001), 415--448.

\bibitem{GR}
I.S.~Gradshteyn, and I.M.~Ryzhik \emph{Table of {I}ntegrals, {S}eries and {P}roducts}, seventh edition, Elsevier, 2007.

\bibitem{Gri}
P.~Grisvard, \emph{Elliptic {P}roblems in {N}onsmooth {D}omains}, Monographs and Studies in Mathematics, vol.~24, Pitnam (Advanced Publishing Program), 1985.

\bibitem{GLV}
M.~Gruber, D.~Lenz and I.~Veseli{\'c}, \emph{Uniform existence of the integrated density of states for random {S}chr\"odinger operators on metric graphs over $\Z^d$}, J. Funct. Anal. \textbf{253} (2007), 515--533.

\bibitem{HV}
M.~Helm and I.~Veseli{\'c}, \emph{Linear {W}egner estimate for alloy-type
  {S}chr\"{o}dinger operators on metric graphs}, J. Math. Phys. \textbf{48}
  (2007).

\bibitem{HP}
P.~Hislop and O.~Post, \emph{Anderson localization for radial tree-like quantum graphs}, Waves Random Complex Media \textbf{19} (2009), 216--261.

\bibitem{KKS}
A.~Klein, A.~Koines and M.~Seifert, \emph{Generalized {E}igenfunctions for {W}aves in {I}nhomogeneous {M}edia}, J. Funct. Anal. \textbf{190} (2002), 255--291.

\bibitem{KN}
A.~Klein and S.T.~Nguyen, \emph{The bootstrap multiscale analysis for the multi-particle {A}nderson model}, J. Stat. Phys. \textbf{151} (2013), 938--973.

\bibitem{KP1}
F.~Klopp and K.~Pankrashkin, \emph{Localization on {Q}uantum {G}raphs with
  {R}andom {V}ertex {C}ouplings}, J. Statist. Phys. \textbf{131}
  (2008), 651--673.

\bibitem{KP2}
\bysame, \emph{Localization on {Q}uantum {G}raphs with {R}andom {E}dge
  {L}engths}, Lett. Math. Phys. \textbf{87} (2009), 99--114,

\bibitem{Ku}
P.~Kuchment, \emph{Quantum graphs {I}. {S}ome basic structures}, Waves Random Media \textbf{14} (2004), no.~1, S107--S128.

\bibitem{Ma}
V.~Maz'ya, \emph{Sobolev {S}paces: with {A}pplications to {E}lliptic {P}artial
  {D}ifferential {E}quations}, second ed., Springer, 2011.
  
\bibitem{Necas}
J.~Ne{\v{c}}as, \emph{Direct {M}ethods in the {T}heory of {E}lliptic {E}quations}, SMM, Springer, 2012.

\bibitem{Nic}
S.~Nicaise, \emph{Spectre des r\'eseaux topologiques finis}, Bull. Sci. Math.
  (2) \textbf{111} (1987), no.~4, 401--413.

\bibitem{Nic2}
\bysame, \emph{Polygonal {I}nterface {P}roblems}, Peter Lang, 1993.

\bibitem{Nir66}
L.~Nirenberg, \emph{An extended interpolation inequality}, Ann. Scuola Norm. Sup. Pisa (3) \textbf{20} (1966), 733--737.

\bibitem{O}
E.M.~Ouhabaz, \emph{Analysis of {H}eat {E}quations on {D}omains}, LMS Monograph Series, Vol. 31, Princeton University Press, 2005.

\bibitem{PB}
Yu. V.~Pokornyi and A. V.~Borovskikh, \emph{Differential equations on networks (geometric graphs)}, J. Math. Sci. (N. Y.) \textbf{119} (2004), no.~6, 691--718.

\bibitem{PSW}
T.~Poerschke, G.~Stolz and J.~Weidmann \emph{Expansions in {G}eneralized {E}igenfunctions of {S}elfadjoint {O}perators}, Math. Z.
   \textbf{202} (1989), 397--408.

\bibitem{Post}
O.~Post, \emph{Spectral analysis of metric graphs and related spaces}, Limits
  of graphs in group theory and computer science, EPFL Press, Lausanne, 2009,
  pp.~109--140.

\bibitem{RS}
M.~Reed and B.~Simon, \emph{Methods of {M}odern {M}athematical {P}hysics {IV}:
  Analysis of {O}perators}, Academic Press, New York, 1978.

\bibitem{Sab2}
M.~Sabri, \emph{Some abstract Wegner estimates with applications}, preprint arXiv: 1212.1903, (2012).

\bibitem{Schu}
C.~Schubert, \emph{Localization for quantum graphs with a random potential}, preprint arXiv: 1208.6278, (2012).

\bibitem{Si82}
B.~Simon, \emph{Schr\"odinger semigroups}, Bull. Amer. Math. Soc. \textbf{7} (1982), 447--526.

\bibitem{Sto}
P.~Stollmann, \emph{Caught by {D}isorder: Bound {S}tates in {R}andom {M}edia},
  Progress in Mathematical Physics, vol.~20, Birkh\"auser, 2001.

\bibitem{Sto2}
\bysame, \emph{Wegner estimates and localization for continuum {A}nderson models with some singular distributions}, Arch. Math. \textbf{75} (2000), 307--311.
  
\end{thebibliography}
\end{document}